\def\dOi{10(4:6)2014}
\subjclass{D.2.4 Software/Program Verification; Model Checking}
\def\runningexample{\begin{tikzpicture}[yscale=.9,xscale=1.1]
        \everymath{\scriptstyle}
        \path (-4.5,0) node[draw,circle,inner sep=2pt] (q0) 
        {$\ell_0$};
        \path (-4.5,-.7) node[] (q0b) 
        {$x \le 1$};
        
        \path (-1.5,0) node[draw,circle,inner sep=2pt] (q1) 
        {$\ell_1$};
        \path (-1.5,-.7) node[] (q0b) {$x \le 4$}; 
        \path (-1.5,.7) node[] (q1b) {$\{p_1\}$};
        
        \path (2,0) node[draw,circle,inner sep=2pt] (q2) 
        {$\ell_2$};
        \path (2,-.7) node[] (q2b)  {$x \le 2$};
        \path (2,.7) node[] (q1b) {$\{p_2\}$};
        
        \path (4,0) node[draw,circle,inner sep=2pt] (q3) 
        {$\ell_3$};
        \path (4,.7) node[] (q1b) {$\{p_1\}$};
        
        \draw[arrows=-latex'] (q0) .. controls +(20:45pt) .. (q1) node[pos=.5,
        above,sloped] {$e_2,~x \le 1$};
        
        \draw[arrows=-latex'] (q1) .. controls +(200:45pt) .. (q0) node[pos=.5,
        below,sloped] {$e_3,~x = 1$};
        
        \draw[arrows=-latex'] (q1) .. controls +(20:55pt) .. (q2) node[pos=.5,
        above,sloped] {$e_4,~x \ge 3,~x:=0$};
        
        \draw[arrows=-latex'] (q2) .. controls +(200:55pt) .. (q1) node[pos=.5,
        below,sloped] {$e_5,~x \le 2$};
        
        \draw[arrows=-latex'] (q2) -- (q3) node[pos=.5,above,sloped] {$e_6,~x
          =0$};
        
        \draw [-latex'] (q0) .. controls +(210:40pt) and +(150:40pt) .. (q0)
        node [midway,left] {$e_1,~x \leq 1$};
        
        \draw [-latex'] (q3) .. controls +(330:40pt) and +(30:40pt) .. (q3) node
        [midway,right] {$e_7,~x \leq 1$};
      \end{tikzpicture}}
\newcommand{\bs}{{\bf s}}
\newcommand{\be}{{\bf e}}
\newcommand{\bluegraph}{\ensuremath{\mathcal{G}_{\textsf{b}}}}
\newcommand{\sem}[1]{\ensuremath{\llbracket #1 \rrbracket}}
\newcommand{\ligne}{\begin{center}
    \vrule width 8cm height 0.3mm
  \end{center} \null\bigskip}
\newcommand\Prob{\ensuremath{\mathbb{P}}\xspace}
\newcommand\Cyl{\ensuremath{\textsf{Cyl}}}
\newcommand\Pol{\ensuremath{\textsf{\upshape Pol}}}
\newcommand\Proj{\ensuremath{\textsf{\upshape Proj}}}
\newcommand\IN{{\mathbb N}}
\newcommand\IZ{{\mathbb Z}}
\newcommand\IQ{{\mathbb Q}}
\newcommand\IR{{\mathbb R}}
\newcommand\ud{\mathrm{d}}
\newcommand\car[1]{\mathds{1}_{#1}}
\renewcommand{\widering}[1]{\stackrel{{\scriptscriptstyle \circ}}{\wideparen{#1}}}
\newcommand{\Runs}{\ensuremath{\textsf{Runs}}}
\newcommand\A{{\mathcal{A}}} 
\newcommand\B{{\mathcal{B}}} 
\newcommand\Fcal{{\mathcal{F}}}
\renewcommand\rho{\varrho}
\renewcommand\phi{\varphi}
\renewcommand{\path}[2][]{\pi_{#1}(#2)}
\newcommand{\cell}{\textsf{guard}}
\newcommand{\Zeno}{\textsf{{\upshape Zeno}}}
\newcommand{\RA}{\ensuremath{\mathsf{R}(\A)}}
\newcommand{\R}[1]{\ensuremath{\mathsf{R}(#1)}}
\newcommand{\AP}{\textsf{\upshape{AP}}\xspace}
\def\robust{\mathrel|\joinrel\approx} 
\def\true{\ensuremath{\mathtt{t\!t}}}
\def\false{\ensuremath{\mathtt{f\!f}}}
\def\definelogic{\@ifnextchar[{\@defloglong}{\@deflogshort}}
\def\@defloglong[#1]#2{\expandafter\gdef\csname #2\endcsname
  {\ensuremath{\textsf{\upshape #1}}\xspace}}
\def\@deflogshort#1{\expandafter\gdef\csname #1\endcsname
  {\ensuremath{\textsf{\upshape #1}}\xspace}}
\newcommand{\ComplexityFont}[1]{{\sffamily\upshape #1}}
\newcommand{\PSPACE}{\ComplexityFont{PSPACE}\xspace}
\newcommand{\NLOGSPACE}{\ComplexityFont{NLOGSPACE}\xspace}
\newcommand*{\modality}[2][]{\def\@rgone{#1}%
  \ifx\@rgone\@empty
  \ensuremath{\text{\rmfamily\upshape\bfseries {#2}}}%
  \else
  \ensuremath{\text{\rmfamily\upshape\bfseries {#2}}_{#1}}%
  \fi}
\newcommand*{\dmodality}[2][]{\def\@rgone{#1}%
  \ifx\@rgone\@empty
  \ensuremath{\smash{\widetilde{\text{\rmfamily\upshape\bfseries {#2}}}}%
        {\vphantom{\text{\rmfamily\upshape\bfseries {#2}}}}}%
  \else
  \ensuremath{\smash{\widetilde{\text{\rmfamily\upshape\bfseries {#2}}}_{#1}}%
        {\vphantom{\text{\rmfamily\upshape\bfseries {#2}}_{#1}}}}%
  \fi}
\newcommand*{\Until}[1][]{\ifmmode\,\fi\modality[{#1}]{U}\ifmmode\,\else\expandafter\xspace\fi}
\let\U\Until
\newcommand*{\WUntil}[1][]{\ifmmode\,\fi\dmodality[{#1}]{U}\ifmmode\,\else\expandafter\xspace\fi}
\let\W\WUntil
\newcommand*{\Rel}[1][]{\ifmmode\,\fi\dmodality[{#1}]{U}\ifmmode\,\else\expandafter\xspace\fi}
\newcommand*{\F}[1][]{\modality[{#1}]{F}\ifmmode\,\else\expandafter\xspace\fi}
\newcommand*{\G}[1][]{\modality[{#1}]{G}\ifmmode\,\else\expandafter\xspace\fi}
\newcommand*{\X}[1][]{\modality[{#1}]{X}\ifmmode\,\else\expandafter\xspace\fi}
\let\WUntil\W
\newcommand*{\All}[1][]{\modality[{#1}]{A}\ifmmode\,\else\expandafter\xspace\fi}
\newcommand{\egdef}{\stackrel{\mbox{\begin{scriptsize}\text{{\rmfamily def}}\end{scriptsize}}}{=}}
\newcommand\pref{{\mathsf{pref}}}
\newcommand{\thickgraph}{\ensuremath{\mathcal{G}_{\textsf{t}}}}
\newcommand{\source}{\ensuremath{\mathsf{source}}}
\newcommand{\target}{\ensuremath{\mathsf{target}}}
\newcommand{\last}{\ensuremath{\mathsf{last}}}
\newcommand{\MC}{\ensuremath{\mathsf{MC}}\xspace}
\newcommand{\corn}{\null\hfill \hbox to 0pt{$\lrcorner$\hss}}
\let\lem\relax
\let\defi\relax
\let\thm\relax
\let\c@thm\relax
\let\rem\relax
\let\exa\relax
\let\cor\relax
\let\prop\relax
 \newtheorem{lem}{Lemma}[section]
 \newtheorem{defi}{Definition}[section]
 \newtheorem{thm}{Theorem}[section]
 \newtheorem{rem}{Remark}[section]
 \newtheorem{exa}{Example}[section]
 \newtheorem{cor}{Corollary}[section]
 \newtheorem{prop}{Proposition}[section]
\let\c@defi\c@thm
\let\c@lem\c@thm
\let\c@cor\c@thm
\let\c@rem\c@thm
\let\c@exa\c@thm
\let\c@prop\c@thm
\begin{document}

\title[Stochastic timed automata]{Stochastic timed automata}

\author[Bertrand]{Nathalie Bertrand\rsuper a}	
\address{{\lsuper a}Inria Rennes, France}	
\email{nathalie.bertrand@inria.fr}  

\author[Bouyer]{Patricia Bouyer\rsuper b}	
\address{{\lsuper b}LSV, CNRS \& ENS Cachan, France}	
\email{bouyer@lsv.ens-cachan.fr}  

\author[Brihaye]{Thomas Brihaye\rsuper c}	
\address{{\lsuper{c,d}}Universit\'e de Mons, Belgium}	
\email{\{thomas.brihaye,quentin.menet\}@umons.ac.be}  

\author[Menet]{Quentin Menet\rsuper d}	
\address{\vspace{-18 pt}}	

\author[Baier]{Christel Baier\rsuper e}	
\address{{\lsuper{e,f}}TU Dresden, Germany}	
\email{\{baier,groesser\}@tcs.inf.tu-dresden.de}  

\author[Gr\"osser]{Marcus Gr\"osser\rsuper f}	
\address{\vspace{-18 pt}}	

\author[Jurdzi{\'n}ski ]{Marcin Jurdzi{\'n}ski\rsuper g}	
\address{{\lsuper g}University of Warwick, UK}	
\email{mju@dcs.warwick.ac.uk}  


\keywords{Timed automata, Model checking, Probability, Topology}


\begin{abstract}
  \noindent  A stochastic timed automaton is a purely stochastic process defined
  on a timed automaton, in which both delays and discrete choices are
  made randomly. We study the almost-sure model-checking problem for
  this model, that is, given a stochastic timed automaton $\A$ and a
  property $\varphi$, we want to decide whether $\A$ satisfies
  $\varphi$ with probability $1$. In this paper, we identify several
  classes of automata and of properties for which this can be
  decided. The proof relies on the construction of a finite
  abstraction, called the thick graph, that we interpret as a finite
  Markov chain, and for which we can decide the almost-sure
  model-checking problem. Correctness of the abstraction holds when
  automata are almost-surely fair, which we show, is the case for two
  large classes of systems, single-clock automata and so-called
  weak-reactive automata. Techniques employed in this article gather tools
  from real-time verification and probabilistic verification, as well
  as topological games played on timed automata.
\end{abstract}

\maketitle
\vfill
\section{Introduction}
\subsection*{Timed automata and their extensions.}
In the last twenty years a huge effort has been made to design
expressive models for representing computerised systems. As part of
this effort the model of timed automata~\cite{AD90,AD94} has been
proposed in the early 90's as a suitable model for representing
systems with real-time constraints. Numerous works have focused on
that model, and it has received an important tool support, with for
instance the development of tools like Uppaal~\cite{BDL+06} or
Kronos~\cite{BDM+98}.

Given the success of the timed-automata-based technology for verifying
real-time systems, several extensions have been proposed, with the aim
of representing more faithfully real systems. They include timed
games~\cite{AMPS98} for modeling control problems and priced timed
automata~\cite{ATP01,BFH+01,BFLM11} for modeling various quantities
in timed systems, like energy consumption.

\subsection*{Stochastic extensions of timed automata.}
Many applications like communication protocols require models 
integrating both real-time constraints and randomised aspects (see
e.g.~\cite{stoelinga03}). The development of such models and
corresponding verification algorithms is a challenging task, since it
requires combining techniques from both fields of real-time verification
and probabilistic verification. In the literature we distinguish two
main different approaches.

A first approach consists in modeling the system as a purely
stochastic process, and to express soft real-time constraints in the
property that is checked. A model of choice for the system is that of
continuous-time Markov chains (CTMC for short), while a rather wide
spectrum of property formalisms has been considered, going from the
logic \CSL (continuous stochastic logic) and extensions
thereof~\cite{ASSB00,BHHK03,DHS09,ZJNH11} to (deterministic) timed
automata~\cite{CHKM11}. In this context several exact and approximate
model-checking algorithms have been developed.

Another approach consists in integrating both features into a complex
model (e.g. an extension of timed automata or Petri nets with
stochastic evolution rules), and to analyse this model. This allows
one to represent hard timing contraints such as deadlines.  In this
article we focus on automata-based models, and therefore only review
related work on models based on timed automata. Such models include
probabilistic timed automata~\cite{KNSS02} where discrete
distributions are assigned to actions and for which the tool
Prism~\cite{KNP11} has been developed. Delays or durations of events
can also be made randomised. This is done for instance
in~\cite{ACD91a,ACD91b} and later in~\cite{KNSS00}, yielding either
independent events and exact model-checking algorithms (for a
probabilistic and timed extension of computation tree logic), or
approximate model-checking algorithms.

The current work follows this last approach, and surveys and extends
results based on the model of \emph{stochastic timed automata}.  This
model has been proposed and studied in a series of
papers~\cite{BBBBG07,BBBBG08,BBJM12}. The semantics of a stochastic
timed automaton is a purely stochastic process based on a timed
automaton, in which both delays and discrete choices are made
randomly.  This model has later been extended with non-determinism and
interaction~\cite{BF09,BS12}, but in this article we focus on the
original purely stochastic model.

\subsection*{Overview of the contributions.}
In this article we are interested in the almost-sure model-checking of
stochastic timed automata. This problem asks, given a stochastic timed
automaton $\A$ and a property $\varphi$, whether $\A$ almost-surely
satisfies $\varphi$ (that is, with probability $1$). Our approach to
solve this problem relies on the construction of a finite Markov chain
$\MC(\A)$\footnote{In the core of the article, $\MC(\A)$ is the
  so-called thick graph $\thickgraph(\A)$, that we interpret as a
  finite Markov chain, putting the uniform distributions over edges.}
on which we will check whether $\varphi$ almost-surely holds or
not. We will then say that the abstraction $\MC(\A)$ is correct
w.r.t. $\varphi$ whenever $\varphi$ almost-surely holds equivalently
in $\A$ and in $\MC(\A)$. Unfortunately, this will not be the case in
general, and beyond the introduction of stochastic timed automata, the
main goal of this article is to identify subclasses of stochastic timed
automata and subclasses of properties for which the abstraction
$\MC(\A)$ is correct.

More precisely, we show that $\MC(\A)$ is a correct abstraction
w.r.t. property $\varphi$ in the following cases: (i) if $\varphi$
is a safety property, (ii) if $\varphi$ is an $\omega$-regular (or
\LTL) property and $\A$ is a single-clock stochastic timed automaton,
and (iii) if $\varphi$ is an $\omega$-regular (or \LTL) property and
$\A$ belongs to a subclass of stochastic timed automata called
\emph{weak reactive}.  In fact, cases (ii) and (iii) are
consequences of a more general result stating that if the runs in a
stochastic timed automaton $\A$ are almost-surely
fair\footnote{Roughly, a run is fair if any edge which is enabled
  infinitely often is taken infinitely often.}, then $\MC(\A)$ is a
correct abstraction. The results then follow from the (highly non
trivial) proof that both weak reactive and single-clock stochastic
timed automata are almost-surely fair.

We also establish the exact complexity of the almost-sure
model-checking problem in the three above cases. More precisely, we
prove that the almost-sure model-checking problem is (i)
\PSPACE-complete on stochastic timed automata against safety
properties, (ii) \PSPACE-complete (resp. \NLOGSPACE-complete) on
single-clock stochastic timed automata against properties given as
\LTL formulas (resp. $\omega$-regular properties), and (iii)
\PSPACE-complete on weak reactive stochastic timed automata against
$\omega$-regular properties. We finally extend this last result to
specifications given as deterministic timed automata. Let us point out
that the decidability status of the almost-sure model checking problem
for \LTL properties on the general class of stochastic timed automata
is still an open problem.

\subsection*{A model which relaxes timed automata assumptions.}
Let us mention that one initial motivation for defining stochastic
timed automata was the robustness of timed systems. Indeed, the model
of timed automata is an idealised mathematical model, which makes
strong assumptions on the behaviour of the represented real system: it
assumes for instance infinite precision of the clocks, instantaneous
events and communications, whereas a real system will have slightly
different behaviours (like measure time with digital clocks). This
topic of research is very rich, and many models and results have
already been described.

We review some of the frameworks which have been studied in this
context, but will not give a long list of references. We better point
to a survey made in 2011~\cite{markey11}, and to a recent PhD
thesis~\cite{sankur13}, which review in details the literature on the
subject. Let us first mention two models of implementable controllers
proposed in~\cite{DDR04} and in~\cite{SBM11}, where constraints and
precision of clocks are somewhat relaxed.  In this framework, if the
model satisfies a property, then, on a simple model of processor, its
implementation will also satisfy this property. This implementation
model induces a very strong notion of robustness, suitable for really
critical systems (like rockets or X-by-wire systems in cars), but
maybe too strong for less critical systems (like mobile phones or
network applications).  Another robustness model has been proposed at
the end of the 90's in~\cite{GHJ97} with the notion of tube
acceptance: a metric is put on the set of traces of the timed
automaton, and a trace is robustly accepted if and only if a tube
around that trace is classically accepted.  This language-focused
notion of acceptance is however not completely satisfactory for
implementability issues, because it does not take into account the
structure of the automaton.

In this context, the model of stochastic timed automata alleviates
some disadvantages to the strong mathematical assumptions made in
timed automata.  First, randomising delays and the choice of
transitions removes unlikely behaviours (like those requiring
satisfaction of very precise clock constraints), and only important
and meaningful sets of behaviours are then taken into account in the
verification process. Then, the assumptions made in timed automata
mentioned above lead to the existence of unreal(istic) behaviours of
the model, such as Zeno behaviours\footnote{That is, time-converging
  behaviours.}  that one would like to ignore. We will then realise
that, unless the underlying timed automaton is inherently Zeno, the
probability of Zeno behaviours will be $0$ (at least in the classes of
models we have identified). This allows us to convincingly claim that
stochastic timed automata can be used as a possible solution for
relaxing side-effects of mathematical assumptions made in timed
automata.

As a motivating example, we describe a model of the IPv4 Zeroconf
protocol using stochastic timed automata, see
Figure~\ref{fig:IPv4}. This protocol aims at configuring IP addresses
in a local network of appliances.  When a new appliance is plugged, it
selects an IP address at random, and broadcasts several probe messages
to the network to know whether this address is already used or not. If
it receives in a bounded delay an answer from the network informing
that the IP is already used, then a new IP address is chosen. It may
be the case that messages get lost, in which case there is an error.
In~\cite{BvSHV03}, a simple model for the IPv4 Zeroconf protocol is
given as a discrete-time Markov chain, which abstracts away timing
constraints. Using stochastic timed automata, expressing the delay
bound is feasible.
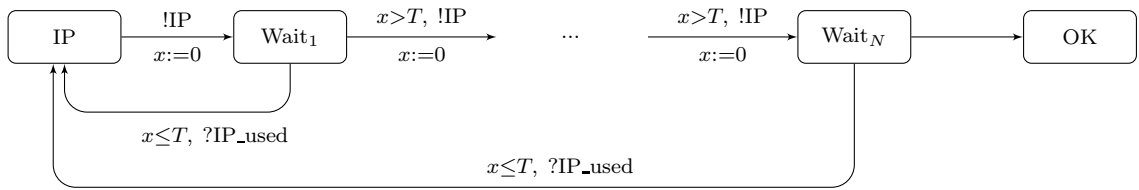
\begin{figure}[h]
\begin{center}
\begin{tikzpicture}
\everymath{\scriptstyle}
\draw (0,0) node [draw,inner sep=2.5pt,rounded corners=3pt,minimum width=1.5cm,minimum height=.7cm] (s0) {$\textrm{IP}$};
\path (-.15,0) node [minimum width=1.5cm,minimum height=.7cm] (s0bis) {};
\draw (3,0) node [draw,inner sep=2.5pt,rounded corners=3pt,minimum width=1.5cm,minimum height=.7cm] (s1) {$\textrm{Wait}_1$};
\draw (6.5,0) node [inner sep=2.5pt,rounded corners=3pt,minimum width=1.5cm,minimum height=.7cm] (s20) {};
\draw (7,0) node [inner sep=2.5pt,rounded corners=3pt,minimum width=1.5cm,minimum height=.7cm] (s21) {};
\draw (10.5,0) node [draw,inner sep=2.5pt,rounded corners=3pt,minimum width=1.5cm,minimum height=.7cm] (s3) {$\textrm{Wait}_N$};
\draw (13.5,0) node [draw,inner sep=2.5pt,rounded corners=3pt,minimum width=1.5cm,minimum height=.7cm] (s4) {$\textrm{OK}$};
\path(6.75,0) node {$\dots$};

\draw [-latex'] (s0) -- (s1) node[pos=0.5,above] {$\textrm{!IP}$} node [pos=.5,below]{$x:=0$};
\draw [-latex'] (s1) -- (s20) node[pos=0.5,above] {$x>T,\ \textrm{!IP}$} node [pos=.5,below]{$x:=0$};
\draw [-latex'] (s21) -- (s3) node[pos=0.5,above] {$x>T,\ \textrm{!IP}$}node [pos=.5,below]{$x:=0$};
\draw [-latex'] (s3) -- (s4) node[pos=0.5,above] {};

\draw[-latex',rounded corners=3mm] (s1) |- +(-1,-1) node[pos=1,below=2pt] {$x \le T,\ \textrm{?IP\_used}$} -| (s0) node[pos=.5,above right] {};

\draw[-latex',rounded corners=3mm] (s3) |- +(-2,-2) node[pos=1.2,above
left] {$x \le T,\ \textrm{?IP\_used}$} -| (s0bis) node[pos=.5,above right] {};
\end{tikzpicture}
\end{center}
\caption{Modelling the IPv4 Zeroconf using stochastic timed automata.\label{fig:IPv4}}
\end{figure}

The example of Figure~\ref{fig:IPv4} illustrates an important feature
of stochastic timed automata.  Compared to CTMC-like models,
stochastic timed automata allow one to express hard timing contraints
such as deadlines (constraint $x \le T$ in this example). Another
important feature of stochastic timed automata, as we will show in
this article, is that the almost-sure satisfaction of properties is
independent of the precise probability distributions over delays. This
is a major advantage since it avoids the problem of finding realistic
probability distributions which is known to be a difficult task, see
\emph{e.g.}~\cite{BDE+14}.

\subsection*{The Cantor topology: a useful tool.}
In a former paper~\cite{VV06}, Varacca and V{\"o}lzer show a strong
correspondence between a standard Markov-chain-based probabilistic
semantics of a finite automaton, and the Cantor topology over the set
of infinite executions of this automaton.  They show in particular
that almost-sure sets of executions (that is, sets of executions which
have probability $1$) coincide with topologically large sets of
executions. Following this idea we also define a topological semantics
\textit{\`a la} Cantor for a timed automaton. In our framework, the
above equivalence does not always hold, but in many cases however we
will be able to prove it. This characterisation is incredibly useful
in order to prove our results. The key tool in our techniques is a
topological game called Banach-Mazur game~\cite{oxtoby57}.

\subsection*{Related work.}
The literature on stochastic processes is huge. We already mentioned
several related works, but we would like to discuss a bit more the
works~\cite{DHS09,CHKM11}, which we think are the closest to the
current article. In both papers the model is that of CTMCs. Timing
constraints are expressed in the properties, either given as
deterministic timed automata~\cite{CHKM11} or as an extension of \CSL
called \CSLta~\cite{DHS09}, which extends \CSL with properties given as
single-clock deterministic timed automata.

Paper~\cite{CHKM11} is interested in quantitative model-checking, that
is, given a CTMC $\mathcal{C}$ and a property given as a deterministic
(Muller) timed automaton $\A$, the aim is to compute the probability
that runs of $\mathcal{C}$ are accepted by $\A$. This probability is
characterised using Volterra integral equations, which can be
transformed into linear equations when $\A$ has a unique
clock. Therefore quantitative verification can be done for
single-clock specifications but can only be approximated in the
general case. Our results are somehow incomparable since we allow for
a more general model (stochastic timed automata instead of CTMCs) but
prove decidability only for the qualitative model-checking problem.

Paper~\cite{DHS09} is interested in model-checking of CTMCs against
properties expressed as formulas of \CSLta. This logic involves
probability formulas, and uses single-clock deterministic timed
automata as predicates. Model-checking of the general logic can be
approximated, but if formulas only have qualitative subformulas, the
exact model-checking can be decided. We do not consider logics, but we
allow general deterministic timed automata in our specifications.

\subsection*{Organisation}  
Section~\ref{sec:prelim} summarises our notations for timed automata,
and specifications languages (such as \LTL and $\omega$-regular
properties).  Section~\ref{sec:proba-def} presents stochastic timed
automata, the notion of almost-sure satisfaction and the almost-sure
model-checking problem, while Section~\ref{sec:toposem} presents the
topological semantics and the notion of large satisfaction. In
Section~\ref{sec:thickgraph}, we define a finite abstraction of a
stochastic timed automaton, named \emph{thick graph}, which will be
essential in order to solve the almost-sure model-checking problem.
In Section~\ref{sec:match}, we show that the topological and the
probabilistic semantics coincide first if we restrict to safety
properties and then for $\omega$-regular properties but under the
restriction that the system is almost-surely fair. In
Section~\ref{sec:appli}, we identify two subclasses of stochastic
timed automata which are almost-surely fair, namely weak reactive and
single-clock. Finally, the algorithmic issues and the complexity
results are given in Section~\ref{sec:algo}. To improve readability of
the article, technical proofs are postponed to the Appendix.

This article presents results from~\cite{BBBBG07,BBBBG08,BBJM12} in a
uniform way, provides the complete proofs, and generalises the results
from~\cite{BBJM12} to a larger class of stochastic timed automata.

\section{Preliminaries}\label{sec:prelim}
\subsection{The timed automaton model}
We denote by $X = \{x_1, \ldots, x_k\}$ a finite set
of~\emph{clocks}. A \emph{clock valuation} over $X$ is a mapping $\nu
: X \to \IR_+$, where $\IR_+$ denotes the set of nonnegative reals. We
write $\IR_+^X$ for the set of clock valuations over $X$, and
$\mathbf{0}_X$ (or simply $\mathbf{0}$ if $X$ is clear in the context)
for the valuation assigning $0$ to every clock of $X$. Given a clock
valuation $\nu$ and $\tau \in \IR_+$, $\nu + \tau$ is the clock
valuation defined by $(\nu+\tau)(x) = \nu(x)+\tau$ for every $x \in
X$. If $Y \subseteq X$, the valuation $[Y \leftarrow 0] \nu$ is the
valuation $\nu'$ such that $\nu'(x) = 0$ if $x \in Y$, and $\nu'(x) =
\nu(x)$ otherwise. A \emph{guard} over $X$ is a finite conjunction of
expressions of the form $x \sim c$ where $x \in X$ is a clock, $c \in
\IN$ is an integer, and $\sim$ is one of the symbols $\{ \mathord< ,
\mathord\le , \mathord= , \mathord\ge , \mathord>\}$. We denote by
${\mathcal G}(X)$ the set of guards over $X$. The satisfaction relation
for guards over clock valuations is defined in a natural way, and we
write $\nu \models g$ if the clock valuation $\nu$ satisfies the guard
$g$. We denote by $\AP$ a finite set of atomic propositions.

We now define the timed automaton model, which has been introduced in
the early nineties~\cite{AD90,AD94}.

\begin{defi}\label{def:timed_aut}
  A \emph{timed automaton} over \AP is a tuple $\A = (L,X,E,{\mathcal
    I},{\mathcal L})$ such that: (i) $L$ is a finite set of locations,
  (ii) $X$ is a finite set of clocks, (iii) $E \subseteq L \times
  {\mathcal G}(X) \times 2^X \times L$ is a finite set of edges, (iv)
  ${\mathcal I} : L \to {\mathcal G}(X)$ assigns an invariant to each
  location, and $(v)$ ${\mathcal L} : L \to 2^{\AP}$ is a labelling
  function.
\end{defi}
We may omit the labelling function (in case we are only interested in
an internal accepting condition, \emph{i.e.} that only depends on the
locations). Note that we could also specify an initial location, but
that will not be really useful later, that is why we removed that
component from standard timed-automata definition.

If $e$ is an edge of $\A$, we write $\mathsf{source}(e)$
(resp. $\mathsf{target}(e)$) the source (resp. target) of $e$ defined
by $\ell$ (resp. $\ell'$) if $e = (\ell,g,Y,\ell')$.  The semantics of
a timed automaton $\A$ is a timed transition system $T_{\A}$ whose
states are pairs $s= (\ell,v) \in L \times \IR_+^{X}$ with $v \models
{\mathcal I}(\ell)$, and whose transitions are of the form $(\ell,v)
\xrightarrow{\tau,e} (\ell',v')$ if there exists an edge $e =
(\ell,g,Y,\ell')$ such that for every $0 \leq \tau' \leq \tau$,
$v+\tau' \models {\mathcal I}(\ell)$, $v+\tau \models g$, $v' = [Y
\leftarrow 0]v$, and $v' \models {\mathcal I}(\ell')$. We extend the
labelling function $\mathcal{L}$ to states: $\mathcal{L}((\ell,v)) =
\mathcal{L}(\ell)$ for every state $(\ell,v)$.  A finite
(resp. infinite) \emph{run} $\rho$ of $\A$ is a finite (resp.
infinite) sequence of transitions, \textit{i.e.},
\[
\rho = s_0 \xrightarrow{\tau_1,e_1} s_1 \xrightarrow{\tau_2,e_2} s_2
\ldots
\]
We write $\Runs_f(\A,s_0)$ (resp. $\Runs(\A,s_0)$) for the set of
finite runs (resp.  infinite runs) of $\A$ from state~$s_0$. If $\rho$
is a finite run in $\A$, we write $\mathsf{last}(\rho)$ for the last
state of $\rho$.  For $s$ a state of $\A$, $(e_i)_{1 \leq i \leq n}$ a
finite sequence of edges of $\A$, and $\mathcal{C}$ a constraint over
$n$ variables $(t_i)_{1 \leq i \leq n}$, the \emph{(symbolic) path}
starting from $s$, determined by $(e_i)_{1 \leq i \leq n}$, and
constrained by $\mathcal{C}$, is the following set of runs:
\[
\path[\mathcal{C}]{s,e_1\ldots e_n} \ = \ \{ \rho = s
\xrightarrow{\tau_1,e_1} s_1 \ldots \xrightarrow{\tau_n,e_n} s_n \in
\Runs_f(\A,s)\ \mid (\tau_i)_{1 \leq i \leq n} \models
\mathcal{C}\}\,,
\]
where $ (\tau_i)_{1 \leq i \leq n} \models \mathcal{C}$ stands for
``$\tau_i$'s satisfy the constraint $\mathcal{C}$'' with the intuitive
meaning.

If $\mathcal{C}$ is equivalent to `true', we simply write
$\path{s,e_1\ldots e_n}$. Let $\pi_{\mathcal{C}} =
\path[\mathcal{C}]{s,e_1 \ldots e_n}$ be a finite symbolic path, we
define the \emph{cylinder} generated by $\pi_{\mathcal{C}}$ as:
\[
\Cyl(\pi_{\mathcal C}) = \{\rho \in \Runs(\A,s) \mid \exists \rho' \in
\Runs_f(\A,s), \text{finite prefix of}\ \rho,\ \text{s.t.}\ \rho' \in
\pi_{\mathcal{C}}\}\,.
\]
In the following, we will also use infinite symbolic paths defined,
given $s$ a state of $\A$ and $(e_i)_{i \ge 1}$ an infinite sequence
of edges, as:
\[
\path{s,e_1\ldots} = \{ \rho = s \xrightarrow{\tau_1,e_1} s_1 \ldots
\in \Runs(\A,s)\}\,.
\]
If $\varrho \in \Runs(\A,s)$, we write $\pi_\varrho$ for the unique
symbolic path containing $\varrho$.  Given $s$ a state of $\A$ and $e$
an edge, we define $I(s,e) = \{\tau \in \IR_+ \mid \exists s' \textrm{
  s.t. } s \xrightarrow{\tau,e} s'\}$ and $I(s) = \bigcup_e
I(s,e)$. Note that $I(s,e)$ is an interval, whereas $I(s)$ is a finite
union of intervals.

The timed automaton $\A$ is \emph{non-blocking} if, for every state
$s$, $I(s) \neq \emptyset$.  The timed automaton $\A$ is
\emph{reactive} if, for every state $s$, $I(s)=\IR_+$; in this case we
may omit the invariant function and simply write $\A = (L,X,E,{\mathcal
  L})$.

\subsection{The timed region automaton}
\label{regionautomaton}
The well-known region automaton construction is a finite abstraction
of timed automata which can be used for verifying many properties like
$\omega$-regular untimed properties~\cite{AD94}.  Roughly, the region
automaton of $\A$ is the quotient of $T_{\A}$ by a finite-index
equivalence relation over clock valuations. Here we will use a timed
version of this construction, that we define now.

Let $\A = (L,X,E,{\mathcal I},{\mathcal L})$ be a timed automaton, and write
$M$ for the maximal constant used in guards and invariants in $\A$.
We define its region equivalence $\equiv_{\A}$ over the set of
valuations $\IR_+^X$ as follows: given $v,v' \in \IR_+^X$, $v
\equiv_{\A} v'$ if and only if the following conditions are
satisfied:
\begin{itemize}
\item for every $x \in X$, either $v(x),v'(x)>M$, or $\lfloor v(x)
  \rfloor = \lfloor v'(x) \rfloor$, and in the last case, $\{v(x)\}=0$
  iff $\{v'(x)\}=0$;\footnote{$\lfloor \cdot \rfloor$
    (resp. $\{\cdot\}$) denotes the integral (resp. fractional) part.}
\item for every $x,y \in X$ such that $v(x),v(y) \le M$, $\{v(x)\} \le
  \{v(y)\}$ iff $\{v'(x)\} \le \{v'(y)\}$.
\end{itemize}
This equivalence relation has finite index, the equivalence classes
are called \emph{regions}, and we write $R_{\A}$ for the set of
regions. If $v$ is a valuation, we write $[v]_{\A}$ or simply $[v]$
for the (unique) region to which $v$ belongs. Also, for $r$ a region,
$\cell(r)$ denotes the minimal guard characterising $r$.

\begin{rem}
  The above region equivalence is the most standard one, but several
  rougher equivalences could also be used, as soon as they yield a
  time-abstract bisimilar quotient. For instance, for single-clock
  timed automata, we will later use a rougher notion of region
  equivalence~\cite{LMS04} that will improve the complexity of our
  algorithms.
\end{rem}

The \emph{timed region automaton} of $\A$ is the timed automaton $\RA
= (Q,X,T,\kappa,\lambda)$ such that $Q = L \times R_{\A}$, and:
\begin{itemize}
\item $\kappa((\ell,r)) = {\mathcal I}(\ell)$, and $\lambda((\ell,r)) =
  {\mathcal L}(\ell)$ for all $(\ell,r) \in L \times R_{\A}$;
\item $T \subseteq (Q \times \cell(R_{\A}) \times E \times 2^X \times
  Q)$, and $(\ell,r) \xrightarrow{\cell(r''),e,Y} (\ell',r')$ is in
  $T$ iff there exists $e = \ell \xrightarrow{g,Y} \ell'$ in $E$
  s.t. there exist $v \in r$, $\tau \in \IR_{+}$ with $(\ell,v)
  \xrightarrow{\tau,e} (\ell',v')$, $v+\tau \in r''$ and $v' \in
  r'$.
\end{itemize}
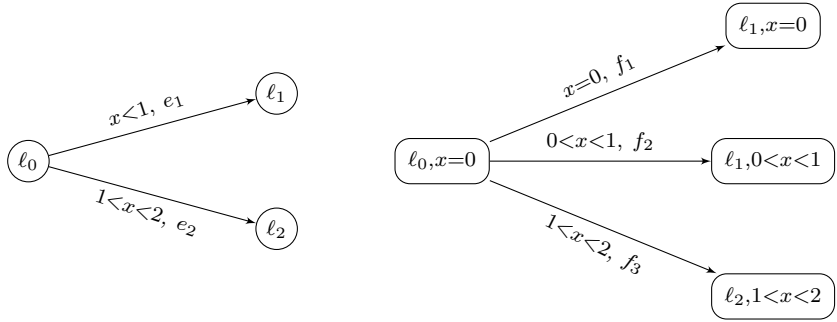
\begin{figure}[hbt]
  \begin{center}
    \begin{tikzpicture}[yscale=.9,xscale=1.1]
      \everymath{\scriptstyle} 
      \path (-1,0) node[draw,circle,inner sep=2pt] (q0) {$\ell_0$}; 
      \path (2,1) node[draw,circle,inner sep=2pt] (q1) {$\ell_1$}; 
      \path (2,-1) node[draw,circle,inner sep=2pt] (q2) {$\ell_2$};
      \draw [arrows=-latex'] (q0) -- (q1) node[pos=.5, above,sloped] {$x < 1,~e_1$};
      \draw [arrows=-latex'] (q0) -- (q2) node[pos=.5, below,sloped] {$1< x < 2,~e_2$};

      \path (4,0) node[draw,rectangle,rounded corners=2mm,inner sep=5pt] (q00)
      {$\ell_0,x=0$};
      \path (8,2) node[draw,rectangle,rounded corners=2mm,inner sep=5pt] (q11)
      {$\ell_1,x=0$};
      \path (8,0) node[draw,rectangle,rounded corners=2mm,inner sep=5pt] (q12)
      {$\ell_1,0<x<1$};
      \path (8,-2) node[draw,rectangle,rounded corners=2mm,inner sep=5pt] (q21)
      {$\ell_2,1<x<2$};
      \draw [arrows=-latex'] (q00) -- (q11) node[pos=.5, above,sloped] {$x = 0,~f_1$};
      \draw [arrows=-latex'] (q00) -- (q12) node[pos=.5, above,sloped] {$0 <x < 1,~f_2$};
      \draw [arrows=-latex'] (q00) -- (q21) node[pos=.5, below,sloped] {$1< x < 2,~f_3$};
    \end{tikzpicture}
  \end{center}
\caption{An automaton and its timed region automaton.\label{fig:cex-homeo}}
\end{figure}
As an example, a timed automaton and its associated timed region
automaton are depicted in Figure~\ref{fig:cex-homeo}. We recover the
usual region automaton of~\cite{AD94} by labelling the transitions
with `$e$' instead of `$\cell(r''),e,Y$', and by interpreting $\RA$ as
a finite automaton.  The above timed interpretation satisfies strong
timed bisimulation properties that we do not detail here.  To every
finite symbolic path $\path{(\ell,v),e_1 \ldots e_n}$ in $\A$
corresponds a finite set of paths $\path{((\ell,[v]),v),f_1 \ldots
  f_n}$ in $\RA$, each one corresponding to a choice in the regions
that are crossed. If $\varrho$ is a run in $\A$, we denote
$\iota(\varrho)$ its unique image in $\RA$.
\label{def:iota} 
Note that if $\A$ is non-blocking (resp. reactive), then so is~$\RA$.

\medskip In the rest of the paper we assume that timed automata are
non-blocking, even though general timed automata could also be handled
(but at a technical extra cost). 

\subsection{Specification languages}
\label{subsec:spec}

We fix a finite set of atomic propositions \AP, and a timed automaton
$\A = (L,X,E,\mathcal{I},\mathcal{L})$ over \AP.

\subsubsection{Properties over \AP.}
A \emph{property over \AP} is a subset $P$ of
$\left(2^\AP\right)^\omega$. An infinite run $\rho = s_0
\xrightarrow{\tau_1,e_1} s_1 \xrightarrow{\tau_2,e_2} \ldots$
satisfies the property whenever $\mathcal{L}(s_0) \mathcal{L}(s_1)
\mathcal{L}(s_2) \ldots \in P$.

More generally a \emph{timed property over \AP} is a subset $P$ of
$2^\AP \cdot \left(\mathbb{R}_+ \cdot 2^\AP\right)^\omega$. An
infinite run $\rho = s_0 \xrightarrow{\tau_1,e_1} s_1
\xrightarrow{\tau_2,e_2} \ldots$ satisfies the property $P$ whenever
$\mathcal{L}(s_0)\, \tau_1\, \mathcal{L}(s_1)\, \tau_2\,
\mathcal{L}(s_2) \ldots \in P$.

In both cases, we write $\varrho \models P$ if $\varrho$ satisfies the
property $P$, and we write:
\[
\sem{P}_{\A,s} \egdef \{\varrho \in \Runs(\A,s) \mid \varrho \models
P\}\,.
\]

\begin{rem}
  Obviously, timed properties generalise (untimed) ones. 
\end{rem}

\subsubsection{$\omega$-regular properties.}
$\omega$-regularity is a standard notion in computer science to
characterise simple sets of infinite behaviours. We will only define
here $\omega$-regularity for untimed properties, though the concept
exists for timed properties as well.

Typical $\omega$-regular properties are B\"uchi and Muller properties.
A \emph{B\"uchi property} over \AP is a(n untimed) property $P$ such
that there exists $F \subseteq \AP$ with $P = \{u_0 u_1 \ldots \mid
\{j \mid F \cap u_j \ne \emptyset\}\ \text{is infinite}\}$.  A
\emph{Muller property} over \AP is a property $P$ such that there
exists $\mathcal{F} \subseteq 2^\AP$ with $P = \{u_0 u_1 \ldots \mid
\{j \mid u_j \in \mathcal{F}\}\ \text{is infinite}\}$.

An $\omega$-regular property will be said \emph{internal} for $\A$
whenever there is a bijection $\beta$ between $L$ and $\AP$, and for
each $\ell \in L$, $\mathcal{L}(\ell)$ is the singleton
$\{\beta(\ell)\}$. That is, this allows to specify which states are
visited infinitely often. In that case, we will interpret such
properties on timed automata even though no labelling function has
been given (it is then implicit).  It is well known that (untimed)
automata equipped with internal B\"uchi or Muller acceptance
conditions capture untimed $\omega$-regular properties. This is also
the case for deterministic Muller automata.

\subsubsection{Safety, reachability, and prefix-independent properties.}
We now define simple $\omega$-regular properties.

According to~\cite{CMP-icalp92}, a property $P$ over \AP is a
\emph{safety property} whenever for every $w = u_0 u_1 \ldots \in
\left(2^\AP\right)^\omega$, $w \notin P$ iff there exists $i$ such
that for every $w' = u_0 \ldots u_i u'_{i+1} u'_{i+2} \ldots \in
\left(2^\AP\right)^\omega$, $w' \notin P$. That is, a safety property
is violated by a finite prefix. A \emph{simple safety property} $P$ is
characterised by $F \subseteq \AP$, and is defined by $P = \{u_0 u_1
\ldots \mid \forall j,\ u_j \in F\}$.

The negation of a safety property is a \emph{reachability property}:
$P$ is a reachability property whenever for every $w = u_0 u_1 \ldots
\in \left(2^\AP\right)^\omega$, $w \in P$ iff there exists $i$ such
that for every $w' = u_0 \ldots u_i u'_{i+1} u'_{i+2} \ldots \in
\left(2^\AP\right)^\omega$, $w' \in P$. That is, a reachability
property is validated by a finite prefix. A \emph{simple reachability
  property} $P$ is characterised by $F \subseteq \AP$, and is defined
by $P = \{u_0 u_1 \ldots \mid \exists j,\ u_j \in F\}$.

Another interesting notion is the one of \emph{prefix-independent
  property} $P$, which is such that for every $w = u_0 u_1 \ldots \in
\left(2^\AP\right)^\omega$, $w \in P$ iff for every $i$, $w' = u_{i}
u_{i+1} \ldots \in P$. That is, the property is satisfied or not
independently of its prefix. In particular, B\"uchi and Muller
properties are prefix-independent.  Note that a property $P$ is
prefix-independent if and if its validity only depends on the set of
elements of $2^{\AP}$ which is encountered infinitely often.

\subsubsection{The temporal logic \LTL.}
We consider the linear temporal logic \LTL~\cite{pnueli77} over
\AP, defined inductively as:
\[
\LTL \ni \varphi\ ::=\ p\ \mid\ \varphi \vee \varphi\ \mid\ \varphi
\wedge \varphi\ \mid\ \neg \varphi\ \mid\ \varphi \U \varphi
\]
where $p \in \AP$ is an atomic proposition. We use classical
shorthands like $\true \egdef p \vee \neg p$, $\false \egdef p \wedge
\neg p$, $\F \varphi \egdef \true \U \varphi$, and $\G \varphi \egdef
\neg \F (\neg \varphi)$. We assume the reader is familiar with the
semantics of \LTL, that we interpret here on infinite runs of a timed
automaton.

Each formula $\varphi$ of \LTL gives rise to a property $P_\varphi$,
in the sense given above.  Let $w = u_0 u_1 \ldots \in
\left(2^\AP\right)^\omega$, then:
\begin{eqnarray*}
  w \in P_{p} & ~\Leftrightarrow~ & p \in u_0 \\
  w \in P_{\varphi_1 \vee \varphi_2} & ~\Leftrightarrow~ & w \in P_{\varphi_1} \cup P_{\varphi_2} \\
  w \in P_{\varphi_1 \wedge \varphi_2} & ~\Leftrightarrow~ & w \in P_{\varphi_1} \cap P_{\varphi_2} \\
  w \in P_{\neg \varphi} & ~\Leftrightarrow~ & w \notin P_\varphi \\
  w \in P_{\varphi_1 \U \varphi_2} & ~\Leftrightarrow~ & \exists i \ge 0\ \text{s.t.}\ w_{\ge i} \in P_{\varphi_2}\ \text{and}\ \forall 0 \le j <i,\ w_{\ge j} \in P_{\varphi_1}
\end{eqnarray*}
where $w_{\ge k} = u_k u_{k+1} \ldots$ for every index $k$.

The semantics of a formula $\varphi$ over infinite runs of $\A$ is
derived from that of property $P_\varphi$. One can easily be convinced
that we recover the standard semantics of \LTL. If $\varphi$ is an
\LTL formula and $\rho \in \Runs(\A,s)$, we write $\rho \models
\varphi$ whenever $\rho \models P_\varphi$. We also write
$\sem{\varphi}_{\A,s}$ for $\sem{P_\varphi}_{\A,s}$.

\subsubsection{Specifications given as deterministic timed automata.}
A \emph{specification $\omega$-regular timed automaton} is a tuple $\B
= (\mathsf{L},\mathsf{i}_0,\mathsf{X},\AP,\mathsf{E},\mathcal{F})$
such that:
\begin{itemize}
\item $\mathsf{L}$ is a finite set of locations, and $\mathsf{i}_0:
  2^{\AP} \to \mathsf{L}$ is an input function;
\item $\mathsf{X}$ is a finite set of clocks;
\item $\AP$ is a finite set of atomic propositions;
\item $\mathsf{E} \subseteq \mathsf{L} \times \mathcal{G}(\mathsf{X})
  \times 2^\AP \times 2^{\mathsf{X}} \times \mathsf{L}$ is a finite
  set of edges;
\item $\mathcal{F}$ is an internal $\omega$-regular
  prefix-independent condition;
\item it is deterministic: for all edges $(\mathsf{l}
  \xrightarrow{\mathsf{g}_1,u,\mathsf{Y}_1} \mathsf{l}_1)$ and
  $(\mathsf{l} \xrightarrow{\mathsf{g}_2,u,\mathsf{Y}_2}
  \mathsf{l}_2)$ in $\mathsf{E}$, $\mathsf{g}_1 \wedge \mathsf{g}_2$
  is not satisfiable;
\item it is complete: for every every $\mathsf{l} \in \mathsf{L}$, for
  every $u \in 2^{\AP}$, for every $\mathsf{v} \in
  \mathbb{R}_+^{\mathsf{X}}$, for every $\tau \in \mathbb{R}_+$, there
  exists $(\mathsf{l} \xrightarrow{\mathsf{g},u,\mathsf{Y}}
  \mathsf{l}') \in \mathsf{E}$ such that $\mathsf{v}+\tau \models
  \mathsf{g}$.
\end{itemize}
Runs in $\B$ will be defined in a very similar way as runs in standard
timed automata. Only labels of transitions will be slightly different.
The runs of $\B$ are therefore of the form:
\[
(\mathsf{l}_0,\mathsf{v}_0) \xrightarrow{\tau_1,u_1}
(\mathsf{l}_1,\mathsf{v}_1) \xrightarrow{\tau_2,u_2} \ldots
\]
where conditions on valuations are those expected, and labels $u_i$'s
are those given by the edges that are taken.  Such a run is accepted
by $\B$ whenever the sequence $(\mathsf{l}_i)_{i \ge 0}$ satisfies the
$\omega$-regular condition $\mathcal{F}$.

Such a specification automaton $\B$ naturally gives rise to a timed
property $P_{\B}$ defined as follows. Let $w = u_0 \tau_1 u_1 \tau_2
\ldots \in 2^{\AP} \cdot \left(\mathbb{R}_+ \cdot
  2^{\AP}\right)^\omega$. There is a unique run $\kappa_w =
(\mathsf{l}_0,\mathbf{0}_{\mathsf{X}}) \xrightarrow{\tau_1,u_1}
(\mathsf{l}_1,\mathsf{v}_1) \xrightarrow{\tau_2,u_2} \ldots$ in
automaton $\B$ where $\mathsf{l}_0 = \mathsf{i}_0(u_0)$. The existence
of $\kappa_w$ follows from the completeness of $\B$ and its uniqueness
from the determinism. Then, $w \in P_{\B}$ iff $\kappa_w$ is an
accepting run in $\B$.

The semantics of a specification timed automaton $\B$ over infinite
runs of $\A$ is derived from that of property $P_\B$. If $\B$ is a
specification timed automaton and $\rho \in \Runs(\A,s)$, we write
$\rho \models \B$ whenever $\rho \models P_\B$. We also write
$\sem{\B}_{\A,s}$ for $\sem{P_\B}_{\A,s}$.

\begin{rem}
  If the accepting condition $\mathcal{F}$ is a B\"uchi (or Muller)
  condition, then $\B$ will be called a specification B\"uchi (or
  Muller) timed automaton. If $\mathsf{X} = \emptyset$, we will speak
  of a specification $\omega$-regular (untimed) automaton.  It is well
  known~\cite{VW94} and \cite[Chapter~3]{lncs2500} that for any \LTL
  formula $\varphi$, there is a (deterministic) specification Muller
  untimed automaton $\B_\varphi$ that characterises $\varphi$, that
  is: for every run $\varrho$, $\varrho \models \varphi$ iff $\varrho
  \models \B_\varphi$. In that case, obviously, $\sem{\varphi}_{\A,s}
  = \sem{\B_\varphi}_{\A,s}$ for every $\A$ and $s$.
\end{rem}

\section{Stochastic timed automata: the semantics}
\label{sec:proba-def}

In this section we define a probabilistic semantics for timed
automata.  Probabilities will rule time elapsing as well as choices
between enabled events. This will define a purely stochastic process:
intuitively, from a state, we will first randomly choose a delay among
all possible delays, then we will randomly choose an edge among all
those which are enabled.

The sequel assumes some basics of measure theory and probabilities,
that can be found in classical text books. 

\subsection{Probability measure on runs of a timed automaton}
\label{subsec:proba}
Let $\A = (L,X,E,{\mathcal I},{\mathcal L})$ be a timed automaton. We will
assign probability distributions from every state of $\A$ both over
delays and over enabled moves. Let $s$ be a state of $\A$. The
probability distribution from $s$ over delays is a probability measure
$\mu_s$ over $\IR_+$ (equipped with the standard Borel
$\sigma$-algebra) which satisfies the following requirements, denoted
$(\star)$ in the sequel:
\begin{description}
\item[(H1)] $\mu_s(I(s)) = \mu_s(\IR_+)= 1$,\footnote{Note that this is
    possible, as we assume $\A$ is non-blocking, hence $I(s) \neq
    \emptyset$ for every state $s$ of $\A$.}
\item[(H2)] Writing $\lambda$ for the standard Lebesgue measure
  on $\IR_+$, if
  $\lambda(I(s))>0$, then $\mu_s$ is equivalent\footnote{Two measures
    $\nu$ and $\nu'$ are \emph{equivalent} whenever for each
    measurable set $A$, $\nu(A) =0 \Leftrightarrow \nu'(A)=0$.} to
  $\lambda$ on $I(s)$; Otherwise, $\mu_s$ is equivalent to the uniform
  distribution over points of $I(s)$.
\end{description}
This last condition denotes some kind of fairness w.r.t. enabled
transitions when only punctual delays are possible, in that we cannot
disallow one transition by putting a probability $0$ to delays
enabling that transition.

We also assume a probability distribution $p_s$ over edges, such that
for every edge $e$, $p_s(e) > 0$ iff $e$ is enabled in $s$
(\textit{i.e.}, $s \xrightarrow{e} s'$ for some $s'$).  Moreover, to
simplify, we assume that $p_s$ is given by weights on transitions, as
it is classically done for resolving non-determinism: we associate
with each edge $e$ a weight $w(e) > 0$, and for every state $s$, for
every edge $e$, $p_s(e) = 0$ if $e$ is not enabled in $s$, and $p_s(e)
= w(e)/(\sum_{e'\ \text{enabled in}\ s} w(e'))$ otherwise. As a
consequence, if $s$ and $s'$ are region equivalent, then for every
edge $e$, $p_s(e) = p_{s'}(e)$.

\begin{defi}[Stochastic timed automaton]
\label{def:sta}
A \emph{stochastic timed automaton} is a tuple $\langle \A,\mu,w
\rangle$ consisting of a timed automaton $\A$ equipped with
probability measures $\mu= (\mu_s)_{s \in L \times \IR_+^X}$
satisfying $(\star)$, and positive weights $w = (w_e)_{e \in E}$.
\end{defi}
Note that when measures are clear or implicit in the context, we will
simply write $\A$ for $\langle \A,\mu,w \rangle$. We now show how we
define a probability measure $\Prob_\A$ on infinite runs of $\langle
\A,\mu,w \rangle$. 

We fix a stochastic timed automaton $\langle \A,\mu,w \rangle$, which
satisfies $(\star)$. We define a measure, that we also note
$\Prob_\A$, over finite symbolic paths from state $s$ as follows:
\[
\Prob_\A(\path{s,e_1\ldots e_n}) \ = \ \int_{t \in I(s,e_1)}
p_{s+t}(e_1) \, \Prob_\A(\path{s_t,e_2 \ldots e_n}) \, \ud \mu_{s}(t)
\]
where $s \xrightarrow{t} (s+t) \xrightarrow{e_1} s_t$, and we
initialise with $\Prob_\A(\path{s}) = 1$.  The formula for $\Prob_\A$
relies on the fact that the probability of taking transition $e_1$ at
time $t$ coincides with the probability of waiting $t$ time units and
then choosing $e_1$ among the enabled transitions, \textit{i.e.},
$p_{s+t}(e_1) \ud \mu_s(t)$.  Note that, time passage and actions are
independent events.

The value $\Prob_\A(\path{s,e_1\ldots e_n})$ is the result of $n$
successive one-dimensional integrals, but it can also be viewed as the
result of an $n$-dimensional integral. Hence, we can easily extend the
above definition to finite constrained paths $\path[\mathcal{C}]{s,e_1
  \ldots e_n}$ when $\mathcal{C}$ is Borel-measurable.  This extension
to constrained paths is needed to measure rather complex properties,
like Zeno behaviours or those expressed as specification timed
automata.  The measure $\Prob_{\A}$ can then be defined on cylinders,
letting $\Prob_{\A}(\Cyl(\pi)) = \Prob_{\A}(\pi)$ if $\pi$ is a finite
(constrained) symbolic path. Finally we extend $\Prob_{\A}$ in a
standard and unique way to the $\sigma$-algebra generated by these
cylinders (using Caratheodory's theorem), that we note
$\Omega_{\A}^s$.

We first check that $\Prob_{\A}$ defined as such is a probability
measure.

\begin{restatable}{prop}{propprobameasure}
  \label{prop:proba_measure}
  For every state $s$ of $\A$, $\Prob_\A$ is a probability measure
  over $(\Runs(\A,s),\Omega_{\A}^s)$.
\end{restatable}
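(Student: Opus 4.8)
The plan is to treat $\Prob_\A$ first as a premeasure on cylinders, verify the premeasure requirements (non-negativity and finiteness, total mass $1$, countable additivity on the generating algebra), and then obtain the measure on $\Omega_\A^s$ from Caratheodory's extension theorem. The verification splits into three steps.

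\emph{Step 1: the iterated integrals are well defined and bounded by $1$.} I would prove by induction on $n$ that the quantity $\Prob_\A(\path[\mathcal{C}]{s,e_1\ldots e_n})$ is a well-defined real number in $[0,1]$. The point is that the integrand $t\mapsto p_{s+t}(e_1)\,\Prob_\A(\path{s_t,e_2\ldots e_n})$ is bounded, Borel, and $\le 1$. Measurability of $t\mapsto p_{s+t}(e_1)$ uses that this map is piecewise constant with finitely many pieces: it depends only on the region $[s+t]$, the ray $s+t$ ($t\ge 0$) meets only finitely many regions, and region-equivalent states carry the same edge distribution (as observed right after Definition~\ref{def:sta}). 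Measurability of $t\mapsto\Prob_\A(\path{s_t,e_2\ldots e_n})$ is the induction hypothesis, again using that $s_t$ varies regionally with $t$; and $p_{s+t}(e_1)\le 1$ gives $\Prob_\A(\path[\mathcal{C}]{s,e_1\ldots e_n})\le\mu_s(\IR_+)=1$. Non-negativity is immediate, and $\Prob_\A(\Runs(\A,s))=\Prob_\A(\Cyl(\path{s}))=\Prob_\A(\path{s})=1$.

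\emph{Step 2: mass preservation and finite additivity.} The key identity is, for every state $s$ and every feasible $e_1,\ldots,e_n$,
\[
\Prob_\A(\path{s,e_1\ldots e_n}) \ = \ \sum_{e\in E}\Prob_\A(\path{s,e_1\ldots e_n\,e}),
\]
the summands for never-enabled edges being $0$. Unfolding the recursion, both sides are the same $n$-fold integral over $(t_1,\ldots,t_n)$: on the left the integrand ends with the constant factor $\Prob_\A(\path{s'})=1$, on the right with $\sum_e\int_{I(s',e)}p_{s'+t}(e)\,\ud\mu_{s'}(t)$, where $s'$ is the state reached after $(t_1,e_1),\ldots,(t_n,e_n)$. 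Since $E$ is finite, linearity gives $\sum_e\int_{I(s',e)}p_{s'+t}(e)\,\ud\mu_{s'}(t)=\int_{\IR_+}\bigl(\sum_{e\ \text{enabled at}\ s'+t}p_{s'+t}(e)\bigr)\,\ud\mu_{s'}(t)=\int_{I(s')}1\,\ud\mu_{s'}=1$, using hypothesis (H1) and $\sum_{e\ \text{enabled}}p_\cdot(e)=1$. An analogous splitting holds for constrained cylinders (split the Borel constraint, or append one free coordinate). Together with the observation that two unconstrained cylinders over edge sequences are either nested or disjoint, this yields finite additivity of $\Prob_\A$ on the algebra $\mathcal{A}_0$ of finite unions of (constrained) cylinders, with $\Prob_\A(\emptyset)=0$ and $\Prob_\A(\Runs(\A,s))=1$.

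\emph{Step 3: from premeasure to measure.} It remains to prove $\sigma$-additivity of $\Prob_\A$ on $\mathcal{A}_0$, equivalently continuity at $\emptyset$; Caratheodory's theorem then delivers the unique extension to $\Omega_\A^s$, which is a probability measure by Step~2. For cylinders generated by unconstrained symbolic paths this is the classical compactness argument: mapping a run to its sequence of edges sends that sub-algebra into the compact space $E^\omega$ ($E$ finite) with clopen cylinders, so a decreasing sequence $C_n\downarrow\emptyset$ with $\inf_n\Prob_\A(C_n)>0$ would have every $C_n\ne\emptyset$, hence a common feasible edge sequence and therefore a run in $\bigcap_n C_n$ --- a contradiction. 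To cover the constrained cylinders as well, where the delay coordinates range over the non-compact $\IR_+$, the clean route is to recognise the whole construction as an instance of the Ionescu-Tulcea theorem: granting the natural measurability of the family $(\mu_s)_s$, the maps $s\mapsto K_s$ with $K_s(A\times\{e\})=\int_{A\cap I(s,e)}p_{s+t}(e)\,\ud\mu_s(t)$ are probability transition kernels on $\IR_+\times E$ (total mass $1$ by the Step~2 computation), and $\Prob_\A$ on finite (constrained) cylinders is exactly the finite-dimensional distribution obtained by chaining these kernels from $s$; Ionescu-Tulcea then produces the desired probability measure on $(\Runs(\A,s),\Omega_\A^s)$ directly. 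I expect Step~3 to be the main obstacle: finite additivity is a purely algebraic identity, but upgrading it to countable additivity genuinely needs the compactness/kernel input, and the non-compactness of the delay space is why the elementary König-type argument alone does not suffice. A secondary, purely bookkeeping, obstacle is the measurability claim of Step~1, which rests on $p_{s+t}(e)$ being regionally constant.
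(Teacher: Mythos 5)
Your proposal is correct in substance and its Steps 1--2 coincide with the paper's setup (your mass-preservation identity is exactly the paper's inductive proof that $\sum_{(e_1,\ldots,e_n)}\Prob_\A(\path{s,e_1\ldots e_n})=1$, and both arguments quietly assume measurability in $t$ of the inner integrand). The genuine divergence is in Step 3. The paper first proves $\sigma$-additivity on each ring $\mathcal{F}_n(s)$ of fixed-length constrained cylinders by dominated convergence, and then handles a decreasing sequence $B_n=\Cyl(\pi_n)\downarrow\emptyset$ by projecting the defining constraints $\mathcal{C}_n$ onto their first $i$ coordinates, arguing that some limit projection $\mathcal{C}^i$ must be empty (otherwise a run in $\bigcap_n B_n$ is built coordinate by coordinate), and concluding $\Prob_\A(B_n)\to 0$ by dominated convergence again; Caratheodory then finishes. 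You instead package the one-step dynamics as transition kernels and invoke Ionescu--Tulcea, which manufactures the measure on $(\Runs(\A,s),\Omega_\A^s)$ without any continuity-at-$\emptyset$ verification. Your route is arguably cleaner: it sidesteps the delicate point in the paper's argument, namely commuting the projection with a decreasing intersection of non-compact constraint sets, at the price of granting measurability of $s\mapsto\mu_s$ as a kernel (an assumption the paper also leaves implicit, since its iterated integrals already require it). Two small cautions: your auxiliary K\"onig/compactness argument for the unconstrained sub-algebra elides precisely the delay-consistency issue --- an infinite edge sequence all of whose finite prefixes label non-empty symbolic paths does not \emph{immediately} yield a run, since the witnessing delays must be chosen coherently --- but this is harmless because the Ionescu--Tulcea formulation covers all cylinders at once; and your claim that $t\mapsto\Prob_\A(\path{s_t,e_2\ldots e_n})$ is measurable does not follow from the induction hypothesis alone (the measures $\mu_{s_t}$ vary with the exact valuation, not just the region), so it should be justified by a Tonelli-type argument on the $n$-fold integral --- again a gap you share with, rather than add to, the paper's own write-up.
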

The proof of this proposition justifies \textit{a posteriori} the
above construction for the probability measure $\Prob_{\A}$.  It goes
as follows: first prove that $\Prob_\A$ is a probability measure on
the set of constrained symbolic paths of length $n$ (for all $n$),
then extend this result to the ring generated by all constrained
symbolic paths and finally use Caratheodory's extension theorem to
establish that $\Prob_\A$ is a probability measure on the set of all
runs. The complete proof is rather technical, and therefore postponed
to Appendix~\ref{annex:proba-measure},
page~\pageref{annex:proba-measure}.

\begin{exa}\label{ex:ltlfirst}
  Consider the running stochastic timed automaton
  $\A_{\textsf{running}}$ on Figure~\ref{fig:running}. Assume for all
  states $s_t = (\ell_0,t)$ both uniform distributions over delays and
  discrete moves: $\mu_{s_0} = \lambda$ is the uniform distribution
  over $[0,1]$ and $\mu_{s_t} = \frac{\lambda}{1-t}$ is the uniform
  distribution over $[t,1]$; the weight of each edge is $1$. Then
  $\Prob_{\A_{\textsf{running}}}(\Cyl(\path{s_0,e_1e_1}))=
  \frac{1}{4}$ and
  $\Prob_{\A_{\textsf{running}}}(\path{s_0,{e_1}^\omega}) =
  0$. Indeed:
\[\eqalign{
  \Prob_{\A_{\textsf{running}}}(\path{s_0,e_1e_1}) 
&=\int_{t \in I(s_0,e_1)} p_{s_0+t}(e_1) \,
  \Prob_{\A_{\textsf{running}}}(\path{(\ell_0,t),e_1}) \, \ud
  \mu_{s_0}(t) \cr
&=\int_0^1 \frac{1}{2} \,
  \Prob_{\A_{\textsf{running}}}(\path{(\ell_0,t),e_1}) \, \ud
  \lambda(t) \cr
&=\frac{1}{2} \int_0^1
      \left(\int_{t \in I(s_t,e_1)} p_{s_t+u}(e_1) \,
      \Prob_{\A_{\textsf{running}}}(\path{(\ell_1,u)}) \, \ud
      \mu_{s_t}(u)  \right) \ud \lambda(t) \cr
&=\frac{1}{2} \int_0^1 \left(\int_t^1 \frac{1}{2} \,
        \frac{1}{1-t} \, \ud \lambda(u) \right) \ud \lambda(t) =
      \frac{1}{4}.
  }
\]
  In a similar way one can show that
  $\Prob_{\A_{\textsf{running}}}(\path{s_0,{e_1}^n}) = \frac{1}{2^n}$,
  for $n \in \IN$; and thus conclude that
  $\Prob_{\A_{\textsf{running}}}(\path{s_0,{e_1}^\omega}) = 0$. \corn
\end{exa}

\begin{figure}[htb]
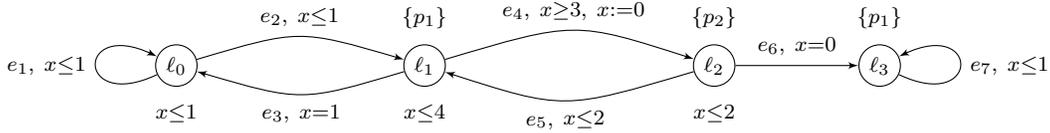

    \begin{center}
      \runningexample
    \end{center}
    \caption{The stochastic timed automaton
      $\A_{\textsf{running}}$.\label{fig:running}}
  \end{figure}

\subsection{Measuring Zeno runs}

In timed automata, and more generally in continuous-time models, some
behaviours are \emph{Zeno}. Recall that a run $\rho=s_0
\xrightarrow{\tau_1,e_1} s_1 \xrightarrow{\tau_2,e_2} \ldots$ of a
timed automaton is \emph{Zeno} if $\sum_{i=1}^\infty \tau_i < \infty$
(\textit{i.e.}, infinitely many actions happen in a finite amount of
time). Zeno behaviours are problematic since they most of the time
have no physical interpretation. As argued in~\cite{DP03}, some
fairness constraints are often put on executions, enforcing non-Zeno
behaviours, but in probabilistic systems, probabilities are supposed
to replace fairness assumptions, and it is actually the case in
continuous-time Markov chains in which Zeno runs are negligible (that
is, have probability~$0$)~\cite{BHHK03}.

We observe that, for any stochastic timed automaton $\A$, the set of
Zeno behaviours from a state $s$ is measurable.  It can indeed be
expressed as
\[
\bigcup_{M \in \mathbb{N}} \bigcap_{n \in \mathbb{N}}
\bigcup_{(e_1,\ldots,e_n) \in E^n} \path[\tau_1+\ldots+\tau_n \le
M]{s,e_1 \ldots e_n}\,.
\]
It therefore makes sense to compute the probability of Zeno
behaviours, and to check whether Zeno behaviours are negligible or
not. Being negligible would be a desirable property, as argued
before. However, in general this is hopeless since some timed automata
are \emph{inherently} Zeno. For instance, all runs are Zeno in the
automaton consisting of a single location with a non-resetting loop
guarded by $x \leq 1$. In the following, we will discuss Zenoness for
several classes of stochastic timed automata.

In the rest of the paper, if $\A$ is a stochastic timed automaton and
$\rho$ is a run of $\A$, we write $\rho \models \Zeno$ whenever $\rho$
is Zeno. If $s$ is a state of $\A$, we then also write $\Prob_{\A}(s
\models \Zeno)$ for the probability of the set of Zeno runs in $\A$
from $s$. We associate the following decision problem, that we call
\emph{almost-sure non-Zenoness} problem: given $\A$ a stochastic timed
automaton, and $s$ a state of $\A$, does $\Prob_{\A}(s \models
\Zeno)=0$?

\subsection{From timed automata to timed region automata}

In this part we establish a strong relation between a stochastic timed
automaton and its stochastic timed region automaton.  We let $\langle
\A,\mu^\A,w^\A \rangle$ be a stochastic timed automaton. The structure
of the corresponding stochastic timed region automaton is obviously
$\RA$. We need now to choose properly probability measures $\mu^{\RA}$
and weights $w^{\RA}$ for $\RA$ so that measures of runs are preserved
\textit{via} the mapping $\iota$ mentioned in
Section~\ref{regionautomaton}. We assume that the probability measures
in $\RA$ satisfy the following conditions: for every state $s$ in
$\A$, $\mu_s^{\A} = \mu_{\iota(s)}^{\RA}$, and for every edge $e \in
E$, $w^{\A}(e) = w^{\RA}(f)$ whenever $f$ corresponds to $e$.  Under
those conditions we show the following transfer properties between
$\A$ and $\RA$.

\begin{restatable}{lem}{lemmaproba}
  \label{lemma:proba}
  Let $\langle \A,\mu^\A,w^\A \rangle$ be a stochastic timed
  automaton, and let $\langle \RA,\mu^{\RA},w^{\RA} \rangle$ be the
  corresponding stochastic timed region automaton as defined above.
  Then, for every set $S$ of runs in $\A$ we have: $S \in
  \Omega_{\A}^s$ iff $\iota(S) \in \Omega_{\RA}^{\iota(s)}$, and in
  this case $\Prob_{\A}(S) = \Prob_{\RA}(\iota(S))$.
\end{restatable}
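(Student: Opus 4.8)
The plan is to show that the map $\iota$ of Section~\ref{regionautomaton} is a bijection from $\Runs(\A,s)$ onto $\Runs(\RA,\iota(s))$ that is measurable in both directions and measure-preserving, and then to conclude by the uniqueness part of Caratheodory's extension theorem. First I would check bijectivity: the inverse of $\iota$ is the projection that erases the region component of states and the extra $\cell(r''),Y$ annotation of edges. That this projection sends a run of $\RA$ to a genuine run of $\A$ uses the fact that guards, invariants and resets are region-respecting --- in particular, on an edge of $\RA$ the guard $\cell(r'')$ entails the original guard of the underlying edge of $\A$, and similarly for invariants --- and a step-by-step check shows that $\iota$ and this projection are mutually inverse.

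Next I would transport cylinders through $\iota$. By the correspondence recalled in Section~\ref{regionautomaton}, the image under $\iota$ of a constrained symbolic path $\pi = \path[\mathcal{C}]{s,e_1\ldots e_n}$ is the finite union, over all admissible choices of the regions successively crossed, of the constrained symbolic paths $\path[\mathcal{C}]{\iota(s),f_1\ldots f_n}$ in $\RA$ whose $i$-th edge $f_i$ projects to $e_i$; hence $\iota$ sends a cylinder to a finite union of cylinders. Conversely, for a fixed tuple $(f_1,\ldots,f_n)$ of edges of $\RA$, the set $\iota^{-1}\big(\Cyl(\path[\mathcal{C}]{\iota(s),f_1\ldots f_n})\big)$ is the set of runs of $\A$ following the underlying edges $e_1\ldots e_n$, satisfying $\mathcal{C}$, and crossing exactly the prescribed sequence of regions; since ``crossing a prescribed region at step $i$'' is a finite boolean combination of linear (in)equalities on the delays $\tau_1,\ldots,\tau_i$, this preimage is again a constrained cylinder of $\A$. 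As $\Omega_\A^s$ and $\Omega_{\RA}^{\iota(s)}$ are generated by the respective cylinders and $\iota$ is a bijection (so direct and inverse images commute with complementation and countable unions), the collection $\{S : \iota(S) \in \Omega_{\RA}^{\iota(s)}\}$ is a $\sigma$-algebra containing all cylinders of $\A$, hence contains $\Omega_\A^s$; arguing symmetrically with $\iota^{-1}$ yields the equivalence $S \in \Omega_\A^s \iff \iota(S) \in \Omega_{\RA}^{\iota(s)}$.

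For the measure, set $\nu(S) = \Prob_{\RA}(\iota(S))$ for $S \in \Omega_\A^s$; by the previous step and Proposition~\ref{prop:proba_measure}, $\nu$ is a probability measure on $(\Runs(\A,s),\Omega_\A^s)$. By Caratheodory uniqueness it is enough to prove that $\nu$ and $\Prob_\A$ agree on the ring generated by the constrained cylinders, i.e. that
\[
\Prob_\A\big(\path[\mathcal{C}]{s,e_1\ldots e_n}\big) \;=\; \sum_{(f_1,\ldots,f_n)} \Prob_{\RA}\big(\path[\mathcal{C}]{\iota(s),f_1\ldots f_n}\big),
\]
where the sum ranges over the tuples of edges of $\RA$ projecting onto $e_1\ldots e_n$. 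This I would prove by induction on $n$, the case $n=0$ being $\Prob_\A(\path{s}) = 1 = \Prob_{\RA}(\path{\iota(s)})$. For the inductive step, unfold the defining integral of $\Prob_\A(\path{s,e_1\ldots e_n})$, split the domain $I(s,e_1)$ according to which time-successor region $r''$ of $[v]$ the valuation $v+t$ lies in, and use the transfer identities built into the construction of $\RA$: $\mu_s^{\A} = \mu_{\iota(s)}^{\RA}$; for $t$ with $v+t \in r''$ one has $p_{s+t}^{\A}(e_1) = p_{\iota(s)+t}^{\RA}(f_1)$, where $f_1$ is the unique edge of $\RA$ determined by $e_1$, $[v]$ and $r''$ (enabled edges and their weights being in bijection on both sides); and $\iota(s_t)$ is the target of $f_1$ fired at time $t$. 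Applying the induction hypothesis to the tail path from $s_t$, recognising for each $r''$ and each tail tuple $(f_2,\ldots,f_n)$ the resulting integral over $\{t : v+t \in r''\} \cap I(s,e_1) = I(\iota(s),f_1)$ as $\Prob_{\RA}(\path[\mathcal{C}]{\iota(s),f_1 f_2\ldots f_n})$, and then summing over $r''$ and over tail tuples, yields exactly the right-hand side; the constraint $\mathcal{C}$ is simply carried along throughout.

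The only genuinely delicate point is the bookkeeping in this last step: one must check that the partition of $I(s,e_1)$ by time-successor regions matches precisely the possible first edges $f_1$ of the $\RA$-paths over $\iota(s)$, so that summing the induction hypothesis over region choices recombines into the full sum over $\RA$-paths with neither omission nor double counting. Everything else --- bijectivity, bi-measurability, and the appeal to Caratheodory --- is routine.
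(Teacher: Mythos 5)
Your proposal is correct and follows essentially the same route as the paper: reduce to agreement on finite constrained symbolic paths (the $\sigma$-algebra correspondence and Caratheodory uniqueness then give the general statement), and prove the key identity $\Prob_\A(\pi)=\Prob_{\RA}(\iota(\pi))$ by induction on the length, splitting the integral over $I(s,e_1)$ according to the time-successor region of $s+t$ and invoking the transfer hypotheses on $\mu$ and $w$. The ``delicate bookkeeping'' you flag --- matching the region partition of $I(s,e_1)$ with the possible first edges $f_1$ of $\RA$ --- is exactly the recombination step carried out in the paper's chain of equalities.
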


To establish Lemma~\ref{lemma:proba} it is sufficient to prove that
the measures coincide on finite constrained paths, since it implies
that they agree on cylinders and by uniqueness of the extension on any
measurable set of infinite runs. The complete proof is given in
Appendix~\ref{annex:region}, page~\pageref{annex:region}.

Thanks to Lemma~\ref{lemma:proba}, we will be able to lift results
proven on $\RA$ to $\A$. 

\subsection{Almost-sure satisfaction}

Let $\langle \A,\mu,w \rangle$ be a stochastic timed automaton over
\AP and $s$ be a state of $\A$.

We refine the notion of timed properties that were defined in
Section~\ref{subsec:spec}. Let $P \subseteq 2^\AP \cdot
\left(\mathbb{R}_+ \cdot 2^\AP \right)^\omega$ be a timed property
over \AP. We say that $P$ is \emph{$(\A,s)$-measurable} whenever
$\sem{P}_{\A,s} \in \Omega_{\A}^{s}$.  We say $P$ is
\emph{$\A$-measurable} (resp. \emph{measurable}) whenever it is
$(\A,s)$-measurable for every state $s$ (resp. it is $\A$-measurable
for every $\A$).

The following lemma establishes the measurability of several classes
of properties, and is proven in Appendix~\ref{app:mesurabilite},
page~\pageref{app:mesurabilite}.

\begin{restatable}{lem}{mesurabilite}
  $\omega$-regular properties and properties given as \LTL formulas
  are measurable. Timed properties given as specification B\"uchi or
  Muller timed automata are measurable.
\end{restatable}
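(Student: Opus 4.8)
The plan is to reduce everything to the measurability of cylinders generated by finite constrained symbolic paths, which by construction lie in $\Omega_\A^s$, and then express each class of properties as a countable Boolean combination of such cylinders. First I would treat $\omega$-regular properties. Recall that an $\omega$-regular property $P$ over $\AP$ can be recognised by a deterministic Muller automaton on the alphabet $2^\AP$; more concretely, membership of an infinite word $w = u_0 u_1 \ldots$ in $P$ depends only on the set $\mathrm{Inf}(w) \subseteq 2^\AP$ of letters occurring infinitely often, and there is a family $\mathcal{F} \subseteq 2^{2^\AP}$ with $w \in P$ iff $\mathrm{Inf}(w) \in \mathcal{F}$. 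For a run $\varrho = s_0 \xrightarrow{\tau_1,e_1} s_1 \cdots$ of $\A$, the letter $u_i$ is $\mathcal{L}(s_i)$, which is entirely determined by the location of $s_i$, hence by the finite sequence of edges $e_1 \ldots e_i$. Therefore, for a fixed location multiset (or more simply, for each subset $G$ of locations that we wish to see infinitely often), the set of runs whose edge sequence visits exactly $G$ infinitely often is
\[
\bigcup_{\ell \in G} \bigcap_{n} \bigcup_{m \ge n} \bigcup_{\substack{e_1 \ldots e_m \in E^m \\ \target(e_m) = \ell}} \Cyl\bigl(\path{s,e_1 \ldots e_m}\bigr)
\ \ \cap\ \
\bigcap_{\ell \notin G} \bigcup_{n} \bigcap_{m \ge n} \bigcup_{\substack{e_1 \ldots e_m \in E^m \\ \target(e_m) \ne \ell}} \Cyl\bigl(\path{s,e_1 \ldots e_m}\bigr),
\]
a countable Boolean combination of cylinders, hence in $\Omega_\A^s$. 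Taking the appropriate finite union over those $G$ whose induced label set lies in $\mathcal{F}$ gives $\sem{P}_{\A,s} \in \Omega_\A^s$. Since by the remark after the $\omega$-regular definitions every \LTL formula $\varphi$ is equivalent (on runs) to a deterministic specification Muller untimed automaton $\B_\varphi$, and $\sem{\varphi}_{\A,s} = \sem{\B_\varphi}_{\A,s}$, the \LTL case follows from the $\omega$-regular case (or directly, it is the next case with empty clock set).

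For properties given as a specification B\"uchi or Muller timed automaton $\B = (\mathsf{L}, \mathsf{i}_0, \mathsf{X}, \AP, \mathsf{E}, \mathcal{F})$, the argument is the same in spirit but the synchronisation with $\B$ now involves timing. Given a run $\varrho = s_0 \xrightarrow{\tau_1,e_1} s_1 \cdots$ of $\A$, the induced timed word is $\mathcal{L}(s_0)\, \tau_1\, \mathcal{L}(s_1)\, \tau_2 \cdots$, and because $\B$ is complete and deterministic there is a unique run $\kappa$ of $\B$ on this word; acceptance depends on which locations of $\B$ are visited infinitely often. The key point is that the location reached by $\kappa$ after reading the first $n$ letters — call it $q_n \in \mathsf{L}$ — is a function of the finite data $(e_1, \ldots, e_n, \tau_1, \ldots, \tau_n)$, and moreover the condition ``$q_n = q$'' is, for each fixed edge sequence $e_1 \ldots e_n$, a condition on $(\tau_1, \ldots, \tau_n)$ cut out by finitely many guard inequalities of $\B$: it is a Borel (indeed semialgebraic with integer coefficients) subset $\mathcal{C}_{q, e_1 \ldots e_n}^n$ of $\IR_+^n$. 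Hence the set of runs of $\A$ with $q_n = q$ is the countable union $\bigcup_{e_1 \ldots e_n} \Cyl\bigl(\path[\mathcal{C}_{q, e_1 \ldots e_n}^n]{s, e_1 \ldots e_n}\bigr)$, which is in $\Omega_\A^s$ since $\Prob_\A$ was extended to constrained paths with Borel constraints. Then ``$q$ is visited infinitely often by $\kappa$'' is $\bigcap_n \bigcup_{m \ge n}\{q_m = q\}$, a countable Boolean combination of such sets, and $\sem{\B}_{\A,s}$ is obtained by the finite Boolean combination over $\mathsf{L}$ dictated by $\mathcal{F}$ (together with fixing the initial location via $\mathsf{i}_0(u_0)$, which is determined by $\mathcal{L}(s_0)$, hence by the location of $s_0$). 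All operations stay inside the $\sigma$-algebra $\Omega_\A^s$, so $\sem{\B}_{\A,s}$ is measurable.

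The routine parts are the bookkeeping of the Boolean combinations and checking that each guard-cut region $\mathcal{C}_{q,e_1\ldots e_n}^n$ is genuinely Borel — straightforward since guards are finite conjunctions of linear inequalities with integer constants and the $\tau_i$ enter affinely into the clock values read by $\B$. The main obstacle, and the only place needing care, is making precise that the location $q_n$ of $\B$ after $n$ steps is indeed a measurable function of $(e_1 \ldots e_n, \tau_1 \ldots \tau_n)$: one has to track how the valuation $\mathsf{v}_i$ of $\B$'s clocks evolves (it is a piecewise-affine function of the $\tau_j$'s, with the pieces delimited by the guards chosen at each step), and argue that the unique deterministic-complete run of $\B$ threads through a finite, explicitly describable case split. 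Once this ``timed synchronisation is Borel'' fact is nailed down, the rest is the same countable Boolean-combination argument as in the untimed case. The detailed verification is deferred to Appendix~\ref{app:mesurabilite}.
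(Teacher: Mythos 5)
Your overall strategy---reduce everything to countable Boolean combinations of (Borel-)constrained cylinders, which generate $\Omega_{\A}^s$ by construction---is sound, and your treatment of specification timed automata is correct. There is, however, a genuine error in your first paragraph: for a general $\omega$-regular property $P$, membership of $w$ in $P$ does \emph{not} depend only on $\mathrm{Inf}(w)$ (consider ``the first letter contains $p$'', which is $\omega$-regular but not prefix-independent). That reduction is valid only for the B\"uchi and Muller properties as literally defined in Section~\ref{subsec:spec}, whereas the lemma (and the paper's own proof) targets arbitrary $\omega$-regular properties. For those you must run a deterministic Muller automaton on $w$ and look at which \emph{automaton states} recur, not which letters recur. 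Fortunately this is exactly what your third paragraph provides: with $\mathsf{X}=\emptyset$ the state $q_n$ reached after $n$ letters is determined by $e_1\ldots e_n$ alone (no Borel constraint on the delays is even needed), so the $\omega$-regular and \LTL cases go through by routing them into the untimed-specification-automaton case. As written, though, the first paragraph cannot stand on its own.

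Modulo that repair, your route differs from the paper's. The paper reduces all cases to deterministic specification timed automata, forms the product $\A\ltimes\B$, passes to its (untimed) region automaton, invokes Vardi's result that $\omega$-regular events in a finite Markov chain are Boolean combinations of cylinders, and then pulls each region-automaton cylinder back to a finite union of constrained cylinders in $\A$. You instead synchronise with $\B$ directly: the location of $\B$ after $n$ steps is a function of $(e_1,\ldots,e_n,\tau_1,\ldots,\tau_n)$ whose level sets are cut out by finitely many guard inequalities (the clock values of $\B$ being piecewise affine in the $\tau_j$'s), hence Borel, and acceptance is then a countable Boolean combination of Borel-constrained cylinders. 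Your argument is more self-contained---no region construction and no external appeal to Vardi---at the price of verifying by hand that the deterministic run of $\B$ is a Borel function of the delays, which is the ``routine'' step you correctly identify as the only delicate point. Both approaches land in the same $\sigma$-algebra.
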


In the sequel, if $P$ is a measurable property over \AP, we write
$\Prob_{\A}(s \models P)$ for $\Prob_{\A}\{\varrho \in \Runs(\A,s)
\mid \varrho \models P\}$. 

\begin{defi}
  Let $s$ be a state of $\A$.  Assume $P$ is an $(\A,s)$-measurable
  property over \AP.  We say that $\A$ \emph{almost-surely satisfies}
  $P$, from $s$, and we then write $\A,s \robust_{\Prob} P$, whenever
  $\Prob_{\A}(s \models P) = 1$. The \emph{almost-sure model-checking
    problem} asks, given $\A$, $s$ and $P$, whether $\A,s
  \robust_{\Prob} P$.
\end{defi}

The following corollary is an immediate consequence of Lemma~\ref{lemma:proba}. 

\begin{cor}\label{prop:proba-A-RA}
  Let $\A$ be a stochastic timed automaton, $s$ a state of $\A$, and
  $\varphi$ a measurable property over \AP.  Then,
  \[
  \A,s \robust_{\Prob} \varphi\ \Leftrightarrow\ \RA,\iota(s)
  \robust_{\Prob} \varphi\,.
  \]
\end{cor}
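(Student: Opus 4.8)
The plan is to read off Corollary~\ref{prop:proba-A-RA} directly from Lemma~\ref{lemma:proba}, since essentially all the work has already been done there. The statement to prove is that $\A,s \robust_\Prob \varphi$ iff $\RA,\iota(s) \robust_\Prob \varphi$, which by definition of $\robust_\Prob$ unfolds to: $\Prob_\A(s \models \varphi) = 1$ iff $\Prob_\RA(\iota(s) \models \varphi) = 1$. So the whole argument amounts to showing that the relevant sets of runs have equal probability, which is what Lemma~\ref{lemma:proba} gives once we check that the two sets correspond to each other under $\iota$.

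\medskip
First I would unfold the notation: $\Prob_\A(s \models \varphi) = \Prob_\A(\sem{P_\varphi}_{\A,s})$, where $\sem{P_\varphi}_{\A,s} = \{\varrho \in \Runs(\A,s) \mid \varrho \models P_\varphi\}$. The key observation is that membership in $P_\varphi$ depends only on the sequence of labels $\mathcal{L}(s_0)\mathcal{L}(s_1)\cdots$ along a run, and since $\RA$ inherits the labelling via $\lambda((\ell,r)) = \mathcal{L}(\ell)$, the image run $\iota(\varrho)$ carries exactly the same label sequence as $\varrho$. Hence $\varrho \models P_\varphi$ iff $\iota(\varrho) \models P_\varphi$, i.e.
\[
\iota\bigl(\sem{P_\varphi}_{\A,s}\bigr) = \sem{P_\varphi}_{\RA,\iota(s)}\,.
\]
This uses that $\iota$ is a bijection between $\Runs(\A,s)$ and $\Runs(\RA,\iota(s))$ (a run in $\A$ has a unique image, and conversely the region labels on a run of $\RA$ are determined, so the map is onto and one-to-one). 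The property $\varphi$ being measurable (by the measurability lemma) ensures $\sem{P_\varphi}_{\A,s} \in \Omega_\A^s$, so Lemma~\ref{lemma:proba} applies and yields
\[
\Prob_\A\bigl(\sem{P_\varphi}_{\A,s}\bigr) = \Prob_\RA\bigl(\iota(\sem{P_\varphi}_{\A,s})\bigr) = \Prob_\RA\bigl(\sem{P_\varphi}_{\RA,\iota(s)}\bigr)\,.
\]
In particular the left-hand side equals $1$ iff the right-hand side equals $1$, which is precisely the claimed equivalence.

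\medskip
There is essentially no obstacle here: the only point requiring a sentence of care is the untimed nature of $\varphi$ — it is a property over $\AP$, so it is insensitive to the extra timing/region information that distinguishes $\iota(\varrho)$ from $\varrho$, and this is exactly why label sequences are preserved. If one wanted to extend to timed properties (specification timed automata), the same argument works since $\iota$ also preserves the delays $\tau_i$ along the run, so the timed trace $\mathcal{L}(s_0)\,\tau_1\,\mathcal{L}(s_1)\cdots$ is preserved as well. I would state the corollary's proof in two or three lines: unfold $\robust_\Prob$, note $\iota(\sem{\varphi}_{\A,s}) = \sem{\varphi}_{\RA,\iota(s)}$ because $\iota$ preserves labels (and delays), invoke Lemma~\ref{lemma:proba} to transfer the probability, and conclude.
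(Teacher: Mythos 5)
Your proposal is correct and matches the paper's treatment: the paper simply declares the corollary an immediate consequence of Lemma~\ref{lemma:proba}, and the details you supply (that $\iota$ is a label- and delay-preserving bijection on runs, so $\iota(\sem{\varphi}_{\A,s}) = \sem{\varphi}_{\RA,\iota(s)}$, after which the lemma transfers the probability) are exactly the right ones.
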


\begin{exa}\label{ex:ltl}
  Consider $\A_{\textsf{running}}$ again from
  Figure~\ref{fig:running}, and reproduced below, with initial state
  $s_0=(\ell_0,0)$ and assuming uniform distributions over delays and
  uniform distribution over discrete moves in all states. Then,
  $\A_{\textsf{running}},s_0 \robust_{\Prob} \F(p_1 \wedge \G (p_1
  \Rightarrow \F p_2))$. Indeed, in state $(\ell_0,\nu)$ with $0 \le
  \nu \le1$, the probability of firing $e_2$ (after some delay) is
  always $1/2$ (guards of $e_1$ and $e_2$ are the same, there is thus
  a uniform distribution over the two edges), the location $\ell_1$ is
  eventually reached with probability~$1$. In $\ell_1$, the transition
  $e_3$ will unlikely happen, because its guard $x=1$ is too much
  ``small'' compared to the guard $x \geq 3$ of the transition
  $e_4$. The same phenomenon arises in location $\ell_2$ between the
  transitions $e_5$ and $e_6$. In conclusion, the runs of the timed
  automaton $\A_{\textsf{running}}$ (from $s_0$) almost surely follow
  sequences of transitions of the form
  ${e_1}^*e_2(e_4e_5)^\omega$. Hence, with probability $1$, the
  formula $\F(p_1 \wedge \G (p_1 \Rightarrow \F p_2))$ is satisfied.
  Note that the latter formula is not satisfied in
  $\A_{\textsf{running}}$ from $s_0$ (under the classical \LTL
  semantics), since some runs violate it: `\emph{staying in $\ell_0$
    forever}', `\emph{reaching $\ell_3$}', etc... All these
  counter-examples are unlikely and vanish thanks to our probabilistic
  semantics. \corn
  \begin{figure}[h!]
    \begin{center}
      \runningexample
    \end{center}
  \end{figure}
\end{exa}

Our aim is to \emph{decide} the almost-sure model-checking problem.
It is clear that given a measurable property $P$, the value
$\Prob_\A(s \models P)$ depends on the measures $\mu$ and
$w$. However, we show later that whether $\A,s \robust_{\Prob} P$ is
independent of the precise values of $\mu$ and $w$. To prove this, we
design a finite abstraction, independent of $\mu$ and $w$, which is
correct for deciding the almost-sure model-checking problem in a
number of classes of stochastic timed automata.

\section{A topological semantics}\label{sec:toposem}
In~\cite{VV06} almost-sure model-checking of concurrent reactive
systems is characterised by a topological notion: largeness is
qualitative and captures the notion of ``many runs''. Inspired by that
work, we propose a topological semantics for timed automata, based on
the notion of large sets, which will help us characterise almost-sure
sets. In our context also, the topological semantics is purely
qualitative but nevertheless gives information on ``how big'' a set of
paths satisfying a given property is.

In this section, relying on a notion of thickness for symbolic paths,
we first define a natural topology over infinite runs of a given timed
automaton. This topology induces a \emph{large} semantics: an
$\omega$-regular property is satisfied if ``most of the runs'' satisfy
it. As pointed out already in~\cite{VV06}, \emph{largeness}, and its
complement \emph{meagerness}, are better appropriate than density to
express a notion such as ``most of the runs''. Indeed, ``small'' sets
can be dense, and the complement of a dense set can also be dense
(\textit{e.g.} $\IQ$ in $\IR$), whereas it is not the case for large
sets. Let us start this section by recalling the notion of large sets
and their characterisation using Banach-Mazur games.

\subsection{Largeness and the Banach-Mazur game}
\label{subsec:BMS} 
We refer to~\cite{munkres00} for basic notions of topology
(topological space, interior, closure, \textit{etc}). However we
recall here the more specific notion of \emph{largeness} and also
provide its elegant characterisation in terms of \emph{Banach-Mazur
  games}~\cite{oxtoby57}.

\subsubsection{Some topological notions.} 
Let $(A,{\mathcal T})$ be a topological space. If $B \subseteq A$, we
denote by $\mathring{B}$ (resp. $\overline{B}$) the \emph{interior}
(resp. \emph{closure}) of $B$.  Let us recall that a set ${\mathcal T}'
\subseteq {\mathcal T}$ is called a \emph{basis} for the topology $\mathcal T$
if every open set (\textit{i.e.}, elements of $\mathcal T$) can be
obtained as the union of elements of ${\mathcal T}'$. In this case, the
elements of ${\mathcal T}'$ are called \emph{basic opens}.  A set $B
\subseteq A$ is \emph{nowhere dense} if the interior of the closure of
$B$ is empty, \textit{i.e.}, $\mathring{\overline{B}} = \emptyset$.  A
set is \emph{meagre} if it is a countable union of nowhere dense
sets. Finally, a set is \emph{large} if its complement is meagre.

\begin{exa}
  Let $\IR$ be the set of real numbers equipped with its natural
  topology (that is, basic open sets are the open intervals).  The set
  of integers $\IZ$ is nowhere dense in $\IR$.  The set of rational
  numbers $\IQ$ is dense (in $\IR$) however $\IQ$ is meagre since is a
  countable union of singletons (which are clearly nowhere dense
  sets); this implies that $\IR \setminus \IQ$ is large. \corn
\end{exa}

\subsubsection{Banach-Mazur game.}
Although the notion of largeness is quite abstract, it admits a very
nice characterisation in terms of a two-player game, known as
\emph{Banach-Mazur game}. 

\begin{defi}[Banach-Mazur game]\label{BM:defi}
  Let $(A,{\mathcal T})$ be a topological space and $\mathcal B$ be a family
  of subsets of $A$ satisfying the two following properties:
  \begin{itemize}
  \item for all $B \in {\mathcal B}$, $\mathring{B} \ne \emptyset$, and
  \item for all $O$ a non-empty open set of $A$, there exists $B \in
    {\mathcal B}$ such that $B \subseteq O$.
  \end{itemize}
  Fix $C$ a subset of $A$. Two players alternate their moves:
  Player~$1$ starts and chooses an element $B_1$ of $\mathcal B$;
  Player~$2$ then responds by choosing an element $B_2$ of $\mathcal B$
  such that $B_1 \supseteq B_2$; Then Player~1 chooses $B_3$ in $\mathcal
  B$ such that $B_2 \supseteq B_3$, and so on. This way, they define a
  non increasing sequence of sets $B_i$:
  $$A \supseteq B_1 \supseteq B_2 \supseteq B_3 \cdots$$
  where the $B_{2i+1}$'s (resp. $B_{2i}$'s) are chosen by Player~$1$
  (resp. Player~$2$), for $i \in \IN$. Player~$1$ wins the game if the
  intersection of all $B_i$'s intersects $C$, \textit{i.e.},
  $$\bigcap_{i=1}^\infty B_i \cap C \neq \emptyset\,.$$
  Otherwise, Player~$2$ wins the game.
\end{defi}
Notice that typical examples of family $\mathcal B$ are provided by
topology bases.

Banach-Mazur games are not always determined, even for simple
topological spaces (see \cite[Remark~1]{oxtoby57}). Still a natural
question is to know when the players have winning strategies. The
following result gives a partial answer:

\begin{thm}[Banach-Mazur~\cite{oxtoby57}]\label{BM:thm}
  Player~$2$ has a winning strategy in the Banach-Mazur game with
  target set $C$ if and only if $C$ is meagre.
\end{thm}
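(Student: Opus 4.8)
The plan is to establish the two implications separately. For the easy direction, suppose $C$ is meagre and describe an explicit winning strategy for Player~$2$. Write $C\subseteq\bigcup_{n\ge 1}F_n$ where each $F_n$ is nowhere dense; replacing $F_n$ by its closure (still nowhere dense), we may assume each $F_n$ is closed, so that $O_n:=A\setminus F_n$ is open and dense. Player~$2$ plays as follows: when Player~$1$ has just played $B_{2n-1}$, the set $\mathring B_{2n-1}$ is nonempty open (first property of $\mathcal B$), hence $\mathring B_{2n-1}\cap O_n$ is nonempty open (density of $O_n$), and the second property of $\mathcal B$ yields some $B_{2n}\in\mathcal B$ with $B_{2n}\subseteq\mathring B_{2n-1}\cap O_n\subseteq B_{2n-1}$; this is a legal move and forces $B_{2n}\cap F_n=\emptyset$. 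Since $\bigcap_i B_i\subseteq\bigcap_n B_{2n}$, we get $\bigcap_i B_i\cap F_n=\emptyset$ for all $n$, hence $\bigcap_i B_i\cap C=\emptyset$, so Player~$2$ wins.

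For the converse, assume Player~$2$ has a winning strategy $\sigma$; I will produce a countable family of nowhere dense sets covering $C$. First I organise the $\sigma$-consistent partial plays of even length into a tree $\mathcal T=\bigcup_n\mathcal T_{2n}$: starting from the empty play ($\mathcal T_0=\{()\}$, with the convention $B_0:=A$), and given a position $p\in\mathcal T_{2n-2}$ with last component $B_{2n-2}$, I use Zorn's lemma to choose a \emph{maximal} family $S_p$ of two-step $\sigma$-consistent extensions $(p,B_{2n-1},B_{2n})$ — so $B_{2n-1}\in\mathcal B$, $B_{2n-1}\subseteq B_{2n-2}$, $B_{2n}=\sigma(p,B_{2n-1})$ — whose last components have pairwise disjoint interiors, and set $\mathcal T_{2n}:=\bigcup_p S_p$. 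Define $G_n:=\bigcup\{\mathring B_{2n}:(B_1,\dots,B_{2n})\in\mathcal T_{2n}\}$.

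The key lemma is that each $G_n$ is open and dense; openness is immediate. For density I argue by induction on $n$: the interiors of the last components of positions in $\mathcal T_{2n-2}$ are pairwise disjoint (they sit inside the pairwise-disjoint interiors of positions in $\mathcal T_{2n-4}$, and within a single $S_{p'}$ they are disjoint by construction), and their union is $G_{n-1}$ (equal to $A$ for $n=1$), which is dense by induction. For each $p\in\mathcal T_{2n-2}$, the union $V_p:=\bigcup_{q\in S_p}\mathring B_{2n}^{(q)}$ is dense in $\mathring B_{2n-2}$: given any nonempty open $O\subseteq\mathring B_{2n-2}$, squeeze $B_{2n-1}\in\mathcal B$ into $O$, let $\sigma$ answer $B_{2n}$, and invoke maximality of $S_p$ to conclude $\mathring B_{2n}$ meets $V_p$; since $\mathring B_{2n}\subseteq O$ this gives $O\cap V_p\neq\emptyset$. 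Gluing the $V_p$ over the disjoint open pieces $\mathring B_{2n-2}$ and using density of $G_{n-1}$ yields density of $G_n$. Hence each $A\setminus G_n$ is closed and nowhere dense.

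It remains to check $C\cap\bigcap_n G_n=\emptyset$, which gives $C\subseteq\bigcup_n(A\setminus G_n)$ and thus $C$ meagre. If some $x\in C$ lay in every $G_n$, then from $x\in G_1$ pick $(B_1,B_2)\in\mathcal T_2$ with $x\in\mathring B_2$; disjointness of second-component interiors in $\mathcal T_2$ makes this position unique, so $x\in G_2=\bigcup_{p\in\mathcal T_2}V_p$ forces $x\in V_{(B_1,B_2)}$, giving an extension $(B_1,B_2,B_3,B_4)\in\mathcal T_4$ with $x\in\mathring B_4$; iterating produces an infinite play $(B_1,B_2,B_3,\dots)$ consistent with $\sigma$ along which $x\in B_{2n}$ for all $n$, hence $x\in\bigcap_i B_i\cap C$ — so Player~$1$ wins this $\sigma$-play, contradicting that $\sigma$ is winning for Player~$2$. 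The main obstacle is this second implication, and specifically the density argument for the $G_n$: it is precisely the maximality built into the antichains $S_p$ via Zorn's lemma that makes $V_p$ dense in $\mathring B_{2n-2}$, and carrying the disjointness/density bookkeeping up the tree so that it yields genuine density of $G_n$ in $A$ is the delicate step.
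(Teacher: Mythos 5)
Your proof is correct. Note that the paper does not prove this statement at all --- it is quoted as a classical result with a citation to Oxtoby (1957) --- so there is no in-paper argument to compare against; what you have written is essentially Oxtoby's own proof. The easy direction (meagre $\Rightarrow$ Player~2 wins, by successively dodging the closed nowhere dense pieces $F_n$ inside $\mathring B_{2n-1}\cap O_n$) is fine, and the converse correctly implements the standard tree-of-plays argument: the maximal antichains $S_p$ with pairwise disjoint interiors make each $G_n$ open dense, and a point of $C\cap\bigcap_n G_n$ would thread a unique $\sigma$-consistent branch of the tree, producing a play won by Player~1 against the supposedly winning $\sigma$. All the delicate points (legality of the extension $B_{2n-1}\subseteq O\subseteq\mathring B_{2n-2}\subseteq B_{2n-2}$, uniqueness of the position containing $x$ via disjointness of interiors, and $\bigcap_i B_i=\bigcap_n B_{2n}$ by monotonicity) are handled correctly.
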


To illustrate Theorem~\ref{BM:thm} we give the following simple
example.
\begin{exa}\label{ex:BMGS}
  Let $(\IR,{\mathcal T})$ be the set of real numbers equipped with the
  natural topology, $\mathcal B$ be the family of open intervals with
  rational bounds, and $C$ be the open set $(0,1)$. Intuitively,
  $(0,1)$ is not meagre, and to prove it using Theorem~\ref{BM:thm},
  it is sufficient to show that Player~$2$ does not have a winning
  strategy in the Banach-Mazur game on $\mathcal B$ with target set $C$.

  Assume Player~$1$'s first move is to pick the set $C$ for
  $B_1$. From then on, the game takes place within $C$, and the only
  way for Player~$2$ to win is to build a sequence of $B_i$'s
  converging to the empty set. Let us now explain how Player~$1$ can
  prevent this from happening. Let $B_{2i}=(a_i,b_i)$ be the $i$th set
  chosen by Player~$2$. Player~$1$'s response to $(a_i,b_i)$ is the
  interval $B_{2i+1} = (a_i+\epsilon_i,b_i-\epsilon_i)$, where
  $\epsilon_i = \frac{b_i-a_i}{3}$. Notice that $B_{2i+1}$ is a
  regular move for Player~$1$ since $B_{2i} \supseteq B_{2i+1}$ and
  $B_{2i+1} \in {\mathcal B}$. Notice also that the closed set $F_i =
  [a_i+\frac{\epsilon_i}{2},b_i-\frac{\epsilon_i}{2}]$ satisfies the
  following inclusions: $B_{2i} \supseteq F_i \supseteq
  B_{2i+1}$. Given any strategy of Player~$2$, using the above
  strategy Player~$1$ ensures:
  $$\bigcap_{i=1}^\infty {B_i} \cap {C} \ = \ \bigcap_{i=1}^\infty {F_i}.$$
  This intersection is guaranteed to be non-empty by Heine-Borel
  theorem, since the sequence is included in $[0,1]$ which is a
  compact set. As a consequence, Player~$2$ does not have a winning
  strategy, and thus $C$ is non meagre. \corn
\end{exa}

\begin{rem}\label{rem:Baire}
  Surprisingly, in general, there is no relation between meagre and
  open sets. Indeed non meagre open sets obviously exist, but one can
  identify topological spaces where open sets are meagre. Let us
  consider the set of rational numbers $\IQ$ with its natural topology
  (whose basic open sets are the open intervals). In this topology all
  sets, and in particular all open sets, are countable and thus
  meagre. 

  A topological space in which all non-empty open sets are not meagre
  is called a \emph{Baire space}.\footnote{In modern definitions, a
    topological space is a Baire space if each countable union of
    closed sets with an empty interior has an empty interior. However,
    the two definitions coincide, see~\cite[p.295]{munkres00}.} As we
  have just noticed, not all topological spaces are Baire
  spaces. Indeed, under their natural topologies, $\IQ$ is not a Baire
  space whereas $\IR$ is (a proof follows the same ideas than in
  Example~\ref{ex:BMGS}). This remark suggests that largeness and
  meagerness are even more relevant in Baire spaces.
\end{rem}

\subsection{Thick and thin symbolic paths}\label{subsec:dim}

We fix a timed automaton $\A=(L,X,E,{\mathcal I},{\mathcal L})$ for the rest
of the section. In order to attach a topology to sets of infinite
runs in $\A$, we first define thick symbolic paths. In $\IR^n$, open
sets are among those sets of maximal dimension. Symbolic paths do not
exactly lie in $\IR^n$, but we will see that each symbolic constrained
path of length $n$ can be embedded in some ambient space $\IR^m$, with
$m \leq n$. \emph{Thick} symbolic paths will then naturally arise as
symbolic paths of maximal dimension in their ambient space. Before
going to the definition, let us explain through an example the
intuition behind this notion.

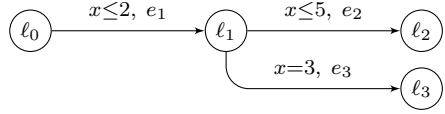
\begin{figure} 
 \begin{center}
    \begin{tikzpicture}[xscale=1.3,yscale=1.1]
      \everymath{\scriptstyle}
      \path[use as bounding box] (0,-.8) -- (4,.4);
      \draw (0,0) node [draw,circle,inner sep=2pt] (l0) {$\ell_0$};
      \draw (2,0) node [draw,circle,inner sep=2pt] (l1) {$\ell_1$};
      \draw (4,0) node [draw,circle,inner sep=2pt] (l2) {$\ell_2$};
      \draw (4,-.7) node [draw,circle,inner sep=2pt] (l4) {$\ell_3$};
      \draw [arrows=-latex'] (l0) -- (l1) node [above,midway] {$x \le 2,\ {e_1}$};
      \draw [arrows=-latex'] (l1) -- (l2) node [above,midway] {$x \le 5,\ {e_2}$};
      \draw [arrows=-latex', rounded corners=3mm] (l1) -- +(0,-.7) -- (l4)
      node [above,midway] {$x = 3,\ e_3$};
    \end{tikzpicture}
  \end{center}
\caption{Thick and thin symbolic paths on an example.}
\label{fig:dim-ex}
\end{figure}

\begin{exa}\label{ex:dim}
  Consider the single-clock timed automaton depicted in
  Figure~\ref{fig:dim-ex}, $s_0 = (\ell_0,0)$ and $\pi$ be the
  (unconstrained) symbolic path $\path{s_0,e_1 e_2}$. One can
  naturally associate a polyhedron of $(\IR_+)^2$ with $\pi$:
  \begin{align*} 
    \Pol(\pi) & = \{(\tau_1,\tau_2) \in (\IR_+)^2 \mid \rho = s_0
    \xrightarrow{\tau_1,e_1} s_1 \xrightarrow{\tau_2,e_2} s_2\}\\
    & = \{(\tau_1,\tau_2) \in (\IR_+)^2 \mid (0 \le \tau_1 \le 2)
    \wedge (0 \le \tau_1 + \tau_2 \le 5)\}
  \end{align*}
  $\Pol(\pi)$ has dimension 2 in $\mathbb{R}^2$. Since it is of
  maximal dimension, we say that the symbolic path $\pi$ is
  \emph{thick}. Consider now the symbolic path $\pi'=\path{s_0,e_1
    e_3}$.  The polyhedron $\Pol(\pi')$ associated with $\pi'$ has
  dimension $1$, and is somehow embedded in a two-dimensional space
  (due to the existence of the edge $e_2$). In that case, we say that
  it is \emph{thin}. \corn
\end{exa}

The above example is simplistic and could give the wrong impression
that symbolic paths with singular transitions
(\emph{i.e.}. transitions that do not increase the dimension of the
polyhedron) are necessarily thin; or equivalently that in order to be
thick, a symbolic path of length $n$ should have an associated
polyhedron of dimension $n$. This is not always the case, and singular
transitions can play an important role. Consider a slight modification
of the automaton of Figure~\ref{fig:dim-ex} where edge $e_1$ is
guarded by $x=2$. In this modified automaton, $\path{s_0,e_1 e_2}$ is
still thick, $\path{s_0,e_1 e_3}$ is thick too and the dimension of
the ambient space of any symbolic path of length $2$ is $1$.

To formally define thick and thin paths, we introduce the notion of
associated polyhedron, and some notations. Given $\pi_{\mathcal{C}} =
\path[\mathcal{C}]{s,e_1 \ldots e_n}$ a constrained path of a timed
automaton $\A$, its \emph{associated polyhedron} is defined as
follows:
\[
\Pol(\pi_{\mathcal{C}}) = \{(\tau_i)_{1 \leq i\leq n} \in (\IR_+)^n \mid s
\xrightarrow{\tau_1,e_1} s_1 \cdots \xrightarrow{\tau_n,e_n} s_n \in
\path[\mathcal{C}]{s,e_1 \ldots e_n}\}\,.
\]
Moreover, for each $0 < i \leq n$, we write $\mathcal{C}_{i}$ for the
constraint induced by the projection of $\Pol(\pi_{\mathcal{C}})$ over
the variables corresponding to the $i$ first coordinates (with the
convention that $\mathcal{C}_0$ is true).

\begin{defi}
  The constrained path $\pi_{\mathcal{C}} = \path[\mathcal{C}]{s,e_1
    \ldots e_n}$ is \emph{thin} whenever there exists some index $1
  \leq i \leq n$ such that
  \[
  \dim\Big(\Pol\big(\path[\mathcal{C}_i]{s,e_1 \ldots e_{i-1} e_i}\big)\Big) <
  \dim\Big(\bigcup_e
  \Pol\big(\path[\mathcal{C}_{i-1}]{s,e_1 \ldots e_{i-1} e}\big)\Big)\,.
  \]
  Otherwise $\pi_{{\mathcal C}}$ is \emph{thick}.
\end{defi}

Clearly enough all extensions of thin symbolic paths are thin as well.
Let us examine an example illustrating some subtlety of this notion.

\begin{exa}\label{ex:dimsubtil}
  Consider the timed automaton depicted in Figure~\ref{fig:ex-thick}
  where $e_i$ denotes the transition ending in $\ell_i$, for
  $i=1,2,3$. 
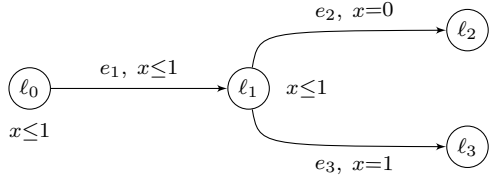
\begin{figure}
    \begin{center}
      \begin{tikzpicture}[scale=.97]
        \everymath{\scriptstyle}
        \path (-3.5,0) node[draw,circle,inner sep=2pt] (q0) 
        {$\ell_0$};
        \path (-3.5,-0.6) node[] (q0bis) {$x \le 1$};
        
        \path (-0.5,0) node[draw,circle,inner sep=2pt] (q1) 
        {$\ell_1$};
        \path (0.3,0) node[] (q1bis) {$x \le 1$};
        
        \path (2.5,.8) node[draw,circle,inner sep=2pt] (q2) 
        {$\ell_2$};
        
        \path (2.5,-.8) node[draw,circle,inner sep=2pt] (q4) 
        {$\ell_3$};
        
        \draw[arrows=-latex'] (q0) --  (q1)
        node[pos=.5, above,sloped] {$e_1,~x \le 1$};
        
        \draw[arrows=-latex'] (q1) .. controls +(80:0.8cm) ..  (q2)
        node[pos=.8, above,sloped] {$e_2,~x = 0$};
        
        \draw[arrows=-latex'] (q1) .. controls +(100:-0.8cm).. (q4)
        node[pos=.8, below,sloped] {$e_3,~x = 1$};        
      \end{tikzpicture}
    \end{center}

\caption{A subtle example illustrating thickness and thinness.}
\label{fig:ex-thick}
\end{figure}
Consider the (unconstrained) symbolic paths $\path{s_0,e_1e_2}$ and
$\path{s_0,e_1e_3}$, where $s_0=(\ell_0,0)$, and let us argue that
$\path{s_0,e_1e_2}$ is thin, whereas $\path{s_0,e_1e_3}$ is
thick. Intuitively, this difference comes from the fact that
``reaching $\ell_2$'' is only possible when transition $e_1$ has been
taken exactly when the value of the clock $x$ is $0$, although
``reaching $\ell_3$'' is always possible after transition $e_1$ has
been taken.  Formally: $\Pol(\path{s_0,e_1e_2}) = \{(0,0)\}$ whereas
$\Pol(\path{s_0,e_1e_3}) = \{(\tau_1,\tau_2) \in [0,1]^2 \mid
\tau_1+\tau_2=1\}$. Hence,
\[
\dim\Big(\Pol\big(\path{s_0,e_1e_2}\big)\Big) <
\dim\Big(\Pol\big(\path{s_0,e_1e_3}\big)\Big) = \dim\Big(\bigcup_e
\Pol\big(\path{s_0,e_1 e}\big)\Big)\,
\]
  proving the desired result. \corn
\end{exa}

The latter example shows that thickness cannot be tested locally:
edges $e_2$ and $e_3$ are both guarded by equality constraints, but do
not behave the same with respect to thickness. This phenomenon cannot
happen in timed region automata, in which, as we shall establish later
(see Proposition~\ref{prop:thickness-local},
page~\pageref{prop:thickness-local}), thickness coincides with local
thickness.

The notion of thickness naturally extends to infinite symbolic paths.
\begin{defi}\label{def:diminf}
  An infinite symbolic path $\path[\mathcal{C}]{s,e_1e_2 \ldots}$ is
  \emph{thick} if for all $n \geq 1$, $\path[\mathcal{C}]{s,e_1 \ldots
    e_n}$ is thick. Otherwise, it is \emph{thin}.
\end{defi}

We illustrate these notions on our running example.

\begin{exa}
  On $\A_{\textsf{running}}$ of Figure~\ref{fig:running}, also
  reproduced below, with $s_0=(\ell_0,0)$, let us explain why
  $\pi(s_0,{e_1}^\omega)$ is thick and
  $\pi(s_0,e_2e_3 e_1^\omega)$ is thin.
  \begin{figure}[h!]
    \begin{center}
      \runningexample
    \end{center}
  \end{figure}
  First observe that all finite prefixes $\path{s_0,e_1^n}$ of
  $\path{s_0,{e_1}^\omega}$ are thick. Indeed,
  \begin{multline*}
    \Pol(\path{s_0,{e_1}^n}) = \\
    \left\{(\tau_1,\ldots,\tau_n) \in (\IR_+)^n \mid (0 \le \tau_1 \le
      1) \wedge \cdots \wedge(0 \le \tau_1+\cdots+\tau_n \le 1)
    \right\}\,.
  \end{multline*}
  Thus, clearly enough $\dim(\Pol(\path{s_0,{e_1}^n})) = n$, which is
  maximal for an $n$-dimension polyhedron. This proves that
  $\path{s_0,{e_1}^n}$ is thick, for all $n \in \IN$, and thus
  $\path{s_0,{e_1}^\omega}$ is thick too.

  Consider now the infinite path $\pi(s_0,e_2 e_3 e_1^\omega)$ and
  show that it is thin by exhibiting a thin finite prefix. Observe
  that:
  \begin{align*}
    \Pol(\path{s_0,e_2 e_3}) = \left\{(\tau_1,\tau_2) \mid (0 \le
      \tau_1 \le 1)\wedge(0 \le \tau_1+\tau_2 =
      1) \right\},\\
    \Pol(\path{s_0,e_2 e_4}) = \left\{(\tau_1,\tau_2) \mid (0 \le
      \tau_1 \le 1)\wedge(3 \le \tau_1+\tau_2) \right\},
  \end{align*}
  thus $\dim(\Pol(\path{s_0,e_2 e_3}))=1 < \dim(\Pol(\path{s_0,e_2
    e_4}))=2$ which implies that $\path{s_0,e_2 e_3}$ is thin.  Hence
  we conclude that $\path{s_0,e_2 e_3 e_1^\omega}$ is thin. \corn
  \end{exa}

\subsection{A topology on infinite runs}\label{subsec:topo}
The goal of this subsection is to define a topology on the set of
infinite runs of a given timed automaton. Keeping our analogy with
$\IR^n$, where the open sets are among the sets of maximal dimension,
we use the notion of thickness introduced in the latter subsection in
order to define a topology on runs. More precisely, the basic open
sets will be cylinders over thick constrained symbolic paths whose
associated polyhedra are open in their ambient spaces.

\begin{defi}\label{def:topo}
  Let $s$ be a state of $\A$. Let $\mathcal{T}_{\A}^s$ be the topology
  over the set of runs of $\A$ starting in $s$ defined with the
  following basic open sets:\footnote{We recall that open sets of
    $\mathcal{T}_{\A}^s$ are then built from those basic open sets
    using union.}  either the set $\Runs(\A,s)$, or the empty set
  $\emptyset$, or the cylinders $\Cyl(\pi_{\mathcal{C}})$ where
  $\pi_{\mathcal{C}} = \path[\mathcal{C}]{s,e_1e_2 \ldots e_n}$ is a
  finite constrained symbolic path of $\A$ such that: (i)
  $\pi_{\mathcal{C}}$ is thick, (ii) $\mathcal{C}$ is
  Borel-measurable, and (iii) $\Pol(\pi_{\mathcal{C}})$ is open in
  $\Pol(\pi)$ for the classical topology on $\IR^n$, where $\pi =
  \path{s,e_1e_2 \ldots e_n}$.
\end{defi}

Before illustrating our definition, let us draw the reader's attention
on the two following points.  First notice that
Definition~\ref{def:topo} only makes sense if the intersection of two
basic open sets is still a basic open set; This is proven in the
Appendix as Lemma~\ref{lemma:intersection}
(page~\pageref{lemma:intersection}). Second, regarding our initial
objective of expressing a notion of ``most of the runs'' using
\emph{largeness}, we need, for consistency, the space to be Baire (see
Remark~\ref{rem:Baire}); This is stated below and proven in
Appendix~\ref{app:Baire}, page~\pageref{app:Baire}.

\begin{restatable}{prop}{propBaire}
  \label{prop:Baire}
  For every state $s$ of $\A$, the topological space
  $(\Runs(\A,s),{\mathcal T}_{\A}^s)$ is a Baire space.
\end{restatable}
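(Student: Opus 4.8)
The plan is to show that $(\Runs(\A,s),\mathcal{T}_{\A}^s)$ is a Baire space by finding a natural continuous, open, (almost-)bijective map onto a classical Baire space — or more robustly, by a direct diagonalisation argument in the style of Example~\ref{ex:BMGS}, which is really the Baire Category Theorem for this particular space. I would proceed with the direct argument since the topology here is defined by a somewhat ad hoc basis of cylinders over thick constrained symbolic paths, and a transfer-of-structure argument would require setting up an auxiliary metric space anyway.

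\textbf{Approach.} Fix a state $s$. By definition, a topological space is Baire if every countable intersection of dense open sets is dense, equivalently (the form stated in the footnote to Remark~\ref{rem:Baire}) every countable union of closed sets with empty interior has empty interior. First I would recall the characterisation via Banach--Mazur games (Theorem~\ref{BM:thm}): to see that a given meagre set $C$ has empty interior, it suffices to exhibit a winning strategy for Player~1 against Player~2 that ``stays inside'' any prescribed basic open set. Concretely, let $O = \Cyl(\pi_{\mathcal{C}})$ be a non-empty basic open set, with $\pi_{\mathcal{C}} = \path[\mathcal{C}]{s,e_1\ldots e_n}$ thick and $\Pol(\pi_{\mathcal{C}})$ open in $\Pol(\pi)$. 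I want to show $O$ is non-meagre; then, since basic opens generate the topology and any non-empty open set contains a basic open, no non-empty open set is meagre, which is precisely the Baire property.

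\textbf{Key steps.} First I would have Player~1 open with $B_1 = O$. From then on the game stays within $O$, and every move $B_{2i}$ of Player~2 is (the cylinder over) a thick constrained symbolic path $\path[\mathcal{C}^{(i)}]{s,e_1\ldots e_{n_i}}$ whose polyhedron is open in its ambient polyhedron $\Pol(\path{s,e_1\ldots e_{n_i}}) \subseteq \IR^{n_i}$. Because this polyhedron is thick, it is full-dimensional in its ambient space; more importantly, since $\Pol(B_{2i})$ is a non-empty open subset of $\Pol(\path{s,e_1\ldots e_{n_i}})$, I can pick a point $x_i$ in its interior (inside the ambient polyhedron) together with a small closed box $F_i$ around $x_i$, of radius $\epsilon_i$ shrinking geometrically, so that $F_i$ is contained in $\Pol(B_{2i})$ and I can realise $F_i$ (or rather its ``cylindrical thickening'' over the remaining coordinates) as a legal move $B_{2i+1}$ for Player~1: a thick constrained symbolic path refining $B_{2i}$, with open polyhedron. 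The point of intersecting with a closed box is exactly as in Example~\ref{ex:BMGS}: the nested closed boxes $F_i$, viewed in the appropriate projective limit of the ambient Euclidean spaces, form a decreasing sequence of non-empty compact sets, so their intersection is non-empty by the finite-intersection property / Heine--Borel. Any point of this intersection determines a sequence of delays $(\tau_i)_{i\ge 1}$, hence an infinite run $\rho \in \bigcap_i B_i$, and this run lies in $O$. Thus Player~2 has no winning strategy, so by Theorem~\ref{BM:thm} the target set — which can be taken to be all of $O$ — is non-meagre, and in particular $O$ is non-meagre.

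\textbf{Main obstacle.} The delicate point is the bookkeeping between the finite-dimensional ambient spaces $\IR^{n_i}$ as $n_i \to \infty$: I must ensure that the closed boxes $F_i$ are genuinely nested when re-read in the larger ambient space $\IR^{n_{i+1}}$, i.e. that restricting the $(n_{i+1})$-coordinate constraint back to the first $n_i$ coordinates still lands inside $F_i$; this is where the hypothesis that each $\Pol(B_j)$ is \emph{open} in its ambient polyhedron is essential, since it gives room to choose the refinement's polyhedron to project onto a neighbourhood of $x_i$ contained in $F_i$. A second subtlety is verifying that the moves I construct for Player~1 are legitimate elements of the basis: they must be thick (thickness is preserved under the refinements I use, because I only shrink within an already-open, full-dimensional polyhedron, so no projected dimension drops), Borel-measurable (boxes are), and have polyhedra open in the ambient space (open boxes intersected with open polyhedra are). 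Once these checks go through — and they are the technical heart deferred to Appendix~\ref{app:Baire} — the diagonalisation concludes exactly as in Example~\ref{ex:BMGS}.
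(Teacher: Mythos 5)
Your proposal is correct and follows essentially the same route as the paper's proof: a Banach--Mazur game in which Player~1 opens with the given basic open set and thereafter interleaves compact (closed, bounded) polyhedra between consecutive moves, with thickness of the refinements guaranteed by the fact that one only shrinks inside an already open polyhedron, and with non-emptiness of $\bigcap_i B_i$ obtained by embedding the nested compact sets into a common product space (Tychonoff) and invoking the finite-intersection property. The only difference is cosmetic (closed boxes versus closed bounded convex sets, and geometrically shrinking radii, which are not actually needed once compactness is in hand).
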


Now that the validity of our topology is clear, we illustrate it on
our running example.

\begin{exa}\label{ex:largetaBMS}
  On the running automaton $\A_{\textsf{running}}$ of
  Figure~\ref{fig:running} with initial state $s_0=(\ell_0,0)$, the
  set $C \egdef \bigcup_{i \in \IN} \path{s_0,{e_1}^i e_2 (e_4
    e_5)^\omega}$ is large.  To prove it (or equivalently to prove
  that the complement of $C$ is meagre) we use a Banach-Mazur game,
  and show that Player~2 has a strategy to avoid the complement of
  $C$, that is to reach $C$. The game is played with the basic open
  sets of $\Runs(\A_{\textsf{running}},s_0)$. A winning strategy for
  Player~2 goes as follows:
  \begin{itemize}
  \item Assume Player~1 has chosen a cylinder
    $\Cyl(\path{s_0,{e_1}^{n_1}})$, for some $n_1 \in \IN_0$ (if
    Player~1 leaves $\ell_0$ at her first move, we skip the first move
    of Player~2)
  \item Player~2 chooses $\Cyl(\path{s_0,{e_1}^{n_1} e_2})$,
  \item Notice that Player~1 is not allowed to extend the symbolic
    path $\path{s_0,{e_1}^{n_1} e_2}$ with sequences of transitions
    including $e_3$ or $e_6$, since both symbolic paths
    $\path{s_0,{e_1}^{n_1} e_2 e_3}$ and $\path{s_0,{e_1}^{n_1} e_2
      e_4 e_6}$ are thin. Therefore Player~1 can only play moves of the
    form $\Cyl(\path{s_0,{e_1}^{n_1} e_2 (e_4 e_5)^{n_2}})$ or
    $\Cyl(\path{s_0,{e_1}^{n_1} e_2 (e_4 e_5)^{n_2}e_4})$.
  \item Player~2 then responds $\Cyl(\path{s_0,{e_1}^{n_1} e_2 (e_4
      e_5)^{n_3}})$, with $n_3 > n_2$.
  \end{itemize}
  One can easily be convinced that repeating infinitely often the last
  two moves, the play forms a run of $C$, proving that Player~2 wins
  the game and thus that $C$ is large.

  Notice that both players could also play with constrained
  paths. This would not be interesting for Player~1, since it may
  cause the intersection to be empty (in which case Player~2 wins as
  well). \corn
\end{exa}

We now give a simple characterisation of the basic open sets
considered in Definition~\ref{def:topo}, whose proof is given in
Appendix~\ref{app:basicopensets}, page~\pageref{app:basicopensets}.

\begin{restatable}{lem}{basicopensets}
  \label{lemma:basicopensets}
  In the topological space $(\Runs(\A,s),\mathcal{T}_{\A}^s)$, a
  finite symbolic path $\pi$ defines a basic open set if and only if
  there exist an open constraint $\mathcal{C}$ of $\mathbb{R}^n$ and
  thick edges $e_1,\dots,e_n$ such that $\pi =
  \path[\mathcal{C}]{s,e_1 \dots e_n}$.
\end{restatable}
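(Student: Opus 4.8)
The plan is to prove the two implications separately, both reducing to one elementary fact about polyhedra: \emph{a non-empty subset of a convex polyhedron that is open for the subspace topology has the same dimension as the polyhedron}. Indeed, by the line-segment principle such a subset meets the relative interior, hence contains a relatively full-dimensional ball; moreover its image under any coordinate projection is relatively open and relatively full-dimensional in the projected polyhedron, because a surjective linear map is open and affine images of relative interiors are relative interiors. This applies in our setting since for a symbolic path $\pi=\path{s,e_1\dots e_n}$ the associated polyhedron $\Pol(\pi)$ is a single convex polyhedron. I would first isolate this as a small geometric lemma and then feed it into both directions.

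\textbf{From an open constraint to a basic open set.} Assume $\mathcal{C}$ is open in $\mathbb{R}^n$ and $\path{s,e_1\dots e_n}$ is thick; I check the three conditions of Definition~\ref{def:topo} for $\pi_{\mathcal C}=\path[\mathcal C]{s,e_1\dots e_n}$, writing $\pi=\path{s,e_1\dots e_n}$ for the underlying unconstrained path. Condition (ii) holds since open sets are Borel, and condition (iii) holds since $\Pol(\pi_{\mathcal C})=\mathcal C\cap\Pol(\pi)$ is open in $\Pol(\pi)$ as soon as $\mathcal C$ is open in $\mathbb{R}^n$. For condition (i), fix an index $i$ and compare, as in the definition of thickness, the dimension of $\Pol(\path[\mathcal C_i]{s,e_1\dots e_i})$ with that of $\bigcup_e\Pol(\path[\mathcal C_{i-1}]{s,e_1\dots e_{i-1}e})$. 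Unwinding the definitions, the first set equals $\pi_i(\mathcal C\cap\Pol(\pi))$, which by the geometric fact above has the same dimension as $\pi_i(\Pol(\pi))$, i.e. the dimension of the corresponding set for the \emph{unconstrained} path. The second (constrained) set is contained in its unconstrained counterpart, whose dimension equals $\dim\pi_i(\Pol(\pi))$ precisely because $\pi$ is thick; hence it has dimension at most $\dim\pi_i(\Pol(\pi))$. Since the first set is always contained in the $e_i$-summand of the second (because $\mathcal C_i$ entails $\mathcal C_{i-1}$ on the first $i-1$ coordinates), the two dimensions coincide, and $\pi_{\mathcal C}$ is thick.

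\textbf{From a basic open set to an open constraint.} Conversely, let $\pi_{\mathcal C}=\path[\mathcal C]{s,e_1\dots e_n}$ be a basic open set as in Definition~\ref{def:topo}; in particular $\Pol(\pi_{\mathcal C})$ is open in $\Pol(\pi)$ for the subspace topology, so there is $O$ open in $\mathbb{R}^n$ with $O\cap\Pol(\pi)=\Pol(\pi_{\mathcal C})$. Then $\path[O]{s,e_1\dots e_n}$ has exactly the same associated polyhedron, hence is the same set of runs as $\pi_{\mathcal C}$, so $O$ is the required open constraint. It remains to show that $e_1,\dots,e_n$ form a thick unconstrained path. Arguing contrapositively: if $\path{s,e_1\dots e_n}$ were thin, there would be an index $i$ and an edge $\hat e$ for which $\Pol(\path[\pi_{i-1}(\Pol(\pi))]{s,e_1\dots e_{i-1}\hat e})$ has dimension strictly larger than $\dim\pi_i(\Pol(\pi))$. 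Restricting the first $i-1$ coordinates to $\mathcal C_{i-1}=\pi_{i-1}(\Pol(\pi_{\mathcal C}))$ — which contains a relatively open full-dimensional piece of $\pi_{i-1}(\Pol(\pi))$ by the geometric fact — does not drop that dimension, while $\Pol(\path[\mathcal C_i]{s,e_1\dots e_i})=\pi_i(\Pol(\pi_{\mathcal C}))$ still has dimension $\dim\pi_i(\Pol(\pi))$. This would make $\pi_{\mathcal C}$ thin, contradicting Definition~\ref{def:topo}(i).

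\textbf{Main obstacle.} The genuine work is hidden in the clause ``does not drop that dimension'': one must keep track of precisely which convex polyhedra the constrained projections $\mathcal C_{i-1}$ and $\mathcal C_i$ cut out, and verify that intersecting a convex polyhedron with a set whose projection onto the relevant coordinates is relatively open and full-dimensional preserves its dimension. Once the geometric lemma (every non-empty relatively open subset of a convex polyhedron meets its relative interior; coordinate projections send relative interiors onto relative interiors) is in place, the rest is routine bookkeeping with the definition of thickness.
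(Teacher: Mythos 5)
Your proof follows essentially the same route as the paper's. The forward direction is the paper's argument in expanded form: measurability and relative openness of the polyhedron are immediate from openness of $\mathcal{C}$, and thickness of $\path[\mathcal{C}]{s,e_1\dots e_n}$ is deduced from thickness of $\path{s,e_1\dots e_n}$ via a dimension-preservation argument — the paper invokes its Corollary~\ref{coro2-essai3} here, whose proof ultimately rests on exactly the convexity fact you isolate (a non-empty relatively open subset of a convex set has the same dimension; the paper cites this in a footnote to Corollary~\ref{cor:full_dimension}). The backward direction is also the paper's: pull an open $O$ out of the subspace topology and observe that $\path[O]{s,e_1\dots e_n}$ is the same set of runs.

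Two caveats. First, a technical overclaim in your geometric lemma: the image of a relatively open subset of a convex polyhedron under a coordinate projection need \emph{not} be relatively open in the projected polyhedron. Take $K=\mathrm{conv}\bigl(D\cup\{(1,0,1)\}\bigr)$ with $D$ the closed unit disc in the plane $z=0$, and intersect $K$ with a small ball around $(1,0,\tfrac12)$: the projection onto $(x,y)$ contains $(1,0)$ but misses all nearby boundary points $(\cos\theta,\sin\theta)$ of $D$, whose fibres in $K$ are singletons at height $0$. What survives — and is all your dimension computations actually use — is that the image contains a relatively open full-dimensional piece, which your relative-interior argument does establish. Second, the step you flag as ``the genuine work'' is a real issue and is not closed as written: in a general timed automaton the witness polyhedron $\Pol(\path[\mathcal{C}_{i-1}]{s,e_1\dots e_{i-1}\hat e})$ could fail to meet the full-dimensional open piece of $\mathcal{C}_{i-1}$ (the edge $\hat e$ may be enabled only from a lower-dimensional slice of the reachable states), in which case the restriction \emph{does} drop the dimension. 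In the timed region automaton $\RA$ — the only setting where ``thick edge'' is defined and where the lemma is subsequently used — this cannot happen: $\hat e$ is enabled from every state of the target region of $\path{s,e_1\dots e_{i-1}}$ with $\dim I(\cdot,\hat e)$ constant on that region (Lemma~\ref{lem:thin-local}), so the restricted polyhedron is a non-empty relatively open subset of the unrestricted one and your geometric lemma applies. For what it is worth, the paper's own proof of this lemma stops after extracting the open constraint and never addresses thickness of the underlying unconstrained path in the backward direction, so you are not missing anything the paper supplies; you have merely made visible a step it glosses over.
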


\subsection{Large satisfaction}
\label{subsec:semtopo}
We are now in a position where we can define a notion of large
satisfaction.

\begin{defi}
  Let $s$ be a state of $\A$, and $P$ be a timed property over \AP.
  We say that $\A$ \emph{largely satisfies} $P$ from $s$, and write
  $\A,s \robust_{\mathcal T} P$, if the set $\{\varrho \in \Runs(\A,s)
  \mid \varrho \models P\}$ is topologically large (in
  $(\Runs(\A,s),\mathcal{T}_{\A}^{s})$).
\end{defi}

Let us illustrate this notion of large satisfaction on our running
example.

\begin{exa}\label{ex:ltlthird}
  On the running example $\A_{\textsf{running}}$ depicted below, with
  $s_0=(\ell_0,0)$,
  \[
  \A_{\textsf{running}},s_0 \robust_{\mathcal T} \F(p_1 \wedge \G (p_1 \Rightarrow \F p_2))\,.
  \]
  
  \begin{figure}[h!]
    \begin{center}
      \runningexample
    \end{center}
  \end{figure}

  Indeed, in Example~\ref{ex:largetaBMS}, we proved that the set $C
  \egdef \{\path{s_0,{e_1}^i e_2 (e_4 e_5)^\omega} \mid i \in \IN\}$
  is large.  Moreover each run of $C$ satisfies $\varphi \egdef \F(p_1
  \wedge \G (p_1 \Rightarrow \F p_2))$, and thus
  $\A_{\textsf{running}},s_0 \robust_{\mathcal T} \varphi$, since
  largeness is closed under subsumption.  Note that the previous
  formula is not satisfied with the classical \LTL
  semantics. `\emph{Staying in $\ell_0$ forever}', `\emph{reaching
    $\ell_3$}', etc are examples of behaviours in
  $\A_{\textsf{running}}$ that violate the \LTL formula $\varphi$. \corn
\end{exa}

\subsection{From timed automata to timed region automata}

As in the context of probabilities, we can relate the topologies in
$\A$ and in $\RA$. Although the topological spaces given by $\A$ and
$\RA$ are not homeomorphic, the topologies in $\A$ and in $\RA$ are
somehow equivalent, as stated by the next lemma. This will allow us to
lift result from $\RA$ to $\A$.

\begin{restatable}{lem}{homeo}
  \label{lemma:homeo} Let $\iota: \Runs(\A,s) \rightarrow
  \Runs(\RA,\iota(s))$ be the projection of runs in $\A$ onto the
  region automaton $\RA$.  Then $\iota$ is continuous, and for every
  non-empty open set $O \in \mathcal{T}_{\A}^s$, $\widering{\iota(O)}
  \neq \emptyset$.
\end{restatable}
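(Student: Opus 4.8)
The plan is to work with basic open sets in both topologies and to exploit one structural fact: $\RA$ is itself a (non-blocking) timed automaton, hence carries the topology $\mathcal{T}_{\RA}^{\iota(s)}$ of Definition~\ref{def:topo}, and the projection $\iota$ does not alter the delays of a run --- it merely annotates each transition with the region crossed by the underlying valuation. Consequently, for a fixed edge sequence $e_1,\dots,e_n$ of $\A$, the symbolic path $\path{s,e_1\dots e_n}$ is partitioned into the symbolic paths $\path{\iota(s),f_1\dots f_n}$ of $\RA$, indexed by the finitely many choices of crossed regions (with $f_i$ lying above $e_i$); at the level of delay tuples in $(\IR_+)^n$ this is a partition of $\Pol(\path{s,e_1\dots e_n})$ into the convex cells $\Pol(\path{\iota(s),f_1\dots f_n})$, each cut out by a region constraint $\mathcal R_{(f_i)}$, which is a finite Boolean combination of linear inequalities with integer coefficients and hence Borel. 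I will also use that $\iota$ is onto: every run of $\RA$ is $\iota(\rho)$ for the canonical lift $\rho$ obtained by dropping the region annotations.

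For continuity, since $\iota^{-1}$ commutes with unions, it suffices to show $\iota^{-1}$ sends basic opens of $\mathcal{T}_{\RA}^{\iota(s)}$ to open sets. Let $\pi_{\mathcal C}=\path[\mathcal C]{\iota(s),f_1\dots f_n}$ be a basic open of $\RA$: thus $\pi_{\mathcal C}$ (and, by definition, each of its prefixes) is thick in $\RA$, and $\Pol(\pi_{\mathcal C})$ is open in $\Pol(\path{\iota(s),f_1\dots f_n})$. Writing $e_i$ for the $\A$-edge under $f_i$, one reads off from the definitions that $\iota^{-1}(\Cyl(\pi_{\mathcal C}))=\Cyl(\path[\mathcal C\wedge\mathcal R_{(f_i)}]{s,e_1\dots e_n})$, and that the associated polyhedron of this constrained $\A$-path, as a subset of $(\IR_+)^n$, is again exactly $\Pol(\pi_{\mathcal C})$. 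It remains to check the three conditions of Definition~\ref{def:topo} for $\path[\mathcal C\wedge\mathcal R_{(f_i)}]{s,e_1\dots e_n}$: Borel-measurability (condition (ii)) is immediate; condition (iii) requires upgrading ``$\Pol(\pi_{\mathcal C})$ open in $\Pol(\path{\iota(s),f_1\dots f_n})$'' to ``open in the larger ambient $\Pol(\path{s,e_1\dots e_n})$''; and condition (i) requires comparing the two dimension chains of Definition~\ref{def:diminf}, computed in $\A$ and in $\RA$.

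The hard part --- and the only real content of the proof --- is this transfer of thickness and of openness-in-the-ambient between $\A$ and $\RA$. The key point is that, for a fixed region prefix $f_1\dots f_{i-1}$, every $\A$-run extending it is canonically an $\RA$-run, so the $\A$-ambient $\bigcup_{e}\Pol(\path[\cdot]{s,e_1\dots e_{i-1}e})$ intersected with the region constraint of $f_1\dots f_{i-1}$ coincides with the $\RA$-ambient $\bigcup_{f}\Pol(\path[\cdot]{\iota(s),f_1\dots f_{i-1}f})$ (the projections match because all paths in play are thick). Since $\pi_{\mathcal C}$ is thick in $\RA$, at no step does the region refinement impose a dimension-dropping equality; here I would invoke Proposition~\ref{prop:thickness-local}, the locality of thickness in $\RA$, to argue that such a refinement tracks a full-dimensional cell of the region partition all the way along. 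It then follows that $\path[\mathcal C\wedge\mathcal R_{(f_i)}]{s,e_1\dots e_n}$ is thick in $\A$, and that its polyhedron, being a full-dimensional region cell cut down by the open constraint $\mathcal C$, is relatively open in $\Pol(\path{s,e_1\dots e_n})$. No new idea is needed, but this is a delicate piece of polyhedral and region bookkeeping --- getting the projections and the quantifiers right --- and it is where I expect all the effort to go.

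It remains to show $\widering{\iota(O)}\neq\emptyset$ for $O$ a non-empty basic open of $\mathcal{T}_{\A}^s$; by Lemma~\ref{lemma:basicopensets} we may write $O=\Cyl(\path[\mathcal C]{s,e_1\dots e_n})$ with $\mathcal C$ open and $e_1,\dots,e_n$ thick. The non-empty relatively open set $\Pol(\mathcal C)\subseteq\Pol(\path{s,e_1\dots e_n})$ is covered by the finitely many region cells $\Pol(\path{\iota(s),f_1\dots f_n})$, so one of them, say $(f_i)$, tracks a full-dimensional cell at every step and meets $\Pol(\mathcal C)$; set $V=\Cyl(\path[\mathcal C\wedge\mathcal R_{(f_i)}]{\iota(s),f_1\dots f_n})$. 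Then $V\neq\emptyset$; $V\subseteq\iota(O)$ because any run of $V$ equals $\iota$ of its canonical lift, which has the same delays (so satisfies $\mathcal C$) and hence lies in $O$; and $V$ is a basic open of $\mathcal{T}_{\RA}^{\iota(s)}$ by the same dimension bookkeeping as above, read now from $\A$ to $\RA$ (thickness of the $e_i$ together with openness of $\mathcal C$ forcing thickness of $(f_i)$ and relative openness of $\Pol(\mathcal C\wedge\mathcal R_{(f_i)})$ in its $\RA$-ambient). Hence $\widering{\iota(O)}\supseteq V\neq\emptyset$, as required.
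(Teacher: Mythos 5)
Your proposal follows the paper's own proof essentially step for step: continuity is reduced to showing that the preimage of a basic open of $\RA$ is the corresponding constrained $\A$-path (the $\RA$-constraint conjoined with the region constraints), whose polyhedron coincides with $\Pol(\pi_{\mathcal C})$, with thickness and relative openness transferred by comparing the two dimension chains; and for the second claim one selects, among the finitely many region refinements of a basic open of $\A$, one of maximal dimension and checks it is a basic open of $\RA$ contained in the image. The polyhedral bookkeeping you defer is exactly what the paper carries out (via the tightest region constraint $\gamma$, Lemma~\ref{lemma:croissance} and an induction producing the open constraint $\delta$), so the approach is the same and has no gap.
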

The proof of this lemma is given in Appendix~\ref{app:homeo},
page~\pageref{app:homeo}.

\begin{rem}
  Note that $\iota: \Runs(\A,s) \to \Runs(\RA,\iota(s))$ is not an
  homeomorphism since $\iota^{-1} : \Runs(\RA,s) \rightarrow
  \Runs(\A,\iota^{-1}(s))$ is not continuous. Indeed, let us consider
  the automaton $\A$ of Figure~\ref{fig:cex-homeo}, page
  \pageref{fig:cex-homeo}, with $s_0 =(\ell_0,0)$. The set of runs $O
  = \Cyl(\path{s_0,e_1})$ is open in $\mathcal{T}_{\A}^{s_0}$ since
  $\path{s_0,e_1}$ is a thick symbolic path. However,
  $\iota(\Cyl(\path{s_0,e_1})) = \Cyl(\path{s_0,f_1}) \cup
  \Cyl(\path{s_0,f_2})$ is not open in
  $\mathcal{T}_{\RA}^{\iota(s_0)}$ as $\path{s_0,f_1}$ is thin and
  hence $\Cyl(\path{s_0,f_1})$ is not a basic open. Thus $\iota(O)$ is
  not open and $\iota^{-1}$ is not continuous.
\end{rem}

Lemma~\ref{lemma:homeo} allows one to simulate a Banach-Mazur game
from $\A$ to $\RA$ and \emph{vice-versa}. Therefore, the large
satisfaction relations in $\A$ and $\RA$ coincide, see
Appendix~\ref{app:toporegions}, page~\pageref{app:toporegions}.

\begin{restatable}{prop}{toporegions}
  \label{prop-topo-A-RA}
  Let $s$ be a state of $\A$, and $P$ be a timed property over
  \AP. Then,
  \[
  \A,s \robust_{\mathcal T} P\ \Leftrightarrow\ \RA,\iota(s) \robust_{\mathcal
    T} P\,.
  \]
\end{restatable}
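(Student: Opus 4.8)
The plan is to reduce largeness in $\A$ to largeness in $\RA$ by means of the projection $\iota$, using Theorem~\ref{BM:thm} to recast largeness as the (non-)existence of a winning strategy for Player~$2$ in a Banach-Mazur game, and then transferring strategies back and forth via Lemma~\ref{lemma:homeo}. Concretely, fix $P$ and set $C_\A = \{\varrho \in \Runs(\A,s) \mid \varrho \models P\}$ and $C_{\RA} = \{\varrho' \in \Runs(\RA,\iota(s)) \mid \varrho' \models P\}$. Because $P$ is a property over $\AP$ and $\iota$ preserves the labelling (the region component carries no new propositions), we have $\varrho \models P$ iff $\iota(\varrho) \models P$, hence $\iota^{-1}(C_{\RA}) = C_\A$ and $\iota(C_\A) \subseteq C_{\RA}$. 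The statement $\A,s \robust_{\mathcal T} P \Leftrightarrow \RA,\iota(s) \robust_{\mathcal T} P$ is then equivalent to: $C_\A$ is large in $(\Runs(\A,s),\mathcal{T}_\A^s)$ iff $C_{\RA}$ is large in $(\Runs(\RA,\iota(s)),\mathcal{T}_{\RA}^{\iota(s)})$, i.e.\ (by Theorem~\ref{BM:thm} applied to the basic-open families, which are legitimate $\mathcal B$-families by Lemma~\ref{lemma:basicopensets} and its analogue for $\RA$) Player~$2$ wins the Banach-Mazur game for the complement target in $\A$ iff she wins it in $\RA$.

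For the first direction, assume Player~$2$ has a winning strategy $\sigma$ in the Banach-Mazur game on $\RA$ with target the complement of $C_{\RA}$. I would build a strategy $\sigma'$ for Player~$2$ in the game on $\A$ as follows: given a basic open $O = \Cyl(\path[\mathcal C]{s,e_1\ldots e_n})$ played by Player~$1$ in $\A$, push it forward to $\RA$; by Lemma~\ref{lemma:homeo}, $\widering{\iota(O)} \neq \emptyset$, so it contains a basic open $U$ of $\RA$ (and $U \subseteq \iota(O)$). Have Player~$2$'s simulated opponent in the $\RA$-game play $U$, let $\sigma$ answer with $V \subseteq U$, and then define $\sigma'(O)$ to be a basic open $O' \subseteq O \cap \iota^{-1}(V)$; such $O'$ exists because $\iota$ is continuous, so $\iota^{-1}(V)$ is open and nonempty (it meets $O$ since $V \subseteq U \subseteq \iota(O)$), hence contains a basic open contained in $O$. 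Iterating, the two plays stay coupled: $\bigcap_i O'_i \subseteq \iota^{-1}(\bigcap_i V_i)$, and since $\bigcap_i V_i$ misses the complement of $C_{\RA}$ (i.e.\ $\bigcap_i V_i \subseteq C_{\RA}$) we get $\bigcap_i O'_i \subseteq \iota^{-1}(C_{\RA}) = C_\A$, so Player~$2$ avoids the complement of $C_\A$ and wins. This shows $C_{\RA}$ large $\Rightarrow$ $C_\A$ large.

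For the converse, assume Player~$2$ wins the game on $\A$ with target the complement of $C_\A$, via $\sigma'$. Now I would simulate in the other direction: a basic open $U$ of $\RA$ played by Player~$1$ is pulled back to $\iota^{-1}(U)$, which is open (continuity of $\iota$) and nonempty (since $\iota$ is onto $\Runs(\RA,\iota(s))$, as every region automaton run lifts), hence contains a basic open $O$ of $\A$ with $\iota(O) \subseteq U$; feed $O$ to $\sigma'$, obtaining $O' \subseteq O$; then by Lemma~\ref{lemma:homeo} pick a basic open $V \subseteq \widering{\iota(O')} \subseteq U$ as Player~$2$'s reply in the $\RA$-game. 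Coupling the plays, $\iota(O'_i) \subseteq U_i$ and $V_i \subseteq \iota(O'_i)$; a run $\varrho' \in \bigcap_i V_i$ has, for each $i$, a preimage in $O'_i$, and since the $O'_i$ are nested cylinders whose finite symbolic prefixes are determined and converge, a standard compactness/König argument yields a single $\varrho \in \bigcap_i O'_i$ with $\iota(\varrho) = \varrho'$; since $\bigcap_i O'_i \subseteq C_\A$ we get $\varrho \models P$, hence $\varrho' \models P$, i.e.\ $\bigcap_i V_i \subseteq C_{\RA}$. So Player~$2$ wins on $\RA$, giving $C_\A$ large $\Rightarrow$ $C_{\RA}$ large, and the equivalence follows.

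\textbf{Main obstacle.} The delicate point is the second direction, specifically lifting a run $\varrho' \in \bigcap_i V_i \subseteq \Runs(\RA,\iota(s))$ to a run $\varrho \in \bigcap_i O'_i$ with $\iota(\varrho) = \varrho'$: the projection $\iota$ is not a homeomorphism (as the Remark after Lemma~\ref{lemma:homeo} shows, $\iota^{-1}$ is not continuous), so one cannot simply take $\iota^{-1}(\varrho')$ as an open-set-level preimage. One must exploit that each $O'_i = \Cyl(\path[\mathcal C_i]{s, e^{(i)}_1\ldots e^{(i)}_{n_i}})$ is a cylinder over a finite constrained symbolic path and that these paths form an increasing chain (Player~$2$'s moves refine Player~$1$'s), so the edge sequences are eventually fixed prefix-by-prefix; then the constraints $\mathcal C_i$ cut out a nested sequence of nonempty (closed-up-to-boundary) polyhedra in the delay coordinates of the \emph{fixed} infinite edge-word, whose intersection is nonempty by a Heine--Borel argument exactly as in Example~\ref{ex:BMGS}, yielding the desired delays and hence $\varrho$. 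Care is needed to ensure the chosen $O'_i$ can always be taken with a fixed edge-word limit — this is where having Player~$2$'s simulated opponent in the $\A$-game always extend by the edge-word forced by $V_i$ (rather than by an arbitrary refinement) matters — and to check the boundary/openness bookkeeping so that the limiting delays indeed lie in every $\Pol(O'_i)$, not merely in its closure.
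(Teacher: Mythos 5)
Your proof is correct and follows essentially the same route as the paper's (Appendix~\ref{app:toporegions}): both directions are Banach--Mazur simulations that shuttle Player~2's strategy between $\A$ and $\RA$ using the two halves of Lemma~\ref{lemma:homeo} (continuity of $\iota$ to pull moves back, non-empty interior of images to push moves forward). The one remark worth making is that your ``main obstacle'' is illusory: $\iota$ is a \emph{bijection} between $\Runs(\A,s)$ and $\Runs(\RA,\iota(s))$ --- the Remark after Lemma~\ref{lemma:homeo} only denies the continuity of $\iota^{-1}$, not its existence --- so a run $\varrho'\in\bigcap_i V_i$ with $V_i\subseteq\iota(O'_i)$ has the single preimage $\iota^{-1}(\varrho')$, which lies in every $O'_i$ by injectivity; equivalently $\bigcap_i\iota(O'_i)=\iota\bigl(\bigcap_i O'_i\bigr)$, which is exactly the step the paper's proof invokes. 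No Heine--Borel or K\"onig extraction is needed, and the open-versus-closure bookkeeping you worry about never arises, since you are not taking a limit of delays but simply reading off the (fixed) delays of $\varrho'$ itself.
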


\section{Construction of the thick graph}
\label{sec:thickgraph}

In this section, we construct the so-called thick graph.  The idea is
to remove \emph{locally thin} edges from the region automaton, and it
will be used to characterise (globally) thin paths.

We fix a timed automaton $\A = (L,X,E,{\mathcal I},{\mathcal L})$, and we
start by defining a local notion of thinness.
\begin{defi}
  Let $e$ be an edge of $\RA$, and $q$ its source. We say $e$ is
  \emph{thin} whenever $\dim(I(s,e))<\dim(I(s))$ for some (or
  equivalently, for every) $s \in q$. Otherwise the edge is said
  \emph{thick}.
\end{defi}

An equivalent definition is that an edge $e$ with source $q$ is thin
whenever for every $s \in q$, the length-$1$ constrained path
$\path{s,e}$ is thin. That is why this notion of thinness is local.
Next we will write $\dim(I(q,e))$ (resp. $\dim(I(q))$) instead of
$\dim(I(s,e))$ (resp. $\dim(I(s))$) for every $s \in q$ since this is
independent on $s \in q$.

\begin{defi}
  The \emph{thick graph} of $\A$, denoted $\thickgraph(\A)$, is
  obtained from $\RA$ by deleting all the thin edges.
\end{defi}

In particular, $\thickgraph(\A)$ has only thick edges. We first state
a lemma, which explains how the dimension of symbolic paths grows in
the (timed) region automaton. Its proof can be found in
Appendix~\ref{app:thickgraph}, page~\pageref{app:thickgraph}

\begin{restatable}{lem}{thinlocal}
  \label{lem:thin-local}
  Let $\path{s,e_1 \ldots e_n}$ be a symbolic path of $\RA$. Assuming
  $\Pol(\path{s,e_1 \ldots e_n}) \ne \emptyset$ and letting $q$ be the
  target region of $\path{s,e_1 \ldots e_{n-1}}$,
  \[
  \dim(\Pol(\path{s,e_1 \ldots e_n})) = \dim(\Pol(\path{s,e_1 \ldots
    e_{n-1}}))+\dim(I(q,e_n)).
  \]
\end{restatable}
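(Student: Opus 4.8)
The plan is to compute the dimension of the polyhedron $\Pol(\path{s,e_1 \ldots e_n}) \subseteq (\IR_+)^n$ by relating it, via a linear change of coordinates, to the polyhedron $\Pol(\path{s,e_1 \ldots e_{n-1}}) \subseteq (\IR_+)^{n-1}$ together with the one-step interval $I(q,e_n)$. The key observation is that in the \emph{timed region automaton} $\RA$, each transition $e_i$ carries a region $r_i''$ that pins down the value of the clock valuation (reached after the delay $\tau_i$) to a single region; since regions are defined by integer floor constraints and an ordering of fractional parts, the accumulated delay $\sigma_i = \tau_1 + \cdots + \tau_i$ is constrained to lie in an interval whose shape (open, half-open, or a single point) is determined locally by $r_i''$ and $r_{i-1}'$. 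This is exactly what makes thinness a local notion in $\RA$ (cf.\ the discussion before Proposition~\ref{prop:thickness-local}).

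First I would set up coordinates: instead of the delays $(\tau_1,\ldots,\tau_n)$, use the partial sums $(\sigma_1,\ldots,\sigma_n)$ where $\sigma_i = \sum_{j\le i}\tau_j$; this is a unimodular (hence dimension-preserving) linear bijection on $(\IR_+)^n$ restricted to the region $\tau_i \ge 0$, i.e.\ to $0 \le \sigma_1 \le \cdots \le \sigma_n$. Under this change of variables, the constraint defining $\Pol(\path{s,e_1 \ldots e_n})$ decomposes: the first $n-1$ coordinates $(\sigma_1,\ldots,\sigma_{n-1})$ range exactly over (the image of) $\Pol(\path{s,e_1 \ldots e_{n-1}})$, and, for each such choice landing in the source region $q$ of $e_n$, the last coordinate $\sigma_n$ is constrained by $\sigma_n - \sigma_{n-1} = \tau_n \in I(s_{n-1},e_n)$ where $s_{n-1}$ is any state of $q$ — and crucially $I(q,e_n)$ is the same interval (up to translation) for every state in $q$, by definition of $\RA$ and of the local notion of thinness. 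Because the value of $\sigma_n$ is further pinned to the target region $r_n$ recorded on the edge, the set of admissible $\sigma_n$ is precisely a translate of $I(q,e_n)$.

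Then the dimension count is a routine fact about polyhedra that fiber over a base: if $\Pol(\path{s,e_1 \ldots e_n})$ projects onto $\Pol(\path{s,e_1 \ldots e_{n-1}})$ with each nonempty fiber being a translate of $I(q,e_n)$ (an interval of fixed dimension $0$ or $1$), then $\dim$ of the total space equals $\dim$ of the base plus $\dim(I(q,e_n))$. Here nonemptiness of $\Pol(\path{s,e_1\ldots e_n})$ guarantees the fiber is nonempty over a full-dimensional (in the base) subset — indeed over an open dense subset of $\Pol(\path{s,e_1\ldots e_{n-1}})$, since the region $q$ is reached on a relatively open subset of the base polyhedron — so no dimension is lost by restricting to the locus where the fiber is nonempty. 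I would make this precise by choosing a point in the relative interior of $\Pol(\path{s,e_1\ldots e_n})$ and exhibiting, near it, a product of simplices of the appropriate dimension, or equivalently by invoking the standard projection-fiber dimension formula for convex polyhedra.

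The main obstacle is the bookkeeping around \emph{half-open} and \emph{punctual} fibers: one must check that when $I(q,e_n)$ is a single point (a singular/thin edge), the base locus over which this fiber is attained is still full-dimensional in $\Pol(\path{s,e_1\ldots e_{n-1}})$ — i.e.\ reaching $q$ and then being forced to a punctual delay does not secretly shrink the admissible base to lower dimension. This is where the hypothesis that we are in the timed region automaton $\RA$ (so that each $e_i$ has a fixed source and target region) is essential: it ensures that the constraint "land in $q$" is already implied by the earlier edges' region annotations, so the base locus is all of $\Pol(\path{s,e_1\ldots e_{n-1}})$, not a proper subpolyhedron. Once that is pinned down, combining it with the base case $n=1$ (where $\Pol(\path{s,e_1}) = \{\tau_1 \mid s \xrightarrow{\tau_1,e_1} s_1\}$ is exactly $I(s,e_1)$ intersected with the target-region constraint, of dimension $\dim(I(q,e_1))$ since $q$ is the source region — noting $\Pol(\path{s}) = \{()\}$ has dimension $0$) gives the formula by induction, or rather directly for the single step as stated.
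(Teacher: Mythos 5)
Your proof is correct and follows essentially the same route as the paper's: both decompose $\Pol(\path{s,e_1\ldots e_n})$ as fibering over $\Pol(\path{s,e_1\ldots e_{n-1}})$ with fibers governed by $I(q,e_n)$ (a graph of a function in the punctual case, an interval-valued fiber in the one-dimensional case), and both rest on the observation that in $\RA$ the region annotations of $e_1,\ldots,e_{n-1}$ already force the base to land in $q$, so the base locus is the whole polyhedron and no dimension is lost. One small inaccuracy: for a multi-clock region $q$ the intervals $I(s',e_n)$ for $s'\in q$ need not be translates of one another (their lengths can vary with the fractional parts of the clocks), but they all have dimension $\dim(I(q,e_n))$, which is all your fiber-dimension count actually uses.
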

That is, we are able to compute the global dimension of a symbolic
path, given the dimension of each of its edges. Notice that this is
only true in the region automaton. As an example, in the timed
automaton of Example~\ref{ex:dimsubtil}, $\Pol(\path{s,e_1e_2}) =
\{(0,0)\}$, whereas $\Pol(\path{s,e_1}) = [0,1]$.

The following proposition states the correctness of the thick graph,
in the sense that a symbolic path is (globally) thin if and only if it
traverses a (locally) thin edge. Its proof is given in
Appendix~\ref{app:thicknesslocal}, page~\pageref{app:thicknesslocal}.
 
\begin{restatable}{prop}{thicknesslocal}
  \label{prop:thickness-local}
  Let $\pi = \path{s,e_1 \ldots e_n}$ be a symbolic path in
  $\RA$. Then, $\pi$ is thin in $\RA$ iff there exists $1 \le i \le n$
  such that $e_i$ is thin.
\end{restatable}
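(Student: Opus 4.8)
The plan is to prove both directions, using Lemma~\ref{lem:thin-local} as the workhorse. Recall that Lemma~\ref{lem:thin-local} tells us that in $\RA$ the dimension of a symbolic path is additive: $\dim(\Pol(\path{s,e_1 \ldots e_n})) = \dim(\Pol(\path{s,e_1 \ldots e_{n-1}})) + \dim(I(q,e_n))$ where $q$ is the target region reached after $e_1 \ldots e_{n-1}$. I would also use the local characterisation of thin edges: $e_i$ is thin iff $\dim(I(q_{i-1},e_i)) < \dim(I(q_{i-1}))$ where $q_{i-1}$ is the source region of $e_i$ (equivalently the target region of the prefix $e_1 \ldots e_{i-1}$).

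First I would unfold the definition of a thin symbolic path applied in $\RA$. By definition, $\pi = \path{s,e_1 \ldots e_n}$ is thin iff there is an index $1 \le i \le n$ with
\[
\dim\big(\Pol(\path{s,e_1 \ldots e_{i-1} e_i})\big) < \dim\Big(\bigcup_e \Pol(\path{s,e_1 \ldots e_{i-1} e})\Big).
\]
Applying Lemma~\ref{lem:thin-local} to both sides (and noting that the union over $e$ on the right is a union of polyhedra all sharing the common prefix part, so the dimension of the union equals $\dim(\Pol(\path{s,e_1 \ldots e_{i-1}})) + \max_e \dim(I(q_{i-1},e)) = \dim(\Pol(\path{s,e_1 \ldots e_{i-1}})) + \dim(I(q_{i-1}))$), the prefix contributions $\dim(\Pol(\path{s,e_1 \ldots e_{i-1}}))$ cancel, and the inequality above becomes exactly $\dim(I(q_{i-1},e_i)) < \dim(I(q_{i-1}))$, i.e.\ $e_i$ is a thin edge. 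This establishes the equivalence between "$\pi$ is thin at index $i$ in the global sense" and "$e_i$ is thin in the local sense", which immediately gives both directions of the proposition. One technical point to handle carefully: Lemma~\ref{lem:thin-local} is stated under the hypothesis $\Pol(\path{s,e_1 \ldots e_n}) \ne \emptyset$, so I would first dispatch the degenerate case where some prefix polyhedron is empty — but in $\RA$, if the prefix $\path{s,e_1 \ldots e_{i-1}}$ is realisable (which it must be for $e_i$ to even be considered) then $\Pol(\path{s,e_1 \ldots e_{i-1}}) \ne \emptyset$, and the time-abstract bisimulation property of the region automaton ensures $I(q_{i-1},e_i)$ is computed on a genuine state.

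I expect the main obstacle to be the claim that $\dim\big(\bigcup_e \Pol(\path{s,e_1 \ldots e_{i-1} e})\big) = \dim(\Pol(\path{s,e_1 \ldots e_{i-1}})) + \dim(I(q_{i-1}))$. This requires knowing that the union of the $I(q_{i-1},e)$ over all edges $e$ enabled at $q_{i-1}$ has the same dimension as $I(q_{i-1}) = \bigcup_e I(q_{i-1},e)$, which is true by definition of $I(q_{i-1})$, and then lifting this through Lemma~\ref{lem:thin-local}: the map sending a delay sequence for the prefix together with one extra delay $\tau_i \in I(q_{i-1},e)$ to the corresponding point of $\Pol(\path{s,e_1\ldots e_{i-1}e})$ is, up to the linear change of coordinates underlying Lemma~\ref{lem:thin-local}, a product structure, so dimensions add and the supremum over $e$ commutes with taking the union. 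Making this bookkeeping precise — in particular checking that the $i$-th coordinate $\tau_i$ ranges freely over $I(q_{i-1},e)$ given a fixed realisable choice of $\tau_1,\ldots,\tau_{i-1}$, which is exactly the content of the region automaton being a time-abstract bisimulation — is the part that needs the most care; everything else is a short cancellation argument.
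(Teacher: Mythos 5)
Your proposal is correct and uses essentially the same argument as the paper: both directions reduce to the additivity of dimensions from Lemma~\ref{lem:thin-local}, combined with the observation that $\dim\big(\bigcup_e \Pol(\path{s,e_1\ldots e_{i-1}e})\big) = \dim(\Pol(\path{s,e_1\ldots e_{i-1}})) + \dim(I(q_{i-1}))$, so that the prefix contribution cancels and global thinness at index $i$ becomes exactly local thinness of $e_i$. The paper packages this as an induction on $n$ (reducing to the case where thinness is first witnessed at the last edge), whereas you perform the cancellation directly at each index; this is only a difference in bookkeeping, and your remark that the reduction of the constraints $\mathcal{C}_i$ to the unconstrained prefix polyhedra rests on the time-abstract bisimulation property of $\RA$ is the right point to flag.
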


The thick graph will be used in the next section for characterising
large and almost-sure satisfaction. It will later be used for
algorithmic issues.

\begin{exa}
  We illustrate the construction of the thick graph on the running
  example. Figure~\ref{fig:bluegraph-running} represents the classical
  region graph where thick (resp. thin) edges are depicted bold
  (resp. dashed). From that region graph, by removing thin edges and
  keeping only the states reachable from the initial one, we obtain
  the thick graph, represented on
  Figure~\ref{fig:bluegraph-running-end}. \corn
\begin{figure}[htb]
\begin{center}
      \begin{tikzpicture}[xscale=1.5,yscale=1.5]
        \everymath{\scriptstyle}

        \path (-4.5,0) node[draw,inner sep=2pt] (q00)
        {$\ell_0,0$};
        \path (-4.5,-1.5) node[draw,inner sep=2pt] (q01)
        {$\ell_0,(0,1)$};
        \path (-4.5,-3) node[draw,inner sep=2pt] (q02)
        {$\ell_0,1$};
        
        \path (-3,0) node[draw,inner sep=2pt] (q10)
        {$\ell_1,0$};
        \path (-3,-1.5) node[draw,inner sep=2pt] (q11)
        {$\ell_1,(0,1)$};
        \path (-3,-3) node[draw,inner sep=2pt] (q12)
        {$\ell_1,1$};
        
        \path (-1.5,0) node[draw,inner sep=2pt] (q2)
        {$\ell_2,0$};
      
        \path (0,0) node[draw,inner sep=2pt] (q30)
        {$\ell_3,0$};
        \path (0,-1.5) node[draw,inner sep=2pt] (q31)
        {$\ell_3,(0,1)$};
        \path (0,-3) node[draw,inner sep=2pt] (q32)
        {$\ell_3,1$};

        \draw [-latex',dashed] (q00) .. controls +(180:25pt) and +(120:25pt) .. (q00) node [above,pos=0.5] {$e_1$};
        \draw [-latex',thick] (q01) .. controls +(180:25pt) and +(120:25pt)
        .. (q01) node [above,pos=0.5] {$e_1$};
        \draw [-latex',dashed] (q02) .. controls +(180:25pt) and +(120:25pt) .. (q02) node [above,pos=0.5] {$e_1$};
        \draw[-latex',thick] (q00) -- (q01) node [left,pos=.3] {$e_1$};
        \draw[-latex',dashed] (q01) -- (q02) node [left,pos=.3] {$e_1$};
        \draw[-latex',dashed] (q00) .. controls +(190:50pt) and +(180:50pt).. (q02) node[pos=.5,left] {$e_1$};

        \draw[-latex',dashed] (q00) -- (q10) node [midway,above] {$e_2$};
        \draw[-latex',thick] (q00) -- (q11) node [midway,above] {$e_2$};
        \draw[-latex',dashed] (q00) -- (q12) node [midway,above] {$e_2$};
        \draw[-latex',thick] (q01) -- (q11) node [midway,above] {$e_2$};        
        \draw[-latex',dashed] (q01) -- (q12) node [midway,above] {$e_2$};
        \draw[-latex',thick] (q02) -- (q12) node [midway,above] {$e_2$};

        \draw[arrows=-latex',dashed] (q12) .. controls +(200:25pt) and +(-20:25pt) .. (q02) node[pos=.5, below,sloped] {$e_3$};
      
        \draw[arrows=-latex',thick] (q10) .. controls +(20:25pt) and +(-200:25pt) .. (q2) node[pos=.5, above,sloped] {$e_4$};
        \draw[arrows=-latex',thick] (q11) -- (q2) node[pos=.5, above,sloped] {$e_4$};
        \draw[arrows=-latex',thick] (q12) .. controls +(60:40pt) and +(-110:40pt).. (q2) node[pos=.3, above,sloped] {$e_4$};
      
        \draw[arrows=-latex',dashed] (q2) .. controls +(200:25pt) and +(-20:25pt) .. (q10) node[pos=.5, below,sloped] {$e_5$};
        \draw[arrows=-latex',thick] (q2) .. controls +(-120:25pt) .. (q11) node[pos=.8, below,sloped] {$e_5$};
        \draw[arrows=-latex',dashed] (q2) .. controls +(-100:80pt) .. (q12) node[pos=.5, below,sloped] {$e_5$};
      
        \draw[arrows=-latex',dashed] (q2) -- (q30) node[pos=.5,above,sloped] {$e_6$};
      
        \draw [-latex',dashed] (q30) .. controls +(0:25pt) and +(60:25pt)
        .. (q30) node [above,pos=0.5] {$e_7$};
        \draw [-latex',thick] (q31) .. controls +(0:25pt) and +(60:25pt)
        .. (q31) node [above,pos=0.5] {$e_7$};
        \draw [-latex',thick] (q32) .. controls +(0:25pt) and +(60:25pt)
        .. (q32) node [above,pos=0.5] {$e_7$};
        \draw [-latex',thick] (q30) -- (q31) node [midway,right] {$e_7$};
        \draw [-latex',dashed] (q31) -- (q32) node [midway,right] {$e_7$};
        \draw[-latex',dashed] (q30) .. controls +(-10:50pt) and +(0:50pt).. (q32) node[pos=.5,left] {$e_1$};

      \end{tikzpicture}
    \end{center}    
\caption{Construction of the thick graph on the running example.}
\label{fig:bluegraph-running}
\end{figure}
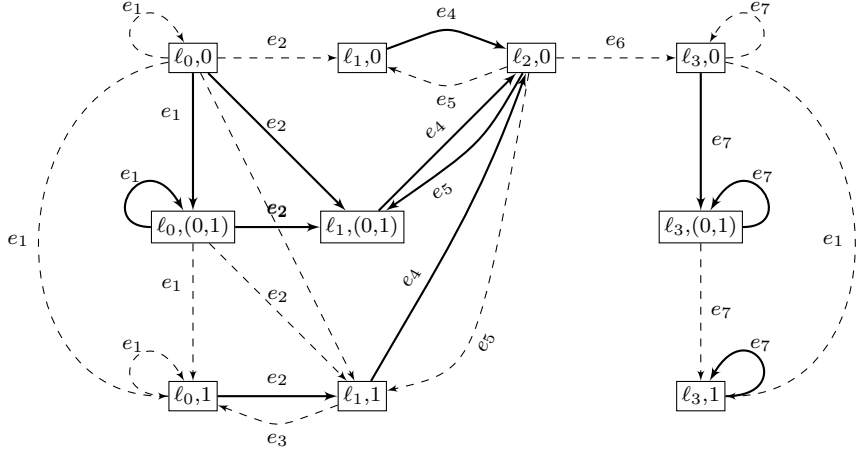
\begin{figure}[htb]
\begin{center}
   \begin{tikzpicture}[xscale=1.5,yscale=1.5]
   \everymath{\scriptstyle}
   \path (-4.5,0) node[draw,inner sep=2pt] (q00)
        {$\ell_0,0$};
        \path (-4.5,-1.5) node[draw,inner sep=2pt] (q01)
        {$\ell_0,(0,1)$};

        \path (-3,-1.5) node[draw,inner sep=2pt] (q11)
        {$\ell_1,(0,1)$};

        \path (-1.5,0) node[draw,inner sep=2pt] (q2)
        {$\ell_2,0$};

        \draw [-latex'] (q01) .. controls +(180:25pt) and +(120:25pt)
        .. (q01) node [above,pos=0.5] {$e_1$};

        \draw[-latex'] (q00) -- (q01) node [left,pos=.3] {$e_1$};

        \draw[-latex'] (q00) -- (q11) node [midway,above] {$e_2$};

        \draw[-latex'] (q01) -- (q11) node [midway,above] {$e_2$};        

        \draw[arrows=-latex'] (q11) -- (q2) node[pos=.5, above,sloped] {$e_4$};

        \draw[arrows=-latex'] (q2) .. controls +(-120:25pt) .. (q11) node[pos=.8, below,sloped] {$e_5$};
\end{tikzpicture}
\end{center}
\caption{$\thickgraph(\A_{\textsf{running}})$, the thick graph for the running example.}
\label{fig:bluegraph-running-end}
\end{figure}
\end{exa}

\begin{rem}
  In the following, we will denote by $\MC(\A)$ the finite Markov
  chain obtained by taking $\thickgraph(\A)$ as the support of the
  Markov chain, and assuming uniform distributions over edges. In the
  next sections, we write $\MC(\A),s \robust P$ whenever the finite
  Markov chain $\MC(\A)$ almost-surely satisfies property $P$ from
  state $s$.
\end{rem}

\section{When do the large and the almost-sure satisfaction coincide?}
\label{sec:match}

We know that large satisfaction and almost-sure satisfaction coincide
for finite automata for several classes of properties~\cite{VV06}. We
want here to discuss situations where almost-sure and large
satisfactions also match in our context. This will help giving
algorithmic solutions to the almost-sure model-checking problem using
the thick graph.

We fix for this section a stochastic timed automaton $\langle \A,\mu,w
\rangle$.

\subsection{Safety properties}

We first compare the two semantics in the restricted case of safety
properties.  Let us first state this simple result that, in a region
automaton, a finite symbolic path has probability $0$ iff it is
thin. This is the first easy link we can make between the
probabilities and the topology. Note that this correspondence does not
depend on the choice of the probability distributions.

\begin{restatable}{prop}{probthick}
  \label{prop:prob-thick} 
  Consider a finite symbolic path $\pi = \path{s,e_1 \ldots e_n}$ in
  $\RA$.  Then, $\Prob_{\RA}(\Cyl(\pi)) >0$ iff $\pi$ is
  thick. Equivalently, $\Prob_{\RA}(\Cyl(\pi)) =0$ iff $\pi$ is thin.
\end{restatable}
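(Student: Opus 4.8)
The plan is to proceed by induction on the length $n$ of the symbolic path $\pi = \path{s,e_1 \ldots e_n}$ in $\RA$, unfolding the recursive definition of $\Prob_{\RA}$ one step at a time and tracking how the integral behaves. For the base case $n=0$ we have $\Prob_{\RA}(\path{s}) = 1 > 0$, and the empty path is thick by convention. For the inductive step, recall that
\[
\Prob_{\RA}(\path{s,e_1\ldots e_n}) \ = \ \int_{t \in I(s,e_1)}
p_{s+t}(e_1) \, \Prob_{\RA}(\path{s_t,e_2 \ldots e_n}) \, \ud \mu_{s}(t),
\]
where $s \xrightarrow{t} (s+t) \xrightarrow{e_1} s_t$. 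The key observation is that $p_{s+t}(e_1) > 0$ for every $t \in I(s,e_1)$ (since $e_1$ is enabled there and all weights are positive), and that $p_{s+t}(e_1)$ and $\Prob_{\RA}(\path{s_t,e_2\ldots e_n})$ depend only on the region of $s+t$, hence are constant on each region contained in $I(s,e_1)$; in particular the target state $s_t$ ranges over a fixed finite set of regions as $t$ varies. So the integral is a finite sum, over the regions $r$ composing $I(s,e_1)$, of $\mu_s(I_r) \cdot p_r(e_1) \cdot \Prob_{\RA}(\path{s_r, e_2\ldots e_n})$, where $I_r = \{t \in I(s,e_1) : s+t \in r\}$.

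From here I would argue in two directions. For the ``thin $\Rightarrow$ probability $0$'' direction: if $\pi$ is thin, then by Proposition~\ref{prop:thickness-local} some $e_i$ is a thin edge; I distinguish whether $i = 1$ or $i > 1$. If $e_1$ is thin, then $\dim(I(s,e_1)) < \dim(I(s))$. Two subcases arise according to hypothesis (H2): if $\dim(I(s)) \ge 1$, then $\mu_s$ is equivalent to Lebesgue measure on $I(s)$, and $I(s,e_1)$, being of strictly smaller dimension, has Lebesgue measure $0$, hence $\mu_s(I(s,e_1)) = 0$ and the integral vanishes; if $\dim(I(s)) = 0$, then since $e_1$ is thin we would need $\dim(I(s,e_1)) < 0$, which is impossible, so this subcase is vacuous. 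If instead $i > 1$, then for every region $r$ reached after $e_1$ the path $\path{s_r, e_2 \ldots e_n}$ is thin (it still traverses the thin edge $e_i$, $i\ge 2$), so by the induction hypothesis every $\Prob_{\RA}(\path{s_r,e_2\ldots e_n}) = 0$, and again the integral is $0$.

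For the converse, ``thick $\Rightarrow$ probability $> 0$'': suppose $\pi$ is thick, so every $e_i$ is thick. Then $\dim(I(s,e_1)) = \dim(I(s))$, so $I(s,e_1)$ contains a region $r$ of full dimension in $I(s)$; by (H1)--(H2) this region has positive $\mu_s$-measure (if $\dim(I(s))\ge 1$, $\mu_s$ is equivalent to Lebesgue measure and $r$ has positive Lebesgue measure; if $\dim(I(s)) = 0$, $\mu_s$ is equivalent to the uniform distribution on the finitely many points of $I(s)$, each carrying positive mass). Fixing such an $r$, the state $s_r$ after taking $e_1$ into $r$ determines the suffix path $\path{s_r, e_2\ldots e_n}$, which is still thick (it only traverses thick edges $e_2,\dots,e_n$), so by induction $\Prob_{\RA}(\path{s_r,e_2\ldots e_n}) > 0$. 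Since $p_{s+t}(e_1) > 0$ and all three factors are positive on the set $I_r$ of positive measure, the integral is bounded below by $\mu_s(I_r)\cdot p_r(e_1)\cdot\Prob_{\RA}(\path{s_r,e_2\ldots e_n}) > 0$.

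The main obstacle I anticipate is the careful bookkeeping needed in the induction step: one must be sure that after firing a thick edge $e_1$ into a full-dimensional sub-region $r$, the resulting state $s_r$ yields a suffix path that is again thick in $\RA$ — this is where Proposition~\ref{prop:thickness-local} (thinness is witnessed by a single thin edge) is essential, and one should double-check that the suffix $e_2,\dots,e_n$ being thick edges genuinely implies $\path{s_r,e_2\ldots e_n}$ is thick, rather than only that no individual edge is locally thin. A secondary subtlety is the interplay between the geometric notion of dimension of $I(s,e_1)$ and the measure-theoretic hypotheses (H1)--(H2): one has to handle the dimension-$0$ and positive-dimension cases of $I(s)$ separately and invoke the right clause of $(\star)$ in each. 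Everything else is a routine unfolding of the definition of $\Prob_{\RA}$.
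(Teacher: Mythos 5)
Your overall strategy is the same as the paper's: reduce global thinness to the existence of a locally thin edge via Proposition~\ref{prop:thickness-local}, translate the dimension comparison into a statement about $\mu_s$ via condition $(\star)$, and unfold the integral defining $\Prob_{\RA}$ inductively along the path. The thin direction goes through: when the thin edge has index $i>1$, the integrand is identically zero because thinness of the suffix $\path{s_t,e_2\ldots e_n}$ depends only on the edges traversed (Proposition~\ref{prop:thickness-local}), so the induction hypothesis applies uniformly in $t$.

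There is, however, one incorrect intermediate claim, and it carries the weight of the thick direction. The suffix probability $\Prob_{\RA}(\path{s_t,e_2\ldots e_n})$ is \emph{not} constant as $s+t$ ranges over a region: the measures $\mu_{s_t}$ depend on the actual valuation of $s_t$, not just on its region. (With one clock and two edges guarded by $x<1$ and $x>1$, the probability of firing the first edge from $(\ell,t)$, $t\in(0,1)$, is $\mu_{(\ell,t)}([0,1-t))$, which varies with $t$.) Hence the integral is not the finite sum $\sum_r \mu_s(I_r)\,p_r(e_1)\,\Prob_{\RA}(\path{s_r,e_2\ldots e_n})$ you write, and the concluding lower bound by a product evaluated at a single representative $s_r$ is not justified. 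What \emph{is} region-invariant is the positivity (not the value) of the suffix probability, again by Proposition~\ref{prop:thickness-local} together with the induction hypothesis. The correct conclusion of the inductive step is therefore: the integrand $t\mapsto p_{s+t}(e_1)\,\Prob_{\RA}(\path{s_t,e_2\ldots e_n})$ is strictly positive on $I(s,e_1)$, which has positive $\mu_s$-measure by $(\star)$, and the integral of a strictly positive measurable function over a non-negligible set is positive --- which is exactly how the paper phrases it. The repair is routine, but the lower-bound step as you wrote it is wrong.
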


This proposition relies on condition $(\star)$ on the probability
distributions $\mu$ (\textit{cf} page~\pageref{subsec:proba}): if an
edge is thick, it is either because it has dimension $1$, or because
it has dimension $0$, but all other outgoing edges also have dimension
$0$. In the first case, the measure $\mu$ must be equivalent to the
Lebesgue measure, and in the second case, it will give a positive mass
to the edge. The full details of the proof are given in
Appendix~\ref{app:safety}, page~\pageref{app:safety}.

Using this result, we show that the large and the almost-sure
satisfaction always coincide when we restrict to safety properties.

\begin{restatable}{thm}{theosafety}
  \label{th:safety}
  Let $s$ be a state of $\A$, and $P$ be a(n untimed) safety
  property over \AP. Then the four following properties are
  equivalent:
  \begin{enumerate}[label=\({\alph*}]
  \item $\A ,s \robust_{\Prob} P$;
  \item $\A,s \robust_{\mathcal T} P$;
  \item every infinite thick symbolic path $\pi$ from $\iota(s)$ in
    $\RA$ satisfies $P$;
  \item every infinite path $\pi$ from $\iota(s)$ in $\thickgraph(\A)$
    satisfies $P$;
  \item $\MC(\A),\iota(s) \robust P$.
  \end{enumerate}
\end{restatable}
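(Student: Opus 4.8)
The plan is to establish the chain of equivalences by proving a cycle of implications, using Proposition~\ref{prop:prob-thick} and Proposition~\ref{prop:thickness-local} as the two pivotal links between probability, topology, and the thick graph. First I would reduce everything to $\RA$: by Corollary~\ref{prop:proba-A-RA} we have $\A,s \robust_{\Prob} P \Leftrightarrow \RA,\iota(s) \robust_{\Prob} P$, and by Proposition~\ref{prop-topo-A-RA} we have $\A,s \robust_{\mathcal T} P \Leftrightarrow \RA,\iota(s) \robust_{\mathcal T} P$; moreover $\iota$ preserves the labelling, so $P$ being a (untimed) safety property transfers verbatim. Hence it suffices to prove the equivalence of (a)--(e) directly in the timed region automaton, which is the setting where thickness is local (Proposition~\ref{prop:thickness-local}) and where a finite symbolic path has positive probability iff it is thick (Proposition~\ref{prop:prob-thick}).

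The heart of the argument is the equivalence (c)~$\Leftrightarrow$~(d), which is essentially definitional once Proposition~\ref{prop:thickness-local} is in hand: an infinite symbolic path in $\RA$ is thick iff all its finite prefixes are thick iff (by Proposition~\ref{prop:thickness-local}) none of its edges is locally thin iff it is a path of $\thickgraph(\A)$. The implication (d)~$\Leftrightarrow$~(e) is then a standard fact about finite Markov chains: in $\MC(\A)$ (uniform distributions on the edges of $\thickgraph(\A)$), a prefix-independent — indeed here just a safety — property holds almost surely iff it holds along every path from $\iota(s)$, because safety violation is witnessed by a finite prefix and every finite path of the chain has positive probability. (If one prefers, use that $\MC(\A),\iota(s)\robust P$ fails iff some finite bad prefix is reachable, i.e.\ iff some infinite path violates $P$.)

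It remains to connect (a) and (b) to (c). For (a)~$\Leftrightarrow$~(c): since $P$ is safety, $\neg P$ is reachability, so $\sem{\neg P}_{\RA,\iota(s)}$ is the union over bad finite prefixes $\pi$ of the cylinders $\Cyl(\pi)$; this union, restricted to the thin prefixes, has probability $0$ by Proposition~\ref{prop:prob-thick} (countable union of null sets), and what remains is exactly the thick bad prefixes. Hence $\Prob_{\RA}(\iota(s)\models \neg P) = 0$ iff there is no thick finite bad prefix, iff every infinite thick symbolic path from $\iota(s)$ satisfies $P$ (a König-style argument: a thick infinite path violating $P$ has a finite thick prefix violating it, since safety violation is finitely witnessed and prefixes of thick paths are thick; conversely a thick finite bad prefix extends, using non-blockingness and Lemma~\ref{lem:thin-local}, to a thick infinite path, or one simply reads the cylinder directly). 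For (b)~$\Leftrightarrow$~(c): by Theorem~\ref{BM:thm}, $\A,s \robust_{\mathcal T} P$ iff $\sem{\neg P}$ is meagre; one shows $\sem{\neg P}$ is meagre iff it has empty interior (it is a closed set in our topology, being a union of cylinders that is upward-closed under the safety condition — or argue via the Baire property, Proposition~\ref{prop:Baire}), and it has empty interior iff it contains no basic open set, iff (by Lemma~\ref{lemma:basicopensets}) it contains no thick cylinder, iff no thick finite path violates $P$, which by the above is (c).

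The main obstacle I expect is the careful handling of the "finite witness extends to an infinite thick path" direction and the topological bookkeeping for (b): one must make sure that a thick finite symbolic path from $\iota(s)$ that violates $P$ genuinely yields a positive-probability (resp.\ non-meagre, resp.\ nonempty-in-$\thickgraph(\A)$) set of infinite runs violating $P$ — this uses non-blockingness of $\RA$ together with Lemma~\ref{lem:thin-local} to keep extending by thick edges without dropping dimension — and conversely that no subtle thin-path phenomenon can sneak a $P$-violating run past all these finite characterisations. Everything else is routine combination of the already-established propositions.
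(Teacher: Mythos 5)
Your architecture is exactly the paper's: reduce to $\RA$ via Corollary~\ref{prop:proba-A-RA} and Proposition~\ref{prop-topo-A-RA}, obtain (a)$\Leftrightarrow$(c) from the decomposition of $\sem{\neg P}_{\RA,\iota(s)}$ into countably many bad-prefix cylinders together with Proposition~\ref{prop:prob-thick}, obtain (c)$\Leftrightarrow$(d) from Proposition~\ref{prop:thickness-local}, and treat (b) through the Banach--Mazur characterisation and the Baire property. The probabilistic half and the extension argument (every thick finite path prolongs to a thick infinite one, because non-blockingness forces some outgoing edge $e$ with $\dim(I(q,e))=\dim(I(q))$ from every reachable region) are correct as you describe them; (d)$\Leftrightarrow$(e) is indeed the routine finite-Markov-chain fact.

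One justification in the topological half does not hold and must be replaced. First, $\sem{\neg P}_{\RA,\iota(s)}$ is \emph{not} closed in $\mathcal{T}_{\RA}^{\iota(s)}$: its complement $\sem{P}$ is generally not open, since any basic open neighbourhood of a good run is a thick cylinder $\Cyl(\pi)$, and $\Cyl(\pi)$ also contains runs that continue from $\pi$ into a (possibly thin) bad prefix. Second, the Baire property does not give ``empty interior $\Rightarrow$ meagre''; it gives only the converse (a set containing a nonempty open set is non-meagre), which is precisely the direction you need for ``some thick bad prefix $\Rightarrow$ (b) fails''. For the remaining direction---all bad prefixes $\pi_i$ thin implies $\sem{\neg P}$ meagre---show that each $\Cyl(\pi_i)$ with $\pi_i$ thin is nowhere dense: any nonempty basic open set contains some thick $\Cyl(\path{\iota(s),e_1\dots e_n})$, which by the extension argument contains a thick sub-cylinder of length exceeding $|\pi_i|$; since prefixes of thick paths are thick, that sub-cylinder is disjoint from $\Cyl(\pi_i)$, so $\overline{\Cyl(\pi_i)}$ has empty interior and the countable union is meagre. (In Banach--Mazur terms: Player~2 must, at round $k$, extend past $\pi_k$ by thick edges; a one-shot interior argument does not suffice.) With that repair your proof goes through and coincides with the paper's.
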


This result relies on the fact that safety properties are violated by
finite prefixes. Proposition~\ref{prop:prob-thick} then tells us that
such finite prefixes yield positive probability whenever they are
thick. We can then play a Banach-Mazur game in the topological space
of the automaton, where only thick paths can be used by the players as
moves. This allows to show that property $\neg P$ is meagre iff $P$ is
not violated by thick prefixes, that is when $\neg P$ has probability
$0$.  The details of the proof are given in Appendix~\ref{app:safety},
on page~\pageref{app:safety2}.

As said, the proof of Theorem~\ref{th:safety} heavily relies on the
fact that witnesses of violation (resp. validation) for safety
(resp. reachability) properties are finite prefixes. Not surprisingly,
Theorem~\ref{th:safety} does not hold for general \LTL or
$\omega$-regular properties, for which the violation cannot always be
witnessed by finite prefixes. As an example, consider the timed
automaton of Figure~\ref{fig:cex-safety}, and the property $\F p$. The
probability of $\F p$ is indeed $1$ in this automaton, although the
infinite symbolic path $\path{(\ell_0,0),e_0^\omega}$ violates $\F p$
and is thick.

\begin{figure}[h!]
\begin{center}
\begin{tikzpicture}[yscale=.9,xscale=1.1]
  \everymath{\scriptstyle}
\path (0,0) node[draw,circle,inner sep=2pt] (l0) {$\ell_0$};
\path (3,0) node[draw,circle,inner sep=2pt] (l1) {$\ell_1$};

\path (3,.6) node (l1') {$\{p\}$};

   \draw [-latex'] (l0) .. controls +(210:40pt) and +(150:40pt) .. (l0)
        node [midway,left] {$e_0,~x \leq 1$};

    \draw[arrows=-latex'] (l0) -- (l1) node[pos=.5,above,sloped] {$e_1,~x\leq 1$};
\end{tikzpicture}
\end{center}
\caption{A counterexample for Theorem~\ref{th:safety} beyond safety.}
\label{fig:cex-safety}
\end{figure}
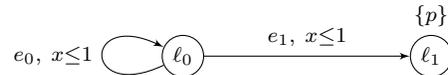

This kind of behaviours motivates the restriction to fair paths, which
is rather natural since probabilities and strong fairness are closely
related~\cite{Pnu83,PZ93,BK98}.

\subsection{Restriction to fairness: the case of prefix-independent properties}

Motivated by the counterexample of Figure~\ref{fig:cex-safety}, we
define a natural notion of fairness for infinite symbolic paths in
timed automata.

\begin{defi}
  An infinite region path $\pi = q_0 \xrightarrow{e_1} q_1
  \xrightarrow{e_2} q_2 \ldots$ of $\RA$ is \emph{fair} iff for every
  thick edge $e$, if $e$ is enabled in infinitely many $q_i$ ($i \in
  \IN$), then $e_i =e$ for infinitely many $i \in \IN$.
\end{defi}

Fairness extends to runs and symbolic paths in an obvious way as
detailed below. Region paths and symbolic paths in $\RA$ are closely
related: to any non-empty symbolic path $\path{s,e_1e_2\ldots}$, we
associate a unique region path $q_0 \xrightarrow{e_1} q_1
\xrightarrow{e_2} q_2 \ldots$ with $s \in q_0$ and $q_i$ is the target
region of edge $e_i$. We then say that a symbolic path
$\path{s,e_1e_2\ldots}$ in $\RA$ is fair whenever its corresponding
region path is fair. Finally we say that an infinite run $\varrho$ in
$\RA$ is fair whenever $\pi_\varrho$ (its underlying symbolic path) is
fair. Obviously, the set of fair infinite runs from $s$ is measurable
(that is, in $\Omega_{\RA}^s$), as fairness is an $\omega$-regular
property over infinite paths. We now turn the definition from $\RA$ to
$\A$: an infinite run $\varrho$ in $\A$ is fair whenever
$\iota(\varrho)$ is fair.

We write $\mathsf{fair}$ for this property, that is if an infinite run
$\varrho$ (in $\A$ or in $\RA$) is fair, then we write $\varrho
\models \mathsf{fair}$. We then write $\Prob_{\A}(s \models
\mathsf{fair})$ (resp. $\Prob_{\RA}(\iota(s) \models \mathsf{fair})$)
for the probability of fair runs in $\A$ from $s$ (resp. in $\RA$ from
$\iota(s)$). We say that $\A$ (resp. $\RA$) is \emph{almost-surely
  fair} from $s$ (resp. $\iota(s)$) whenever $\Prob_{\A}(s \models
\mathsf{fair})=1$ (resp. $\Prob_{\RA}(\iota(s) \models
\mathsf{fair})=1$).

Let us state the following straightforward lemma, which gives a useful
characterisation of thick and fair symbolic paths in the region
automaton.

\begin{lem}
  Let $s$ be a state of $\A$, and $P$ be a prefix-independent property
  over \AP. Then, the two following properties are equivalent:
  \begin{enumerate}[label=(\roman*)]
  \item all BSCCs\footnote{BSCC stands for `bottom strongly connected
      component', see \emph{e.g.}~\cite{book-Cormen}.} reachable from $[s]$ in $\thickgraph(\A)$ satisfy
    $P$;\footnote{We say that a prefix-independent property $P$ is
      satisfied by a BSCC $C$ whenever every run visiting infinitely
      often all states of $C$ satisfies $P$. As $P$ is
      prefix-independent, this reduces to the existence of a run
      satisfying $P$, which visits infinitely often all states of
      $C$.}
  \item every infinite thick \textbf{and fair} symbolic path $\pi$
    from $\iota(s)$ in $\RA$ satisfies $P$;
  \item $\MC(\A),\iota(s) \robust P$.
  \end{enumerate}
  We write   $\S (\A,s,P)$ for this property.
\end{lem}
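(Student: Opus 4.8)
The plan is to establish the cycle of implications $(i) \Rightarrow (ii) \Rightarrow (iii) \Rightarrow (i)$, using throughout two elementary facts. First, by Proposition~\ref{prop:thickness-local} together with Definition~\ref{def:diminf}, an infinite symbolic path $\path{\iota(s),e_1e_2\ldots}$ in $\RA$ is thick if and only if every $e_i$ is a thick edge, i.e. exactly when its underlying region path $q_0 q_1 q_2 \ldots$ (with $q_0=[s]$) is an infinite path of $\thickgraph(\A)$. Second, since $P$ is prefix-independent, whether such a (symbolic) path satisfies $P$ depends only on $\{\lambda(q)\mid q \text{ occurs infinitely often}\}$; by the footnote accompanying the statement this makes "$C$ satisfies $P$" well defined for a BSCC $C$, and equivalent to: some (equivalently, every) path visiting infinitely often exactly the states of $C$ satisfies $P$. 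Also note that, as $\A$ is non-blocking, every reachable region of $\thickgraph(\A)$ has a thick outgoing edge (the maximal-dimension one), so reachable BSCCs exist and $\MC(\A)$ is well defined from $[s]$.

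\medskip\noindent$(i)\Rightarrow(ii)$. Let $\pi$ be a thick and fair symbolic path from $\iota(s)$; by the first fact its region path lies in $\thickgraph(\A)$. Let $S$ be the set of states it visits infinitely often, which is non-empty since $Q$ is finite. Then $S$ is a BSCC of $\thickgraph(\A)$ reachable from $[s]$: it is reachable (a suffix of $\pi$ witnesses it); it is strongly connected, since for $q,q'\in S$ a segment of $\pi$ between an occurrence of $q$ and a later occurrence of $q'$ is a $\thickgraph(\A)$-path from $q$ to $q'$; and it is closed under edges of $\thickgraph(\A)$, since if $q\in S$ and $e$ is a thick edge with source $q$, then $e$ is enabled at every one of the infinitely many occurrences of $q$, so by fairness it is taken infinitely often, whence $\target(e)\in S$ — and a non-empty, strongly connected, edge-closed set of states is a BSCC. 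By $(i)$, $S$ satisfies $P$, and since $\pi$ visits infinitely often exactly the states of $S$, the second fact yields $\pi\models P$.

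\medskip\noindent$(ii)\Rightarrow(iii)$. A run of $\MC(\A)$ from $\iota(s)$ is an infinite path of $\thickgraph(\A)$ starting at $[s]$, hence a thick symbolic path. In the finite Markov chain $\MC(\A)$, almost every run eventually stays forever in a single BSCC $C$ and takes every edge of $C$ infinitely often. Such a run is fair: any thick edge $e$ enabled infinitely often has its source visited infinitely often, hence in $C$; being a $\thickgraph(\A)$-edge out of a state of the BSCC $C$ its target is again in $C$, so $e$ is an edge of $C$ and is taken infinitely often. Thus almost every run of $\MC(\A)$ from $\iota(s)$ is thick and fair, hence satisfies $P$ by $(ii)$, so $\MC(\A),\iota(s)\robust P$.

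\medskip\noindent$(iii)\Rightarrow(i)$. We argue by contraposition: suppose some BSCC $C$ reachable from $[s]$ does not satisfy $P$. There is a finite $\thickgraph(\A)$-path from $[s]$ into $C$, which has positive probability in $\MC(\A)$; once inside $C$ the walk almost surely stays in $C$ (it is edge-closed) and visits all states of $C$ infinitely often ($\MC(\A)$ restricted to $C$ is irreducible, $C$ being strongly connected and finite). Hence with positive probability a run of $\MC(\A)$ from $\iota(s)$ visits infinitely often exactly the states of $C$; by the second fact and the hypothesis on $C$ such a run violates $P$, so $\MC(\A)$ does not almost-surely satisfy $P$ from $\iota(s)$. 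The whole argument is routine once the correspondence of the first paragraph is in place; the only point requiring care is precisely that bookkeeping — matching thick symbolic paths of $\RA$ with paths of $\thickgraph(\A)$ and, via fairness, with the runs of $\MC(\A)$ that are eventually confined to, and traverse all edges of, a single reachable BSCC — after which everything reduces to the textbook analysis of bottom strongly connected components of a finite Markov chain.
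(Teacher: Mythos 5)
The paper gives no proof of this lemma at all --- it is introduced with ``Let us state the following straightforward lemma'' --- so there is no argument of the authors' to compare yours against. Your proof is correct and is exactly the intended (omitted) reasoning: the identification of infinite thick symbolic paths of $\RA$ with infinite paths of $\thickgraph(\A)$ via Proposition~\ref{prop:thickness-local}, the observation that fairness forces the set of infinitely-visited states to be edge-closed (hence a BSCC), and the standard fact that a finite Markov chain almost surely ends up traversing all edges of a single reachable BSCC; the same ingredients are used implicitly by the authors inside the proofs of Theorems~\ref{theo:large} and~\ref{th:fairhelps}, and your care about non-blockingness guaranteeing that every state of $\thickgraph(\A)$ has an outgoing edge is a worthwhile detail the paper glosses over.
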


\subsubsection{Fairness and topology.}
Even though fairness is introduced because of the probabilities, we
first realise that adding fairness to paths allows to characterise the
large satisfaction. More precisely, we prove the following result,
which is rather similar to Theorem~\ref{th:safety}, when restricted to
topology.

\begin{thm}
  \label{theo:large}
   Let $s$ be a state of $\A$, and $P$ be a prefix-independent property
  over \AP. Then:
  \[
  \A,s \robust_{{\mathcal T}} P\ \Leftrightarrow\ \S(\A,s,P)\,.
  \]
\end{thm}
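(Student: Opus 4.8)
The plan is to prove the equivalence $\A,s \robust_{\mathcal T} P \Leftrightarrow \S(\A,s,P)$ by passing through the region automaton and then playing a Banach-Mazur game on the thick graph. First I would invoke Proposition~\ref{prop-topo-A-RA} to reduce the statement to the region automaton: $\A,s \robust_{\mathcal T} P$ iff $\RA,\iota(s) \robust_{\mathcal T} P$, and by the lemma preceding the theorem, $\S(\A,s,P)$ is an intrinsic property of $\thickgraph(\A)$ which is the same whether computed from $\A$ or $\RA$. So from now on it suffices to work entirely inside $\RA$, where by Proposition~\ref{prop:thickness-local} a symbolic path is thick iff all its edges are (locally) thick, i.e.\ iff it lives in $\thickgraph(\A)$; this is the key simplification that makes the topology tractable. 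I would also record that, since $\RA$ is a Baire space (Proposition~\ref{prop:Baire}), "$P$ large" is equivalent, via the Banach-Mazur characterisation (Theorem~\ref{BM:thm}), to "Player~2 has no winning strategy for the target $\neg P$", equivalently "Player~1 has a strategy forcing the play into $\sem{P}$".

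For the direction $\S(\A,s,P) \Rightarrow \A,s \robust_{\mathcal T} P$, assume all BSCCs of $\thickgraph(\A)$ reachable from $[s]$ satisfy $P$. I would describe a winning Player~1 strategy in the Banach-Mazur game on $\Runs(\RA,\iota(s))$ with the basic open sets of Definition~\ref{def:topo} (which, by Lemma~\ref{lemma:basicopensets}, are exactly cylinders of open-constrained thick paths, hence paths in $\thickgraph(\A)$). Player~1 maintains two goals while extending the current finite thick symbolic path: (a) \emph{reach a BSCC} — since thick edges always remain available (a thick edge enabled at a region is enabled throughout that region, and non-blockingness gives at least one outgoing thick edge by Lemma~\ref{lem:thin-local}), Player~1 can in finitely many moves drive the last region into a BSCC $C$ of $\thickgraph(\A)$; (b) \emph{visit every state of $C$ infinitely often} — Player~1 enumerates the states of $C$ and, at successive turns, extends the path by a thick route through the next not-yet-scheduled state of $C$, which is always possible because $C$ is strongly connected in $\thickgraph(\A)$. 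Between Player~1's moves, Player~2 can only shrink the constraint (to an open sub-constraint) and append further thick edges, which cannot leave $C$ once inside it and cannot undo progress. The limit run is therefore a run of $\RA$ that stays in $\thickgraph(\A)$ and visits all states of $C$ infinitely often; by hypothesis it satisfies $P$, so the intersection meets $\sem{P}$ and Player~1 wins. Hence $\neg P$ is meagre and $P$ is large.

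For the converse $\A,s \robust_{\mathcal T} P \Rightarrow \S(\A,s,P)$, I argue contrapositively: suppose some BSCC $C$ of $\thickgraph(\A)$ reachable from $[s]$ fails $P$, i.e.\ (using prefix-independence) every run visiting all states of $C$ infinitely often violates $P$. Fix a thick finite path $\sigma$ from $\iota(s)$ whose last region lies in $C$; then $\Cyl(\path{\iota(s),\sigma})$ is a non-empty basic open set, so it suffices to show $\neg P$ is large relative to this basic open — equivalently to give a Player~2 strategy in the game started inside $\sigma\,C$ forcing the play to visit all of $C$ infinitely often (any such limit run violates $P$). Player~2 uses the same "round-robin through the states of $C$" tactic as above: whatever thick extension Player~1 plays stays in $C$ (there are no thick edges leaving a BSCC) and can only fail to visit some scheduled state, which Player~2 then repairs on her turn using strong connectivity of $C$. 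The limit run thus visits every state of $C$ infinitely often, hence lies in $\sem{\neg P}$, so Player~2 wins with target $\neg P$; by Theorem~\ref{BM:thm}, $\neg P$ is not meagre, so $P$ is not large, contradicting $\A,s \robust_{\mathcal T} P$.

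The main obstacle — and where the argument must be careful — is the interaction between the \emph{constraints} decorating thick symbolic paths and the Banach-Mazur moves: Player~2 is allowed to narrow the polyhedron to an arbitrarily small open sub-box, and one must check that this never obstructs Player~1 from appending a further thick edge, nor from realising a prescribed route through the BSCC. The point is that, by Lemma~\ref{lem:thin-local} applied inside $\RA$, every region reached along a thick path has an outgoing thick edge enabled on a full-dimensional (relatively open) set of delays, so whatever open sub-polyhedron Player~2 hands back still contains points from which the next thick edge of the desired route is enabled; this keeps the "constrained" game essentially as flexible as the "true" (unconstrained) game. The remark after Example~\ref{ex:largetaBMS} — that playing with proper constraints can only hurt the player trying to keep the intersection non-empty — makes this precise, and I would cite the corresponding appendix lemma rather than redo the bookkeeping here.
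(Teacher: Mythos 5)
You have the right combinatorial content---steering into a BSCC of $\thickgraph(\A)$, round-robin visits to its states, prefix-independence to read satisfaction off the set of states visited infinitely often, and the compactification trick for non-emptiness---but the way you hook this into Theorem~\ref{BM:thm} is wrong, and the error is not cosmetic. The theorem says: Player~2 has a winning strategy for target $X$ (i.e.\ can force $\bigcap_i B_i \cap X = \emptyset$) \emph{iff} $X$ is meagre. So ``$P$ large'' means ``Player~2 \emph{has} a winning strategy for target $\sem{\neg P}$'', not ``Player~2 has no winning strategy'', and it is certainly not equivalent to ``Player~1 can force the play into $\sem{P}$'': the game is not determined, and a Player~1 strategy forcing $\emptyset \neq \bigcap_i B_i \subseteq \sem{P}$ only certifies that $\sem{P}$ is comeager in \emph{some} non-empty open set (Player~1's first move), never that it is large. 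Concretely, in $\IR$ Player~1 can force the intersection into $(0,1)$, yet $(0,1)$ is not large. Hence your direction $\S(\A,s,P) \Rightarrow \A,s \robust_{\mathcal T} P$, which builds a Player~1 strategy and concludes meagreness of $\neg P$, does not go through as written. The fix is to hand your strategy to Player~2: whatever basic open set Player~1 opens with, Player~2 extends it into \emph{some} reachable BSCC $C$ and then round-robins through $C$; every consistent play has $\bigcap_i B_i$ either empty or contained in a symbolic path visiting all of $C$ infinitely often, hence in $\sem{P}$ by hypothesis and prefix-independence, so $\bigcap_i B_i \cap \sem{\neg P} = \emptyset$ and Player~2 wins. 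This is exactly the paper's argument for this direction.

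The other direction has the mirror-image problem: you exhibit a ``Player~2 strategy'' whose plays land \emph{inside} $\sem{\neg P}$ and then write ``Player~2 wins with target $\neg P$; by Theorem~\ref{BM:thm}, $\neg P$ is not meagre''. Making the intersection \emph{meet} the target is Player~1's winning condition, and Player~2 having a winning strategy for target $\neg P$ would show that $\neg P$ \emph{is} meagre---the opposite of what you need. The clean contrapositive is: if some reachable BSCC $C$ fails $P$, give \emph{Player~1} a winning strategy for target $\sem{\neg P}$ (first move a thick path into $C$, then round-robin through $C$ plus the compactification bookkeeping to keep $\bigcap_i B_i$ non-empty); since Player~1 wins, Player~2 has no winning strategy, so $\sem{\neg P}$ is not meagre and $P$ is not large. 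The paper instead argues this direction non-contrapositively, letting Player~1 play against Player~2's assumed winning strategy $\Sigma_2$ and extracting, for each reachable BSCC $C$, a non-empty intersection contained in $\sem{P}$ whose underlying path visits all of $C$ infinitely often, which witnesses that $C$ satisfies $P$. Either form works once the players are assigned correctly; your strategies themselves need essentially no change.
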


\begin{proof}
  First assume that $\A,s \robust_{\mathcal{T}} P$. This equivalently
  means that $\RA,\iota(s) \robust_{\mathcal{T}} P$
  (Lemma~\ref{prop-topo-A-RA}), that is,
  $\sem{\neg P}_{\RA,\iota(s)}$ is meagre. We now apply the
  characterisation of meagre sets \textit{via} Banach-Mazur games
  (Theorem~\ref{BM:thm}), where the players play with basic open sets
  of $(\Runs(\RA,\iota(s)),\mathcal{T}_{\RA}^{\iota(s)})$.  Note that
  for every basic open set $\path[\mathcal{C}]{s,e_1 \dots e_n}$ in
  the above topology, all edges $e_1,\ldots,e_n$ are thick, the open
  set can therefore ``be read'' in $\thickgraph(\A)$.

  In this game, Player 2 has a strategy to ensure that $\bigcap_i B_i
  \cap \sem{\neg P}_{\RA,\iota(s)} = \emptyset$, where the $B_i$'s are
  the moves in the game. Let us denote $\Sigma_2$ this winning
  strategy.  Fix any BSCC $C$ of $\thickgraph(\A)$ reachable from
  $[s]$, and let $\{\ell_1,\ldots,\ell_p\}$ be an enumeration of the
  locations in $C$. In order to prove that $C$ satisfies $P$, we will
  build a symbolic path $\pi$, played according to $\Sigma_2$ (and
  thus satisfying $P$), witnessing that $C$ satisfies $P$.  Let
  Player~1 play as follows (against $\Sigma_2$): at her first move,
  Player~1 chooses $B_1$ leading to $\ell_1$; then, no matter which
  $B_2$ Player~2 chooses, Player~1 chooses $B_3$ (longer than $B_2$)
  leading to $\ell_2$; \textit{etc}. Furthermore applying a technique
  similar to the proof of Proposition~\ref{prop:Baire}
  (compactification), Player 1 can ensure that $\bigcap_i B_i \ne
  \emptyset$.  We then get that $\emptyset \ne \bigcap_i B_i \subseteq
  \sem{P}_{\RA,\iota(s)}$, since $\Sigma_2$ is winning for
  Player~2. Let $\pi$ be the infinite symbolic path underlying
  $\bigcap_i B_i$.  As $P$ is a property over \AP, $\pi \models
  P$. Furthermore, $\pi$ visits all locations of $C$ (since Player 1
  has ensured visiting all the locations of $C$ infinitely often),
  which means that the BSCC $C$ satisfies $P$. We conclude that all
  BSCCs satisfy property $P$.

  Conversely, assume that every BSCC of $\thickgraph(\A)$ satisfies
  $P$.  It is then easy to provide a winning strategy for Player 2 in
  the same Banach-Mazur game as described above. Once a BSCC $C$ has
  been reached (after Player~1's first move or Player~2's first move),
  Player~2 will ensure to visit all the locations of $C$ (as Player~1
  did in the above proof). The resulting infinite path will satisfy
  $P$ (by hypothesis), which implies the winning condition for
  Player~2. 
\end{proof}

\subsubsection{What about fairness and probabilities?}
The natural question is: can we fully extend Theorem~\ref{th:safety}
and therefore prove that almost-sure satisfaction is equivalent to
$(\S)$? We concentrate on the following equivalence:
\begin{equation}
  \label{eq:fair}
  \tag{\ddag}
  \A ,s \robust_{\Prob} P \ \Leftrightarrow \ 
  \left\{\begin{array}{@{}l}
      \text{every infinite thick \textbf{and fair} symbolic path}\ \pi \\
      \text{from}\ \iota(s)\ \text{in}\ \RA\ \text{satisfies}\ P
    \end{array}\right.
\end{equation}
We show now that this equivalence is unfortunately not true in
general.  The counter-example to that equivalence is much more
surprising than that of Figure~\ref{fig:cex-safety}. The reasons why
it fails is rather subtle and is due to complex \emph{time converging}
behaviours. As pointed out in~\cite{CHR02}, timed automata admit
various time converging behaviours, and, in our context, some of these
behaviours can lead to ``big'' sets of \emph{unfair}
executions. Inspired by an example presented in~\cite{CHR02}, we
design a two-clock timed automaton $\A_{\mathsf{unfair}}$ (see
Figure~\ref{fig:pacman}) for which the equivalence~\eqref{eq:fair}
does not hold.  In this automaton, every (infinite) fair and thick
symbolic path satisfies $\G\F p_1 \wedge \G\F p_2$, and in particular
$\F p_2$. Thus, letting $\varphi = \F p_2$, the right-hand side of
equivalence~\eqref{eq:fair} is true. However, when
$\A_{\mathsf{unfair}}$ is equipped with uniform distributions,
starting in $s_0= (\ell_0,(0,0))$, one can show that the probability
of the symbolic path $\path{s_0,(e_3 e_4 e_5)^\omega}$ is positive and
therefore $\A_{\mathsf{unfair}},s_0 \not \robust_{\Prob} \F p_2$. We
notice that this implies $\Prob_{\A_{\mathsf{unfair}}}(s_0 \models
\mathsf{fair})<1$.

All this is proven formally in Appendix~\ref{app:pacman},
page~\pageref{app:pacman}.

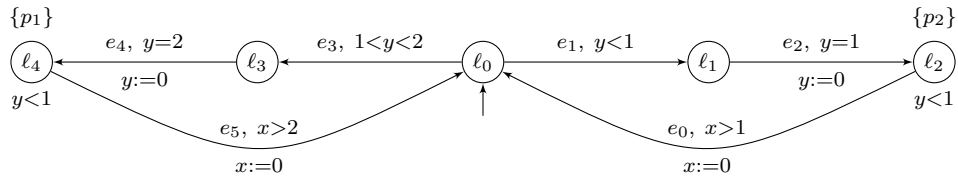
\begin{figure}[h]
  \begin{center}
    \begin{tikzpicture}[yscale=.85]
      \everymath{\scriptstyle}
      \path[use as bounding box] (-6,-1.4) -- (6,.7);
      \path (0,0) node[draw,circle,inner sep=2pt] (q0) {$\ell_0$};
          \path (0,-1) node[] (q0b) {};
      
      \path (3,0) node[draw,circle,inner sep=2pt] (q1) {$\ell_1$};
      
      \path (6,0) node[draw,circle,inner sep=2pt] (q3) {$\ell_2$};
      \path (6,-.6) node[] (q3b) {$y < 1$};
      \path (6,.7) node[] (q3a) {$\{p_2\}$};

      \path (-3,0) node[draw,circle,inner sep=2pt] (q4) {$\ell_3$};
          
      \path (-6,0) node[draw,circle,inner sep=2pt] (q6) {$\ell_4$};
      \path (-6,0.7) node[] (q6a) {$\{p_1\}$};
      \path (-6,-.6) node[] (q6b) {$y < 1$};

      \draw[arrows=-latex'] (q0b) -- (q0);
      
      \draw[arrows=-latex'] (q0) -- (q1) node[pos=.5, above,sloped]
      {$e_1,~y<1$};
      
      \draw[arrows=-latex'] (q1) -- (q3) node[pos=.5, above,sloped]
      {$e_2,~y=1$} node[pos=.5, below, sloped] {$y:=0$};
      
      \draw[arrows=-latex'] (q3) .. controls +(210:3.5cm).. (q0)
      node[pos=.5, above,sloped] {$e_0,~x>1$} node[pos=.5, below, sloped]
      {$x:=0$};
      
      \draw[arrows=-latex'] (q0) -- (q4) node[pos=.5, above,sloped]
      {$e_3,~1<y<2$};
      
      \draw[arrows=-latex'] (q4) -- (q6) node[pos=.5, above,sloped]
      {$e_4,~y=2$} node[pos=.5, below, sloped] {$y:=0$};
      
      \draw[arrows=-latex'] (q6) .. controls +(330:3.5cm).. (q0)
      node[pos=.5, above,sloped] {$e_5,~x>2$} node[pos=.5, below,
      sloped] {$x:=0$};
    \end{tikzpicture}
  \end{center}
  \caption{A two-clock automaton, $\A_{\mathsf{unfair}}$ with non
    negligible set of unfair runs.} \label{fig:pacman}
\end{figure}

\subsubsection{When fairness is almost-sure.}
We will show that a sufficient condition to have \eqref{eq:fair} is to
have $\Prob_{\A}(s_0 \models \mathsf{fair}) =1$.  We can now state the
following crucial theorem:
\begin{thm}
  \label{th:fairhelps}
  Let $s$ be a state of $\A$, and $P$ be a prefix-independent
  (untimed) property over \AP.  Assuming $\Prob_{\A}(s \models
  \mathsf{fair}) =1$, the following holds:
  \[
  \A ,s \robust_{\Prob} P\ \Leftrightarrow\ \S(\A,s,P)\,.
  \]
\end{thm}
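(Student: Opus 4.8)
The plan is to push everything to the region automaton and to exploit that, under the fairness hypothesis, the measure $\Prob_{\RA}$ is carried by the set of \emph{thick and fair} symbolic paths, whose asymptotic behaviour is entirely controlled by the BSCCs of $\thickgraph(\A)$. First I would reduce to $\RA$: by Corollary~\ref{prop:proba-A-RA} it is equivalent to prove $\RA,\iota(s)\robust_\Prob P \Leftrightarrow \S(\A,s,P)$; write $s'=\iota(s)$. Since fairness of a run of $\A$ means, by definition, fairness of its image under $\iota$, Lemma~\ref{lemma:proba} turns the hypothesis into $\Prob_{\RA}(s'\models\mathsf{fair})=1$. I would then record two ``concentration'' facts. (a) $\Prob_{\RA}$-almost every run follows a thick symbolic path: by Definition~\ref{def:diminf} the runs whose symbolic path is thin form $\bigcup\{\Cyl(\pi)\mid\pi\text{ a thin finite symbolic path from }s'\}$, a countable union each member of which has probability $0$ by Proposition~\ref{prop:prob-thick}; equivalently, by Proposition~\ref{prop:thickness-local}, almost every run never fires a thin edge, i.e.\ stays inside $\thickgraph(\A)$. (b) By hypothesis almost every run is fair. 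Thus both ``staying inside $\thickgraph(\A)$'' and ``being fair'' are prefix-independent events of $\Prob_{\RA}$-probability $1$.

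For the direction $\S(\A,s,P)\Rightarrow\RA,s'\robust_\Prob P$, I would use the characterisation of $\S(\A,s,P)$ by clause (ii): every infinite thick and fair symbolic path from $s'$ satisfies $P$. By (a)--(b) almost every run of $\RA$ from $s'$ is thick and fair, and since $P$ is a property over \AP, whether $\varrho\models P$ depends only on the sequence of locations visited, which is common to all runs of a given symbolic path; hence almost every run satisfies $P$, i.e.\ $\RA,s'\robust_\Prob P$.

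For the converse I would argue contrapositively. Assume $\neg\S(\A,s,P)$; by clause (i) there is a BSCC $C$ of $\thickgraph(\A)$ reachable from $[s]$ that does not satisfy $P$. Reachability in $\thickgraph(\A)$ provides a thick finite symbolic path $\sigma=\path{s',f_1 \ldots f_k}$ from $s'$ ending in some $q\in C$, and $\Prob_{\RA}(\Cyl(\sigma))>0$ by Proposition~\ref{prop:prob-thick}. Let $Q$ be the set of runs from $s'$ that never fire a thin edge \emph{and} are fair; by (a)--(b) $\Prob_{\RA}(\neg Q)=0$, hence $\Prob_{\RA}(\Cyl(\sigma)\cap Q)=\Prob_{\RA}(\Cyl(\sigma))>0$, because $\Prob_{\RA}(\Cyl(\sigma)\cap\neg Q)\le\Prob_{\RA}(\neg Q)=0$. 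Every $\varrho\in\Cyl(\sigma)\cap Q$ reaches $q\in C$, then stays in $C$ forever (a BSCC of $\thickgraph(\A)$, and $\varrho$ only fires thick edges after $\sigma$) and is fair; the usual argument — if $S_\infty$ is the set of states visited infinitely often by $\varrho$ then $S_\infty$ is closed under the thick edges leaving its states by fairness, so $S_\infty\subsetneq C$ would contradict strong connectivity of $C$ inside $\thickgraph(\A)$ — shows $\varrho$ visits exactly the states of $C$ infinitely often. As $P$ is prefix-independent and $C$ does not satisfy $P$, such $\varrho$ does not satisfy $P$; therefore $\Prob_{\RA}(\{\varrho\mid\varrho\not\models P\})\ge\Prob_{\RA}(\Cyl(\sigma)\cap Q)>0$, i.e.\ $\RA,s'\not\robust_\Prob P$, contradicting the assumption.

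The only genuinely load-bearing step is this last structural claim — that a fair path confined to a BSCC of $\thickgraph(\A)$ visits all of it infinitely often — which is exactly the fact already underlying the lemma introducing $\S(\A,s,P)$, so in the final write-up I would invoke it rather than reprove it; the conditioning identity $\Prob_{\RA}(\Cyl(\sigma)\cap Q)=\Prob_{\RA}(\Cyl(\sigma))$ and the measurability bookkeeping are routine.
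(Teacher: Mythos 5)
Your proof is correct and follows essentially the same route as the paper's: both rest on the two almost-sure events (thickness, via the countable union of thin cylinders of probability zero, and fairness, by hypothesis), together with the fact that a fair thick run visits exactly the states of a BSCC of $\thickgraph(\A)$ infinitely often, and then prefix-independence of $P$. The only difference is organizational --- the paper first proves the intermediate equivalence ``$\Prob_{\RA}(s \models P) > 0$ iff some infinite thick and fair symbolic path satisfies $P$'' and obtains the theorem by applying it to $\neg P$, whereas you argue the two implications directly, using clause (ii) of $\S(\A,s,P)$ for one direction and clause (i) for the other.
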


\begin{proof}
  First define the property $\mathsf{thick}$ as follows: letting
  $\varrho$ be a run in $\RA$, $\varrho \models \mathsf{thick}$ iff
  $\pi_\varrho$ is thick. Applying Theorem~\ref{th:safety}, we get
  that $\Prob_{\RA}(s \models \mathsf{thick})=1$.

  We now prove that $\Prob_{\RA}(s \models P) >0$ iff there exists an
  infinite symbolic path $\pi$ from $s$, which is thick and fair, and
  such that $\pi \models P$. This will imply the expected result.  We
  prove the two implications separately.
  \begin{itemize}
  \item \textit{Proof of the left-to-right implication.} Let us assume
    that $\Prob_{\RA}(s \models P) >0$. We have seen that
    $\Prob_{\RA}(s \models \mathsf{thick}) =1$. Therefore, thanks to
    the fact that $\Prob_{\RA}(s \models \mathsf{fair}) =1$,
    $\Prob_{\RA}(s \models P) = \Prob_{\RA}(s \models P \wedge
    \mathsf{fair} \wedge \mathsf{thick})$. Hence,
    \[
    \Prob_{\RA}(s \models P \wedge \mathsf{fair} \wedge
    \mathsf{thick}) >0.
    \]
    In particular, there exists a fair thick infinite path from $s$
    which satisfies $P$.
  \item \textit{Proof of the right-to-left implication.} Let $\pi =
    \path{s_0,e_1 e_2 \ldots}$ be a fair thick symbolic path in $\RA$
    satisfying $P$.  We consider the thick graph $\thickgraph(\A)$ of
    $\A$.  Since $\pi$ is thick, $\pi$ is also a path in
    $\thickgraph(\A)$.  Let us consider the strongly connected
    components of $\thickgraph(\A)$. As $\pi$ is a fair path, it
    eventually reaches a BSCC in $\thickgraph(\A)$ and from then on
    takes each edge of the BSCC infinitely often. Otherwise, this
    would mean that $\pi$ ignores a thick edge forever, which would
    contradict the fairness assumption. Let ${B}_\pi$ be the BSCC that
    $\pi$ eventually reaches and $\pi_\pref$ the shortest prefix of
    $\pi$ leading from $s$ to ${B}_\pi$ (note that it is
    thick). Consider the following set of paths in~$\RA$:
    \[
    S \egdef \{\pi' \in \Cyl(\pi_\pref) \mid \pi' \textrm{ is thick
      and fair} \}\,.
    \]
    Since $\Prob_{\RA}(s \models \mathsf{thick}) =1$ and
    $\Prob_{\RA}(s \models \mathsf{fair}) =1$, we deduce that
    $\Prob_{\RA}(S) = \Prob_{\RA}(\Cyl(\pi_\pref))$. Moreover since
    $\pi_\pref$ is thick, we obtain $\Prob_{\RA}(S) >0$. It now
    suffices to observe that all paths in $S$ satisfy $P$.  Indeed,
    the satisfiability of prefix-independent (untimed) properties over
    \AP only depends on the set of states that are visited infinitely
    often, and all paths in $S$ visit infinitely often exactly the
    states in ${B}_\pi$, and $\pi \models P$.  
  \end{itemize}
\end{proof}

\subsubsection{Conclusion.} We can conclude this subsection by stating
the main result concerning prefix-independent properties. It is a
direct consequence of Theorems~\ref{theo:large}
and~\ref{th:fairhelps}.

\begin{cor}
  \label{coro:1}
  Let $s$ be a state of $\A$, and $P$ be a prefix-independent
  (untimed) property over \AP.
  Assuming $\Prob_{\A}(s \models \mathsf{fair}) =1$, the following
  equivalences hold:
  \[
  \A ,s \robust_{\Prob} P\ \Leftrightarrow\ \A ,s \robust_{\mathcal T} P\
  \Leftrightarrow\ \S(\A,s,P)\,.
  \]
\end{cor}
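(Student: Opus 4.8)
The plan is to obtain the corollary by simply chaining the two equivalences that have already been established. Theorem~\ref{theo:large} states that, for every prefix-independent property $P$ over \AP and every state $s$, one has $\A,s \robust_{\mathcal T} P \Leftrightarrow \S(\A,s,P)$; note that this equivalence is unconditional and does not refer to the probability measures $\mu$ and $w$ at all. Theorem~\ref{th:fairhelps} states that, under the standing hypothesis $\Prob_{\A}(s \models \mathsf{fair}) = 1$, one has $\A,s \robust_{\Prob} P \Leftrightarrow \S(\A,s,P)$. First I would check that both theorems are indeed applicable: $P$ is assumed prefix-independent and untimed, which is exactly the hypothesis of Theorem~\ref{th:fairhelps}, and a prefix-independent untimed property is in particular a timed property over \AP, so Theorem~\ref{theo:large} applies too. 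Then transitivity of $\Leftrightarrow$ through the common middle term $\S(\A,s,P)$ yields that $\A,s \robust_{\Prob} P$, $\A,s \robust_{\mathcal T} P$ and $\S(\A,s,P)$ are pairwise equivalent, which is precisely the claimed statement.

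Nothing more is needed for the corollary itself. In particular, the fairness hypothesis enters the argument only through the application of Theorem~\ref{th:fairhelps}; the topological half of the equivalence, Theorem~\ref{theo:large}, holds without it. It is worth stating this explicitly so that the role of the hypothesis is transparent.

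Consequently there is no genuine obstacle inside the proof of the corollary: all the difficulty lies upstream. The hard part is Theorem~\ref{th:fairhelps}, whose right-to-left direction requires passing to $\thickgraph(\A)$, reasoning about the BSCC that a fair thick path eventually enters, and invoking $\Prob_{\RA}(s \models \mathsf{thick}) = 1$ (itself a consequence of Theorem~\ref{th:safety}); and, further downstream, the genuinely delicate task of verifying the hypothesis $\Prob_{\A}(s \models \mathsf{fair}) = 1$ for concrete classes of stochastic timed automata, which is carried out in Section~\ref{sec:appli}. The automaton $\A_{\mathsf{unfair}}$ of Figure~\ref{fig:pacman} shows that this hypothesis cannot be omitted, so the corollary is as general as the present techniques allow.
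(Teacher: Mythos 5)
Your proof is correct and matches the paper exactly: the paper states that the corollary is a direct consequence of Theorems~\ref{theo:large} and~\ref{th:fairhelps}, which is precisely the chaining through the common condition $\S(\A,s,P)$ that you carry out. Your remark that the fairness hypothesis is used only in the probabilistic half is accurate and consistent with the paper's presentation.
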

Note that characterisation~$(\S)$ will help with algorithmic issues.
Complexity issues will be discussed in Section~\ref{sec:algo}.

\subsection{Extension to richer properties}
\label{sec:richer}

In this section we extend the previous study to properties which are
richer than prefix-independent properties, in particular to \LTL
properties and properties expressed as specification timed automata.

Let $\langle \A,\mu,w \rangle$ be a stochastic timed automaton with
$\A = (L,X,E,{\mathcal I},{\mathcal L})$, and let $\B =
(\mathsf{L},\mathsf{i}_0,\mathsf{X},\AP,\mathsf{E},\mathcal{F})$ be a
specification B\"uchi or Muller timed automaton.  We build the product
stochastic timed automaton $\langle \A \ltimes \B, \overline\mu,
\overline{w}\rangle$ as follows. The timed automaton $\A \ltimes \B =
(\overline{L},X\cup \mathsf{X},\overline{E},\overline{\mathcal{I}})$
(with no labelling function) is such that:
\label{def:produit-TA-spec}
\begin{itemize}
\item $\overline{L} = L \times \mathsf{L}$;
\item $\overline{E}$ is composed of the following edges: if $e \egdef
  (\ell \xrightarrow{g,Y} \ell') \in E$ then for all $\mathsf{e}
  \egdef (\mathsf{l} \xrightarrow{\mathsf{g},\lambda(\ell),\mathsf{Y}}
  \mathsf{l}') \in\mathsf{E}$, there is an edge $((\ell,\mathsf{l})
  \xrightarrow{g \wedge \mathsf{g},Y \cup \mathsf{Y}}
  (\ell',\mathsf{l}'))$ in $\overline E$, which we write
  $e_{\mathsf{e}}$;
\item $\overline{\mathcal{I}}((\ell,\mathsf{l})) = \mathcal{I}(\ell)$
  for every $(\ell,\mathsf{l}) \in \overline{L}$.
 \end{itemize}
The measures $\overline\mu$ and the weights $\overline{w}$ are such
that:
\begin{itemize}
\item $\overline\mu_{(\ell,\mathsf{l})} = \mu_{\ell}$ for every
  $(\ell,\mathsf{l}) \in \overline{L}$, and
\item $\overline{w}_{\overline{e}} = w_e$ for every edge $\overline{e}
  \in \overline{E}$ which comes from edge $e$.
\end{itemize}

Given a state $s = (\ell,v)$ of $\A$, we define the initial state in
$\A \ltimes \B$ as $\mathsf{init}_{\A \ltimes \B}(s) \egdef
((\ell,\mathsf{i}_0(\mathcal{L}(s))),v\mathbf{0}_{\mathsf{X}})$; that
is, we start in $\B$ from the location specified by the label of state
$s$ (this is $\mathsf{i}_0(\mathcal{L}(s))$), with all clocks of $\B$
set to $0$.

Note that any run $\varrho$ in $\A$ from state $s$ has a unique image
in $\A \ltimes \B$ from $\mathsf{init}_{\A \ltimes \B}(s)$, denoted
$\varrho^{\B}$ (since $\B$ is complete and deterministic).  Note that
the converse also holds: for any run $\varrho'$ in $\A \ltimes \B$
from some $\mathsf{init}_{\A \ltimes \B}(s)$, there is a unique
preimage $\varrho$ in $\A$ from $s$, such that $\varrho' =
\varrho^\B$. We define the $\omega$-regular property $P_{\A \ltimes
  \B}$ in $\A \ltimes \B$ as the lifting of $\mathcal{F}$ in $\A
\ltimes \B$ (an infinite run in $\A \ltimes \B$ satisfies $P_{\A
  \ltimes \B}$ whenever its projection on $\B$ satisfies the accepting
condition $\mathcal{F}$). As $\mathcal{F}$ is an internal
prefix-independent condition in $\B$, $P_{\A \ltimes \B}$ is an
internal prefix-independent condition in $\A \ltimes \B$.

\begin{rem}
  It should be clear enough that $\A \ltimes \B$ is non-blocking
  assuming $\A$ is. Moreover, for all states $s=(\ell,v)$ of $\A$, for
  all states $\mathsf{s}=(\mathsf{l},\mathsf{v})$ of $\B$, and for
  every edge $e\in E$,
  \[
  I(s,e) = \bigcup_{\mathsf{e} = (\mathsf{l}
    \xrightarrow{\mathsf{g},\lambda(\ell),\mathsf{Y}} \mathsf{l}')}
  I(((\ell,\mathsf{l}),(v\mathsf{v})),e_{\mathsf{e}})
  \]
  This allows to properly define the probability measure $\Prob_{\A
    \ltimes \B}$.
 \end{rem}

We can now state the main theorem for specification timed automata,
which uses this product construction. Its proof is given in
Appendix~\ref{app:specauto}, page~\pageref{app:specauto}.

\begin{restatable}{thm}{specauto}
  \label{theo:specauto}
  Let $s$ be a state of $\A$, and $\B$ be a specification B\"uchi or
  Muller timed automaton.  Assuming $\Prob_{\A \ltimes
    \B}(\mathsf{init}_{\A \ltimes \B}(s) \models \mathsf{fair}) =1$,
  the following holds:
  \[
  \A ,s \robust_{\Prob} \B\ \Leftrightarrow\ \A ,s \robust_{\mathcal T}
  \B\ \Leftrightarrow\ \S(\A \ltimes \B,\mathsf{init}_{\A \ltimes
    \B}(s),P_{\A \ltimes \B})\,.
  \]
\end{restatable}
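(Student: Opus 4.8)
The plan is to transfer everything to the product $\A \ltimes \B$ and then invoke Corollary~\ref{coro:1}. Write $s' = \mathsf{init}_{\A \ltimes \B}(s)$. Recall that $\varrho \mapsto \varrho^{\B}$ is a bijection from $\Runs(\A,s)$ onto $\Runs(\A \ltimes \B, s')$ (by completeness and determinism of $\B$), and that, $P_{\A \ltimes \B}$ being the lifting of $\mathcal{F}$, one has $\varrho \models \B$ if and only if $\varrho^{\B} \models P_{\A \ltimes \B}$; thus this bijection maps $\sem{\B}_{\A,s}$ onto $\sem{P_{\A \ltimes \B}}_{\A \ltimes \B, s'}$. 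It therefore suffices to prove that this bijection (i) transports $\Prob_{\A}$ onto $\Prob_{\A \ltimes \B}$, and (ii) preserves topologically large sets. Granting (i) and (ii), Corollary~\ref{coro:1} applied to the non-blocking stochastic timed automaton $\langle\A \ltimes \B,\overline\mu,\overline w\rangle$, the state $s'$, and the internal (hence untimed) prefix-independent property $P_{\A \ltimes \B}$ --- which is legitimate since $\Prob_{\A \ltimes \B}(s' \models \mathsf{fair}) = 1$ by hypothesis --- yields $\A \ltimes \B, s' \robust_{\Prob} P_{\A \ltimes \B} \Leftrightarrow \A \ltimes \B, s' \robust_{\mathcal T} P_{\A \ltimes \B} \Leftrightarrow \S(\A \ltimes \B, s', P_{\A \ltimes \B})$, and composing with (i) and (ii) gives the three stated equivalences.

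For (i), the weights $\overline{w}$ and measures $\overline{\mu}$ on $\A \ltimes \B$ are defined precisely so that this transfer holds, and $\langle \A \ltimes \B, \overline\mu, \overline w\rangle$ satisfies $(\star)$ because the product state $s'$ has the same admissible-delay set $I(s') = I(s)$ as $s$ in $\A$. Writing $q$ for $\varrho \mapsto \varrho^{\B}$, one checks that $q$ carries $\Prob_{\A}$ to $\Prob_{\A \ltimes \B}$ by verifying equality of the two measures on cylinders over (constrained) symbolic paths, by induction on the length. The inductive step rests on the Remark preceding the theorem, which says that the admissible delays for an $\A$-edge $e$ are partitioned among its product copies $e_{\mathsf{e}}$, together with $\overline{w}(e_{\mathsf{e}}) = w(e)$ and the observation that, from any product state and any delay, exactly one product copy of each $\A$-enabled edge is enabled (again by completeness and determinism of $\B$), so that the discrete probabilities $p_{\cdot}(e_{\mathsf{e}})$ and $p_{\cdot}(e)$ coincide. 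Extending from symbolic paths to cylinders and then, as in Proposition~\ref{prop:proba_measure} and Lemma~\ref{lemma:proba}, to the whole $\sigma$-algebra by Caratheodory's theorem gives $\Prob_{\A}(S) = \Prob_{\A \ltimes \B}(q(S))$ for every measurable $S$, hence $\A, s \robust_{\Prob} \B \Leftrightarrow \A \ltimes \B, s' \robust_{\Prob} P_{\A \ltimes \B}$.

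For (ii), the map $q$ is not a homeomorphism --- in fact not even continuous, since the closed or punctual guards of $\B$ can turn the image of a basic open set into a set with a non-open part --- so one argues in the spirit of Lemma~\ref{lemma:homeo} and Proposition~\ref{prop-topo-A-RA}. The two facts to establish are: first, if a product symbolic path $(e_1)_{\mathsf{e}_1} \ldots (e_n)_{\mathsf{e}_n}$ is thick in $\A \ltimes \B$ then $e_1 \ldots e_n$ is thick in $\A$ (indeed, by completeness of $\B$ the union over next product edges has, at each step, the same ambient dimension as the union over next $\A$-edges, so a dimension drop in $\A$ would force one in the product, contradicting thickness); second, inside any basic open $\Cyl(\path[\mathcal{C}]{s, e_1 \ldots e_n})$ of $\A$ one can pick an interior point of $\mathcal{C}$ whose $\B$-lift stays strictly inside every nontrivially cutting guard of $\B$, and around it obtain a basic open of $\A \ltimes \B$ whose image under $q^{-1}$ lies inside $\Cyl(\path[\mathcal{C}]{s, e_1 \ldots e_n})$. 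Together these show that $q$ sends non-empty opens to sets of non-empty interior, and symmetrically for $q^{-1}$; this is exactly what is needed to transport winning strategies of the Banach-Mazur game between $(\Runs(\A,s), \mathcal{T}_{\A}^{s})$ and $(\Runs(\A \ltimes \B, s'), \mathcal{T}_{\A \ltimes \B}^{s'})$, whence meagre and large sets correspond under $q$, i.e.\ $\A, s \robust_{\mathcal T} \B \Leftrightarrow \A \ltimes \B, s' \robust_{\mathcal T} P_{\A \ltimes \B}$.

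Putting (i), (ii) and Corollary~\ref{coro:1} together yields $\A, s \robust_{\Prob} \B \Leftrightarrow \A, s \robust_{\mathcal T} \B \Leftrightarrow \S(\A \ltimes \B, s', P_{\A \ltimes \B})$, which is the claim. I expect step (ii) to be the main obstacle: carrying out the thickness analysis underlying Lemma~\ref{lemma:homeo} in the presence of the specification automaton is delicate, since one must verify that the closed and punctual guards of $\B$ cannot create thick behaviours --- and hence large sets of runs --- that have no counterpart on the $\A$ side; step (i), by contrast, is a routine though lengthy induction once the delay and weight bookkeeping is set up.
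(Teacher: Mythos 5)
Your proposal is correct and follows essentially the same route as the paper: the paper also factors the argument into a probability-transfer lemma for the bijection $\varrho \mapsto \varrho^{\B}$ (proved, as you do, by induction on constrained symbolic paths as in Lemma~\ref{lemma:proba}), a largeness-transfer lemma established by simulating Banach--Mazur games using exactly your two facts (non-empty opens have images, resp.\ preimages, with non-empty interior, the key point being that thick product paths project to thick paths and that basic opens of $\A$ contain lifts that are basic opens of $\A \ltimes \B$), and then an application of Corollary~\ref{coro:1} to the prefix-independent internal property $P_{\A \ltimes \B}$ on the product. No gaps.
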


The above also applies to specification untimed automata.  In
particular, it also applies to specification automata corresponding to
\LTL formulas. It is now easy to be convinced that if $\B$ is a
specification untimed automaton, then $\Prob_{\A}(s \models
\mathsf{fair}) = \Prob_{\A \ltimes \B}(\mathsf{init}_{\A \ltimes
  \B}(s) \models \mathsf{fair})$ since $\B$ does not restrict guards
of edges, and in particular
\[
\Prob_{\A}(s \models \mathsf{fair}) =1\ \Leftrightarrow\ \Prob_{\A
  \ltimes \B}(\mathsf{init}_{\A \ltimes \B}(s) \models \mathsf{fair})
=1\,.
\]
This allows to get the following important corollary, which
characterises the almost-sure model-checking for \LTL.

\begin{cor}
\label{coro:match-fair}
  Let $s$ be a state of $\A$, and $\varphi$ be an \LTL formula over \AP.
Assuming $\Prob_{\A}(s \models \mathsf{fair}) =1$, the following
holds:
  \[
  \A ,s \robust_{\Prob} \varphi\ \Leftrightarrow\ \A ,s \robust_{\mathcal
    T} \varphi\ \Leftrightarrow\ \S(\A \ltimes
  \B_\varphi,\mathsf{init}_{\A \ltimes \B_\varphi}(s),P_{\A \ltimes
    \B_\varphi})\,.
  \]
\end{cor}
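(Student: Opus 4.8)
The plan is to derive Corollary~\ref{coro:match-fair} as an essentially immediate specialisation of Theorem~\ref{theo:specauto} to the case where the specification timed automaton has an empty clock set. First I would recall from the remark following Theorem~\ref{theo:specauto} that for any \LTL formula $\varphi$ there is a deterministic specification Muller \emph{untimed} automaton $\B_\varphi$ with $\sem{\varphi}_{\A,s} = \sem{\B_\varphi}_{\A,s}$ for every $\A$ and $s$; in particular $\A,s \robust_{\Prob} \varphi \Leftrightarrow \A,s \robust_{\Prob} \B_\varphi$ and $\A,s \robust_{\mathcal T} \varphi \Leftrightarrow \A,s \robust_{\mathcal T} \B_\varphi$. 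So it suffices to apply Theorem~\ref{theo:specauto} to $\B = \B_\varphi$, provided its hypothesis, namely $\Prob_{\A \ltimes \B_\varphi}(\mathsf{init}_{\A \ltimes \B_\varphi}(s) \models \mathsf{fair}) = 1$, follows from the assumed $\Prob_{\A}(s \models \mathsf{fair}) = 1$.

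The key step is therefore to justify the claim, already sketched just before the corollary in the excerpt, that $\Prob_{\A}(s \models \mathsf{fair}) = \Prob_{\A \ltimes \B_\varphi}(\mathsf{init}_{\A \ltimes \B_\varphi}(s) \models \mathsf{fair})$. I would argue this by inspecting the product construction $\A \ltimes \B$ of Section~\ref{sec:richer}: since $\B_\varphi$ is untimed ($\mathsf{X} = \emptyset$), every edge $\mathsf{e} = (\mathsf{l} \xrightarrow{\true,\lambda(\ell),\emptyset} \mathsf{l}')$ of $\B_\varphi$ carries the trivial guard and resets no clock, so each product edge $e_{\mathsf{e}}$ has exactly the guard $g$ and reset $Y$ of its originating edge $e$ of $\A$. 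Consequently $I(((\ell,\mathsf l),(v)),e_{\mathsf e}) = I((\ell,v),e)$ for all states, and, $\B_\varphi$ being deterministic and complete, at each step exactly one $\mathsf e$ is eligible given the current label; this sets up a bijection between runs of $\A$ from $s$ and runs of $\A \ltimes \B_\varphi$ from $\mathsf{init}_{\A \ltimes \B_\varphi}(s)$ (the maps $\varrho \mapsto \varrho^{\B}$ and its inverse, as noted in the excerpt) that preserves the underlying timed behaviour, hence by Lemma~\ref{lemma:proba}-style reasoning and the compatibility of $\overline\mu,\overline w$ with $\mu,w$ it is measure-preserving. Moreover an edge of $\thickgraph(\A \ltimes \B_\varphi)$ is thick iff the corresponding edge of $\thickgraph(\A)$ is thick, and it is enabled along the image run exactly when the original edge is enabled along the run; so a run is fair in $\A \ltimes \B_\varphi$ iff its preimage is fair in $\A$. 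The displayed probability equality, and in particular $\Prob_{\A}(s \models \mathsf{fair}) = 1 \Leftrightarrow \Prob_{\A \ltimes \B_\varphi}(\mathsf{init}_{\A \ltimes \B_\varphi}(s) \models \mathsf{fair}) = 1$, follows.

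With the hypothesis transferred, Theorem~\ref{theo:specauto} applied to $\B_\varphi$ yields
\[
\A,s \robust_{\Prob} \B_\varphi \ \Leftrightarrow\ \A,s \robust_{\mathcal T} \B_\varphi \ \Leftrightarrow\ \S(\A \ltimes \B_\varphi, \mathsf{init}_{\A \ltimes \B_\varphi}(s), P_{\A \ltimes \B_\varphi})\,,
\]
and rewriting $\robust_{\Prob} \B_\varphi$ as $\robust_{\Prob}\varphi$ and $\robust_{\mathcal T}\B_\varphi$ as $\robust_{\mathcal T}\varphi$ via the equality of the corresponding run-sets gives exactly the statement of the corollary. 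I do not expect any serious obstacle here; the only point needing a little care is making the informal sentence ``$\B$ does not restrict guards of edges'' precise enough to conclude the measure-preserving correspondence between fair runs — essentially checking that the product with an untimed complete deterministic automaton is, up to the label bookkeeping, an isomorphism of the probabilistic transition structures and of the thick graphs. Everything else is a direct invocation of the already-established Theorem~\ref{theo:specauto} and the \LTL-to-Muller translation.
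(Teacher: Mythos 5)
Your proposal is correct and follows essentially the same route as the paper: the corollary is obtained by instantiating Theorem~\ref{theo:specauto} with the deterministic Muller untimed automaton $\B_\varphi$, after noting that the fairness hypothesis transfers because an untimed specification automaton does not restrict guards, so the product preserves the probabilistic structure and the thick edges. The paper leaves the fairness-transfer step as an informal remark (``since $\B$ does not restrict guards of edges''), and your elaboration of it via the run bijection and the identity of enabled intervals is exactly the intended justification.
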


\section{Application to several classes of timed automata}
\label{sec:appli}
In the previous section, we showed that, provided fairness is
almost-sure, one could characterise almost-sure satisfaction and large
satisfaction using thick paths. We will describe here two classes of
stochastic timed automata for which this holds.

\subsection{Single-clock timed automata}
In this section, we focus on single-clock timed automata, and we show
that, under some minor additional technical hypotheses, single-clock
timed automata are almost-surely fair.  In particular,
Corollary~\ref{coro:match-fair} will apply, yielding the decidability
of the almost-sure model-checking problem for $\omega$-regular
properties on this class of stochastic timed automata.

Let $\A= (L,X,E,{\mathcal I},{\mathcal L})$ be a single-clock stochastic timed
automaton.  We assume the following conditions on $\A$, denoted
$(\dag)$:
\label{dagger}
\begin{description}
\item[(H3)]
For all $\ell \in L$, for all $[a,b] \subseteq \IR_+$, the
  function $v \mapsto \mu_{(\ell,v)}([a,b])$ is continuous;
\item[(H4)] If $s'=s+t$ for some $t \geq 0$, and $0 \notin I(s+t',e)$ for
  every $0 \leq t' \leq t$, then $\mu_s(I(s,e)) \leq
  \mu_{s'}(I(s',e))$;
\item[(H5)] There is $0<\lambda_0<1$ s.t. for every state $s$ with $I(s)$
  unbounded, $\mu_s([0,1/2]) \leq \lambda_0$.
\end{description}

\begin{rem}
  The three last requirements are technical and needed to properly
  define a probability measure over infinite runs, but they are
  natural and easily satisfiable. For instance, a timed automaton
  equipped with uniform (resp. exponential\footnote{With bounded
    transition rates, see~\cite{DP03}.}) distributions on bounded
  (resp. unbounded) intervals satisfy these conditions. If we assume
  exponential distributions on unbounded intervals, the very last
  requirement corresponds to the bounded transition rate condition
  in~\cite{DP03}, required to have reasonable and realistic
  behaviours.
\end{rem}

\begin{restatable}{thm}{oneclock}
  \label{theorem:fair-oneclock}
  Assuming $\A$ satisfies $(\dag)$, if $s$ is a state of $\A$,
  $\Prob_{\A}(s \models \mathsf{fair}) = 1$.
\end{restatable}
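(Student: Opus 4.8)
The plan is to work throughout in the timed region automaton $\RA$, which is legitimate since by Lemma~\ref{lemma:proba} the projection $\iota$ preserves measures, and to bound the probability of the \emph{unfair} runs directly. An infinite run of $\RA$ fails to be fair exactly when there is a thick edge $e$ that is enabled at infinitely many steps along the run but is taken only finitely often. There are finitely many edges, and for a fixed $e$ the event ``$e$ taken only finitely often'' is the countable union, over finite prefixes $\pi_{\mathsf{pref}}$, of the events ``$e$ never taken after $\pi_{\mathsf{pref}}$''; so by countable subadditivity it suffices to fix a thick edge $e$ and a reachable state $s'$ and prove that, starting from $s'$, the event
\[
E \ \egdef\ \{\,e \text{ enabled at infinitely many steps and never taken}\,\}
\]
has probability $0$.

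The engine of the argument will be a conditional (second) Borel--Cantelli lemma. Along a run in $E$, list the steps $t_1 < t_2 < \dots$ at which $e$ is enabled; this list is infinite on $E$. Writing $A_k$ for the event ``$e$ is taken at step $t_k$'' (or, if convenient, at one of the next boundedly many steps), $E$ is contained in $\bigcap_k \overline{A_k}$, so it is enough to show that, conditionally on the history up to step $t_k$, the probability of $A_k$ is bounded below along a suitable sub-sequence by a sequence whose series diverges; the conditional Borel--Cantelli lemma then forces infinitely many $A_k$ to occur almost surely, contradicting membership in $E$.

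The crux is this lower bound on $\Prob(A_k \mid \text{history})$, and it is precisely here that the single-clock hypothesis together with $(\dag)$ is used. Since $e$ is thick, its source region $q$ satisfies $\dim(I(q,e)) = \dim(I(q))$, so once a delay lands in $I(s'',e)$ from a state $s'' \in q$ the edge $e$ is chosen with probability at least the structural constant $\kappa_0 = \min_{e'} w(e') / \sum_{e''} w(e'') > 0$ (region-equivalent states induce the same discrete distribution). The only way $\Prob(A_k\mid\cdot)$ can degrade is through the delay probability $\mu_{s''}(I(s'',e))$ becoming tiny, which happens only when the single clock is near the upper endpoint of $e$'s guard. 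The key observation is that with one clock this cannot persist: to have $e$ enabled at infinitely many steps the run must (re)enter $q$ infinitely often, and a sojourn in a region $q=(\ell,(c,c+1))$ either exits at once by a discrete edge or is entered with clock value close to $c$ (from below, or just after a reset), where $\mu_{s''}(I(s'',e))$ is bounded below by the continuity assumption (H3); condition (H4) then lets one propagate this lower bound across the intermediate delay steps. One splits into two cases. If the clock is reset infinitely often, infinitely many ``fresh'' entries into $q$ occur, each contributing a uniformly positive chance of taking $e$. If the clock is reset only finitely often, then—barring Zeno behaviour, which is negligible under $(\dag)$—the run reaches and remains in the unbounded region $x>M$ (possibly after finitely many location changes), where $I(s)$ is unbounded, and (H5) together with the fixed discrete distribution give a one-step uniform lower bound $\kappa>0$ for taking any enabled thick edge. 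In both cases the relevant series of conditional probabilities diverges, and Borel--Cantelli yields $\Prob(E)=0$.

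I expect the hard part to be making the single-clock observation of the previous paragraph rigorous, i.e.\ ruling out that the successive entry points of the run into $q$ converge to the guard's endpoint fast enough for the series of capture-probabilities to become summable. This is exactly the phenomenon that occurs with two clocks in $\A_{\mathsf{unfair}}$ (Figure~\ref{fig:pacman}), where a time-converging cycle produces a non-negligible set of unfair runs: there one clock accumulates while the other re-measures a geometrically shrinking window, and with a single clock these two roles conflict. Turning this informal conflict into quantitative bounds, handling the interaction with resets and invariants, and verifying the required divergence carefully, is the technical heart of the proof; by contrast the Borel--Cantelli wrapper and the reductions to a single edge and a single prefix are routine.
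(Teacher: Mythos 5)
Your overall architecture (reduce to one thick edge, lower-bound the conditional probability of firing it at each enabled occurrence, conclude by a Borel--Cantelli-type argument) is essentially the engine the paper uses, and your split into ``infinitely many resets'' versus ``finitely many resets'' matches the paper's decomposition into cases $F_1$, $F_2$, $F_3$. The gap is in how you dispose of the finitely-many-resets case. You claim that, barring Zeno behaviour ``which is negligible under $(\dag)$'', such a run must reach and remain in the unbounded region $x>M$. This is false: for single-clock automata satisfying $(\dag)$ the set of Zeno runs is in general \emph{not} negligible. The paper's own example of a single location with a non-resetting loop guarded by $x\le 1$ satisfies $(\dag)$ and has \emph{all} runs Zeno, and Lemma~\ref{lemma:zeno1} shows that $\Prob_{\A}(s\models\Zeno)$ equals the probability of reaching a Zeno BSCC of $\thickgraph(\A)$, which can be positive. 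So the case of runs that are eventually confined to a bounded region with no further resets (the paper's case $F_3$) can carry positive probability and needs its own argument; your proof simply does not cover it.

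That case is exactly where hypothesis (H4) earns its keep, and it is the step your sketch is missing. Along such a run the single clock's value is non-decreasing, so by (H4) the probability mass $\mu_{s'}(I(s',e'))$ of any thick edge $e'$ whose guard lies above the current region is non-decreasing along the run; hence if such an edge were enabled infinitely often it would admit a uniform positive lower bound on being taken and would a.s.\ be taken infinitely often --- so a.s.\ no such edge is enabled infinitely often. The edges that remain enabled infinitely often have guards containing the current region, the discrete choice among them is governed by the fixed weights, and the automaton then behaves like a finite Markov chain, which is a.s.\ fair. Without this, your proof fails on any automaton in which a bounded non-resetting cycle is reached with positive probability. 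A secondary, more repairable imprecision: in the infinitely-many-resets case the single-clock hypothesis gives you more than ``entry points close to $c$'' --- it gives that some exact state $(q,0)$ is visited infinitely often, and it is the compactness of the singleton $\{0\}$ together with (H3) that yields the uniform lower bound (this is how the paper applies Lemma~\ref{lemma:technic}); reasoning about approximate entry points converging to a guard endpoint is precisely the phenomenon you would otherwise have to rule out by hand.
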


The proof of this theorem is very technical. We will describe the main
ingredients of the proof in the core of the paper, and postpone all
details to Appendix~\ref{app:single}.

Let $\{c_i \mid 0 \le i \le k\}$ be the set of constants appearing in
guards of $\A$, assuming w.l.o.g.  $c_0 = 0$. We know~\cite{LMS04}
that the following intervals are regions for $\A$:
\[
\{c_i\}\ \text{for}\ 0 \le i \le k; \qquad (c_i,c_{i+1})\ \text{for}\
0 \le i < k; \qquad (c_k,+\infty)
\]
We assume $\RA$ is built with these regions. It is polynomial-size
(contrary to standard region automaton which is exponential-size).

The proof of the above theorem then relies on
Lemma~\ref{lemma:technic} below. A \emph{subregion} of a region $q$ is
a pair $(q,J)$ such that $J \subseteq q$ is an interval. If $s \in J$,
we may write $s \in (q,J)$ as well. If $(q,J)$ and $(q',J')$ are
subregions, we write $(q,J) \xrightarrow{e} (q',J')$ to express that
$(q,v) \xrightarrow{\tau,e} (q',v')$ for some $v \in J$, $v' \in J'$
and $\tau \in \IR_+$. In the sequel to ease the reading, we will use
\LTL-like notations, like $\Prob_{\A}( s, \G \F (q,J) \xrightarrow{e}
(q',J') \mid \G \F (q,J))$ to denote the conditional probability of
the set of runs $s_0 \xrightarrow{\tau_1,e_1} s_1
\xrightarrow{\tau_2,e_2} s_2 \cdots$ such that $s_0 = s$ and $\{s_i
\xrightarrow{e_{i+1}} s_{i+1} \mid s_i \in J,\ e_{i+1}=e,\ \text{and}\
s_{i+1} \in J'\}$ is infinite, assuming that the set $\{s_i \mid s_i
\in J\}$ is infinite. We will use other similar notations, that we
expect are sufficiently explicit to be understandable.

\begin{restatable}{lem}{technica}
  \label{lemma:technic}\hfill
  \begin{enumerate}
  \item For every subregion $(q,J)$ of $q$ such that (i) $J$ is
    non-empty and open in $q$ (for the induced topology), and (ii)
    $\overline{J} \subseteq q$ is compact,
  \item for every thick edge $e$ enabled in $q$,
  \item for every subregion $(q',J')$ of $q'$ such that for every $s
    \in (q,J)$, $e(s) \cap J'$ is non-empty and open in $q'$ (for the
    induced topology), where $e(s) = \{s' \mid \exists \tau \in \IR_+\
    \text{s.t.}\ s \xrightarrow{\tau,e} s'\}$,
  \item for every state $s$ of $\RA$ such that $\Prob_{\RA}(s, \G \F
    (q,J)) > 0$,\footnote{This is for the next conditional probability
      to be defined.}
  \end{enumerate}
  \[
  \Prob_{\RA}( s, \G \F (q,J) \xrightarrow{e} (q',J') \mid \G \F
  (q,J)) = 1\,.
  \]\smallskip
\end{restatable}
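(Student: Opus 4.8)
The plan is to show that from any state in $(q,J)$ a single step has a uniformly positive chance of firing $e$ and landing in $(q',J')$, and then to conclude by a conditional Borel--Cantelli argument.

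\emph{Step 1: a uniform one-step bound.} For $s=(\ell,v)\in(q,J)$ put $g(s)=\Prob_{\RA}(s,\ e_1=e\ \text{and}\ s_1\in(q',J'))$, i.e.
\[
g(s)=\int_{T(s)} p_{s+t}(e)\ \ud\mu_s(t),\qquad T(s)=\{t\in I(s,e)\mid s\xrightarrow{t,e}s'\ \text{with}\ s'\in J'\}.
\]
I would prove $\varepsilon:=\inf_{s\in(q,J)}g(s)>0$. The factor $p_{s+t}(e)$ is bounded below by $w(e)\big/\sum_{e'\in E}w(e')>0$ uniformly, since whenever $e$ fires from $s+t$ it is enabled there and all weights are positive. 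For $\mu_s(T(s))$ I would invoke $(\star)$, using the two cases allowed by thickness of $e$ (so $\dim I(q,e)=\dim I(q)$): if $\dim I(q)=1$, then by (H2) $\mu_s$ is equivalent to Lebesgue measure on $I(s)\supseteq T(s)$, and hypothesis~(3) forces $T(s)$ to have positive Lebesgue measure (in this case the target region $q'$ is non-degenerate, so the open non-empty set $e(s)\cap J'$ pulls back to a non-negligible subset of $I(s,e)$); if $\dim I(q)=0$, then by (H2) $\mu_s$ is equivalent to the uniform distribution on the finite set $I(s)$, and hypothesis~(3) forces $T(s)\neq\emptyset$, so again $\mu_s(T(s))>0$. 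Hence $g>0$ on $(q,J)$. Using (H3) together with the fact that $\overline J$ is interior to $q$ (so the set of enabled edges is constant across $\overline J$ and $I(s,e)$ varies continuously), one checks that $g$ is continuous on $J$ and that $\liminf_{s\to s_0}g(s)>0$ for every $s_0\in\overline J$; since $\overline J$ is compact by hypothesis~(1)(ii), this gives $\varepsilon=\inf_{(q,J)}g>0$.

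\emph{Step 2: conditional Borel--Cantelli.} Since $\Prob_{\RA}$ turns the sequence of states into a discrete-time Markov chain, let $(\mathcal F_n)_n$ be the filtration of the first $n$ steps, and set $D_n=\{s_n\in(q,J)\}\in\mathcal F_n$ and $A_n=\{s_n\in(q,J),\ e_{n+1}=e,\ s_{n+1}\in(q',J')\}\in\mathcal F_{n+1}$. By the Markov property and Step~1, $\Prob_{\RA}(A_n\mid\mathcal F_n)=\mathds{1}_{D_n}\,g(s_n)\ge\varepsilon\,\mathds{1}_{D_n}$. On the event $\G\F(q,J)=\{D_n\ \text{i.o.}\}$ we therefore have $\sum_n\Prob_{\RA}(A_n\mid\mathcal F_n)=\infty$, so the conditional (L\'evy) second Borel--Cantelli lemma yields that almost surely on $\G\F(q,J)$ the events $A_n$ occur infinitely often, i.e.\ $\G\F\big((q,J)\xrightarrow{e}(q',J')\big)$ holds. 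As $\Prob_{\RA}(s,\G\F(q,J))>0$ by hypothesis~(4), dividing gives $\Prob_{\RA}\big(s,\ \G\F(q,J)\xrightarrow{e}(q',J')\mid\G\F(q,J)\big)=1$. To avoid quoting L\'evy's lemma one may argue directly with the successive visit times $\sigma_k$ of $(q,J)$ (stopping times, all finite on $\G\F(q,J)$): by the strong Markov property and $g\ge\varepsilon$, the probability of avoiding $A_{\sigma_k}$ for $m\le k\le m+j$ is $\le(1-\varepsilon)^{j+1}$ for every $j$, hence the probability of avoiding it for all $k\ge m$ is $0$ on $\G\F(q,J)$, for every $m$.

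\emph{Main obstacle.} Everything hinges on Step~1. The case split on $\dim I(q,e)$ combined with $(\star)$ is routine; the delicate point is \emph{uniformity}: one must show that $T(s)$ and its $\mu_s$-measure remain bounded away from $0$ as $s$ ranges over $J$, in particular as $s\to\partial J$. This is precisely where hypotheses (1)(ii) (compactness of $\overline J$ inside $q$), (3) (assumed on the open set $J$), and (H3) (continuity of the measures $\mu_{(\ell,v)}$ in $v$) must be combined, establishing $\liminf_{s\to s_0}g(s)>0$ for every $s_0\in\overline J$ and then applying compactness.
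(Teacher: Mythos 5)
Your proposal is correct and follows essentially the same route as the paper: a uniform lower bound $\varepsilon>0$ on the one-step probability of firing $(q,J)\xrightarrow{e}(q',J')$, obtained from pointwise positivity (thickness of $e$ plus hypothesis (3) via $(\star)$), continuity from (H3), and compactness of $\overline J$, followed by the observation that the probability of visiting $(q,J)$ infinitely often while never firing the transition after the $k$-th visit is bounded by $\prod(1-\varepsilon)=0$. Your ``direct'' variant with the visit times $\sigma_k$ is exactly the paper's decomposition into the events $E_k$, and your Step~1 is in fact more detailed than the paper's on the dimension case split.
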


\noindent The idea of this lemma is to provide a lower-bound on the probability
of firing the transition $(q,J) \xrightarrow{e} (q',J')$ each time we
visit $(q,J)$. By thickness of $e$, we know that the probability at
each visit is positive, but as $\overline{J}$ is compact, we infer a
positive uniform lower-bound $\lambda$. This is the main ingredient to
prove the result.

\begin{rem}
  This lemma holds for all timed automata, not only one-clock timed
  automata.
\end{rem}

We have shown the proof for a single edge, but this lemma can be
extended straightforwardly to finite sequences of edges as follows:
\begin{lem}
  \label{lemma:technic2}
  \begin{enumerate}
  \item For all regions $(q_i)_{0 \leq i \leq p}$,
  \item for all edges $(e_i)_{1 \leq i \leq p}$ such that $e_i$ is
    thick and enabled in $q_{i-1}$,
  \item for all subregions $((q_i,J_i))_{0 \leq i \leq p}$ such that
    for every $0 \le i <p$:
    \begin{enumerate}[label=\({\alph*}]
    \item $J_i$ is non-empty and open in $q_i$ (for the induced
      topology),
    \item $\overline{J_i} \subseteq q_i$ is compact, and
    \item for every $s \in J_i$, $e_i(s) \cap J_{i+1}$ is non-empty
      and open, where $e_i(s) = \{s' \mid \exists \tau \in \IR_+\
      \text{s.t.}\ s \xrightarrow{\tau,e_i} s'\}$,
    \end{enumerate}
  \item for every state $s$ of $\A$ such that $\Prob_{\RA}(s, \G \F
    (q_0,J_0) ) > 0$
  \end{enumerate}
  \noindent it holds that:
  \[
  \Prob_{\RA}( s, \G \F \sigma \mid \G \F (q_0,J_0) ) = 1
  \] 
  where $\sigma = (q_0,J_0) \xrightarrow{e_1} (q_1,J_1) \ldots
  \xrightarrow{e_p} (q_p,J_p)$.
\end{lem}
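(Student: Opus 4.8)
The plan is to follow the proof of Lemma~\ref{lemma:technic} almost verbatim; the only genuinely new ingredient is that the uniform positive lower bound on the probability of ``doing the right thing at a visit'' is now obtained as a finite product of per-edge lower bounds. Recall that the proof of Lemma~\ref{lemma:technic} produces, for a single thick edge $e$ with source subregion $(q,J)$ that is open in $q$ with compact closure and target subregion $(q',J')$ such that $e(s)\cap J'$ is non-empty and open for every $s\in(q,J)$, a constant $\beta>0$ such that from every state $s\in(q,J)$ the probability of firing $e$ next and landing in $(q',J')$ is at least $\beta$ (this is where thickness of $e$, via condition $(\star)$ on $\mu_s$, together with non-emptiness and openness of $e(s)\cap J'$ and compactness of $\overline{J}$, is used). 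I would apply this to each edge $e_i$, with source subregion $(q_{i-1},J_{i-1})$ and target subregion $(q_i,J_i)$ --- which is legitimate because hypotheses (a), (b), (c) are assumed at every index $0,\dots,p-1$, in particular at the source index $i-1$ of $e_i$ --- thereby extracting constants $\beta_1,\dots,\beta_p>0$ such that from every $s\in(q_{i-1},J_{i-1})$ the probability of firing $e_i$ next and landing in $(q_i,J_i)$ is at least $\beta_i$.

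Next I would show, by induction on $p$, that $\lambda\egdef\prod_{i=1}^p\beta_i>0$ lower-bounds the probability of realising the whole block: for every $s\in(q_0,J_0)$,
\[
\Prob_{\RA}\big(s,\ \text{the first } p \text{ transitions are } e_1,\dots,e_p \text{ with the $i$-th intermediate state in } J_i\big)\ \ge\ \lambda\,.
\]
The base case $p=1$ is the previous paragraph. For the inductive step I would condition on the first transition being $e_1$ and landing at some $s_1\in e_1(s)\cap J_1$, in particular $s_1\in(q_1,J_1)$, then invoke the (strong) Markov property of the stochastic process together with the induction hypothesis applied to the shorter block $(q_1,J_1)\xrightarrow{e_2}\cdots\xrightarrow{e_p}(q_p,J_p)$ --- whose hypotheses are exactly items (a), (b), (c) for indices $1,\dots,p-1$ --- obtaining probability at least $\prod_{i=2}^p\beta_i$ uniformly over $s_1\in(q_1,J_1)$, and finally multiply by the probability (at least $\beta_1$) of the first transition.

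Finally, I would conclude exactly as in the single-edge case. Fix $s$ with $\Prob_{\RA}(s,\G\F(q_0,J_0))>0$. On the event $\G\F(q_0,J_0)$ the run visits $(q_0,J_0)$ infinitely often; at each such visit, by the block estimate and the tower property, the conditional probability, given the history up to that visit, that the next $p$ transitions realise $\sigma$ is at least $\lambda$. Since the sum of these lower bounds over the (a.s.\ infinitely many) visits diverges, the conditional Borel--Cantelli lemma (L\'evy's extension) yields that, almost surely on $\G\F(q_0,J_0)$, infinitely many visits to $(q_0,J_0)$ are immediately followed by such a block, i.e.\ $\G\F\sigma$ holds; dividing by $\Prob_{\RA}(s,\G\F(q_0,J_0))$ gives $\Prob_{\RA}(s,\G\F\sigma\mid\G\F(q_0,J_0))=1$.

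The step I expect to require the most care is the induction of the second paragraph: obtaining a single constant $\lambda$ that lower-bounds the probability of completing the \emph{entire} block starting from an arbitrary state of $(q_0,J_0)$, which forces one to check that after firing $e_j$ the run always lands in a subregion $(q_j,J_j)$ over whose states the next per-edge bound $\beta_{j+1}$ is still uniform --- this is exactly why the hypotheses (a), (b) and (c) are imposed for every index $0\le i<p$ rather than just for $i=0$. Everything downstream of the block estimate, in particular the Borel--Cantelli conclusion, is identical to the proof of Lemma~\ref{lemma:technic}.
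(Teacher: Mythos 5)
Your proof is correct and follows exactly the route the paper intends: the paper gives no separate proof of this lemma, stating only that it ``can be extended straightforwardly'' from the single-edge case, and your argument supplies precisely that extension --- per-edge uniform lower bounds $\beta_i$ obtained as in Lemma~\ref{lemma:technic} (compactness of $\overline{J_{i-1}}$ plus continuity from $(H3)$), a product bound $\lambda=\prod_i\beta_i$ for the whole block via the Markov property, and the same $E_k$/Borel--Cantelli conclusion. The only point deserving a word of care, which the paper also glosses over, is that successive ``trials'' at visits to $(q_0,J_0)$ should be taken along a subsequence of visits separated by at least $p$ steps so the block events do not overlap; this is immediate since there are almost surely infinitely many visits on the conditioning event.
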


Now, we can turn back to Theorem~\ref{theorem:fair-oneclock}.

\begin{proof}[Sketch of proof of Theorem~\ref{theorem:fair-oneclock}]
  Let $s$ be a state in $\A$. We want to prove that $\Prob_{\A}(s
  \models \mathsf{fair})=1$. We will equivalently prove
  $\Prob_{\RA}(\iota(s) \models \mathsf{fair})=1$.  To that purpose,
  we decompose the set of infinite runs in $\RA$ from $\iota(s)$ into:
  \begin{enumerate}[label=$(F_{\arabic*})$]
  \item[$(F_1)$] the set of runs with infinitely many resets,
  \item[$(F_2)$] the set of runs with finitely many resets, and which
    are ultimately in the unbounded region $(c_k,+\infty)$,
  \item[$(F_3)$] the set of runs with finitely many resets, and which
    ultimately stay forever in a bounded region, either $\{c_i\}$ with
    $0 \leq i \leq k$, or $(c_i,c_{i+1})$ with $0 \leq i < k$. We
    write $(F_3^{(c_i,c_{i+1})})$ (resp. $(F_3^{c_i})$) for condition
    $F_3$ restricted to $(c_i,c_{i+1})$ (resp.  $\{c_i\}$).
  \end{enumerate}
  
  \noindent We write $\Prob_{\RA}(s,F_j)$ for the probability of the runs
  starting in $s$ and satisfying condition~$F_j$. The three sets of
  runs above are measurable and partition the set of all runs. Hence
  $\sum_{j=1,2,3} \Prob_{\RA}(s,F_j) = 1$, and applying Bayes formula:
  \begin{equation}
    \label{bayes}\tag{$\bullet$}
    \Prob_{\RA}(s \models \mathsf{fair}) = \sum_{j=1,2,3} \Prob_{\RA}(s
    \models \mathsf{fair} \mid F_j) \cdot \Prob_{\RA}(s,F_j)\,.
  \end{equation}
  We now distinguish between the three cases to prove that
  $\Prob_{\RA}(s \models \mathsf{fair} \mid F_j)=1$ (in case
  $\Prob_{\RA}(s,F_j)=0$ we remove the corresponding term
  from~\eqref{bayes}).
  \begin{description}
  \item[Case $F_1$.] In that case, for all states $(q,0)$ which are
    visited infinitely often we apply Lemma~\ref{lemma:technic2} to
    any sequence of thick edges, and get the expected result for
    $F_1$.
  \item[Case $F_2$.] Once the unbounded region is reached, precise
    values of clocks are irrelevant, and the timed automaton roughly
    behaves like a finite Markov chain, which yields the expected
    result for $F_2$.
  \item[Case $F_3$.] We consider runs which end up in a bounded region
    $r$. This case is only possible (with positive probability) if no
    thick edge is enabled infinitely often with a guard above $r$
    (otherwise it would be taken infinitely often~--~due to the
    hypothesis {\bf $(H4)$}). Therefore in this case as well the
    automaton ultimately behaves like a finite Markov chain, which
    allows to conclude with the expected property.\qedhere 
  \end{description}
\end{proof}
Note that the one-clock hypothesis is crucial for cases $F_1$ and $F_3$.

\subsubsection*{Checking almost-sure non-Zenoness in one-clock timed
  automata.}

Note that we cannot obtain the result on Zeno behaviours from
Corollary~\ref{coro:match-fair} since Zenoness is a timed property.
We can neither use the product construction of
Section~\ref{sec:richer} and Theorem~\ref{theo:specauto} since this
will increase the number of clocks to~$2$. Therefore we need a
specific proof, that we present now.

We first show the following crucial lemma.

\begin{restatable}{lem}{lemmazeno}
  \label{lemma:zeno1}
  Assuming $\A$ satisfies $(\dag)$, if $s$ is a state of $\A$, then: 
  \[
  \Prob_{\A}(s \models \Zeno) = \sum_{B\ \text{Zeno BSCC of}\ \thickgraph(\A)}
  \Prob_{\RA}(\iota(s) \models \F B)
  \]
  where a BSCC of $\thickgraph(\A)$ is said Zeno whenever it is
  bounded and the clock is never reset.
\end{restatable}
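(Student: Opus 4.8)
The plan is to reduce the statement to a case analysis over the BSCCs of $\thickgraph(\A)$, exploiting two facts established earlier. First, by Theorem~\ref{theorem:fair-oneclock}, $\Prob_{\RA}(\iota(s) \models \mathsf{fair}) = 1$. Second, applying Theorem~\ref{th:safety} to the safety property ``the run is thick'' (exactly as in the proof of Theorem~\ref{th:fairhelps}), $\Prob_{\RA}(\iota(s) \models \mathsf{thick}) = 1$. Hence almost every run $\varrho$ from $\iota(s)$ is thick --- so its underlying symbolic path is a path of $\thickgraph(\A)$ --- and fair, so it eventually enters some BSCC $B$ of $\thickgraph(\A)$ and from then on takes every edge of $B$ infinitely often. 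As distinct BSCCs are disjoint and no edge leaves a BSCC, a thick run reaching $B$ stays in $B$; thus, up to a null set, the events $\{\iota(s) \models \F B\}$, for $B$ ranging over the BSCCs reachable from $\iota(s)$, partition $\Runs(\RA,\iota(s))$. It therefore suffices to prove that
\[
\Prob_{\RA}\big(\iota(s) \models \Zeno \mid \iota(s) \models \F B\big) =
\begin{cases} 1 & \text{if } B \text{ is a Zeno BSCC,}\\ 0 & \text{otherwise,}\end{cases}
\]
sum these identities over all reachable $B$, and transfer the resulting equality from $\RA$ to $\A$ by Lemma~\ref{lemma:proba}.

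For the easy directions I would argue pathwise. If $B$ is Zeno, then along any run trapped in $B$ the clock is never reset, hence non-decreasing, and stays within a bounded region, hence bounded; it therefore converges, so $\sum_i \tau_i < \infty$ and the run is Zeno --- this holds for \emph{every} trapped run. Symmetrically, if $B$ is not Zeno because it is unbounded and reset-free, then along a trapped run the clock is non-decreasing and visits the unbounded region $(c_k,+\infty)$ infinitely often, so it tends to $+\infty$, whence $\sum_i \tau_i = +\infty$ and the run is non-Zeno --- again for every trapped run.

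The remaining, genuinely probabilistic, case is a non-Zeno BSCC $B$ containing a reset edge, where one must show that almost every trapped run is non-Zeno. Here I would produce a finite sub-path $\sigma = (q_0,J_0) \xrightarrow{e_1} \cdots \xrightarrow{e_p} (q_p,J_p)$ through subregions of $B$, with each $\overline{J_i} \subseteq q_i$ compact and the hypotheses of Lemma~\ref{lemma:technic2} satisfied, chosen so that any run-segment realising $\sigma$ incurs a delay bounded below by a fixed $\delta_0 > 0$ --- typically by placing, just after a reset of $B$, either a region whose only way out forces the clock up by a positive integer amount, or (using condition $(\star)$ on $\mu$ together with thickness and compactness of the $\overline{J_i}$'s) an edge whose firing interval is bounded below by $\delta_0$ uniformly over $J_0$. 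Since on $\F B$ the run visits $(q_0,J_0)$ infinitely often almost surely, Lemma~\ref{lemma:technic2} gives that it realises $\sigma$ infinitely often almost surely, so infinitely many delays are $\geq \delta_0$ and $\sum_i \tau_i = +\infty$ almost surely, as required. The main obstacle is precisely this last step: verifying that in every non-Zeno reset BSCC of a one-clock automaton one can locate such a ``slow'' sub-path meeting the conditions of Lemma~\ref{lemma:technic2} --- this is where the single-clock structure is used, and where a little care is needed about degenerate resets (edges that reset the clock while its value is already $0$, which should be read as not resetting the clock). The rest is routine bookkeeping.
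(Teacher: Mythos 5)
Your overall architecture (partition the run space by the BSCC of $\thickgraph(\A)$ that a thick and fair run almost surely gets trapped in, then compute $\Prob_{\RA}(\Zeno \mid \F B)$ for each $B$) is a legitimate reorganisation of the paper's proof, which instead partitions by the three conditions ``infinitely many resets'' ($F_1$), ``finitely many resets and ultimately unbounded'' ($F_2$), and ``finitely many resets and ultimately trapped in a bounded region'' ($F_3$), and only brings in the BSCCs at the very end to rewrite $\Prob_{\RA}(F_3)$ as the sum over Zeno BSCCs. The two decompositions are equivalent given almost-sure fairness and thickness, and your treatment of the Zeno BSCCs themselves (every trapped run is Zeno) is correct.

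There is, however, a genuine error in your ``easy pathwise direction'' for unbounded reset-free BSCCs. From the fact that the clock is non-decreasing and stays in $(c_k,+\infty)$ you conclude that it tends to $+\infty$; this is false. All guards in the unbounded region are of the form $x>c_k$, so a trapped run may take delays $\tau_i=2^{-i}$, keeping the clock convergent to a finite limit above $c_k$: such Zeno runs exist for \emph{every} unbounded reset-free BSCC, so no pathwise argument can work. This case is irreducibly probabilistic and is precisely why hypothesis $(H5)$ of $(\dag)$ is needed: the paper bounds $\mu_s([0,1/2])\le\lambda_0<1$ uniformly over states with unbounded $I(s)$ and concludes by a Borel--Cantelli-type product argument that almost surely infinitely many delays exceed $1/2$. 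Your proposal never invokes $(H5)$, which is a sign the case is not actually handled. A second, smaller gap concerns BSCCs with reset edges: Lemma~\ref{lemma:technic2} constrains \emph{target subregions}, not delays, and when the relevant edge resets the clock the target subregion is always $\{0\}$ and carries no information about the elapsed time; so a ``slow sub-path'' in the sense of that lemma need not exist (consider a single location with a self-loop guarded by $0<x<1$ that resets $x$). The paper instead argues directly on the delay distributions from the states $(q,0)$ visited infinitely often, with a four-way case analysis on $I((q,0))$ and again Borel--Cantelli; your argument would have to be repaired along those lines (or by proving a delay-constrained variant of Lemma~\ref{lemma:technic2}) rather than by exhibiting an edge whose firing interval is bounded away from $0$, which in general does not exist.
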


To prove the lemma, we need to realise that runs ending up in the
unbounded region or runs with infinitely many resets have probability
$0$ to be Zeno. Indeed, in the first case, there will be non
constraint on the clock for taking transition, and in particular, with
probability $1$ a delay of $1$ will elapse infinitely often; in the
second case, infinitely often a guard of the form $0<x<1$ will be
enabled, and therefore with probability $1$, a delay of at least $1/2$
will be done, yielding with probability $1$ a non-zeno run. It remains
those runs ending up in a bounded region, which will correspond to the
right-hand side of the equality in the statement (due to fairness, the
runs end up with probability $1$ in a BSCC). The details are given in
the Appendix on page~\pageref{app:zeno-lemma}.

\begin{restatable}{thm}{zeno}
  \label{thm:zeno}
   Assuming $\A$ satisfies $(\dag)$, if $s$ is a state of $\A$, then
  the three following properties are equivalent:
  \begin{enumerate}[label=\({\alph*}]
  \item $\A ,s \robust_{\Prob} \neg\Zeno$;
  \item $\A,s \robust_{\mathcal T} \neg\Zeno$;
  \item no Zeno BSCC
    is reachable in $\thickgraph(\A)$ from $\iota(s)$.
  \end{enumerate}
\end{restatable}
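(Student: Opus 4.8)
The plan is to prove the three equivalences $(a) \Leftrightarrow (b) \Leftrightarrow (c)$ by combining Lemma~\ref{lemma:zeno1} with the analogous topological statement and a routine analysis of the thick graph. First, observe that by Theorem~\ref{theorem:fair-oneclock}, $\A$ is almost-surely fair from every state, so runs reach a BSCC of $\thickgraph(\A)$ with probability~$1$; moreover, by Theorem~\ref{th:safety} (applied to $\RA$), almost every run is thick, hence corresponds to an infinite path in $\thickgraph(\A)$.

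For $(a) \Leftrightarrow (c)$, I would use Lemma~\ref{lemma:zeno1} directly. It gives
\[
\Prob_{\A}(s \models \Zeno) = \sum_{B\ \text{Zeno BSCC}} \Prob_{\RA}(\iota(s) \models \F B)\,.
\]
Now $\Prob_{\A}(s \models \neg\Zeno) = 1$ iff this sum is $0$. Since each summand is a reachability probability in the (essentially finite) Markov chain structure of $\thickgraph(\A)$, and since every thick edge carries positive probability (Proposition~\ref{prop:prob-thick}), $\Prob_{\RA}(\iota(s) \models \F B) > 0$ precisely when $B$ is reachable from $\iota(s)$ in $\thickgraph(\A)$. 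Hence the sum is $0$ iff no Zeno BSCC is reachable, which is exactly $(c)$. This part is essentially a bookkeeping argument once Lemma~\ref{lemma:zeno1} is granted.

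For $(b) \Leftrightarrow (c)$, I would argue along the lines of Theorem~\ref{theo:large}, using a Banach-Mazur game in $(\Runs(\A,s),\mathcal{T}_\A^s)$, transferred to $\RA$ via Proposition~\ref{prop-topo-A-RA}. The set $\sem{\Zeno}_{\A,s}$ is meagre iff Player~2 has a winning strategy (Theorem~\ref{BM:thm}). If no Zeno BSCC is reachable, then along any thick path the clock is reset infinitely often or the unbounded region is entered; in either case Player~2 can force a prefix witnessing a delay $\ge 1/2$ (or a full unit) infinitely often, yielding a non-Zeno run in $\bigcap_i B_i$ — so $\sem{\Zeno}$ is meagre, i.e. $(b)$ holds. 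Conversely, if a Zeno BSCC $B$ is reachable, Player~1 plays a thick prefix into $B$; thereafter every thick continuation stays in the bounded region $B$ without resetting, and the closure-compactness argument (as in Example~\ref{ex:BMGS} / the proof of Proposition~\ref{prop:Baire}) lets Player~1 also force $\bigcap_i B_i \ne \emptyset$, producing a Zeno run, so Player~2 cannot win and $\sem{\Zeno}$ is non-meagre, contradicting $(b)$. Thus $(b)$ fails.

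The main obstacle is the topological direction $(c) \Rightarrow (b)$ (and its converse): one must verify that inside a non-Zeno BSCC a basic-open move forcing a bounded-away-from-zero delay is always available to Player~2, and — in the Zeno case — that Player~1 can simultaneously keep $\bigcap_i B_i$ non-empty while staying Zeno. Both hinge on the structure of thick paths in a single-clock region automaton (regions are intervals, thick edges have one-dimensional delay sets unless all outgoing edges are punctual) and on the compactification trick already used for the Baire property; I would expect to invoke Lemma~\ref{lemma:basicopensets} and the single-clock region description recalled before Lemma~\ref{lemma:technic} to make these moves explicit. Everything else is a direct consequence of results established earlier in the paper.
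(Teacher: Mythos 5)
Your proposal follows essentially the same route as the paper: $(a)\Leftrightarrow(c)$ via Lemma~\ref{lemma:zeno1} together with the fact that a BSCC is reachable in $\thickgraph(\A)$ iff it is reached with positive probability (the paper packages this through Theorem~\ref{th:safety}, you through Proposition~\ref{prop:prob-thick} --- same content), and the topological equivalence via a Banach--Mazur game with exactly the paper's case split (unbounded BSCC vs.\ BSCC with resets vs.\ Zeno BSCC, plus the compactification trick to keep $\bigcap_i B_i$ non-empty, which you rightly make explicit). One sentence is off, though: ``if no Zeno BSCC is reachable, then along any thick path the clock is reset infinitely often or the unbounded region is entered'' is false --- a thick path may linger forever in a bounded region of a \emph{non-bottom} SCC without resetting (thickness does not imply fairness), so thick Zeno paths can exist even when $(c)$ holds. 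This does not sink the argument, because meagreness only requires a strategy for Player~2, who first steers the play into a BSCC (as in the paper's proof) and only then exploits its non-Zenoness; but the claim should be stated about the plays Player~2 constructs, not about all thick paths.
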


This theorem is a consequence of Lemma~\ref{lemma:zeno1}, and uses
once more Banach-Mazur games for what concerns the large satisfaction.
The details are also given in the Appendix on page~\pageref{app:zeno2}.

\begin{cor}
  \label{coro:zeno}
  The almost-sure non-Zenoness problem is \NLOGSPACE for single-clock
  stochastic timed automata which satisfy condition $(\dag)$.
\end{cor}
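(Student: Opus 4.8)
The plan is to reduce the problem, via Theorem~\ref{thm:zeno}, to a pure reachability question in the thick graph, and then to argue that this question lies in \coNLOGSPACE, which coincides with \NLOGSPACE by the Immerman--Szelepcs\'enyi theorem. By Theorem~\ref{thm:zeno}, $\A,s \robust_{\Prob} \neg\Zeno$ holds if and only if no Zeno BSCC of $\thickgraph(\A)$ is reachable from $\iota(s)$, a Zeno BSCC being a bounded BSCC in which the clock is never reset; so the whole task amounts to deciding a graph property of $\thickgraph(\A)$.

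First I would check that $\thickgraph(\A)$ is of polynomial size and can be generated on the fly in logarithmic space. Using the coarse region equivalence for single-clock automata recalled above (from~\cite{LMS04}), the region automaton $\RA$ has only $O(|L|\cdot|E|\cdot k)$ states, where $k+1$ is the number of constants appearing in guards. Deciding whether an edge of $\RA$ is thick is a local test: $e$ with source $q$ is thin iff $\dim(I(q,e)) < \dim(I(q))$, and with a single clock $I(q,e)$ is an interval of dimension $0$ or $1$ while $\dim(I(q)) = \max_{e'}\dim(I(q,e'))$, so this can be read off the guard of $e$ and the regions in logspace. Whether a region is bounded, i.e.\ different from $(c_k,+\infty)$, and whether an edge resets the clock, i.e.\ has a non-empty reset set, are equally trivial local checks.

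The combinatorial heart of the argument is a reformulation that avoids any explicit computation of strongly connected components: a Zeno BSCC is reachable from $\iota(s)$ in $\thickgraph(\A)$ if and only if there is a vertex $q$ reachable from $\iota(s)$ such that no vertex reachable from $q$ is either an unbounded region or the source of a clock-resetting edge. The forward direction is immediate: if $q$ lies in a Zeno BSCC $B$, every vertex reachable from $q$ is in $B$, hence bounded, and every edge leaving such a vertex stays inside $B$, hence does not reset. For the converse, if from $q$ one can reach only bounded vertices with no resetting outgoing edge, then the finite subgraph induced by the vertices reachable from $q$ still contains some BSCC $B$ of $\thickgraph(\A)$, and $B$ is then bounded with no internal reset, i.e.\ Zeno, and reachable through $q$.

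It then remains to observe that the existence of such a $q$ is an \NLOGSPACE predicate, namely ``there is a vertex $q$ reachable from $\iota(s)$ such that no vertex reachable from $q$ is \emph{bad}'', where ``bad'' denotes the logspace-checkable local property ``the vertex is an unbounded region, or it is the source of a clock-resetting edge''. The inner part, ``some bad vertex is reachable from $q$'', is ordinary graph reachability, hence in \NLOGSPACE; its negation is therefore in \coNLOGSPACE, which equals \NLOGSPACE; and the outer existential over $q$, conjoined with the \NLOGSPACE reachability test for $q$, keeps the whole predicate in \NLOGSPACE. Consequently ``some Zeno BSCC is reachable from $\iota(s)$'' is in \NLOGSPACE, so its complement --- which by Theorem~\ref{thm:zeno} is exactly the almost-sure non-Zenoness problem --- is in \coNLOGSPACE, hence in \NLOGSPACE. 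The only step that requires genuine care is the BSCC reformulation above; the complexity bound itself is then a routine use of closure properties of \NLOGSPACE together with Immerman--Szelepcs\'enyi.
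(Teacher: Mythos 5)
Your proof is correct and follows essentially the same route as the paper: reduce via Theorem~\ref{thm:zeno} to checking reachability of a Zeno BSCC in the polynomial-size thick graph, and decide that in \NLOGSPACE. The paper leaves the graph-algorithmic step implicit; your reformulation of ``some Zeno BSCC is reachable'' as ``some reachable vertex sees no unbounded or resetting vertex'', combined with Immerman--Szelepcs\'enyi, is a valid and standard way to fill in that detail (and matches the guess-and-check pattern the paper itself uses in Lemma~\ref{lemma:complex}).
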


Condition $(c)$ in Theorem~\ref{thm:zeno} can be checked in \NLOGSPACE
(recall that in that case $\thickgraph(\A)$ is polynomial in the size
of $\A$).

\subsection{Reactive and weak-reactive timed automata}
\label{subsec:reactive}

Although fairness is not almost-sure in general for $n$-clock
stochastic timed automata with $n\ge 2$ (see Figure~\ref{fig:pacman}),
it is the case for the subclass of \emph{reactive} (and weak-reactive)
stochastic timed automata. Let us first recall that a (non-stochastic)
timed automaton $\A$ is reactive if $I(s)=\IR_+$ for all states $s$ of
$\A$.

We first focus on the class of reactive stochastic timed automata, and
later extend the results to the class of weak-reactive stochastic
timed automata.

\begin{defi}
  A stochastic timed automaton $\langle \A,\mu,w \rangle$
  is \emph{reactive} whenever the timed automaton $\A=(L,X,E,{\mathcal
    I},{\mathcal L})$ is reactive, and for every $\ell \in L$, there
  exists a probability distribution $\mu_\ell$ over $\IR_+$,
  equivalent to the Lebesgue measure,
  such that for every $v \in \IR_+^X$, $\mu_{(\ell,v)}
  = \mu_\ell$.
\end{defi}
Note that for any constant $C$, for any $\ell \in L$,
$\mu_\ell([0,C])<1$, this is due to the equivalence of $\mu_{\ell}$
with the Lebesgue measure. Note also that if $s =(\ell,v)$ and $s' =
(\ell,v')$ are such that $v \cong_{\A} v'$, then $p_{s}(e) =
p_{s'}(e)$ for every edge~$e$.

\begin{exa}
  Examples of distributions over delays that respect the above
  conditions are exponential distributions, but we can think of many
  other kinds of distributions. Later, all our examples will use
  exponential distributions. Each such distribution is characterised
  by a positive parameter $\lambda_\ell$, and its density is $t
  \mapsto \lambda_\ell \cdot e^{-\lambda_\ell \cdot t}$. \corn
\end{exa}

Note that reactive stochastic timed automata generalise
continuous-time Markov chains (CTMC for short). A CTMC is nothing else
than a single-clock reactive stochastic timed automaton in which (i)
on all transitions, the guard is trivial, and the clock is reset, and
(ii) each location is assigned an exponential distribution over
delays.

Let $\langle \A,\mu,w \rangle$ be a reactive stochastic timed
automaton.  The goal of this section is to prove the following result,
which will allow to apply results of Section~\ref{sec:match}, and in
particular Theorem~\ref{theo:specauto}.

\begin{prop}
  \label{prop:fair-reactive}
  Let $s$ be a state of $\A$. Then $\Prob_{\A}(s \models
  \mathsf{fair}) = 1$.
\end{prop}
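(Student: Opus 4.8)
The plan is to work in the region automaton $\RA$ and prove $\Prob_{\RA}(\iota(s)\models\mathsf{fair})=1$; since fairness of a run of $\A$ is defined through its image under $\iota$ and probabilities transfer by Lemma~\ref{lemma:proba}, this yields $\Prob_{\A}(s\models\mathsf{fair})=1$. Throughout I use that $\A$ being reactive means $I(s')=\IR_+$ for every state $s'$, and that each $\mu_\ell$ is equivalent to the Lebesgue measure: every interval of $\IR_+$ of positive length has positive $\mu_\ell$-measure, uniformly bounded below in $\ell$ for a \emph{fixed} interval (finitely many locations), and $\mu_\ell([K,\infty))>0$ for all $K$. I first record a few preliminary facts. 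By Theorem~\ref{th:safety} (exactly as in the proof of Theorem~\ref{th:fairhelps}), $\Prob_{\RA}(\iota(s)\models\mathsf{thick})=1$, so we may restrict to runs whose region path lies in $\thickgraph(\A)$. From any state the next delay exceeds $M{+}1$ with probability $\mu_\ell([M{+}1,\infty))\ge\delta$ for a structural $\delta>0$, so a conditional Borel--Cantelli argument shows that almost surely infinitely many delays exceed $M{+}1$; hence the run almost surely is non-Zeno, every clock is almost surely either reset infinitely often or eventually always above $M$, punctual regions are visited only finitely often (entering one requires a delay in a Lebesgue-null set), and infinitely often the run enters a ``top'' region $q_\infty^{\ell}=\{v:\ v(x)>M\text{ for all }x\}$. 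Finally, note that in a reactive automaton any region carrying an outgoing thick edge can reach some top region in $\thickgraph(\A)$, and conversely: from a top region every enabled edge is thick (any equality or upper-bound constraint is unsatisfiable once all clocks exceed $M$), so one may freely move among top regions and reset/grow bounded clocks at will.

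The technical core is a \emph{uniform lower bound}: there exist $\lambda>0$ and $N_0\in\IN$, depending only on $\A$, such that for every thick edge $e$ of $\RA$ with non-punctual source region $q$, and every state $s'$ of $\RA$ from which $q$ is reachable in $\thickgraph(\A)$, the edge $e$ is taken within the first $N_0$ steps with probability at least $\lambda$ starting from $s'$. To prove this I would fix $e$, write its guard $g_e=\bigwedge_j x_j\sim_j c_j$, and let $X_e$ be the clocks carrying an upper-bound or equality constraint; each $x_j\in X_e$ is bounded in $q$, hence (by non-Zenoness) resettable along $\thickgraph(\A)$. Starting from $s'$ one delays past $M$ into a top region (probability $\ge\delta$), then follows a prescribed finite path of $\thickgraph(\A)$ that successively resets the clocks of $X_e$ and re-enters $q$; because after the last such reset all these clocks grow by the same delays, there are enough independent delay parameters to place each $x_j$ in a \emph{fixed} open subinterval of its $q$-range with a structural margin $\eta>0$ against $c_j$, and every placing delay lies in a fixed interval of positive length, so has $\mu$-measure bounded below. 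From such a ``good'' valuation the set $I(\cdot,e)$ contains an interval of length bounded below, so $e$ is taken at the very next step with probability bounded below. Composing these finitely many, structurally bounded probabilities yields $\lambda$ and $N_0$; Lemma~\ref{lemma:technic2} is the convenient tool to turn the existence of a suitable compact open subregion into this quantitative statement.

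Granting the lemma, fix a thick edge $e$ with source region $q$ and let $A_e$ be the set of runs visiting $q$ infinitely often but taking $e$ only finitely often; a run of $\RA$ is unfair iff it lies in some $A_e$, so by finiteness of $\thickgraph(\A)$ and a union bound it suffices to show $\Prob(A_e)=0$. Let $\mathcal F_n=\sigma(s_0,e_1,\dots,s_n)$; on $\{q\text{ i.o.}\}$ define stopping times $R_1<R_2<\cdots$ by letting $R_m$ be the first visit time to $q$ exceeding $R_{m-1}+N_0$, with $R_0=N$ for a parameter $N\in\IN$. Let $V_m$ be the event ``$e$ is taken at some step in $\{R_m{+}1,\dots,R_m{+}N_0\}$'', which is $\mathcal F_{R_m+N_0}$-measurable with $\Prob(V_m\mid\mathcal F_{R_m})\ge\lambda$ by the lemma (applied at $s_{R_m}\in q$). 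On $A_e\cap\{e\text{ not taken after step }N\}$ every $V_m$ fails, and conditioning successively at the $R_m$ gives $\Prob\big(\bigcap_{m\le M}V_m^c\cap\{q\text{ i.o.}\}\big)\le(1-\lambda)^M$ for every $M$, hence this event is null; letting $N$ range over $\IN$ yields $\Prob(A_e)=0$, and therefore $\Prob_{\RA}(\iota(s)\models\mathsf{fair})=1$ and $\Prob_{\A}(s\models\mathsf{fair})=1$.

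The step I expect to be the main obstacle is the uniform lower bound, specifically the combinatorial argument that from a top region one can build inside $\thickgraph(\A)$ a single finite path that resets all of $X_e$ and re-enters $q$, together with the verification that resetting these clocks one after another still leaves enough independent delay parameters to land each of them simultaneously in its prescribed good subinterval of $q$. This is exactly where reactivity ($I(s')=\IR_+$ everywhere) is essential — it guarantees both the detour past $M$ and the freedom of the delays — and its failure for arbitrary invariants is consistent with the automaton of Figure~\ref{fig:pacman}, for which fairness is not almost sure.
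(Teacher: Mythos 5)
Your overall skeleton (transfer to $\RA$, kill each ``enabled i.o.\ but taken finitely often'' event separately, geometric decay via conditioning at stopping times) is sound, and your preliminary observation that almost surely infinitely many delays exceed $M$ is exactly the paper's starting point. But the uniform lower bound you isolate as the technical core is \emph{false as stated}, and this is a genuine gap rather than a merely hard step. Consider a reactive automaton with one clock $x$ and locations $\ell_0,\ell_1,\ell_2$, edges $e_a:\ell_0\to\ell_1$ with guard $x<1$, $e_b:\ell_0\to\ell_2$ with guard $x\ge 1$, and unguarded self-loops on $\ell_1$ and $\ell_2$. The $\RA$-edge $f$ with source region $q=(\ell_0,(0,1))$ and underlying edge $e_a$ is thick (for $s'=(\ell_0,1-\epsilon)$ one has $I(s',f)=[0,\epsilon)$, of dimension $1=\dim I(s')$), and $q$ is trivially reachable from $s'$; yet the probability of ever taking $f$ from $s'$ equals $\mu_{\ell_0}([0,\epsilon))$, which tends to $0$ with $\epsilon$, because a first delay of at least $\epsilon$ traps the run in $\ell_2$ forever. (The proposition itself survives here only because $q$ cannot be visited twice.) The same example exposes the flaw in your sketch of the lemma: the detour ``delay past $M$ into a top region, then follow a prescribed thick path that resets $X_e$ and re-enters $q$'' presupposes such a path exists from wherever the detour lands, but delaying past $M$ can force the run into a part of the automaton from which $q$ and $e$ are unreachable --- one cannot ``reset clocks at will'', only along edges the automaton actually has. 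In your final argument you invoke the bound at every visit time $R_m$ to $q$, i.e.\ at arbitrary states of $q$, which is exactly the regime where it fails; and $\A_{\mathsf{unfair}}$ shows that visits to a region can genuinely accumulate on its bad boundary, so you cannot silently replace ``visits $q$ i.o.''\ by ``visits a fixed compact subregion of $q$ i.o.''. A secondary loose end: regions in which some clock equals $0$ are entered after every reset, hence are \emph{not} visited only finitely often, so thick edges issued from them must be covered by whatever replaces your lemma.

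The repair is essentially the paper's proof, whose key ingredient you already have. Since almost surely infinitely many delays exceed $M$, almost surely some \emph{memoryless} region (every clock either equal to $0$ or above $M$) is visited infinitely often (Lemma~\ref{lemma:eq1}). For states of such a region the law of the future depends only on the location and on which clocks are $0$ versus above $M$ (Lemma~\ref{lemma:vm}), so the excursions between successive visits to a fixed memoryless region $q$ are exactly i.i.d.\ copies of the excursion from its canonical configuration $s_q$; the events ``$e$ is taken during the $k$-th excursion'' are then conditionally independent with a common positive probability, and the second Borel--Cantelli lemma gives $\Prob(\mathfrak{R}^e(s)\mid\mathfrak{R}^q(s))=1$ (Lemma~\ref{lemma:eq2}), after which elementary conditional-probability bookkeeping concludes. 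In short, instead of a uniform quantitative bound over an entire region --- which is precisely what breaks with several clocks --- the paper conditions at states where the process exactly renews. If you wish to keep your stopping-time formulation, the $R_m$ must be the successive visits to a fixed memoryless region, not to $q$.
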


We first recall some basic probability results.

\begin{lem}
	\label{petitlemmes}
Let $\Prob$ be a probability measure on some probabilistic space
$\Omega$. Let $A$, $B$ and $C$ be measurable sets such that
$\Prob(B)>0$ and $\Prob(C)>0$, then
\begin{enumerate}
\item If $\Prob(A)=1$, then ${\Prob(A \mid B) = 1}$.
\item If $\Prob(A \mid B) = 1$, $\Prob(B \mid C) = 1$, then
  $\Prob(A \mid C) = 1$.
\item If $\Prob(A \mid B) = 1$, $\Prob(A \mid C) = 1$, then
  $\Prob(A \mid B \cup C)=1$.
\end{enumerate}
\end{lem}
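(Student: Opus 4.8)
The plan is to reduce all three statements to the single observation that, for events $A$ and $D$ with $\Prob(D)>0$, one has $\Prob(A\mid D)=1$ if and only if $\Prob(D\setminus A)=0$ (equivalently $\Prob(A\cap D)=\Prob(D)$). This follows directly from the definition $\Prob(A\mid D)=\Prob(A\cap D)/\Prob(D)$ together with $\Prob(D)=\Prob(A\cap D)+\Prob(D\setminus A)$. Once this reformulation is in place, each item becomes a short computation with the monotonicity and finite subadditivity of $\Prob$, so there is really no ``hard part'' here; the only thing to be careful about is keeping track of which sets are assumed to have positive measure so that the conditional probabilities are well defined, but that is exactly what the hypotheses $\Prob(B)>0$ and $\Prob(C)>0$ provide (and for (3), $\Prob(B\cup C)\ge\Prob(B)>0$).

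For item (1), since $\Prob(A)=1$ we have $\Prob(B\setminus A)\le\Prob(A^{c})=0$, hence $\Prob(A\mid B)=1$. For item (2), the hypothesis $\Prob(A\mid B)=1$ gives $\Prob(B\setminus A)=0$ and $\Prob(B\mid C)=1$ gives $\Prob(C\setminus B)=0$; I would then use the set inclusion
\[
C\setminus A\ \subseteq\ (C\setminus B)\cup(B\setminus A),
\]
which holds because any $x\in C\setminus A$ is either outside $B$ or inside $B$, and conclude $\Prob(C\setminus A)=0$, i.e. $\Prob(A\mid C)=1$. For item (3), from $\Prob(B\setminus A)=0$ and $\Prob(C\setminus A)=0$ and the identity $(B\cup C)\setminus A=(B\setminus A)\cup(C\setminus A)$ I get $\Prob((B\cup C)\setminus A)=0$, and since $\Prob(B\cup C)\ge\Prob(B)>0$ this yields $\Prob(A\mid B\cup C)=1$.

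The main (very mild) obstacle is purely bookkeeping: making sure the denominators in all conditional probabilities are nonzero, which is guaranteed by the stated hypotheses, and getting the two elementary set inclusions ($C\setminus A\subseteq(C\setminus B)\cup(B\setminus A)$ and $(B\cup C)\setminus A=(B\setminus A)\cup(C\setminus A)$) right. No deeper machinery (no $\sigma$-additivity, no limiting arguments) is needed.
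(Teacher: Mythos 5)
Your proof is correct. The paper simply ``recalls'' this lemma as a basic probability fact and gives no proof of its own, so there is nothing to compare against; your reduction of all three items to the equivalence $\Prob(A\mid D)=1 \Leftrightarrow \Prob(D\setminus A)=0$, followed by the two elementary set inclusions, is exactly the standard argument one would expect, and every step (including the positivity of the denominators) checks out.
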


We write $\langle \RA,\mu,w \rangle$ for the stochastic timed region
automaton based on $\A$ (we abusively also write $\mu$ and $w$ since
we will not use that notation for $\A$), and we write $Q$ for the set
of locations of $\RA$, and $T$ for its set of edges.  Thanks to
Corollary~\ref{prop:proba-A-RA}, in order to prove
Proposition~\ref{prop:fair-reactive}, it is sufficient to prove that
$\RA$ is almost-surely fair.  In the following we denote by $M$ the
maximal constant appearing in $\A$ (or $\RA$), and we write $\Prob$
instead of $\Prob_{\RA}$.

To prove almost-sure fairness in $\RA$, we have to show that for every
thick edge $e$, the probability to visit $e$ infinitely often, knowing
we visit $\source(e)$ infinitely often, is equal to $1$. The key point
of this proof lies in the fact that, since the automaton $\RA$ is
reactive, there exists a subset of regions, called \emph{memoryless},
that will be visited infinitely often with probability $1$.  A region
$r$ is said \emph{memoryless} whenever the following holds for every
clock $x \in X$: either $v(x)=0$ for every $v\in r$, or $v(x)>M$ for
every $v \in r$. The interest of memoryless regions is that once you
reach such a region the future (and its probability)
is independent of both the finite prefix and the clock valuations. In
particular, visiting infinitely often memoryless regions prevents
converging phenomena as the one observed in
Figure~\ref{fig:pacman}. For each memoryless region $r$, we
distinguish a canonical valuation $v_r \in r$ defined by $v_r(x)=0$ or
$v_r(x)=M+1$ for every $x \in X$ (note that this valuation is uniquely
defined). If $q=(\ell,r) \in Q$ is such that $r$ is memoryless, we
distinguish the canonical configuration $s_q = (\ell,v_r)$ (or
$s_q=((\ell,r),v_r)$ in $\RA$).

The fact that memoryless regions are visited infinitely often
almost-surely is formalised in Lemma~\ref{lemma:eq1}. Then it remains
to show that knowing we visit infinitely often such a memoryless
region, we visit a reachable thick edge $e$ infinitely often with
probability $1$. To this end, we investigate the set of runs that
visit infinitely often memoryless regions and $e$, and we conclude
thanks to a judicious decomposition of this set and Borel-Cantelli
lemma, this is formalised in Lemma~\ref{lemma:eq2}.  A sketch of proof
is given for Lemma~\ref{lemma:eq1} and the complete proofs of both
lemmas are provided in Appendix~\ref{app:reactive},
page~\pageref{app:reactive}.

In order to formalise these ideas, we need to introduce further
notations. Let $s$ be a state of $\RA$, that we will take as initial.
If $e$ is a thick edge in $T$, and $q\in Q$, we write
$\mathfrak{R}^e(s)$ for the set of runs in $\RA$ that start in $s$ and
take $e$ infinitely often, and $\mathfrak{R}^{q}(s)$ for the set of
runs of $\RA$ that start in $s$ and visit $q$ infinitely often. In
particular, we write $\mathfrak{R}^{\source(e)}(s)$ for the set of
runs that start in $s$ and visit $\source(e)$ infinitely often (hence
along which $e$ is enabled infinitely often).

We fix a thick edge $e$ in $T$, and we let $\mathcal{Q}$ be the set of
pairs $q = (\ell,r)$ where $r$ is memoryless and $\mathcal{Q}'$ the
set of elements $q=(\ell,r)\in \mathcal{Q}$ such
that \[\Prob(\mathfrak{R}^{q}(s))>0 \quad\text{and}\quad
\Prob(\mathfrak{R}_0^{q,e}(s_q))>0\] where $\mathfrak{R}_0^{q,e}(s_q)$
is the set of runs that start from $s_q$ and take $e$ before any other
visit to $q$.

\begin{restatable}{lem}{equaun}
  \label{lemma:eq1}
Assuming the above notations,
\[
\Prob\left(\bigcup_{q \in \mathcal{Q}} \mathfrak{R}^{q}(s) \right) =
1\,.
\]
\end{restatable}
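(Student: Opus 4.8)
The plan is to show that almost every infinite run of $\RA$ starting from $s$ visits some memoryless region infinitely often. First I would observe that since $\RA$ is finite, every infinite run visits some set $S$ of locations infinitely often, and (after discarding a prefix) stays forever inside the set of locations reachable-within-and-recurrent. So it suffices to prove: for every location $q$ of $\RA$ with $\Prob(\mathfrak{R}^q(s))>0$, if $q$ is \emph{not} memoryless, then almost every run that visits $q$ infinitely often also visits some memoryless region infinitely often. Equivalently, I would argue that the set of runs that, from some point on, visit infinitely often only non-memoryless regions has probability $0$.

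The key mechanism is the following. Suppose a run visits a non-memoryless region $r=(\ell,r)$ at a state $(\ell,v)$; then there is some clock $x$ with $0<v(x)\le M$. Because $\A$ is reactive, the delay distribution $\mu_\ell$ is equivalent to Lebesgue measure, so with probability bounded below by a constant $\delta>0$ (independent of $v$, since $\mu_\ell([M+1,\infty))>0$ by equivalence with Lebesgue measure and the fact that $\mu_\ell$ is a probability measure) the next delay exceeds $M+1$, which pushes \emph{every} clock above $M$ — landing in a memoryless region. More carefully, I would fix a uniform lower bound: for each location $\ell$, $\mu_\ell([M+1,\infty)) =: \delta_\ell > 0$, and set $\delta = \min_\ell \delta_\ell > 0$. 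Then each time the run is at a state with location $\ell$, independently of the past, with probability at least $\delta$ the following transition takes a delay $\ge M+1$ and hence reaches a memoryless region (all clocks become $>M$; this is a memoryless region provided the target location's invariant permits it, which it does since $\A$ is reactive so invariants are trivial). Hence, conditioned on visiting any location infinitely often, by a Borel–Cantelli / second-moment-free argument (infinitely many independent trials each succeeding with probability $\ge\delta$), a memoryless region is reached infinitely often with probability $1$.

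To make this rigorous I would work with the cylinder decomposition of $\Prob_{\RA}$ and the Markov-like structure: for any finite prefix ending in a state with location $\ell$, the conditional probability that the next edge is taken after a delay in $[M+1,\infty)$ is $\int_{[M+1,\infty)\cap I(s)} p_{s+t}(e)\,d\mu_\ell(t)$ summed over enabled $e$, which equals $\mu_\ell([M+1,\infty)) \ge \delta$ since $I(s)=\IR_+$ (reactivity). Then for the event $\mathfrak{R}^q(s)$ with $q$ non-memoryless: each visit to $q$ gives an independent-enough chance $\ge\delta$ of immediately entering a memoryless region, so the probability of visiting $q$ infinitely often but a memoryless region only finitely often is $\le \lim_n (1-\delta)^n = 0$. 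Summing the (finitely many) non-memoryless locations $q$ and using that every run visits \emph{some} location infinitely often gives $\Prob\big(\bigcup_{q\in\mathcal Q}\mathfrak{R}^q(s)\big)=1$.

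The main obstacle I expect is the careful handling of the conditional-probability / independence bookkeeping: the successive visits to $q$ are not literally independent events in a product space, so I would phrase the argument via the tower property, showing $\Prob(\text{no memoryless region in next } \ge 1 \text{ step} \mid \mathcal F_k) \le 1-\delta$ on the event "currently at a location", and then invoke a Lévy-type zero–one law (or a direct estimate on cylinders of growing length) to conclude that the event "infinitely many such trials, all failing" is null. The uniformity of $\delta$ across all clock valuations — which is exactly what reactivity and equivalence-with-Lebesgue buy us, and what fails in Figure~\ref{fig:pacman} — is the crucial point that must be foregrounded.
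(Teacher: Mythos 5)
Your proposal is correct and follows essentially the same route as the paper: the paper bounds $\Prob(\text{no delay} > M \text{ between steps } n \text{ and } N)$ by $\bigl(\max_{\ell}\mu_\ell([0,M])\bigr)^{N+1-n}\to 0$ via iterated conditioning on the current location (exactly your uniform $\delta$ and tower-property argument), concludes that almost every run has infinitely many delays exceeding $M$, and observes that after such a delay every clock is either reset to $0$ or above $M$, hence the run lands in a memoryless region. Your reduction through "condition on each non-memoryless $q$ visited infinitely often" is an unnecessary detour, and the parenthetical "all clocks become $>M$" should account for clocks reset on the transition (they become $0$, which is still consistent with memorylessness), but neither affects correctness.
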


\begin{proof}[Sketch of proof]
  We notice that the set of runs that delay infinitely many times more
  than $M$ time units before taking a transition is a subset of
  $\bigcup_{q \in \mathcal{Q}} \mathfrak{R}^{q}(s)$. Indeed, if a run
  $\rho$ delays more than $M$ time units before taking the $n$-th
  transition then each clock is either reset on the $n$-th transition
  (hence its value is $0$), or its value exceeds $M$. Now, as for
  every $\ell\in L$, we have assumed $\mu_\ell$ is equivalent to the
  Lebesgue measure on $\IR_+$, it holds that $\mu_{\ell}([0,M])<1$.
  We can then prove that the probability of the set of runs that delay
  only finitely many times more than $M$ time units is zero, since $L$
  is finite, which concludes the proof.  
\end{proof}

\begin{restatable}{lem}{equadeux}
  \label{lemma:eq2}
  Assuming the above notations, for all $q \in \mathcal{Q}'$
  \begin{equation*}
    \Prob\left(\mathfrak{R}^e(s) \mid
      \mathfrak{R}^{q}(s)\right) = 1.
  \end{equation*}
\end{restatable}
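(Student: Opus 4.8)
The plan is to fix $q=(\ell,r)\in\mathcal{Q}'$ and work inside the set $\mathfrak{R}^{q}(s)$ of runs visiting $q$ infinitely often. The key reduction is that, because $r$ is memoryless, each time the run reaches $q$ it is in fact at a configuration whose probabilistic future is exactly that of the canonical configuration $s_q=(\ell,v_r)$ — the valuation components are all either $0$ or $>M$, so the region data determines the future distribution entirely (this uses reactivity and the definition of memoryless regions). Thus a run in $\mathfrak{R}^{q}(s)$ can be decomposed at its successive visits to $q$ into an infinite concatenation of independent and identically distributed ``excursions'', each distributed like a fresh run from $s_q$ up to its first return to $q$. Let $p \egdef \Prob(\mathfrak{R}_0^{q,e}(s_q))$, the probability that such an excursion takes the thick edge $e$ before returning to $q$; by the definition of $\mathcal{Q}'$ we have $p>0$.

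Next I would set up a Borel–Cantelli argument on these excursions. For $n\ge 1$ let $A_n$ be the event (inside $\mathfrak{R}^{q}(s)$) that the $n$-th excursion from $q$ contains an occurrence of $e$. By the i.i.d.\ structure just described, conditionally on $\mathfrak{R}^{q}(s)$ the events $A_n$ are independent and each has probability $p>0$, hence $\sum_n \Prob(A_n\mid\mathfrak{R}^{q}(s))=\sum_n p=\infty$. The second Borel–Cantelli lemma then gives that, conditionally on $\mathfrak{R}^{q}(s)$, almost surely infinitely many $A_n$ occur, i.e.\ $e$ is taken infinitely often. Since $A_n$ infinitely often is contained in $\mathfrak{R}^e(s)$, this yields
\[
\Prob\!\left(\mathfrak{R}^e(s)\mid\mathfrak{R}^{q}(s)\right)=1\,.
\]

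The main obstacle, and the step requiring care, is making the ``i.i.d.\ excursions'' decomposition rigorous at the level of the measure $\Prob_{\RA}$: one must justify that conditioning on the finite prefix leading to the $n$-th visit of $q$ leaves the future law equal to $\Prob_{\RA}$ from $s_q$, and that this holds simultaneously for the return-to-$q$ structure so that the excursions really are independent. This is where the memoryless property is essential — without it the clock valuation at a visit to $q$ carries information and the excursions are neither identically distributed nor independent (exactly the pathology of Figure~\ref{fig:pacman}). Concretely I would phrase this via the strong-Markov-type property of the stochastic process underlying $\Prob_{\RA}$ at the stopping times given by successive hits of $q$, restricted to the event $\mathfrak{R}^{q}(s)$ where infinitely many such hits occur; Lemma~\ref{lemma:eq1} guarantees this event, together with the analogous events over all $q\in\mathcal Q$, carries full probability, so the conditioning is well defined. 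The remaining bookkeeping (that each excursion has positive probability $p$ of seeing $e$, independent of the past) is then routine given reactivity and thickness of $e$.
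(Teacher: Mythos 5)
Your plan matches the paper's proof essentially step for step: the paper also defines the events $\mathfrak{R}_k^{q,e}(s)$ (``$e$ occurs after exactly $k$ visits to $q$''), shows they are conditionally independent given $\mathfrak{R}^{q}(s)$ each with probability $\Prob(\mathfrak{R}_0^{q,e}(s_q))>0$, and concludes by the second Borel--Cantelli lemma. The ``strong Markov at successive hits of $q$'' step you flag as the delicate one is exactly what the paper formalises, via an explicit product decomposition of the relevant cylinders over basic sets together with the valuation-independence lemma for memoryless regions (and the auxiliary fact that $\Prob(\mathfrak{R}^{q}_{\ge 1}(s_q))=1$ for $q\in\mathcal{Q}'$).
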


Now assuming Lemma~\ref{lemma:eq1} and Lemma~\ref{lemma:eq2}, we will
prove Proposition~\ref{prop:fair-reactive}.\bigskip

\noindent\emph{Proof of Proposition~\ref{prop:fair-reactive}.}
  We want to prove that the probability of being fair is $1$, hence we
  want to prove that for every thick edge $e$ with
  $\Prob(\mathfrak{R}^{\source(e)}(s))>0$,
  \begin{equation*}
    \Prob\left(\mathfrak{R}^e(s) \mid
      \mathfrak{R}^{\source(e)}(s)\right) = 1\,.
  \end{equation*}
  Let $e$ be a thick edge with
  ${\Prob(\mathfrak{R}^{\source(e)}(s))>0}$.  By
  Lemma~\ref{lemma:eq1}, we have that
  \begin{equation}
    \label{eq2}
    \Prob\left(\bigcup_{q \in \mathcal{Q}} \mathfrak{R}^{q}(s) \right)
    = 1,
  \end{equation}
  and by Lemma~\ref{lemma:eq2}, we have that
  \begin{equation}
    \label{eq1}
    \Prob\left(\mathfrak{R}^e(s) \mid
        \mathfrak{R}^{q}(s)\right) = 1
  \end{equation}
  for any $q\in \mathcal{Q}'$.
  \medskip
  Applying Lemma~\ref{petitlemmes} (point 3.), we deduce from
  Equation~$(\ref{eq1})$ that
  \begin{equation}
    \Prob\left(\mathfrak{R}^e(s) \mid \bigcup_{q \in \mathcal{Q}'}
      \mathfrak{R}^{q}(s)\right) = 1
    \label{eq3}
  \end{equation}
  and applying Lemma~\ref{petitlemmes} (point 1.), we deduce from
  Equation~$(\ref{eq2})$ that:
  \begin{equation}
    \Prob\left(\bigcup_{q \in \mathcal{Q}} \mathfrak{R}^{q}(s) \mid
      \mathfrak{R}^{\mathsf{source}(e)}(s) \right) = 1.
    \label{eq4}
  \end{equation}
  Moreover, we can easily show that
  \begin{equation}
    \Prob\left(\bigcup_{q \in \mathcal{Q}} \mathfrak{R}^{q}(s)\mid
      \mathfrak{R}^{\mathsf{source}(e)}(s) \right)
    =
    \Prob\left(\bigcup_{q \in \mathcal{Q}'} \mathfrak{R}^{q}(s) \mid
      \mathfrak{R}^{\mathsf{source}(e)}(s) \right)\,.
    \label{eq5}
  \end{equation}
  Indeed, we just have to prove that
  \begin{equation*}
    \Prob\left(\bigcup_{q \in \mathcal{Q}} \mathfrak{R}^{q}(s)\cap
      \mathfrak{R}^{\mathsf{source}(e)}(s) \right)
    = \Prob\left(\bigcup_{q \in \mathcal{Q}'} \mathfrak{R}^{q}(s) \cap
      \mathfrak{R}^{\mathsf{source}(e)}(s) \right)
  \end{equation*}
  and it is thus sufficient to prove that
  \[
  \Prob\left(\bigcup_{q \in \mathcal{Q}\backslash \mathcal{Q}'}
    \mathfrak{R}^{q}(s)\cap \mathfrak{R}^{\mathsf{source}(e)}(s)
  \right)= 0\,.
  \] 
  However, if $q\in \mathcal{Q}\backslash \mathcal{Q}'$, we have
  $\Prob(\mathfrak{R}^{q}(s))=0$ or
  $\Prob(\mathfrak{R}_0^{q,e}(s_q))=0$.  Now, if
  $\Prob(\mathfrak{R}^{q}(s))=0$, we have
  \[
  \Prob\left(\mathfrak{R}^{q}(s)\cap
    \mathfrak{R}^{\mathsf{source}(e)}(s) \right)=0
  \]
  and if $\Prob(\mathfrak{R}_0^{q,e}(s_q))=0$, we also have
  \[\Prob\left(\mathfrak{R}^{q}(s)\cap
    \mathfrak{R}^{\mathsf{source}(e)}(s) \right)=0\,.
  \]
  We therefore deduce from Equations~$(\ref{eq4})$ and $(\ref{eq5})$
  that
  \begin{equation}
    \Prob\left(\bigcup_{q \in \mathcal{Q}'} \mathfrak{R}^{q}(s) \mid
      \mathfrak{R}^{\mathsf{source}(e)}(s) \right) = 1\,.
    \label{eq6}
  \end{equation}
  Applying Lemma~\ref{petitlemmes} (point 2.), we get the expected
  result from Equations~$(\ref{eq3})$ and $(\ref{eq6})$:
  \[
  \Prob\left(\mathfrak{R}^e(s) \mid
    \mathfrak{R}^{\mathsf{source}(e)}(s)\right) = 1\,.\eqno{\qEd}
  \]

\subsubsection{Extension to weak reactive stochastic timed automata.}
We now extend the almost-sure fairness from the subclass of reactive
stochastic timed automata to the larger class of \emph{weak reactive}
stochastic timed automata, defined as follows.
  
\begin{defi}
  A stochastic timed automaton $\langle \A,\mu,w \rangle$
  with $\A = (L,X,E,{\mathcal I},{\mathcal L})$
    is said to be \emph{weak reactive} whenever $L$ is the
  disjoint union of sets $L_u$ and $L_b$ such that
    $\ell \in L_u$ if and only if $I(\ell,v)$ is unbounded for all $v$
  such that $v \models \mathcal{I}(\ell)$, and:
  \begin{itemize}
  \item for every pair $s=(\ell,v)$, $s'=(\ell,v')$ satisfying for
    every $x\in X$, $v(x)=v'(x)$ or $\min(v(x),v'(x))>M$,
    \[\mu_s=\mu_{s'}\,;\]
  \item there exists $0<\lambda_0\le 1$ such that for every $\ell \in
    L_u$, for every $v \models \mathcal{I}(\ell)$, we have that
    \[\mu_{(\ell,v)}\left(\left[M,+\infty\right[\,\right) \ge
    \lambda_0\,;\]
  \item there exist $0<\lambda_1\le 1$ and $N\ge 1$ such that for any
    $\ell\in L_b$, for every $v \models \mathcal{I}(\ell)$, we have
    that
    \[\Prob_{\A}\Big(\bigcup_{(e_1,\dots,e_N)\in
      E_u}\path{(\ell,v),e_1,\dots,e_N}\Big)\ge \lambda_1\] where
    $E_u=\{(e_1,\dots,e_N) \mid \target(e_i)\in L_u\text{\ for some\ }
    1\le i< N\}$.
  \end{itemize}
\end{defi}

\noindent It is obvious that the class of reactive stochastic timed automata is
a subclass of weak reactive stochastic timed automata. In fact, the
main difference between these classes of automata lies on the
existence of some states $s$ such that $I(s)$ is bounded.
  
The proof of almost-sure fairness of reactive stochastic timed
automata is based on two lemmas: Lemma~\ref{lemma:eq1} and
Lemma~\ref{lemma:eq2}. The proof of Lemma~\ref{lemma:eq2} works in the
same way for weak reactive stochastic timed automata, using the fact
that for every pair $s=(\ell,v)$, $s'=(\ell,v')$ satisfying for every
$x\in X$, $v(x)=v'(x)$ or $\min(v(x),v'(x))>M$, we have
$\mu_s=\mu_{s'}$. The proof of Lemma~\ref{lemma:eq1} can also be
adapted in view of properties of weak reactive stochastic timed
automata in order to obtain the following lemma:

\begin{restatable}{lem}{weakreactive}
  \label{lemma:weakreactive}
  Let $\A$ be a weak reactive stochastic timed automaton and $s$ be a
  state of $\A$. Then 
  \[
  \Prob\left(\bigcup_{q \in \mathcal{Q}} \mathfrak{R}^{q}(s) \right)
  = 1\,.
  \]
\end{restatable}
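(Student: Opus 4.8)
The plan is to mimic the proof of Lemma~\ref{lemma:eq1}: it suffices to show that almost every run enters a memoryless region infinitely often, and for that it is enough to show that almost every run takes infinitely many transitions that are preceded by a delay of at least $M$. Indeed, if a transition resetting a set $Y$ of clocks is preceded by a delay $\ge M$, then in the resulting state every clock is either $0$ (if it belongs to $Y$) or has value $\ge M$ (otherwise), so the target state lies in a memoryless region; hence the event $E^{\infty}$ that infinitely many transitions are preceded by a delay $\ge M$ is contained in $\bigcup_{q \in \mathcal{Q}} \mathfrak{R}^{q}(s)$, and since $L \times R_{\A}$ is finite it is enough to prove $\Prob(E^{\infty}) = 1$. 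In the reactive case this was immediate from $\mu_{\ell}([0,M]) < 1$; the point of the weak-reactive hypotheses is precisely to recover such a bound even though in locations of $L_b$ the available delays may all lie below $M$.

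First I would cut the transition sequence of a run into consecutive blocks $B_1, B_2, \dots$ of length $N$, where $N$ and the constants $\lambda_0,\lambda_1\in(0,1]$ are those from the definition of weak reactive, and let $G_j$ be the event ``some transition inside $B_j$ is preceded by a delay $\ge M$''. Writing $\mathcal{F}_{j}$ for the $\sigma$-algebra generated by the first $jN$ transitions, the core claim is that, almost surely, $\Prob(G_j \mid \mathcal{F}_{j-1}) \ge \lambda_0 \lambda_1$. To see this, condition on the state $(\ell,v)$ reached at the beginning of block $B_j$. If $\ell \in L_u$, then by the second bullet of the definition the first delay of the block is $\ge M$ with probability at least $\lambda_0$, so $G_j$ holds with conditional probability at least $\lambda_0$. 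If $\ell \in L_b$, the third bullet gives probability at least $\lambda_1$ that the first $N$ edges taken from $(\ell,v)$ form a sequence in $E_u$, i.e. that some target among the first $N-1$ edges of the block lies in $L_u$; letting $\sigma \le N-1$ be the first position with target in $L_u$ (a stopping time), the state reached after $\sigma$ edges has its location in $L_u$, so, using the recursive integral definition of $\Prob_{\A}$ over constrained symbolic paths (the next delay and the past being independent once the current state is fixed), the delay preceding edge $\sigma+1$ — still inside the block — is $\ge M$ with conditional probability at least $\lambda_0$. Combining, $\Prob(G_j \mid \mathcal{F}_{j-1}) \ge \lambda_0 \lambda_1$ in this case as well.

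With $\lambda := \lambda_0 \lambda_1 > 0$, iterating the claim yields $\Prob\bigl(\bigcap_{j=m}^{m+k} \overline{G_j}\bigr) \le (1-\lambda)^{k+1}$ for all $m,k$, whence $\Prob(\text{only finitely many } G_j \text{ hold}) = 0$; so $G_j$ occurs for infinitely many $j$ almost surely, which gives $\Prob(E^{\infty}) = 1$ and therefore the lemma. I expect the only delicate point to be the measure-theoretic bookkeeping in the case $\ell \in L_b$ — making rigorous, in terms of the explicit integral definition of $\Prob_{\A}$, the strong-Markov-type statement ``conditioned on the past and on having reached an $L_u$-state at the stopping time $\sigma$, the next delay is $\ge M$ with probability at least $\lambda_0$'' — but this is exactly the kind of computation already performed for Lemma~\ref{lemma:eq1} and in the proof of Proposition~\ref{prop:proba_measure}.
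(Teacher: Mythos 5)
Your proposal is correct and follows essentially the same route as the paper: both reduce the lemma to showing that almost surely infinitely many transitions are preceded by a delay of (at least) $M$, both obtain the uniform per-window bound $\lambda_0\lambda_1$ over blocks of length $N$ by the same case split on $L_u$ versus $L_b$ (your stopping time $\sigma$ playing the role of the paper's set $F_u$ of minimal prefixes reaching $L_u$), and both conclude by geometric decay. The only differences are presentational --- you phrase the key estimate as a conditional-probability/filtration statement where the paper unfolds the explicit integrals --- and, like the paper, you gloss over the boundary case of a delay exactly equal to $M$ (memoryless regions require non-reset clocks to be strictly above $M$, while the hypothesis only bounds $\mu_{(\ell,v)}([M,+\infty))$).
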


Thanks to these two lemmas, we deduce the almost-sure fairness for weak reactive automata as in the case of reactive  stochastic timed automata:

\begin{prop}
  \label{prop:fair-weakreactive}
  Let $\A$ be a weak reactive stochastic timed automaton and $s_0$ be a
  state of $\A$. Then $\Prob_{\A}(s \models \mathsf{fair}) = 1$.
\end{prop}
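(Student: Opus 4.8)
The plan is to follow the proof of Proposition~\ref{prop:fair-reactive} line by line, substituting Lemma~\ref{lemma:weakreactive} for Lemma~\ref{lemma:eq1}. First I would invoke Corollary~\ref{prop:proba-A-RA} to reduce the statement to showing that the stochastic timed region automaton $\RA$ is almost-surely fair from $\iota(s)$; that is, keeping all the notations $\mathcal{Q}$, $\mathcal{Q}'$, $\mathfrak{R}^{q}(\cdot)$, $\mathfrak{R}^{e}(\cdot)$, $s_q$ introduced just before Lemma~\ref{lemma:eq1}, I must prove that for every thick edge $e$ of $\RA$ with $\Prob(\mathfrak{R}^{\source(e)}(s))>0$ one has $\Prob(\mathfrak{R}^{e}(s)\mid\mathfrak{R}^{\source(e)}(s))=1$. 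Fixing such an $e$, the argument rests on exactly two facts: (a) memoryless regions are almost-surely visited infinitely often, i.e. $\Prob(\bigcup_{q\in\mathcal{Q}}\mathfrak{R}^{q}(s))=1$, which is the content of Lemma~\ref{lemma:weakreactive}; and (b) once a fixed memoryless configuration is visited infinitely often, $e$ is taken infinitely often with probability one, i.e. $\Prob(\mathfrak{R}^{e}(s)\mid\mathfrak{R}^{q}(s))=1$ for every $q\in\mathcal{Q}'$, which is Lemma~\ref{lemma:eq2}. These are then glued together exactly as in the reactive case, using Lemma~\ref{petitlemmes}: point~3 to pass from the individual $q\in\mathcal{Q}'$ to their union, point~1 to condition on $\mathfrak{R}^{\source(e)}(s)$, the elementary observation that $\Prob(\mathfrak{R}^{q}(s)\cap\mathfrak{R}^{\source(e)}(s))=0$ whenever $q\in\mathcal{Q}\setminus\mathcal{Q}'$ (since then either $\Prob(\mathfrak{R}^{q}(s))=0$ or $\Prob(\mathfrak{R}_0^{q,e}(s_q))=0$) to replace $\mathcal{Q}$ by $\mathcal{Q}'$, and finally point~2 to conclude.

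The reuse of Lemma~\ref{lemma:eq2} is legitimate because its proof uses only one property of the measure family $\mu$, namely that $\mu_s=\mu_{s'}$ whenever $v$ and $v'$ agree on all clocks that are $\le M$ and both exceed $M$ on the remaining ones, and this is precisely the first bullet of the definition of weak reactivity. So essentially the only new work is Lemma~\ref{lemma:weakreactive}, i.e. adapting Lemma~\ref{lemma:eq1}, and I expect this to be the main obstacle. In the reactive case one observed that a run delaying more than $M$ time units before some transition necessarily lands in a memoryless region on that transition (each clock is then either reset, hence $0$, or has been running for more than $M$, hence exceeds $M$), and that almost surely infinitely many delays exceed $M$ since $\mu_\ell([0,M])<1$ for every $\ell$. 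The obstruction for weak reactive automata is that at locations of $L_b$ the set of available delays $I(s)$ may be bounded by a constant below $M$, so that reasoning breaks down there.

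To repair it I would split locations into $L_u$ and $L_b$. The third weak-reactivity condition ensures that from any location of $L_b$ one reaches $L_u$ within $N$ steps with probability at least $\lambda_1>0$, so by a conditional Borel--Cantelli (L\'evy) argument every infinite run visits $L_u$ infinitely often with probability one; and the second condition ensures that at every visit to a location of $L_u$ a delay $\ge M$ is taken with probability at least $\lambda_0>0$, so, again by the L\'evy form of the second Borel--Cantelli lemma, almost surely such a long delay from a location of $L_u$ occurs infinitely often. On each transition following such a delay the target region is memoryless, and since there are only finitely many memoryless regions, one of them is visited infinitely often on this almost-sure event; this yields $\Prob(\bigcup_{q\in\mathcal{Q}}\mathfrak{R}^{q}(s))=1$. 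Feeding this back into the scheme of the first paragraph gives $\Prob_{\A}(s\models\mathsf{fair})=1$, which is the claim.
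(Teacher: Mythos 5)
Your proposal is correct and follows the paper's own route exactly: reduce to $\RA$, replace Lemma~\ref{lemma:eq1} by Lemma~\ref{lemma:weakreactive}, keep Lemma~\ref{lemma:eq2} (whose proof, as you note, only uses the first bullet of the weak-reactivity definition), and glue the pieces together with Lemma~\ref{petitlemmes} verbatim as in the proof of Proposition~\ref{prop:fair-reactive}. The only cosmetic difference is inside the proof of Lemma~\ref{lemma:weakreactive}: you first establish that $L_u$ is visited infinitely often and then apply a conditional Borel--Cantelli argument at those visits, whereas the paper directly bounds by $1-\lambda_0\lambda_1$ the probability that no delay exceeds $M$ within any window of $N$ consecutive steps and lets this bound decay geometrically; the two arguments are equivalent.
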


We end up this subsection by exhibiting some examples of weak reactive
stochastic timed automata:

\begin{figure}[h]
  \begin{center}
    \begin{tikzpicture}[yscale=.85]
      \everymath{\scriptstyle}
      \path[use as bounding box] (-6,-1.4) -- (6,.7);
      \path (0,0) node[draw,circle,inner sep=2pt] (q0) {$\ell_0$};
           \path (0,-1) node[] (q0b) {};
      
      \path (3,0) node[draw,circle,inner sep=2pt] (q1) {$\ell_1$};
      
      \path (6,0) node[draw,circle,inner sep=2pt] (q3) {$\ell_2$};
      \path (6,-.6) node[] (q3b) {$y < 1$};
      \path (6,.7) node[] (q3a) {$\{p_2\}$};
      
      \path (-3,0) node[draw,circle,inner sep=2pt] (q4) {$\ell_3$};
           
      \path (-6,0) node[draw,circle,inner sep=2pt] (q6) {$\ell_4$};
      \path (-6,0.7) node[] (q6a) {$\{p_1\}$};
           
      \draw[arrows=-latex'] (q0b) -- (q0);
      
      \draw[arrows=-latex'] (q0) -- (q1) node[pos=.5, above,sloped]
      {$e_1,~y<1$};
      
      \draw[arrows=-latex'] (q1) -- (q3) node[pos=.5, above,sloped]
      {$e_2,~y=1$} node[pos=.5, below, sloped] {$y:=0$};
      
      \draw[arrows=-latex'] (q3) .. controls +(210:3.5cm).. (q0)
      node[pos=.5, above,sloped] {$e_0,~x>1$} node[pos=.5, below, sloped]
      {$x:=0$};
      
      \draw[arrows=-latex'] (q0) -- (q4) node[pos=.5, above,sloped]
      {$e_3,~1<y<2$};
      
      \draw[arrows=-latex'] (q4) -- (q6) node[pos=.5, above,sloped]
      {$e_4,~y=2$} node[pos=.5, below, sloped] {$y:=0$};
      
      \draw[arrows=-latex'] (q6) .. controls +(330:3.5cm).. (q0)
      node[pos=.5, above,sloped] {$e_5,~x>2$} node[pos=.5, below,
      sloped] {$x:=0;y:=0$};
    \end{tikzpicture}
  \end{center}
  \caption{Automaton $\A^1_{\mathsf{unfair}}$.}
  \label{fig:unfair1}
\end{figure}

\begin{exa}\hfill
  \begin{enumerate}
  \item Let $\A$ be a stochastic timed automaton such that for any
    $\ell\in L_u$, there exists a probability distribution $\mu_\ell$
    over $\IR_+$, equivalent to the Lebesgue measure, such that for
    every $v \in \IR_+^X$, $\mu_{(\ell,v)}=\mu_{\ell}$.  If $L_b$ does
    not contain any cycle then $\A$ is weak reactive.
  \item Let $\lambda>0$. The modification of $\A_{\mathsf{unfair}}$
    given on Figure~\ref{fig:unfair1}, where $\mu_{(\ell_4,v)}$ is the
    exponential distribution of parameter $\lambda$, is weak reactive.

    The other modification of $\A_{\mathsf{unfair}}$ given on
    Figure~\ref{fig:unfair2}, where $\mu_{(\ell_2,v)}$ is the
    exponential distribution of parameter $\lambda$, is not weak
    reactive. \corn
  \end{enumerate}
\end{exa}

\begin{figure}[h]
  \begin{center}
    \begin{tikzpicture}[yscale=.85]
      \everymath{\scriptstyle}
      \path[use as bounding box] (-6,-1.4) -- (6,.7);
      \path (0,0) node[draw,circle,inner sep=2pt] (q0) {$\ell_0$};
         \path (0,-1) node[] (q0b) {};
      
      \path (3,0) node[draw,circle,inner sep=2pt] (q1) {$\ell_1$};
      
      \path (6,0) node[draw,circle,inner sep=2pt] (q3) {$\ell_2$};
         \path (6,.7) node[] (q3a) {$\{p_2\}$};
      
      \path (-3,0) node[draw,circle,inner sep=2pt] (q4) {$\ell_3$};
         
      \path (-6,0) node[draw,circle,inner sep=2pt] (q6) {$\ell_4$};
      \path (-6,0.7) node[] (q6a) {$\{p_1\}$};
      \path (-6,-.6) node[] (q6b) {$y < 1$};
      
      \draw[arrows=-latex'] (q0b) -- (q0);
      
      \draw[arrows=-latex'] (q0) -- (q1) node[pos=.5, above,sloped]
      {$e_1,~y<1$};
      
      \draw[arrows=-latex'] (q1) -- (q3) node[pos=.5, above,sloped]
      {$e_2,~y=1$} node[pos=.5, below, sloped] {$y:=0$};
      
      \draw[arrows=-latex'] (q3) .. controls +(210:3.5cm).. (q0)
      node[pos=.5, above,sloped] {$e_0,~x>1$} node[pos=.5, below, sloped]
      {$x:=0;y:=0$};
      
      \draw[arrows=-latex'] (q0) -- (q4) node[pos=.5, above,sloped]
      {$e_3,~1<y<2$};
      
      \draw[arrows=-latex'] (q4) -- (q6) node[pos=.5, above,sloped]
      {$e_4,~y=2$} node[pos=.5, below, sloped] {$y:=0$};
      
      \draw[arrows=-latex'] (q6) .. controls +(330:3.5cm).. (q0)
      node[pos=.5, above,sloped] {$e_5,~x>2$} node[pos=.5, below,
      sloped] {$x:=0$};
    \end{tikzpicture}
  \end{center}
  \caption{Automaton $\A^2_{\mathsf{unfair}}$.}
  \label{fig:unfair2}
\end{figure}

\subsubsection*{Discussion on Zenoness.}

A side-result of the proofs of Lemmas \ref{lemma:eq1} and
\ref{lemma:weakreactive} is that the set of Zeno runs in a (weak)
reactive stochastic timed automaton is negligible, which implies in
particular that the almost-sure non-Zenoness problem is trivial.

\begin{restatable}{prop}{reactivenonzeno}
  Let $\A$ be a (weak) reactive stochastic timed automaton, and $s$ be
  a state of $\A$. Then $\Prob_{\A}(s \models \mathsf{Zeno})=0$.
\end{restatable}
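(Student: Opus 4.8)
The plan is to extract the statement directly from (the arguments behind) Lemma~\ref{lemma:eq1} and Lemma~\ref{lemma:weakreactive}. Recall that a run $\rho = s_0 \xrightarrow{\tau_1,e_1} s_1 \xrightarrow{\tau_2,e_2} \cdots$ is Zeno exactly when $\sum_{i\ge 1}\tau_i<\infty$; in particular, any run along which infinitely many delays $\tau_i$ exceed some fixed constant $c>0$ is non-Zeno. Hence it suffices to prove that, in a (weak) reactive stochastic timed automaton, with probability~$1$ a run performs infinitely many such ``long'' delays. In both cases the core tool is the \emph{conditional} (L\'evy) version of the second Borel--Cantelli lemma: writing $(\mathcal{F}_n)_n$ for the filtration generated by the successive steps of the run, if $\Prob_\A(\tau_{n+1}\ge c\mid\mathcal{F}_n)\ge p_0>0$ for infinitely many $n$ (uniformly), then $\sum_n\Prob_\A(\tau_{n+1}\ge c\mid\mathcal{F}_n)=+\infty$ almost surely, so almost surely infinitely many $\tau_n$ are at least $c$, and thus $\sum_i\tau_i=+\infty$ almost surely.

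For the reactive case, each $\mu_\ell$ ($\ell\in L$) is equivalent to the Lebesgue measure, so $\mu_\ell([1,+\infty))>0$, and since $L$ is finite we may set $p_0:=\min_{\ell\in L}\mu_\ell([1,+\infty))>0$. Whatever the location reached after $n$ steps, the conditional probability that the next delay is at least $1$ is then at least $p_0$; the conditional Borel--Cantelli argument above yields $\sum_i\tau_i=+\infty$ almost surely, i.e.\ $\Prob_\A(s\models\Zeno)=0$.

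For the weak reactive case, first use the third clause of the definition: from any state whose location lies in $L_b$, the run reaches a location of $L_u$ within $N$ steps with probability at least $\lambda_1$; iterating this, the probability that the run stays inside $L_b$ from step~$n$ on is at most $(1-\lambda_1)^k$ for every $k$, hence is~$0$. Therefore, with probability~$1$, the run visits $L_u$ infinitely often. By the second clause, at each visit to an $L_u$-state the next delay exceeds $M$ with probability at least $\lambda_0$ (we may assume $M\ge 1$; when $M=0$, clause~(1) leaves only finitely many distinct delay distributions, and equivalence with the Lebesgue measure on the unbounded sets $I(s)$, $\ell\in L_u$, again provides a uniform positive lower bound for delays $\ge 1$). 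Conditioning on the almost-sure event that $L_u$ is visited infinitely often, the conditional Borel--Cantelli lemma produces infinitely many delays $\ge M\ge 1$ almost surely, whence $\sum_i\tau_i=+\infty$ almost surely and $\Prob_\A(s\models\Zeno)=0$. In particular, the almost-sure non-Zenoness problem is answered positively on this whole class.

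The main obstacle is that the ``long delay'' events are not independent --- the locations visited along a run are themselves random --- so one cannot invoke the usual (independent) second Borel--Cantelli lemma and must argue with its conditional form (or, equivalently, dominate the relevant delays by an i.i.d.\ sequence of strictly positive variables). In the weak-reactive case there is the extra bookkeeping of combining the geometric escape bound $\lambda_1$ from $L_b$ with the delay bound $\lambda_0$ from $L_u$, and of checking that these bounds are genuinely uniform over the (infinitely many) states of the automaton; both are guaranteed by the defining clauses of weak reactivity together with condition $(\star)$.
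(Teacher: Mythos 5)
Your proof is correct and follows essentially the same route as the paper's: both reduce non-Zenoness to the almost-sure occurrence of infinitely many delays exceeding a fixed positive constant, and both extract the required uniform lower bounds ($p_0$ in the reactive case, $\lambda_0,\lambda_1$ in the weak-reactive case) from the defining clauses exactly as in the proofs of Lemmas~\ref{lemma:eq1} and~\ref{lemma:weakreactive}. The only difference is one of packaging: the paper simply reuses the explicit geometric-decay estimate $\Prob(\mathfrak{D}^{\le M}_{k,n}(s))\to 0$ already computed in those proofs, whereas you re-derive the same conclusion via the conditional second Borel--Cantelli lemma (splitting the weak-reactive case into two stages instead of using the combined $\lambda_0\lambda_1$ bound), and your explicit remark on the degenerate case $M=0$ is a small point the paper leaves implicit.
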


Intuitively, this is because almost-surely, time will increase by a
lower-bounded amount. The proof requires details of those of Lemmas
\ref{lemma:eq1} and \ref{lemma:weakreactive}, and is therefore
postponed to the Appendix, on page~\pageref{app:reactivenonzeno}.

\section{Decidability and complexity results}
\label{sec:algo}

We will use the thick graph construction for deciding 
almost-sure model-checking problem. For safety properties, we will use
the characterisation given in Theorem~\ref{th:safety}, whereas we will
use condition $(\S)$ for more general properties, under the assumption
that fairness is almost-sure.

\begin{lem}
  \label{lemma:complex}
  Let $\A$ be a timed automaton over \AP, $s$ a state of $\A$, and $P$
  be a B\"uchi or Muller property over \AP. Assume that
  $\thickgraph(\A)$ has size $f(\A)$, then we can decide in
  non-deterministic $\log(f(\A))$-space whether condition $\S(\A,s,P)$
  holds.
\end{lem}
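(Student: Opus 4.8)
The plan is to decide $\S(\A,s,P)$ through its combinatorial reading: by the lemma preceding the statement, $\S(\A,s,P)$ holds precisely when every BSCC of $\thickgraph(\A)$ reachable from $[s]$ satisfies $P$. The first step is to turn ``a BSCC satisfies $P$'' into a local predicate on its vertices. Since $P$ is a B\"uchi (resp.\ Muller) property it is prefix-independent, and along any infinite run confined to a BSCC $C$ of $\thickgraph(\A)$ --- such runs exist, as $\thickgraph(\A)$ is non-blocking (every region of $\RA$ has a thick outgoing edge) and $C$ is strongly connected --- the set of labels seen infinitely often is exactly $\{\mathcal{L}(q)\mid q\in C\}$. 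Writing $F\subseteq\AP$ (resp.\ $\mathcal{F}\subseteq 2^{\AP}$) for the accepting set (resp.\ family) of $P$, it follows from the definition of a BSCC satisfying $P$ that $C$ satisfies $P$ iff some $q\in C$ has $F\cap\mathcal{L}(q)\neq\emptyset$ (resp.\ $\mathcal{L}(q)\in\mathcal{F}$). Call such a vertex \emph{good} and the others \emph{bad}; goodness is trivially testable. Thus $\neg\S(\A,s,P)$ becomes: there is a vertex $q$ of $\thickgraph(\A)$, reachable from $[s]$, lying in a BSCC all of whose vertices are bad.

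The second step is to decide this last statement in non-deterministic $\log(f(\A))$-space; by the Immerman--Szelepcs\'enyi theorem (closure of non-deterministic logarithmic space under complementation), that also places $\S(\A,s,P)$ itself in non-deterministic $\log(f(\A))$-space, as required. I would work directly on the graph $\thickgraph(\A)$, whose vertices each fit in $O(\log f(\A))$ bits and whose edge relation is computable within this budget; between two given vertices, directed reachability is then decidable non-deterministically in $O(\log f(\A))$ space, hence so is non-reachability. The procedure for $\neg\S(\A,s,P)$ guesses a vertex $q$, verifies that $[s]$ reaches $q$, and then verifies the conjunction of ``$q$ lies in a BSCC'' and ``every vertex reachable from $q$ is bad''. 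The first conjunct is the universal statement ``for every $q'$, if $q$ reaches $q'$ then $q'$ reaches $q$''; its negation, ``there is a $q'$ that $q$ reaches but from which $q$ cannot be reached'', is in \NLOGSPACE (guess $q'$; verify the first reachability by a path search; verify the non-reachability by the \NLOGSPACE algorithm for graph non-reachability), so the first conjunct is in \coNLOGSPACE, which coincides with \NLOGSPACE. The negation of the second conjunct, ``some good vertex is reachable from $q$'', is plainly in \NLOGSPACE, so the second conjunct is as well. Since \NLOGSPACE is closed under intersection, and one may freely prepend the non-deterministic guess of the $O(\log f(\A))$-bit vertex $q$, the whole of $\neg\S(\A,s,P)$ is in \NLOGSPACE (as a function of $f(\A)$), and hence so is $\S(\A,s,P)$.

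I expect the only genuinely delicate point to be the ``$q$ lies in a BSCC'' test: it interleaves a universal quantifier over vertices with an existential reachability condition, so staying inside the space budget really does rely on closure of non-deterministic logarithmic space under complementation, rather than on any attempt to build a BSCC decomposition explicitly. The rest is routine; the only other place needing care is the first step, where one uses prefix-independence of B\"uchi/Muller properties, together with the fact that a run confined to a BSCC of $\thickgraph(\A)$ visits all of its vertices infinitely often, to reduce acceptance by a BSCC to a per-vertex predicate.
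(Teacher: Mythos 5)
Your proof is correct and follows essentially the same route as the paper's: work directly in $\thickgraph(\A)$, guess a state $q$, and reduce everything to on-the-fly reachability queries in logarithmic space in $f(\A)$. You are simply more explicit about the two points the paper's terse sketch glosses over, namely turning ``the BSCC satisfies $P$'' into a per-vertex good/bad predicate (valid here because the paper's B\"uchi and Muller properties only constrain which labels occur infinitely often), and invoking Immerman--Szelepcs\'enyi for the universally quantified subchecks (``$q$ lies in a BSCC'' and the final complementation back to $\S(\A,s,P)$).
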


\begin{proof}
  The algorithm for checking condition $\S(\A,s,P)$ is the following:
  \begin{itemize}
  \item guess a state $q$ of $\thickgraph(\A)$;
  \item check that there is a (thick) path from $[s]$ to $q$ in
    $\thickgraph(\A)$;
  \item check that $q$ belongs to a BSCC of $\thickgraph(\A)$ which
    satisfies property $P$.
  \end{itemize}
  Note that as $P$ is prefix-independent, we can assume the thick path
  from $[s]$ to $q$ is simple.  All this can be done on-the-fly and in
  non-deterministic $\log(f(\A))$-space. Note that checking $P$ in a
  BSCC can be done in $\log(f(\A))$-space as well. 
\end{proof}

\begin{lem}
  \label{lemma:complex2}
  Let $\A$ be a timed automaton over \AP, $s$ a state of $\A$, and $P$
  be a simple safety property over \AP. Assume that $\thickgraph(\A)$
  has size $f(\A)$, then we can decide in non-deterministic
  $\log(f(\A))$-space whether there is an infinite path $\pi$ from
  $\iota(s)$ in $\thickgraph(\A)$ which does not satisfy~$P$.
\end{lem}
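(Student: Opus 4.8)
The plan is to reduce the statement to an ordinary graph-reachability question in $\thickgraph(\A)$ and then apply the standard non-deterministic logarithmic-space reachability procedure, exactly as in the proof of Lemma~\ref{lemma:complex}.

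The first step is to record that $\thickgraph(\A)$ is \emph{non-blocking}, i.e.\ every one of its states has at least one outgoing (thick) edge. Fix a region $q$ of $\RA$ and some $s \in q$; since $\A$, and hence $\RA$, is non-blocking, $I(s) \neq \emptyset$. If $\dim(I(s)) = 1$ then, $I(s) = \bigcup_e I(s,e)$ being a finite union of intervals, some enabled edge $e$ must satisfy $\dim(I(s,e)) = 1 = \dim(I(s))$ and is therefore thick; if $\dim(I(s)) = 0$ then $I(s)$ is a non-empty finite set of points, so every enabled edge $e$ satisfies $\dim(I(s,e)) = 0 = \dim(I(s))$ and all enabled edges are thick. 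In both cases $q$ retains a thick outgoing edge. Consequently every finite path of $\thickgraph(\A)$ can be prolonged into an infinite one.

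The second step is to translate the property. As $P$ is a simple safety property characterised by some $F$, a path $q_0 \xrightarrow{e_1} q_1 \xrightarrow{e_2} \cdots$ of $\thickgraph(\A)$ fails to satisfy $P$ precisely when $\lambda(q_j) \notin F$ for some index $j$; call such a state \emph{bad}. I claim that there is an infinite path from $\iota(s)$ in $\thickgraph(\A)$ that violates $P$ if and only if some bad state is reachable from $\iota(s)$ in $\thickgraph(\A)$. The left-to-right direction is immediate, since a violating infinite path visits a bad state at a finite position. For the converse, take a simple finite path from $\iota(s)$ to a bad state $q$; since $\thickgraph(\A)$ is non-blocking it extends to an infinite path $\pi$, which still violates $P$ because safety properties are violated by a finite prefix and the bad state $q$ already lies on that prefix.

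The final step is to observe that this reachability condition is decidable in non-deterministic $\log(f(\A))$-space: guess a state $q$ of $\thickgraph(\A)$, check the purely local condition $\lambda(q) \notin F$, and then check that $q$ is reachable from $\iota(s)$ by guessing a path edge by edge, keeping in memory only the current state and a step counter bounded by $f(\A)$, and verifying on the fly that each guessed edge is a thick edge of $\RA$ (as in the proof of Lemma~\ref{lemma:complex}). This fits within $O(\log f(\A))$ space. I expect the only mildly delicate point to be the non-blocking property of $\thickgraph(\A)$ established in the first step, which is exactly what lets ``infinite violating path'' be replaced by the far simpler reachability statement; the rest is routine.
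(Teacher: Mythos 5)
Your proof is correct and follows essentially the same route as the paper's: guess a bad state, verify reachability from $[s]$ along thick edges in non-deterministic logarithmic space, and use the fact that a simple safety property is violated as soon as a bad state appears on a finite prefix. The only difference is that you explicitly verify that $\thickgraph(\A)$ is non-blocking (so the finite violating path extends to an infinite one), a point the paper's proof leaves implicit; your case analysis on $\dim(I(s))$ establishing this is correct.
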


\begin{proof}
  Here is a possible algorithm with the expected complexity:
  \begin{itemize}
  \item guess a state $q$ of $\thickgraph(\A)$;
  \item check that there is a (thick) path from $[s]$ to $q$ in
    $\thickgraph(\A)$;
  \item check that this finite path violates property $P$. 
  \end{itemize}
  Note that as $P$ is a simple safety property, we can assume that the
  path between $[s]$ and $q$ is simple. 
\end{proof}

We now state the following lemma, whose proof follows from the
definition of $\A \ltimes \B$ and from the definition of (weak)
reactiveness. This will imply that, under the mentioned conditions,
$\A \ltimes \B$ is almost-surely fair, which will allow to apply
Theorem~\ref{theo:specauto} to (weak) reactive stochastic timed
automata.

\begin{lem}
  Let $\A$ be a (weak) reactive stochastic timed automaton, and $\B$
  be a specification timed automaton. Then $\A \ltimes \B$ is a (weak)
  reactive stochastic timed automaton.
\end{lem}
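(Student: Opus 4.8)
The plan is to verify the three conditions in the definition of (weak) reactiveness directly for the product $\A \ltimes \B$, using the fact that $\B$ is a specification timed automaton. Recall from the construction of $\A \ltimes \B$ (page~\pageref{def:produit-TA-spec}) that the invariant of a location $(\ell,\mathsf{l})$ equals $\mathcal{I}(\ell)$, that the measure $\overline{\mu}_{(\ell,\mathsf{l})}$ equals $\mu_\ell$, and that the edges of $\A \ltimes \B$ are obtained by synchronising an edge $e$ of $\A$ with an edge $\mathsf{e}$ of $\B$ whose action label matches $\lambda(\ell)$; moreover, by completeness and determinism of $\B$, for every $s=(\ell,v)$ and every $\tau \in \IR_+$ there is exactly one edge $\mathsf{e}$ of $\B$ enabled from the current $\B$-location after delay $\tau$, so that (as noted in the remark following the product construction) $I(s,e) = \bigcup_{\mathsf{e}} I(((\ell,\mathsf{l}),(v\mathsf{v})),e_{\mathsf{e}})$ for every $\B$-state $\mathsf{v}$.

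First I would treat the reactive case. If $\A$ is reactive, then $I(s)=\IR_+$ for every state $s$; since the $\B$-part does not constrain delays (it is complete), $I(((\ell,\mathsf{l}),(v\mathsf{v}))) = I((\ell,v)) = \IR_+$, so $\A \ltimes \B$ is a reactive timed automaton. For the probabilistic requirement, set $\mu_{(\ell,\mathsf{l})} \egdef \mu_\ell$; this is a probability distribution over $\IR_+$ equivalent to the Lebesgue measure (because $\A$ is a reactive stochastic timed automaton), and by construction $\overline{\mu}_{((\ell,\mathsf{l}),(v\mathsf{v}))} = \mu_\ell = \mu_{(\ell,\mathsf{l})}$ for every valuation $(v\mathsf{v})$. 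Hence $\A \ltimes \B$ is a reactive stochastic timed automaton.

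Next I would treat the weak reactive case. Write $L = L_u \uplus L_b$ for the partition of $\A$'s locations given by weak reactiveness, and set $\overline{L}_u \egdef L_u \times \mathsf{L}$ and $\overline{L}_b \egdef L_b \times \mathsf{L}$. Since the $\B$-component never restricts delays, $I(((\ell,\mathsf{l}),(v\mathsf{v})))$ is unbounded iff $I((\ell,v))$ is unbounded iff $\ell \in L_u$, so $\overline{L}_u$ is exactly the set of product locations with unbounded delay interval. The first bullet (measures agree on configurations differing only in clocks above $M$) follows from $\overline{\mu}_{((\ell,\mathsf{l}),(v\mathsf{v}))} = \mu_{(\ell,v)}$ together with the corresponding property for $\A$, observing that the maximal constant $\overline{M}$ of $\A \ltimes \B$ is at least the maximal constant $M$ of $\A$, so configurations of $\A \ltimes \B$ that agree on all clocks up to $\overline{M}$ in particular agree on the clocks of $\A$ up to $M$. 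The second bullet (uniform lower bound $\lambda_0$ on the mass of $[M,+\infty[$ in locations of $\overline{L}_u$) transfers directly, since $\overline{\mu}_{((\ell,\mathsf{l}),(v\mathsf{v}))}([M,+\infty[) = \mu_{(\ell,v)}([M,+\infty[) \ge \lambda_0$ and $[M,+\infty[ \supseteq [\overline{M},+\infty[$ is immaterial as we may simply reuse the same constant.

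The step I expect to require the most care is the third bullet of weak reactiveness: the existence of $\lambda_1 > 0$ and $N \ge 1$ such that from every location of $\overline{L}_b$, the probability of a length-$N$ path that passes through $\overline{L}_u$ is at least $\lambda_1$. Here one uses the bijection between runs of $\A$ from $s$ and runs of $\A \ltimes \B$ from $\mathsf{init}_{\A \ltimes \B}(s)$: a sequence of edges $e_1, \dots, e_N$ in $\A$ lifts to a unique sequence $e_1^{\mathsf{e}_1}, \dots, e_N^{\mathsf{e}_N}$ in $\A \ltimes \B$ once the $\B$-part is fixed (by determinism/completeness of $\B$), and the product construction gives $\overline{w}_{e^{\mathsf{e}}} = w_e$ together with $\overline{\mu}_{(\ell,\mathsf{l})} = \mu_\ell$, so $\Prob_{\A \ltimes \B}(\path{((\ell,\mathsf{l}),(v\mathsf{v})),e_1^{\mathsf{e}_1}\ldots e_N^{\mathsf{e}_N}}) = \Prob_{\A}(\path{(\ell,v),e_1\ldots e_N})$ — the key point being that the guard $g_i \wedge \mathsf{g}_i$ on the product edge has the same delay-projection $I(\cdot,e_i^{\mathsf{e}_i})$ as $I(\cdot,e_i)$, because $\B$ is complete so for every admissible delay \emph{some} matching $\mathsf{g}_i$ is satisfied, and the matching one is unique. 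Summing over the family $E_u$ of $\A$ and noting that a product-run visits $\overline{L}_u$ exactly when its $\A$-projection visits $L_u$, one concludes $\Prob_{\A\ltimes\B}\big(\bigcup_{(e_1,\dots,e_N)\in E_u^{\mathsf{B}}} \path{((\ell,\mathsf{l}),(v\mathsf{v})),e_1^{\mathsf{e}_1}\dots e_N^{\mathsf{e}_N}}\big) \ge \lambda_1$ with the same $\lambda_1, N$ as for $\A$. This establishes all three conditions and completes the proof.
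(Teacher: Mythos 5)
Your overall strategy---unfolding the construction of $\A \ltimes \B$ and checking the clauses of (weak) reactiveness one by one---is exactly what the paper intends: it gives no written proof, asserting only that the lemma ``follows from the definition of $\A \ltimes \B$ and from the definition of (weak) reactiveness.'' The reactive case is fine, as are the partition $\overline{L}_u = L_u \times \mathsf{L}$, $\overline{L}_b = L_b \times \mathsf{L}$ and the first and third bullets of the weak reactive case. One small imprecision in the third bullet: your intermediate equality $\Prob_{\A\ltimes\B}(\path{\cdot,e_1^{\mathsf{e}_1}\ldots e_N^{\mathsf{e}_N}}) = \Prob_{\A}(\path{\cdot,e_1\ldots e_N})$ is false for a \emph{fixed} sequence of $\B$-edges (that path only collects the delays for which precisely those $\B$-guards fire); the correct statement is for the union over all sequences $(\mathsf{e}_1,\dots,\mathsf{e}_N)$, which is what you in fact sum over in the final display, so the conclusion survives.

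The genuine gap is your treatment of the second bullet. The constant $M$ in the definition of weak reactiveness is the maximal constant of the automaton under consideration, so for $\A \ltimes \B$ it is $\overline{M} = \max(M_\A, M_\B)$, which may strictly exceed $M_\A$ when $\B$ carries larger constants. The condition to verify is a uniform lower bound on $\overline{\mu}_{((\ell,\mathsf{l}),(v\mathsf{v}))}\bigl([\overline{M},+\infty)\bigr)$, i.e.\ on the mass of the \emph{smaller} set $[\overline{M},+\infty) \subseteq [M_\A,+\infty)$; monotonicity of measures therefore runs against you, and ``reusing the same constant'' is exactly the step that fails. Nothing in the definition forbids a weak reactive $\A$ in which every $\mu_{(\ell,v)}$ puts mass $\lambda_0$ on $[M_\A, M_\A+1]$ while the mass it assigns to $[\overline{M},+\infty)$ is positive but has infimum $0$ over $v$ (the first bullet only forces dependence through the valuation truncated at $M_\A$, and no continuity in $v$ is assumed outside of condition $(\dag)$); the product then violates the second bullet for its own maximal constant. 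This is not cosmetic: the fairness proof for weak reactive automata (Lemma~\ref{lemma:weakreactive}) uses precisely delays exceeding the maximal constant of the automaton at hand to reach its memoryless regions, so for $\A \ltimes \B$ one really needs the bound at $\overline{M}$. Closing the gap requires an extra ingredient---e.g.\ a hypothesis giving a uniform lower bound on $\mu_{(\ell,v)}([c,+\infty))$ for every constant $c$, or continuity of $v \mapsto \mu_{(\ell,v)}$ combined with the first bullet and compactness of the truncated valuations, or a restriction relating the constants of $\B$ to those of $\A$. To be fair, the paper leaves this point entirely implicit, but your proof explicitly dismisses it in the wrong direction rather than resolving it.
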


We apply the results from Section~\ref{sec:match}, and obtain the
following principal theorem, which states decidability and complexity
results for the almost-sure model-checking. Notice that in all those
cases, almost-sure model-checking coincides with large model-checking.
\begin{restatable}{thm}{main}
  \label{theo:main}
  \begin{enumerate}[label=(\roman*)]
  \item The almost-sure model-checking of stochastic timed automata
    for simple safety properties is \PSPACE-complete.
  \item The almost-sure model-checking of single-clock stochastic
    timed automata for B\"uchi or Muller properties, or for
    properties given as specification (untimed) automata, is
    \NLOGSPACE-complete.\footnote{Note that simple safety or simple
      reachability properties can be expressed as small specification
      untimed automata, which yield an \NLOGSPACE upper bound in those
      cases as well.}
  \item The almost-sure model-checking of single-clock stochastic
    timed automata for properties given as \LTL formulas is
    \PSPACE-complete.
  \item The almost-sure model-checking of (weak) reactive stochastic
    timed automata for B\"uchi or Muller properties or properties
    given as specification timed automata is \PSPACE-complete.
  \end{enumerate}
\end{restatable}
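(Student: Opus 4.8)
The plan is to combine the structural equivalences of Sections~\ref{sec:match} and~\ref{sec:appli} with the elementary graph algorithms of Lemmas~\ref{lemma:complex} and~\ref{lemma:complex2}, controlling in each case the size of the relevant thick graph, and then to give matching lower bounds. For the upper bounds of~(i): $\thickgraph(\A)$ is extracted from the region automaton $\RA$, hence has size exponential in $|\A|$; by Proposition~\ref{prop:thickness-local} thinness of an edge is a purely local condition (comparing $\dim(I(q,e))$ with $\dim(I(q))$), so $\thickgraph(\A)$ can be generated on the fly in polynomial space. Since a simple safety property is violated exactly when some finite path of $\thickgraph(\A)$ from $\iota(s)$ violates it (Theorem~\ref{th:safety}), Lemma~\ref{lemma:complex2} runs in non-deterministic $\log(f(\A))$-space $=$ non-deterministic polynomial space $=$ \PSPACE. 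For~(ii), for single-clock automata we take $\RA$ with the coarse regions of~\cite{LMS04}, so $\thickgraph(\A)$ has polynomial size; single-clock automata satisfying $(\dag)$ are almost-surely fair (Theorem~\ref{theorem:fair-oneclock}), so Corollary~\ref{coro:1} reduces almost-sure satisfaction of a B\"uchi/Muller property to $\S(\A,s,P)$, decided in \NLOGSPACE by Lemma~\ref{lemma:complex}; for a specification \emph{untimed} automaton $\B$, the product $\A\ltimes\B$ is still single-clock and still almost-surely fair (it does not constrain guards), so Theorem~\ref{theo:specauto} and Lemma~\ref{lemma:complex} again give \NLOGSPACE. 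For~(iv), by the lemma preceding the theorem $\A\ltimes\B$ is again (weak) reactive, hence almost-surely fair (Propositions~\ref{prop:fair-reactive} and~\ref{prop:fair-weakreactive}), so Theorem~\ref{theo:specauto} applies, and as $\thickgraph(\A\ltimes\B)$ has size exponential in $|\A|+|\B|$, Lemma~\ref{lemma:complex} yields non-deterministic polynomial space $=$ \PSPACE.

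Case~(iii) requires more care, since translating an \LTL formula $\varphi$ into a deterministic specification automaton incurs a doubly-exponential blow-up, so one cannot simply feed $\thickgraph(\A\ltimes\B_\varphi)$ to Lemma~\ref{lemma:complex}. Instead I would work directly in $\thickgraph(\A)$, which has polynomial size for a single-clock automaton: unfolding Theorem~\ref{th:safety} together with the analysis of fair thick paths, $\A,s\not\robust_{\Prob}\varphi$ holds iff there is an infinite thick and fair path from $\iota(s)$ in $\RA$ satisfying $\neg\varphi$, equivalently an infinite path of $\thickgraph(\A)$ from $\iota(s)$ that eventually enters a BSCC and takes all of its edges infinitely often while satisfying $\neg\varphi$. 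The constraint ``eventually enter a BSCC and cover all of its edges infinitely often'' is a generalised B\"uchi condition of polynomial size over $\thickgraph(\A)$, so the whole question is an instance of \LTL model-checking of a polynomial-size transition system under a polynomial-size fairness constraint, which lies in \PSPACE by the classical automata-theoretic construction (guessing on the fly the non-deterministic generalised B\"uchi automaton for $\neg\varphi$, whose states are polynomial to store). Hence~(iii) is in \PSPACE.

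For the lower bounds: for~(i), reduce from the reachability problem for timed automata, which is \PSPACE-hard~\cite{AD94}, adapting the construction so that the resulting automaton has only thick edges (by adding slack, i.e.\ a dimension-preserving gadget), so that non-reachability of the target location is, by Theorem~\ref{th:safety}, equivalent to almost-sure satisfaction of the simple safety property ``avoid the target''. For~(ii), \NLOGSPACE-hardness already holds on the degenerate subclass of finite Markov chains (no clocks, all edges trivially thick), by reduction from directed-graph reachability. For~(iii), \PSPACE-hardness holds already for almost-sure \LTL model-checking of finite Markov chains, a trivial subclass of single-clock stochastic timed automata. For~(iv), \PSPACE-hardness is obtained by pushing all the timing into the specification: take $\A$ a trivial one-state reactive automaton and encode the reachability problem of a timed automaton into the specification timed automaton~$\B$.

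The delicate point is the \PSPACE upper bound of case~(iii): the deterministic-automaton route of Theorem~\ref{theo:specauto} is too expensive, and one must instead combine the polynomial-size thick graph for single-clock automata with an on-the-fly treatment of $\neg\varphi$ and of the fairness constraint, being careful that the characterisation of fair thick paths via BSCCs of $\thickgraph(\A)$ is indeed preserved when the property is not prefix-independent. A secondary technical point is ensuring, in the hardness reductions for~(i) and~(iv), that the relevant witnessing runs are thick, so that Theorems~\ref{th:safety} and~\ref{theo:specauto} can be invoked.
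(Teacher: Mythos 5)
Your handling of (i), (ii) and (iv) follows the paper's route (thick graph plus Lemmas~\ref{lemma:complex} and~\ref{lemma:complex2}, with almost-sure fairness of the product supplied by Theorem~\ref{theorem:fair-oneclock} resp.\ Propositions~\ref{prop:fair-reactive} and~\ref{prop:fair-weakreactive}), and the lower bounds for (i)--(iii) match the paper's; note only that the paper's single LBA simulation by a reactive automaton with strict guards also yields hardness of (iv) for \emph{internal} B\"uchi/Muller properties, which your specification-side encoding does not cover.

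The genuine gap is in the upper bound of (iii). The equivalence you rely on --- ``$\A,s\not\robust_{\Prob}\varphi$ iff some infinite thick \textbf{and fair} path of $\RA$ from $\iota(s)$ satisfies $\neg\varphi$'' --- is false for non-prefix-independent $\varphi$; the caveat you raise at the end is exactly where it breaks. Fairness here is edge-fairness in $\RA$; Theorem~\ref{th:fairhelps} turns a single fair thick path into a positive-probability set of paths only by using prefix-independence (all paths sharing a prefix and an infinity set then agree on $P$), and this fails for general \LTL. Concretely, take the complete graph on two locations labelled $a$ and $b$, with one clock reset everywhere and trivial guards (its thick graph is itself), and $\varphi=\F(a\wedge\X a\wedge\X\X a)$. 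The path $(aabb)^\omega$ takes every edge infinitely often, hence is thick and fair, and satisfies $\neg\varphi$; yet $\Prob(\neg\varphi)=0$, since almost surely some maximal $a$-block has length at least $3$. Your on-the-fly algorithm would therefore wrongly declare $\varphi$ not almost-sure. This is precisely why the paper routes non-prefix-independent properties through the product with a deterministic specification automaton (Theorem~\ref{theo:specauto}): fairness must be imposed in the region automaton of $\A\ltimes\B_\varphi$, where the lift of $(aabb)^\omega$ is unfair (the product edge recording a third consecutive $a$ is enabled infinitely often but never taken).

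The correct way to obtain \PSPACE\ without the doubly-exponential determinisation you rightly worry about is to observe that $\S(\A\ltimes\B_\varphi,\cdot,\cdot)$ amounts exactly to almost-sure satisfaction of $\varphi$ in the finite Markov chain $\MC(\A)$, which has polynomial size for single-clock automata; qualitative \LTL\ model checking of finite Markov chains is decidable in polynomial space~\cite{vardi85}, with the automaton constructions for $\neg\varphi$ performed on the fly. This is the same problem that furnishes your (and the paper's) matching lower bound.
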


\noindent All upper bounds are then obtained \textit{via}
Lemmas~\ref{lemma:complex} and~\ref{lemma:complex2}, since the size of
the region automaton (and therefore the thick graph) is
exponential~\cite{AD94}, except for single-clock timed automata, where
it is only polynomial-size~\cite{LMS04}. Note that if $\A$ is a timed
automaton and $\B$ a specification automaton, then the size of
$\thickgraph(\A \ltimes \B)$ is exponential in the two following
cases: $\B$ is a specification timed automaton, and $\B$ is a
specification untimed automaton of size at most exponential.  The
lower bounds are proven in Appendix~\ref{app:lower-bounds}. Note that
to establish the lower-bound in (i), the classical \PSPACE-hardness
proof of reachability in timed automata has to be adapted, since it is
based on punctual guards that would yield negligible behaviours in the
context of stochastic timed automata.

\section{Conclusion}
In this article we introduced and studied the model of stochastic
timed automata that combines real-time constraints and
probabilities. We considered the almost-sure model-checking problem
and designed an abstraction that can be used to prove decidability of
the above, provided fairness is almost-sure in the model. We
identified two main classes of automata for which this is the case,
the class of single-clock timed automata and that of weak reactive
timed automata. In the two cases, the proof of almost-sure fairness is
non-trivial and requires intricate arguments.

A remaining open problem is the decidability status of the almost-sure
model-checking problem for the general class of stochastic timed
automata, already for reachability properties.

As future work, we want to extend our study to quantitative
model-checking, that is, compute the probability of a given property
in an automaton. This has partly been solved for single-clock automata
in~\cite{BBBM08}, but more importantly, we would like to do it for the
class of (weak) reactive stochastic timed automata, which allows for
more complex timed constraints.

Compositionality is often a key for the description of real
systems. Defining a composition of stochastic timed automata seems
non-trivial in general, but the model of reactive stochastic timed
automata seems to be well-suited for compositional design, since time
can never be blocked by a component.

\section*{Acknowledgement}
Nathalie Bertrand and Patricia Bouyer were partly supported by ANR
project ImpRo (number ANR-10-BLAN-0317). Patricia Bouyer was
furthermore supported by the ERC Starting grant EQualIS (number
308087). Thomas Brihaye was partly supported by the ARC project
(number AUWB-2010-10/15-UMONS-3), a grant ``Mission Scientifique''
from the F.R.S.-FNRS, the FRFC project (number 2.4545.11), and the
EU-FP7 project CASSTING (number 601148). Quentin Menet was supported
by a grant of FRIA.  Christel Baier was partly supported by the German
Research Council (DFG) through the collaborative research centre 912
Highly- Adaptive Energy-Efficient Computing (HAEC) and the cluster of
excellence Centre for Advancing Electronics Dresden (cfAED), the
EU-FP7 project MEALS (295261) and by the EU and the State Saxony
through the ESF young researcher groups IMData (100098198) and SREX
(100111037).

\newcommand{\etalchar}[1]{$^{#1}$}


\begin{thebibliography}{BvdSHV03}

\bibitem[ACD91]{ACD91a}
Rajeev Alur, Costas Courcoubetis, and David~L. Dill.
\newblock Model-checking for probabilistic real-time systems.
\newblock In {\em Proc. 18th International Colloquium on Automata, Languages
  and Programming (ICALP'91)}, volume 510 of {\em Lecture Notes in Computer
  Science}, pages 115--126. Springer, 1991.

\bibitem[ACD92]{ACD91b}
Rajeev Alur, Costas Courcoubetis, and David~L. Dill.
\newblock Verifying automata specifications of probabilistic real-time systems.
\newblock In {\em Proc. REX Workshop on Real-Time: Theory in Practice}, volume
  600 of {\em Lecture Notes in Computer Science}, pages 28--44. Springer, 1992.

\bibitem[AD90]{AD90}
Rajeev Alur and David~L. Dill.
\newblock Automata for modeling real-time systems.
\newblock In {\em Proc. 17th International Colloquium on Automata, Languages
  and Programming (ICALP'90)}, volume 443 of {\em Lecture Notes in Computer
  Science}, pages 322--335. Springer, 1990.

\bibitem[AD94]{AD94}
Rajeev Alur and David~L. Dill.
\newblock A theory of timed automata.
\newblock {\em Theoretical Computer Science}, 126(2):183--235, 1994.

\bibitem[AL02]{AL02}
Luca Aceto and Fran{\c c}ois Laroussinie.
\newblock Is your model-checker on time? On the complexity of
  model-checking for timed modal logics.
\newblock {\em Journal of Logic and Algebraic Programming}, 52--53:7--51, 2002.

\bibitem[ALP01]{ATP01}
Rajeev Alur, Salvatore {La Torre}, and George~J. Pappas.
\newblock Optimal paths in weighted timed automata.
\newblock In {\em Proc. 4th International Workshop on Hybrid Systems:
  Computation and Control (HSCC'01)}, volume 2034 of {\em Lecture Notes in
  Computer Science}, pages 49--62. Springer, 2001.

\bibitem[AMPS98]{AMPS98}
Eugene Asarin, Oded Maler, Amir Pnueli, and Joseph Sifakis.
\newblock Controller synthesis for timed automata.
\newblock In {\em Proc. IFAC Symposium on System Structure and Control}, pages
  469--474. Elsevier Science, 1998.

\bibitem[ASSB00]{ASSB00}
Adnan Aziz, Kumud Sanwal, Vigyan Singhal, and Robert~K. Brayton.
\newblock Model-checking continuous-time {M}arkov chains.
\newblock {\em ACM Transactions on Computational Logic}, 1(1):162--170, 2000.

\bibitem[BBB{\etalchar{+}}07]{BBBBG07}
Christel Baier, Nathalie Bertrand, Patricia Bouyer, {\relax Th}omas Brihaye,
  and Marcus Gr{\"o}{\ss}er.
\newblock Probabilistic and topological semantics for timed automata.
\newblock In {\em Proc. 27th Conference on Foundations of Software Technology
  and Theoretical Computer Science (FSTTCS'07)}, volume 4855 of {\em Lecture
  Notes in Computer Science}, pages 179--191. Springer, 2007.

\bibitem[BBB{\etalchar{+}}08]{BBBBG08}
Christel Baier, Nathalie Bertrand, Patricia Bouyer, {\relax Th}omas Brihaye,
  and Marcus Gr{\"o}{\ss}er.
\newblock Almost-sure model checking of infinite paths in one-clock timed
  automata.
\newblock In {\em Proc. 23rd Annual Symposium on Logic in Computer Science
  (LICS'08)}, pages 217--226. IEEE Computer Society Press, 2008.

\bibitem[BBBM08]{BBBM08}
Nathalie Bertrand, Patricia Bouyer, {\relax Th}omas Brihaye, and Nicolas
  Markey.
\newblock Quantitative model-checking of one-clock timed automata under
  probabilistic semantics.
\newblock In {\em Proc. 5th International Conference on Quantitative Evaluation
  of Systems (QEST'08)}. IEEE Computer Society Press, 2008.

\bibitem[BBJM12]{BBJM12}
Patricia Bouyer, Thomas Brihaye, Marcin Jurdzinski, and Quentin Menet.
\newblock Almost-sure model-checking of reactive timed automata.
\newblock In {\em Proc. 9th International Conference on Quantitative Evaluation
  of Systems (QEST'12)}, pages 138--147. IEEE Computer Society Press, 2012.

\bibitem[BDE{\etalchar{+}}14]{BDE+14}
Christel Baier, Marcus Daum, Benjamin Engel, Hermann H\"artig, Joachim Klein,
  Sascha Kl\"uppelholz, Steffen M\"arcker, Hendrik Tews, and V\"olp Marcus.
\newblock Locks: {P}icking key methods for a scalable quantitative analysis.
\newblock {\em Journal of Computer and System Sciences},
81(1):258--287, 2015.

\bibitem[BDL{\etalchar{+}}06]{BDL+06}
Gerd Behrmann, Alexandre David, Kim~G. Larsen, John H{\aa}kansson, Paul
  Pettersson, Wang Yi, and Martijn Hendriks.
\newblock Uppaal 4.0.
\newblock In {\em Proc. 3rd International Conference on Quantitative Evaluation
  of Systems (QEST'06)}, pages 125--126. IEEE Computer Society Press, 2006.

\bibitem[BDM{\etalchar{+}}98]{BDM+98}
Marius Bozga, Conrado Daws, Oded Maler, Alfredo Olivero, Stavros Tripakis, and
  Sergio Yovine.
\newblock Kronos: a model-checking tool for real-time systems.
\newblock In {\em Proc. 10th International Conference on Computer Aided
  Verification (CAV'98)}, volume 1427 of {\em Lecture Notes in Computer
  Science}, pages 546--550. Springer, 1998.

\bibitem[BF09]{BF09}
Patricia Bouyer and Vojte\v{c}h Forejt.
\newblock Reachability in stochastic timed games.
\newblock In {\em Proc. 36th International Colloquium on Automata, Languages
  and Programming (ICALP'09)}, volume 5556 of {\em Lecture Notes in Computer
  Science}, pages 103--114. Springer, 2009.

\bibitem[BFH{\etalchar{+}}01]{BFH+01}
Gerd Behrmann, Ansgar Fehnker, {\relax Th}omas Hune, Kim~G. Larsen, Paul
  Pettersson, Judi Romijn, and Frits Vaandrager.
\newblock Minimum-cost reachability for priced timed automata.
\newblock In {\em Proc. 4th International Workshop on Hybrid Systems:
  Computation and Control (HSCC'01)}, volume 2034 of {\em Lecture Notes in
  Computer Science}, pages 147--161. Springer, 2001.

\bibitem[BFLM11]{BFLM11}
Patricia Bouyer, Uli Fahrenberg, Kim~G. Larsen, and Nicolas Markey.
\newblock Quantitative analysis of real-time systems using priced timed
  automata.
\newblock {\em Communication of the ACM}, 54(9):78--87, 2011.

\bibitem[BHHK03]{BHHK03}
Christel Baier, Boudewijn Haverkort, Holger Hermanns, and Joost-Pieter Katoen.
\newblock Model-checking algorithms for continuous-time {M}arkov chains.
\newblock {\em IEEE Transactions on Software Engineering}, 29(7):524--541,
  2003.

\bibitem[Bil95]{Bi95}
Patrick Billingsley.
\newblock {\em Probability and measure}.
\newblock Wiley Series in Probability and Mathematical Statistics. John Wiley
  {\&} Sons, 3rd edition, 1995.

\bibitem[BK98]{BK98}
Christel Baier and Marta~Z. Kwiatkowska.
\newblock On the verification of qualitative properties of probabilistic
  processes under fairness constraints.
\newblock {\em Information Processing Letters}, 66(2):71--79, 1998.

\bibitem[BS12]{BS12}
Nathalie Bertrand and Sven Schewe.
\newblock Playing optimally on timed automata with random delays.
\newblock In {\em Proc. 10th International Conference on Formal Modeling and
  Analysis of Timed Systems (FORMATS'12)}, volume 7595 of {\em Lecture Notes in
  Computer Science}, pages 43--58. Springer, 2012.

\bibitem[BvdSHV03]{BvSHV03}
Henrik~C. Bohnenkamp, Peter van~der Stok, Holger Hermanns, and Frits~W.
  Vaandrager.
\newblock Cost-optimization of the {IPv4} {Z}eroconf protocol.
\newblock In {\em Proc. International Conference on Dependable Systems and
  Networks ({DSN}'03)}, pages 531--540. IEEE Computer Society Press, 2003.

\bibitem[CHKM11]{CHKM11}
Taolue Chen, Tingting Han, Joost-Pieter Katoen, and Alexandru Mereacre.
\newblock Model-checking of continuous-time {M}arkov chains against timed
  automata specifications.
\newblock {\em Logical Methods in Computer Science}, 7(1:12):1--34, 2011.

\bibitem[CHR02]{CHR02}
Franck Cassez, {\relax Th}omas~A. Henzinger, and Jean-Fran{\c c}ois Raskin.
\newblock A comparison of control problems for timed and hybrid systems.
\newblock In {\em Proc. 5th International Workshop on Hybrid Systems:
  Computation and Control (HSCC'02)}, volume 2289 of {\em Lecture Notes in
  Computer Science}, pages 134--148. Springer, 2002.

\bibitem[CLRS09]{book-Cormen}
Thomas~H. Cormen, Charles~E. Leiserson, Ronald~L. Rivest, and Clifford Stein.
\newblock {\em Introduction to algorithms}.
\newblock MIT Press, Cambridge, MA, third edition, 2009.

\bibitem[CMP92]{CMP-icalp92}
E.~Chang, Z.~Manna, and A.~Pnueli.
\newblock Characterization of temporal property classes.
\newblock In {\em Proc.\ 19th Int.\ Coll.\ Automata, Languages, and Programming
  (ICALP'92), Vienna, Austria, July 1992}, volume 623 of {\em Lecture Notes in
  Computer Science}, pages 474--486. Springer, 1992.

\bibitem[DDR04]{DDR04}
Martin {De Wulf}, Laurent Doyen, and Jean-Fran\c{c}ois Raskin.
\newblock Almost {ASAP} semantics: From timed models to timed implementations.
\newblock In {\em Proc. 7th International Workshop on Hybrid Systems:
  Computation and Control (HSCC'04)}, volume 2993 of {\em Lecture Notes in
  Computer Science}, pages 296--310. Springer, 2004.

\bibitem[DHS09]{DHS09}
Susanna Donatelli, Serge Haddad, and Jeremy Sproston.
\newblock Model checking timed and stochastic properties with
  {CSL\textsuperscript{TA}}.
\newblock {\em IEEE Transactions on Software Engineering}, 35(2):224--240,
  2009.

\bibitem[DP03]{DP03}
Jos{\'e}e Desharnais and Prakash Panangaden.
\newblock Continuous stochastic logic characterizes bisimulation of
  continuous-time {M}arkov processes.
\newblock {\em Journal of Logic and Algebraic Programming}, 56:99--115, 2003.

\bibitem[GHJ97]{GHJ97}
Vineet Gupta, {\relax Th}omas~A. Henzinger, and Radha Jagadeesan.
\newblock Robust timed automata.
\newblock In {\em Proc. International Workshop on Hybrid and Real-Time Systems
  (HART'97)}, volume 1201 of {\em Lecture Notes in Computer Science}, pages
  331--345. Springer, 1997.

\bibitem[G{\relax Th}W02]{lncs2500}
Erich Grädel, Wolfgang {\relax Th}omas, and {\relax Th}omas Wilke, editors.
\newblock {\em Automata, Logics, and Infinite Games: A Guide to Current
  Research}, volume 2500 of {\em Lecture Notes in Computer Science}. Springer,
  2002.

\bibitem[KNP11]{KNP11}
Marta Kwiatkowska, Gethin Norman, and David Parker.
\newblock {PRISM 4.0:} verification of probabilistic real-time systems.
\newblock In {\em Proc. 23rd International Conference on Computer Aided
  Verification (CAV'11)}, volume 6806 of {\em Lecture Notes in Computer
  Science}, pages 585--591. Springer, 2011.

\bibitem[KNSS00]{KNSS00}
Marta~Z. Kwiatkowska, Gethin Norman, Roberto Segala, and Jeremy Sproston.
\newblock Verifying quantitative properties of continuous probabilistic timed
  automata.
\newblock In {\em Proc. 11th International Conference on Concurrency Theory
  (CONCUR'00)}, volume 1877 of {\em Lecture Notes in Computer Science}, pages
  123--137. Springer, 2000.

\bibitem[KNSS02]{KNSS02}
Marta~Z. Kwiatkowska, Gethin Norman, Roberto Segala, and Jeremy Sproston.
\newblock Automatic verification of real-time systems with discrete probability
  distributions.
\newblock {\em Theoretical Computer Science}, 282(1):101--150, 2002.

\bibitem[KSK76]{KSK76}
John~G. Kemeny, J.~Laurie Snell, and Anthony~W. Knapp.
\newblock {\em Denumerable {M}arkov Chains}.
\newblock Graduate Texts in Mathematics. Springer, 1976.

\bibitem[LMS04]{LMS04}
Fran{\c c}ois Laroussinie, Nicolas Markey, and {\relax Ph}ilippe Schnoebelen.
\newblock Model checking timed automata with one or two clocks.
\newblock In {\em Proc. 15th International Conference on Concurrency Theory
  (CONCUR'04)}, volume 3170 of {\em Lecture Notes in Computer Science}, pages
  387--401. Springer, 2004.

\bibitem[Mar11]{markey11}
Nicolas Markey.
\newblock Robustness in real-time systems.
\newblock In {\em Proc. 6th IEEE International Symposium on Industrial Systems
  (SIES'11)}, pages 28--34. IEEE Computer Society Press, 2011.

\bibitem[Mun00]{munkres00}
James~R. Munkres.
\newblock {\em Topology}.
\newblock Prentice Hall, 2nd edition, 2000.

\bibitem[Oxt57]{oxtoby57}
John~C. Oxtoby.
\newblock The {B}anach-{M}azur game and {B}anach category theorem.
\newblock {\em Annals of Mathematical Studies}, 39:159--163, 1957.
\newblock Contributions to the Theory of Games, volume 3.

\bibitem[Pnu77]{pnueli77}
Amir Pnueli.
\newblock The temporal logic of programs.
\newblock In {\em Proc. 18th Annual Symposium on Foundations of Computer
  Science (FOCS'77)}, pages 46--57. IEEE Computer Society Press, 1977.

\bibitem[Pnu83]{Pnu83}
Amir Pnueli.
\newblock On the extremely fair treatment of probabilistic algorithms.
\newblock In {\em Proc. 15th Ann.\ Symp.\ Theory of Computing (STOC'83)}, pages
  278--290. ACM Press, 1983.

\bibitem[PZ93]{PZ93}
Amir Pnueli and Lenore~D. Zuck.
\newblock Probabilistic verification.
\newblock {\em Information and Computation}, 103(1):1--29, 1993.

\bibitem[San13]{sankur13}
Ocan Sankur.
\newblock {\em Robustness in Timed Automata: {A}nalysis, Synthesis,
  Implementation}.
\newblock PhD thesis, {\'E}cole Normale Sup{\'e}rieure de Cachan, Cachan,
  France, 2013.

\bibitem[SBM11]{SBM11}
Ocan Sankur, Patricia Bouyer, and Nicolas Markey.
\newblock Shrinking timed automata.
\newblock In {\em Proc. 31st Conference on Foundations of Software Technology
  and Theoretical Computer Science (FSTTCS'11)}, volume~13 of {\em Leibniz
  International Proceedings in Informatics}, pages 90--102. Leibniz-Zentrum
  f{\"u}r Informatik, 2011.

\bibitem[Sto03]{stoelinga03}
Mari{\"e}lle Stoelinga.
\newblock Fun with {F}ire{W}ire: {A} comparative study of formal verification
  methods applied to the {IEEE} 1394 root contention protocol.
\newblock {\em Formal Aspects of Computing}, 14(3):328--337, 2003.

\bibitem[Var85]{vardi85}
Moshe~Y. Vardi.
\newblock Automatic verification of probabilistic concurrent finite-state
  programs.
\newblock In {\em Proc. 26th Annual Symposium on Foundations of Computer
  Science (FOCS'85)}, pages 327--338. IEEE Computer Society Press, 1985.

\bibitem[VV06]{VV06}
Daniele Varacca and Hagen V{\"o}lzer.
\newblock Temporal logics and model checking for fairly correct systems.
\newblock In {\em Proc. 21st Annual Symposium on Logic in Computer Science
  (LICS'06)}, pages 389--398. IEEE Computer Society Press, 2006.

\bibitem[VW94]{VW94}
Moshe~Y. Vardi and Pierre Wolper.
\newblock Reasoning about infinite computations.
\newblock {\em Information and Computation}, 115(1):1--37, 1994.

\bibitem[ZJNH11]{ZJNH11}
Lijun Zhang, David~N. Jansen, Flemming Nielson, and Holger Hermanns.
\newblock Automata-based {CSL} model checking.
\newblock In {\em Proc. 38th International Colloquium on Automata, Languages
  and Programming (ICALP'11)}, volume 6756 of {\em Lecture Notes in Computer
  Science}, pages 271--282. Springer, 2011.

\end{thebibliography}

\appendix
\newpage
\makeatletter
\let\lem\relax
\let\defi\relax
\let\thm\relax
\let\c@thm\relax
\let\rem\relax
\let\exa\relax
\let\cor\relax
\let\prop\relax
\theoremstyle{plain}
\newtheorem{thm}{Theorem}[section]
\newtheorem{cor}[thm]{Corollary}
\newtheorem{lem}[thm]{Lemma}
\newtheorem{prop}[thm]{Proposition}

\makeatother

In this technical appendix, statements in boxes refer to statements
given in the core of the paper, and whose proofs were postponed to the
appendix. All other statements are local to the appendix.

\section{Details for Section~\ref{sec:proba-def}}
\label{annex:proba-measure}

\noindent\fbox{\begin{minipage}{.98\linewidth}
\propprobameasure* \end{minipage}}

\begin{proof}
We first recall a basic property in measure theory~\cite{KSK76}.

\begin{prop}
  \label{prop:KSK76}
  Let $\nu$ be a non-negative additive set function defined on some set space
  ${\mathcal F}$ such that for every $A \in {\mathcal F}$, $\nu(A) < \infty$. The
  three following properties are equivalent:
  \begin{enumerate}
  \item $\nu$ is $\sigma$-additive,
  \item for every sequence $(A_n)_n$ of elements of ${\mathcal F}$ such that $A_0
    \subseteq A_1 \subseteq A_2 \subseteq \cdots$ and $A = \bigcup_n A_n \in
    {\mathcal F}$, $\lim_n \nu(A_n) = \nu(A)$,
  \item for every sequence $(B_n)_n$ of elements of ${\mathcal F}$ such that $B_0
    \supseteq B_1 \supseteq B_2 \supseteq \cdots$ and $\bigcap_n B_n =
    \emptyset$, $\lim_n \nu(B_n) = 0$.
  \end{enumerate}
\end{prop}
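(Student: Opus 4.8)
The plan is to establish the three implications in a cycle, $(1)\Rightarrow(2)\Rightarrow(3)\Rightarrow(1)$, using only finite additivity of $\nu$, the standing finiteness assumption $\nu(A)<\infty$ for all $A\in{\mathcal F}$, and the fact that ${\mathcal F}$ is closed under finite unions and set differences (so that all the auxiliary sets built below stay in ${\mathcal F}$). This is the classical argument (see~\cite{KSK76}); I only sketch the steps.

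For $(1)\Rightarrow(2)$: given an increasing sequence $A_0\subseteq A_1\subseteq\cdots$ with $A=\bigcup_n A_n\in{\mathcal F}$, I would write $A$ as the disjoint union $A_0\sqcup\bigsqcup_{n\ge 1}(A_n\setminus A_{n-1})$, apply $\sigma$-additivity, and notice that the partial sums of the resulting series telescope to $\nu(A_n)$ by finite additivity; hence $\nu(A_n)\to\nu(A)$. For $(2)\Rightarrow(3)$: given a decreasing sequence $B_0\supseteq B_1\supseteq\cdots$ with $\bigcap_n B_n=\emptyset$, I would set $A_n=B_0\setminus B_n$, which is increasing with union $B_0\setminus\bigcap_n B_n=B_0\in{\mathcal F}$; by $(2)$, $\nu(A_n)\to\nu(B_0)$, and finite additivity together with finiteness gives $\nu(A_n)=\nu(B_0)-\nu(B_n)$, so $\nu(B_n)\to 0$. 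For $(3)\Rightarrow(1)$: given pairwise disjoint $(C_n)_n$ with $C=\bigcup_n C_n\in{\mathcal F}$, I would set $B_n=C\setminus\bigcup_{k\le n}C_k=\bigcup_{k>n}C_k$, which lies in ${\mathcal F}$, is decreasing, and has empty intersection; finite additivity yields $\nu(C)=\sum_{k\le n}\nu(C_k)+\nu(B_n)$, and $(3)$ forces $\nu(B_n)\to 0$, whence $\nu(C)=\sum_k\nu(C_k)$.

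The argument involves no analytic difficulty: everything reduces to manipulating telescoping finite sums. The only points that genuinely need attention are bookkeeping ones — checking that each auxiliary set ($A_n\setminus A_{n-1}$, $B_0\setminus B_n$, $C\setminus\bigcup_{k\le n}C_k$) really belongs to ${\mathcal F}$ (which is why ${\mathcal F}$ must be at least a ring of sets), and invoking the finiteness hypothesis $\nu(A)<\infty$ exactly at the subtraction steps so that expressions such as $\nu(B_0)-\nu(B_n)$ are well defined. This is also where one sees why the finiteness assumption cannot be dropped.
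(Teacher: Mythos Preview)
Your proof is correct and is the standard classical argument. However, the paper does not actually prove this proposition: it is introduced with the sentence ``We first recall a basic property in measure theory~\cite{KSK76}'' and is simply stated without proof, being used as a tool in the subsequent arguments. So there is no paper proof to compare against; your sketch supplies precisely the textbook justification that the authors delegate to the reference.
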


For every $n \in \IN$, we write ${\mathcal F}_n(s)$ for the
ring\footnote{A \emph{ring} $R \subseteq 2^S$ is such that $\emptyset
  \in R$, $R$ is closed by finite union and by complement.} generated
by the set of (basic) cylinders from $s$ of length $n$, \textit{i.e.},
all $\Cyl(\path[\mathcal{C}]{s,e_1\ldots e_n})$. The elements of
${\mathcal F}_n(s)$ are thus finite unions of basic cylinders of length
$n$. We then define
\[
{\mathcal F}(s) = \bigcup_n {\mathcal F}_n(s)
\]

\begin{lem}\label{proba Fn}
  For every $n$, $\Prob_{\A}$ is a probability measure on ${\mathcal
    F}_n(s)$.
\end{lem}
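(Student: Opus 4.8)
The plan is to exploit the rigid combinatorial structure of ${\mathcal F}_n(s)$. Since every infinite run from $s$ has a well-defined sequence of first $n$ edges, $\Runs(\A,s)=\bigsqcup_{\bar e\in E^n}\Cyl(\path{s,\bar e})$ is a \emph{finite} partition (writing $\bar e=(e_1,\dots,e_n)$, with some parts possibly empty), and for a fixed $\bar e$ the map $\mathcal{C}\mapsto\Cyl(\path[\mathcal{C}]{s,\bar e})$ is a Boolean isomorphism from the Borel subsets of $\Pol(\path{s,\bar e})$ onto the trace of ${\mathcal F}_n(s)$ on $\Cyl(\path{s,\bar e})$. Hence ${\mathcal F}_n(s)$ is the direct sum over $\bar e\in E^n$ of these Borel $\sigma$-algebras (in particular it is even a $\sigma$-algebra), every $A\in{\mathcal F}_n(s)$ decomposes uniquely as $A=\bigsqcup_{\bar e}\Cyl(\path[\mathcal{C}^A_{\bar e}]{s,\bar e})$, and $\Prob_\A(A)=\sum_{\bar e}\Prob_\A(\path[\mathcal{C}^A_{\bar e}]{s,\bar e})$. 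So it suffices to prove: (a) for each $\bar e$, the set function $\mathcal{C}\mapsto\Prob_\A(\path[\mathcal{C}]{s,\bar e})$ is a finite measure on the Borel subsets of $\Pol(\path{s,\bar e})$ (this also gives well-definedness of the extension, via inclusion--exclusion), and (b) $\sum_{\bar e\in E^n}\Prob_\A(\path{s,\bar e})=1$.

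For (a), I would use the $n$-dimensional integral representation pointed out just before the statement: unfolding the recursion,
\[
\Prob_\A(\path[\mathcal{C}]{s,\bar e})=\int\!\cdots\!\int\car{\mathcal{C}}(\tau_1,\dots,\tau_n)\,p_{s+\tau_1}(e_1)\,p_{s_1+\tau_2}(e_2)\cdots p_{s_{n-1}+\tau_n}(e_n)\ \ud\mu_{s_{n-1}}(\tau_n)\cdots\ud\mu_s(\tau_1),
\]
where $s_i$ is the state reached from $s$ after $\tau_1,e_1,\dots,\tau_i,e_i$. The integrand is nonnegative and measurable (each weight $p_{s_{i-1}+\tau_i}(e_i)$ depends only on the region of $s_{i-1}+\tau_i$, hence is a step function of $\tau_1,\dots,\tau_i$, and the iterated integral is measurable in the remaining variables at each level), so $\Prob_\A(\path[\mathcal{C}]{s,\bar e})=\int_{\mathcal C} f_{\bar e}$ for a fixed nonnegative measurable $f_{\bar e}$. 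Given a countable Borel partition $\mathcal{C}=\bigsqcup_k\mathcal{C}_k$ we have $\car{\mathcal{C}}=\sum_k\car{\mathcal{C}_k}$, and applying the monotone convergence theorem successively to the $n$ integrations yields $\Prob_\A(\path[\mathcal{C}]{s,\bar e})=\sum_k\Prob_\A(\path[\mathcal{C}_k]{s,\bar e})$; finiteness follows from $\Prob_\A(\path[\mathcal{C}]{s,\bar e})\le\Prob_\A(\path{s,\bar e})\le 1$ (a consequence of (b)), and $\Prob_\A(\path[\emptyset]{s,\bar e})=0$ is immediate. (Alternatively, finite additivity together with continuity at $\emptyset$ via Proposition~\ref{prop:KSK76} gives the same conclusion, the continuity being dominated convergence.)

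For (b), I would argue by induction on $n$, with the hypothesis stated for \emph{all} states $s$ of $\A$. For $n=0$ the sum equals $\Prob_\A(\path{s})=1$. For the step, write $\bar e=(e_1,\bar e')$ with $\bar e'\in E^{n-1}$, use the recursive definition and finiteness of $E^n$ to interchange the sum with the integral, and apply the induction hypothesis to each state $s_t$ reached after $e_1$ at time $t$ (so $\sum_{\bar e'}\Prob_\A(\path{s_t,\bar e'})=1$); this gives
\[
\sum_{\bar e\in E^n}\Prob_\A(\path{s,\bar e})=\int_{\IR_+}\Big(\sum_{e_1\in E}p_{s+t}(e_1)\Big)\,\ud\mu_s(t).
\]
For $\mu_s$-almost every $t$, namely $t\in I(s)$ (of $\mu_s$-measure $1$ by {\bf (H1)}), the state $s+t$ enables at least one edge and $\sum_{e_1}p_{s+t}(e_1)=1$ by the definition of $p_{s+t}$, while $p_{s+t}\equiv 0$ for $t\notin I(s)$; hence the integral equals $\mu_s(I(s))=1$. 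Combining (a) and (b), $\Prob_\A$ is a nonnegative $\sigma$-additive set function on ${\mathcal F}_n(s)$ with $\Prob_\A(\Runs(\A,s))=1$, i.e. a probability measure. The only genuinely technical point is the Tonelli/measurability bookkeeping in (a); it is routine here because the transition weights are region-determined step functions and the non-blocking assumption guarantees $I(s)\ne\emptyset$, so no run ever dies out.
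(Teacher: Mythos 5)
Your proof is correct and follows essentially the same route as the paper's: both reduce to a single edge sequence via the finite partition of runs according to their first $n$ edges, both establish countable additivity through a convergence theorem applied to the iterated-integral representation (you use monotone convergence on the partial sums of a partition, the paper the equivalent continuity-from-below criterion of Proposition~\ref{prop:KSK76} together with dominated convergence), and both obtain total mass $1$ by the same induction resting on \textbf{(H1)}. The only cosmetic differences are your observation that ${\mathcal F}_n(s)$ is in fact a $\sigma$-algebra and your slightly more explicit treatment of well-definedness, neither of which changes the substance of the argument.
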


\begin{proof}
  First, by induction on $n$, it is not difficult to prove that for
  every $n \in \IN$,
  \begin{equation}
    \label{1}
    \sum_{(e_1,\ldots,e_n)} \Prob_\A(\path{s,e_1\ldots e_n}) =
    \Prob_\A(\path{s}) = 1
  \end{equation}

  We fix $n \in \IN$. $\Prob_\A$ is obviously additive, non-negative
  and finite over ${\mathcal F}_n(s)$. Take a sequence $(A_i)_i$ of
  elements of ${\mathcal F}_n(s)$ such that $A_0 \subseteq A_1 \subseteq
  A_2 \subseteq \cdots$ and $A = \bigcup_i A_i \in {\mathcal
    F}_n(s)$. There are finitely many distinct sequences of edges of
  length $n$. Hence, intersecting each of the $A_i$'s with each of the
  symbolic paths $\path{s,e_1\ldots e_n}$ of length $n$, we can assume
  w.l.o.g. that each $A_i$ is a single constrained finite symbolic
  path.

  Let $e_1 \ldots e_n$ be the sequence of edges underlying all constrained
  symbolic paths $A_i$, and write $\mathcal{C}_i$ for the tightest constraint
  defining $A_i$ (\textit{i.e.}, $A_i = \path[\mathcal{C}_i]{s,e_1\ldots
    e_n}$). We have that $\mathcal{C}_i \subseteq \mathcal{C}_{i+1}$, and
  $(\mathcal{C}_i)_i$ converges to $\mathcal{C}$, which corresponds to the
  constraint associated with $A$. We can write, if $\car\alpha$ is the
  characteristic function of set $\alpha$, that:
  \begin{multline*}
    \lim_i \Prob_\A(A_i) = \lim_i \int_{\tau_1 \in I(s,e_1)}
    p_{s+\tau_1}(e_1) \int_{\tau_2 \in I(s_{\tau_1},e_2)}
    p_{s_{\tau_1}+ \tau_2}(e_2) \cdots \\  
    \int_{\tau_n \in I(s_{\tau_1 \cdots
        \tau_{n-1}},e_n)} 
    p_{s_{\tau_1
        \cdots \tau_{n-1}}+\tau_n}(e_n)\, \car{\mathcal{C}_i}(\tau_1,\ldots,\tau_n) \, \ud
    \mu_{s_{\tau_1
        \cdots \tau_{n-1}}}(\tau_n) \cdots \ud \mu_s(\tau_1)
  \end{multline*}
  \begin{multline*}
    = \int_{\tau_1 \in I(s,e_1)}
    p_{s+\tau_1}(e_1) \int_{\tau_2 \in I(s_{\tau_1},e_2)}
    p_{s_{\tau_1}+ \tau_2}(e_2) \cdots \\
    \int_{\tau_n \in I(s_{\tau_1 \cdots
        \tau_{n-1}},e_n)} 
    p_{s_{\tau_1
        \cdots \tau_{n-1}}+\tau_n}(e_n)\, \Big(\lim_i \car{\mathcal{C}_i}(\tau_1,\ldots,\tau_n)\Big)\, \ud
    \mu_{s_{\tau_1
        \cdots \tau_{n-1}}}(\tau_n) \cdots \ud \mu_s(\tau_1) \\
\text{{\small (by dominated convergence and equation~(\ref{1}))}}
  \end{multline*}
  \begin{multline*}
    = \int_{\tau_1 \in I(s,e_1)}
    p_{s+\tau_1}(e_1) \int_{\tau_2 \in I(s_{\tau_1},e_2)}
    p_{s_{\tau_1}+ \tau_2}(e_2) \cdots \\
    \int_{\tau_n \in I(s_{\tau_1 \cdots
        \tau_{n-1}},e_n)} 
    p_{s_{\tau_1
        \cdots \tau_{n-1}}+\tau_n}(e_n)\, \car{\mathcal{C}}(\tau_1,\ldots,\tau_n)\, \ud
    \mu_{s_{\tau_1
        \cdots \tau_{n-1}}}(\tau_n) \cdots \ud \mu_s(\tau_1)
\end{multline*}
\begin{multline*}
= \Prob_{\A}(A) \\
  \end{multline*}
This shows that $\Prob_{\A}$ is a measure on ${\mathcal F}_n(s)$, for all
$n \in \mathbb{N}$. It is moreover a probability measure since
$\Prob_{\A} ({\mathcal F}_n(s)) = \Prob_{\A}(\path{s}) =1$.
\end{proof}

\begin{lem}
  $\Prob_\A$ is a probability measure on $\mathcal{F}(s)$.
\end{lem}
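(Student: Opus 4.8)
The plan is to check, in order, that $\Prob_\A$ is unambiguously defined and finitely additive on $\mathcal{F}(s)$, that it has total mass $1$, and finally that it is $\sigma$-additive there; only the last point is substantial, and it will go by a compactness (inner approximation) argument.

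First I would record that $\mathcal{F}(s)$ is a ring. Since $\A$ is non-blocking, every infinite run from $s$ whose first $n$ edges are $e_1,\dots,e_n$ has an $(n{+}1)$-st transition, so
\[
\Cyl(\path[\mathcal{C}]{s,e_1\dots e_n}) \ =\ \bigsqcup_{e\in E}\Cyl(\path[\mathcal{C}]{s,e_1\dots e_n\,e})
\]
is a finite disjoint union of length-$(n{+}1)$ basic cylinders (with the obvious added constraint on the last coordinate); hence $\mathcal{F}_n(s)\subseteq\mathcal{F}_{n+1}(s)$, and the union of an increasing chain of rings is a ring. The same identity, combined with $\sum_{e\text{ enabled in }s'}p_{s'}(e)=1$ and the defining integral for $\Prob_\A$, yields the consistency relation $\Prob_\A(\path[\mathcal{C}]{s,e_1\dots e_n})=\sum_e\Prob_\A(\path[\mathcal{C}]{s,e_1\dots e_n\,e})$. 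Therefore the number $\Prob_\A(A)$ computed for $A\in\mathcal{F}_n(s)$ does not change when $A$ is refined into $\mathcal{F}_{n+1}(s)$, so $\Prob_\A$ is well defined on $\mathcal{F}(s)=\bigcup_n\mathcal{F}_n(s)$. Finite additivity on $\mathcal{F}(s)$ then reduces to Lemma~\ref{proba Fn}: any two disjoint elements of $\mathcal{F}(s)$ lie in a common $\mathcal{F}_n(s)$, where additivity is already known; likewise $\Prob_\A(\Runs(\A,s))=\Prob_\A(\path{s})=1$, and $\Prob_\A$ is monotone on $\mathcal{F}(s)$.

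For $\sigma$-additivity I would invoke Proposition~\ref{prop:KSK76}: it suffices to show $\Prob_\A(B_k)\to 0$ for every decreasing sequence $(B_k)_k$ in $\mathcal{F}(s)$ with $\bigcap_k B_k=\emptyset$. By monotonicity the values $\Prob_\A(B_k)$ decrease to some $\delta\ge 0$; assume for contradiction $\delta>0$. Pick $n_k$ with $B_k\in\mathcal{F}_{n_k}(s)$; if $\sup_k n_k<\infty$ the conclusion is immediate from Lemma~\ref{proba Fn}, so assume $n_k\to\infty$, and write each $B_k$ as a finite union of constrained length-$n_k$ cylinders. Because $\mu_s(\IR_+)=1$, restricting the delays $\tau_1,\dots,\tau_{n_k}$ to a large enough bounded window removes an arbitrarily small amount of mass, and because the finite-block measure defining $\Prob_\A$ is inner regular — also in the degenerate case of $(\star)$ where some $\mu_{s'}$ is a finite atomic distribution on the punctual delays of $I(s')$ — I can choose, inside each $B_k$, a set $K_k\in\mathcal{F}_{n_k}(s)$ pinning $(\tau_i,e_i)_{1\le i\le n_k}$ down to a \emph{closed and bounded} set, with $\Prob_\A(B_k)-\Prob_\A(K_k)\le\delta\,2^{-k-1}$. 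Then I put $\widetilde K_m=\bigcap_{k=1}^m K_k\in\mathcal{F}_{n_m}(s)$; finite additivity and subadditivity give $\Prob_\A(\widetilde K_m)\ge\delta-\sum_{k\le m}\delta 2^{-k-1}>\delta/2>0$, so each $\widetilde K_m$ is non-empty; choose $\rho^{(m)}\in\widetilde K_m$. For every fixed length there are finitely many admissible edge-prefixes and the admissible delay-prefixes lie in a fixed compact set, so a diagonal extraction from $(\rho^{(m)})_m$ produces a run $\rho^\star$ whose first $n_m$ coordinates are a limit of the corresponding coordinates of $\rho^{(j)}$, $j\ge m$; since $\widetilde K_m$ constrains only those coordinates and to a closed set, $\rho^\star\in\widetilde K_m$ for all $m$, whence $\rho^\star\in\bigcap_m\widetilde K_m\subseteq\bigcap_k B_k=\emptyset$, a contradiction. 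Hence $\delta=0$, $\Prob_\A$ is $\sigma$-additive, and so it is a probability measure on $\mathcal{F}(s)$.

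I expect the delicate part to be the inner-approximation step: stating precisely in what sense the finite-block measures are regular, so that open polyhedral constraints can be shrunk to compact ones with negligible loss of mass, treating the Lebesgue-equivalent and the purely atomic regimes of $(\star)$ uniformly, and then being careful in the compactness/diagonal argument about the fact that a cylinder fixes only finitely many coordinates of the run and leaves its tail entirely free. The ring-theoretic and additivity bookkeeping of the first two paragraphs is routine by comparison.
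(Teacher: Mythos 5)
Your proof is correct, and while the bookkeeping in your first two paragraphs (the ring structure of $\mathcal{F}(s)$, consistency across levels, finite additivity, total mass $1$) matches the paper's, your argument for $\sigma$-additivity takes a genuinely different route. The paper also reduces, via Proposition~\ref{prop:KSK76}, to showing $\Prob_\A(B_n)\to 0$ for a decreasing sequence with empty intersection, and further reduces to the case where each $B_n$ is a single constrained cylinder $\Cyl(\pi_n)$; it then studies the projections $\mathcal{C}_n^i$ of the tightest constraints onto the first $i$ coordinates, argues that if all the limit constraints $\mathcal{C}^i=\bigcap_n\mathcal{C}_n^i$ were non-empty one could build a run in $\bigcap_n\Cyl(\pi_n)$ coordinate by coordinate, concludes that some $\mathcal{C}^i$ is empty, and finishes with dominated convergence applied to the indicator functions $\car{\mathcal{C}_n^i}$. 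You instead run the classical compact-class argument: inner-approximate each $B_k$ by a cylinder over a compact constraint losing at most $\delta 2^{-k-1}$ of mass, deduce that the finite intersections $\widetilde K_m$ are non-empty, and extract a common point by a diagonal/compactness argument. Your route avoids the reduction to single cylinders and, more importantly, makes explicit the compactness that is really doing the work: the paper's step ``$\mathcal{C}^i=p_i(\mathcal{C}^{i+1})$ by continuity of the projection'' only gives one inclusion from continuity alone (projections do not in general commute with decreasing intersections), and it is precisely an inner compact approximation of the kind you perform that justifies the other inclusion. The price you pay is having to verify inner regularity of the finite-block measures uniformly across the two regimes of $(\star)$, which you correctly identify as the delicate point and which does hold, since these are finite Borel measures on $\IR_+^{n}$.
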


\begin{proof}

  Obviously $\Prob_{\A}$ is non-negative on ${\mathcal F}(s)$, additive
  (because ${\mathcal F}_n(s) \subseteq {\mathcal F}_{n+1}(s)$ for every $n
  \in \IN$) and finite over ${\mathcal F}(s)$. It remains to prove that it
  is $\sigma$-additive.  For this, we use
  Proposition~\ref{prop:KSK76}, and consider a sequence $(B_n)_n$ of
  sets in ${\mathcal F}(s)$ such that $B_0 \supseteq B_1 \supseteq B_2
  \supseteq \cdots$ and $\bigcap_n B_n = \emptyset$. W.l.o.g. we
  assume that for every $n$, $B_n \in {\mathcal F}_n(s)$. We want to prove
  that $\lim_n \Prob_{\A}(B_n) = 0$.  Applying a reasoning similar to
  that of~\cite[Lemmas~2.1, 2.2, 2.3]{KSK76}, it is sufficient, thanks
  to Lemma~\ref{proba Fn}, to do the proof when $B_n$ is some
  $\Cyl(\pi_n)$ where $\pi_n$ is a finite (constrained) symbolic path
  of length $n$. We write $\mathcal{C}_n$ for the tightest constraint
  over variables $(\tau_i)_{i \leq n}$ corresponding to $\pi_n$. We
  define $p_i$ the constraint from $\IR_+^{i+1}$ onto the $i$ first
  components (thus in $\IR_+^i$). Note that this projection is
  continuous (for the product topologies). In $\pi_n$, if $i < n$, the
  $i$ first variables are constrained by $\mathcal{C}_n^i =
  p_i(\mathcal{C}_n^{i+1})$.  Moreover, for every $i \leq n$, we have
  that
  \[
  \mathcal{C}_{n+1}^i \subseteq \mathcal{C}_n^i\quad \text{and}\quad
  \mathcal{C}_n^i \subseteq \mathcal{C}_n^{i-1}
  \]

  Fix some $i$, the sequence $(\mathcal{C}_n^i)_n$ is nested, hence converges
  to $\mathcal{C}^i$, and $\mathcal{C}^i \subseteq \mathcal{C}^{i-1}$. By
  continuity of the projection over the $i$ first components, we have that
  $\mathcal{C}^i = p_i(\mathcal{C}^{i+1})$. If none of the $\mathcal{C}^i$ is
  empty, we can thus construct an element in $\bigcap_i \mathcal{C}^i$ as
  follows: we take some $\tau_1$ satisfying the constraint $\mathcal{C}^1$; we
  have that $\mathcal{C}^1 = p_1(\mathcal{C}^2)$ (and $\mathcal{C}^2$ is a
  constraint over $\tau_1$ and $\tau_2$), hence there exists $\tau_2$ such
  that $(\tau_1,\tau_2)$ satisfies $\mathcal{C}^2$ (while $\tau_1$ still
  satisfies $\mathcal{C}^1$); we do the same step-by-step for all $\tau_i$ and
  construct a sequence $(\tau_i)_i$ which satisfies all constraints
  $\mathcal{C}^i$. This sequence corresponds to a run in $\bigcap_i
  \Cyl(\pi_i)$. As we assumed at the beginning of the paragraph that
  $\bigcap_i \Cyl(\pi_i) = \emptyset$, it thus means that there exists some $i
  \in \IN$ such that $\mathcal{C}^i = \emptyset$.

  We will use the fact that $\mathcal{C}^i = \bigcap_{n \geq i}
  \mathcal{C}^i_n$ is empty to prove that $\lim_n \Prob_\A(\pi_n) = 0$. We
  write, still with the notation that $\car\alpha$ is the characteristic
  function of set $\alpha$:
  \begin{multline*}
    \Prob_\A(\Cyl(\pi_n)) = \int_{\tau_1 \in I(s,e_1)}
    p_{s+\tau_1}(e_1) \int_{\tau_2 \in I(s_{\tau_1},e_2)}
    p_{s_{\tau_1}+ \tau_2}(e_2) \cdots   \\
     \int_{\tau_n \in I(s_{\tau_1 \cdots
        \tau_{n-1}},e_n)} p_{s_{\tau_1
        \cdots \tau_{n-1}}+\tau_n}(e_n)\,
    \car{\mathcal{C}_n}(\tau_1,\ldots,\tau_n)\, 
    \ud \mu_{s_{\tau_1
        \cdots \tau_{n-1}}}(\tau_n) \cdots \ud \mu_s(\tau_1) 
  \end{multline*}
  \begin{multline*}
    \le \int_{\tau_1 \in I(s,e_1)}
    \int_{\tau_2 \in I(s_{\tau_1},e_2)}
    \cdots \int_{\tau_i \in I(s_{\tau_1 \cdots
        \tau_{i-1}},e_i)}
    \car{\mathcal{C}^i_n}(\tau_1,\ldots,\tau_i)\, \ud
    \mu_{s_{\tau_1
        \cdots \tau_{i-1}}}(\tau_i) \cdots \ud \mu_s(\tau_1)
  \end{multline*}
  Applying the dominated convergence theorem, we get that:
  \begin{eqnarray*}
    \lim_n \Prob_\A(\Cyl(\pi_n)) & = & \int_{\cdots} \int_{\cdots} \cdots \int_{\cdots}
    \Big(\lim_n \car{\mathcal{C}^i_n}(\tau_1,\ldots,\tau_i)\Big)\, \ud
    \mu_{s_{\tau_1
        \cdots \tau_{i-1}}}(\tau_i) \cdots \ud \mu_s(\tau_1) \\
    & = & 0
  \end{eqnarray*}
  This concludes the proof that $\Prob_{\A}$ is $\sigma$-additive on
  $\mathcal{F}(s)$, and thus the proof that $\Prob_{\A}$ is a probability measure
  on ${\mathcal F}(s)$.
\end{proof}

We conclude the proof using the following classical measure extension theorem:

\begin{thm}[Carath\'eodory's extension theorem]
  \label{theo:caratheodory}
  Let $S$ be a set, and $\nu$ a $\sigma$-finite measure defined on a ring $R
  \subseteq 2^S$. Then, $\nu$ can be extended in a unique manner to the
  $\sigma$-algebra generated by $R$.
\end{thm}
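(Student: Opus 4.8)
The plan is to establish Carathéodory's extension theorem by the classical outer-measure construction, which requires only that $\nu$ be a non-negative, countably additive set function on the ring $R$ (the $\sigma$-finiteness hypothesis being needed solely for uniqueness). First I would define, for every $A \subseteq S$,
\[
\nu^*(A) \;=\; \inf\Big\{ \sum_{n\ge 0} \nu(R_n) \;:\; R_n \in R,\ A \subseteq \bigcup_{n\ge 0} R_n \Big\},
\]
with $\nu^*(A) = +\infty$ when no such countable cover exists, and then check the routine facts that $\nu^*(\emptyset)=0$ and that $\nu^*$ is monotone and countably subadditive, i.e. that $\nu^*$ is an outer measure on $S$. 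The first substantive step is that $\nu^*$ extends $\nu$ on $R$: the bound $\nu^*(A) \le \nu(A)$ for $A \in R$ comes from the single-element cover $\{A\}$, while $\nu^*(A) \ge \nu(A)$ follows by taking an arbitrary cover $(R_n)$ of $A$ by ring elements, disjointifying it inside $R$, and invoking the countable additivity of $\nu$ on $R$ together with monotonicity.

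Next I would invoke Carathéodory's measurability criterion: let $\mathcal{M}$ be the family of sets $E \subseteq S$ such that $\nu^*(A) = \nu^*(A \cap E) + \nu^*(A \setminus E)$ for all $A \subseteq S$. The standard argument, using only that $\nu^*$ is an outer measure, shows that $\mathcal{M}$ is a $\sigma$-algebra and that the restriction of $\nu^*$ to $\mathcal{M}$ is a (complete) measure. It then remains to verify $R \subseteq \mathcal{M}$: for $E \in R$ and a test set $A$ with $\nu^*(A) < \infty$, choose a ring cover $(R_n)$ of $A$ with $\sum_n \nu(R_n) \le \nu^*(A) + \varepsilon$; then $(R_n \cap E)$ covers $A \cap E$, $(R_n \setminus E)$ covers $A \setminus E$, and $\nu(R_n) = \nu(R_n \cap E) + \nu(R_n \setminus E)$ because $R$ is a ring and $\nu$ is additive, so letting $\varepsilon \to 0$ gives $\nu^*(A \cap E) + \nu^*(A \setminus E) \le \nu^*(A)$; the reverse inequality is subadditivity, and the case $\nu^*(A)=\infty$ is trivial. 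Hence $\sigma(R) \subseteq \mathcal{M}$ and the restriction of $\nu^*$ to $\sigma(R)$ is a measure extending $\nu$. In the paper's instance one has $R = \mathcal{F}(s)$, which already contains $\Runs(\A,s)$ with $\Prob_\A$ assigning it mass $1$, so this extension is automatically a probability measure on $\Omega_{\A}^s = \sigma(\mathcal{F}(s))$.

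Finally, for uniqueness I would run a Dynkin $\pi$-$\lambda$ (monotone-class) argument. A ring is closed under finite intersections, hence is a $\pi$-system; if $\mu_1,\mu_2$ both extend $\nu$ to $\sigma(R)$ and the ambient measure is finite, then $\{B \in \sigma(R) : \mu_1(B) = \mu_2(B)\}$ is a $\lambda$-system containing $R$, so it is all of $\sigma(R)$. The general $\sigma$-finite case reduces to the finite one by fixing an increasing sequence $S_k \in R$ exhausting a set of full measure with $\nu(S_k) < \infty$, applying the finite case to $B \mapsto \mu_i(B \cap S_k)$, and letting $k \to \infty$. I expect the only delicate points to be the bookkeeping in verifying $R \subseteq \mathcal{M}$ (isolating the $\nu^*(A)=\infty$ case) and the exhaustion step in the uniqueness proof; in the present application both simplify, since $\Prob_\A$ is a finite (indeed probability) measure, so the argument reduces to constructing $\nu^*$, observing $\mathcal{F}(s) \subseteq \mathcal{M}$, restricting, and deducing uniqueness directly from the $\pi$-$\lambda$ theorem.
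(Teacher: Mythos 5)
Your proof is correct, but there is nothing in the paper to compare it against: the authors state Carath\'eodory's extension theorem as a classical result (citing it only implicitly, in the spirit of \cite{KSK76} and \cite{Bi95}) and immediately apply it to $R=\mathcal{F}(s)$ and $\nu=\Prob_\A$ without proof. What you have written is the canonical textbook argument --- outer measure $\nu^*$ by countable ring covers, the Carath\'eodory measurability criterion to produce a $\sigma$-algebra $\mathcal{M}\supseteq R$ carrying $\nu^*$ as a measure, and a $\pi$--$\lambda$ argument with $\sigma$-finite exhaustion for uniqueness --- and all the substantive steps (disjointification inside the ring for $\nu^*|_R=\nu$, the $\varepsilon$-optimal cover splitting for $R\subseteq\mathcal{M}$) are sound. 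One small imprecision: in your finite-measure uniqueness step, for $\{B\in\sigma(R):\mu_1(B)=\mu_2(B)\}$ to be a $\lambda$-system you need it to contain $S$ itself, which does not follow from finiteness of the ambient measure alone but from $S$ being a (countable union of) ring element(s) of finite measure; your exhaustion by $S_k\in R$ with $\bigcup_k S_k=S$ is exactly what supplies this, and in the paper's application it is immediate since $\Runs(\A,s)=\path{s}\in\mathcal{F}_0(s)$ already lies in the ring with mass $1$.
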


We apply Theorem~\ref{theo:caratheodory} to the set $S = \Runs(\A,s)$, $R =
\mathcal{F}(s)$, and $\nu = \Prob_{\A}$ which is a $\sigma$-finite measure on
$\mathcal{F}(s)$. Hence, there is a unique extension of $\Prob_{\A}$ on
$\Omega_{\A}^s$, the $\sigma$-algebra generated by the cylinders, which is a
probability measure.
\end{proof}

\ligne 
\label{annex:region}

\noindent\fbox{\begin{minipage}{.98\linewidth}
\lemmaproba* \end{minipage}}

\begin{proof}
  It is sufficient to prove that the measures coincide on finite
  constrained paths, since it implies that they agree on cylinders and
  by uniqueness of the extension on any measurable set of infinite
  runs.

  In this proof we will denote transitions of $\A$ by $e_i$
  and transition in $\RA$ by $f_i$. We prove that
  $\Prob_\A$ and $\Prob_{\RA}$ coincide on finite paths by induction
  on the length $n$ of constrained symbolic paths. When $n=0$, this is
  obvious as, for every $(\ell,\nu)$, there is a single state
  $((\ell,r),\nu)$ in $\RA$ such that $\nu \in r$, and in that case,
  $\iota(\path{(\ell,\nu)}) = \{\path{((\ell,r),\nu)}\}$. We assume
  the induction hypothesis holds for all constrained paths of length
  strictly smaller than $n$.

  We will use the following notations (this will be technical, but
  rather simple): given $s$ a state, we recall that $s+t$ is the
  state reached from $s$ after a delay $t$, $[s]$ is the region to
  which $s$ belongs. If $q$ is a state of the region automaton, we
  write $n_q$ for the number of edges enabled without delay in $q$ in
  $\RA$ (or equivalently in $\A$). If transition $e_1$ can be taken
  from $q$ without delay, $e_1(q)$ denotes the single image region
  reached after firing $e_1$ from $q$, and we write $q \models f_1$ if
  $f_1$ is the unique transition with guard checking that we are in
  $q$ and corresponding to $e_1$ in $\RA$.

  Let $\pi=\path[\mathcal{C}]{s,e_1,\ldots,e_n}$ be a constrained
  symbolic path in $\A$. Constraint $\mathcal{C}$ is on $n$ variables
  $\tau_1 \cdots \tau_n$. We will denote $\mathcal{C}_t$ the
  constraint obtained from $\mathcal{C}$ by replacing $\tau_1$ by $t$.

\[\eqalign{
  \Prob_{\A}(\pi) 
&=\int_{t \in I(s,e_1)} p^{\A}_{s+t}(e_1) \,
  \Prob_{\A}(\path[\mathcal{C}_t]{s_t,e_2 \ldots e_n}) \, \ud
  \mu^{\A}_s(t) \cr
&=\int_{t \in I(s,e_1)} p^{\A}_{s+t}(e_1) \,
  \Prob_{\RA}(\iota(\path[\mathcal{C}_t]{s_t,e_2 \ldots e_n}))\, \ud
  \mu^{\A}_s(t) \quad \text{by induction hypothesis} \cr
&=\int_{\begin{array}{c} {\scriptscriptstyle t \in
        I(s,e_1)} \end{array}} p^{\A}_{s+t}(e_1) \sum_{\pi' \in
    \iota(\path{s_t,e_2 \ldots e_n})}
  \Prob_{\RA}(\pi'_{\mathcal{C}_t}) \, \ud \mu^{\A}_s(t) \cr
&=\sum_q \int_{\begin{array}{c} {\scriptscriptstyle t \in I(s,e_1)}\cr 
  {\scriptscriptstyle s+t \in q} \end{array}}
  p^{\A}_{s+t}(e_1) \hspace*{-.5cm}\sum_{\pi' \in \iota(\path{s_t,e_2
      \ldots e_n})}\hspace*{-.5cm} \Prob_{\RA}(\pi'_{\mathcal{C}_t})
  \, \ud \mu^{\A}_s(t) \cr
&=\sum_q \int_{\begin{array}{c} {\scriptscriptstyle t \in I(s,e_1)}\cr
  {\scriptscriptstyle s+t \in q} \end{array}}
  p^{\A}_{s+t}(e_1) \hspace*{-1cm}\sum_{\begin{array}{c}
      {\scriptscriptstyle (f_2, \ldots f_n) \in
        \iota(e_1(q),e_2,\ldots,e_n)} \end{array}}\hspace*{-1cm}
  \Prob_{\RA}(\path[\mathcal{C}_t]{\iota(s_t),f_2 \ldots f_n}) \, \ud
  \mu^{\A}_s(t) \cr
&=\sum_q \int_{\begin{array}{c} {\scriptscriptstyle t \in
        I(\iota(s),f_1)} \cr
  {\scriptscriptstyle s+t \in q} \\[-.2cm]{\scriptscriptstyle  q \models f_1} \\[-.1cm] {\scriptscriptstyle [s] \xrightarrow{f_1} e_1(q)} \end{array}} p^{\RA}_{\iota(s)+t}(f_1)\hspace*{-1cm}\sum_{\begin{array}{c} {\scriptscriptstyle (f_2,\ldots,f_n) \in \iota(e_1(q),e_2,\ldots,e_n)} \end{array}} \hspace*{-1cm}
 \Prob_{\RA}(\path[\mathcal{C}_t]{\iota(s)_t,f_2,\ldots,f_n}) \, \ud
 \mu^{\RA}_{\iota(s)}(t)
 \llap{\lower 33 pt\hbox{by hypothesis on the measures and weights}} \cr
&=\sum_{\begin{array}{c} {\scriptscriptstyle q \models f_1}\cr
  {\scriptscriptstyle [s] \xrightarrow{f_1} e_1(q)} \\[-.2cm]
  {\scriptscriptstyle (f_2,\ldots,f_n) \in
    \iota(e_1(q),e_2,\ldots,e_n)} \end{array}} \hspace*{-1cm} \int_{t
   \in I(\iota(s),f_1)} p^{\RA}_{\iota(s)+t}(f_1) \,
 \Prob_{\RA}(\path[\mathcal{C}_t]{\iota(s)_t,f_2,\ldots,f_n}) \,\ud
 \mu^{\RA}_{\iota(s)}(t) \cr
&=\sum_{\begin{array}{c} {\scriptscriptstyle q \models f_1} \\[-.1cm]
     {\scriptscriptstyle [s] \xrightarrow{f_1} e_1(q)} \\[-.2cm]
     {\scriptscriptstyle (f_2,\ldots,f_n) \in
       \iota(e_1(q),e_2,\ldots,e_n)}  \end{array}} \hspace*{-.8cm}\Prob_{\RA}(\path[\mathcal{C}]{\iota(s),f_1,\ldots,f_n})
 \cr
&=\Prob_{\RA}(\iota(\pi))
  }
\]
  \noindent where $(f_2,\ldots,f_n) \in \iota(e_1(q),e_2,\ldots,e_n)$
  iff $(f_2,\ldots,f_n)$ is a finite sequence of transitions
  corresponding to $(e_2,\ldots,e_n)$ and which starts in $(e_1(q))$
  (this is a state of $\RA$). 
 \end{proof}

\ligne

\noindent\fbox{\begin{minipage}{.98\linewidth}
\mesurabilite* \end{minipage}}

\label{app:mesurabilite}

\begin{proof}
  It is sufficient to do the proof in the case of specifications given
  as deterministic timed automata. Indeed, it covers also the case of
  $\omega$-regular and \LTL-properties, since they can be turned into
  a deterministic untimed Muller automaton.

  Let $\langle \A,\mu,w \rangle$ be a stochastic timed automaton, and
  $\B$ a specification automaton (that is, a deterministic complete
  timed automaton). We prove that the set of runs in $\A$ that are
  accepted by $\B$ is measurable (for the probability measure defined
  by $\langle \A,\mu,w \rangle$).  To do so, we consider the product
  timed automaton $\A \ltimes \B$ (see definition on
  page~\pageref{def:produit-TA-spec}). Let $\R{\A \ltimes \B}$ be its
  (untimed) region automaton, and $\Fcal$ the accepting condition
  naturally derived from the one of $\B$. The set of paths in $\R{\A
    \ltimes \B}$ satisfying $\Fcal$ is a Boolean combination of
  cylinders $\Cyl(\bs_0, \be_1 \dots \be_n)$.  Indeed, since $\Fcal$
  is an $\omega$-regular condition and seeing $\R{\A \ltimes \B}$ as a
  finite Markov chain (with arbitrary probabilities), this is a
  consequence of the proof of measurability of $\omega$-regular
  properties~\cite{vardi85}. For a fixed finite path $\bs_0,\be_1
  \dots \be_n$ in $\R{\A \ltimes \B}$, we write
  \[
  H(\bs_0, \be_1 \dots \be_n) = \{ \rho \in \Runs(\A,s_0) \mid
  \iota(\rho^{\B}) \in \Cyl(\path{\bs_0, \be_1 \dots \be_n})\} \enspace.
  \]
  Roughly speaking, $H(\bs_0, \be_1 \dots \be_n)$ is the set of all
  runs in $\A$ whose natural projection in $\R{\A \ltimes \B}$ belongs
  to $\Cyl(\path{\bs_0, \be_1 \dots \be_n})$. One can be convinced
  that $H(\bs_0, \be_1 \dots \be_n)$ consists of a finite union of
  cylinders generated by constrained symbolic paths in $\A$. Hence the
  set of runs in $\A$ satisfying the specification $\B$ can be written
  as a Boolean combination of cylinders generated by constrained
  symbolic paths, and is therefore measurable.
\end{proof}

\section{Details for Section~\ref{sec:toposem}}

For Definition~\ref{def:topo} to properly define a topological space,
we prove that the intersection of two basic open sets is still a basic
open set. This is the object of the following result, whose proof
requires several technical intermediary lemmas.

\begin{lem}
  \label{lemma:intersection}
  Let $\pi_{\mathcal{C}} = \path[\mathcal{C}]{s,e_1,\ldots,e_n}$ and
  $\pi_{\mathcal{C}'} = \path[\mathcal{C}']{s,e_1,\ldots,e_n}$ be two
  basic open sets of same length. Then $\pi_{\mathcal{C}} \cap
  \pi_{\mathcal{C}'}$ is an open set.
\end{lem}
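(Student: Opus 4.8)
Note first that if the two basic open sets were read along distinct edge words of the same length, no run could traverse both, so their intersection would be the empty set, which is open; hence (as in the statement) we may assume both are read along the \emph{same} edge word $e_1,\dots,e_n$. Write $\pi = \path{s,e_1 \dots e_n}$ for the unconstrained symbolic path. A run lies in $\Cyl(\pi_{\mathcal C})$ precisely when its first $n$ transitions use $e_1,\dots,e_n$ and the resulting delay vector belongs to $\mathcal C$ (the length-$n$ prefix being forced as the witness of membership), and similarly for $\mathcal C'$. Hence
\[
\Cyl(\pi_{\mathcal C}) \cap \Cyl(\pi_{\mathcal C'}) \;=\; \Cyl(\pi_{\mathcal C \cap \mathcal C'}), \qquad \pi_{\mathcal C \cap \mathcal C'} := \path[\mathcal C \cap \mathcal C']{s,e_1 \dots e_n},
\]
so the plan reduces to showing that $\pi_{\mathcal C \cap \mathcal C'}$ defines a basic open set, i.e.\ satisfies conditions (i)--(iii) of Definition~\ref{def:topo}. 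If $\Pol(\pi_{\mathcal C \cap \mathcal C'}) = \emptyset$ this intersection is empty, hence open, so we may assume it is non-empty.

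Conditions (ii) and (iii) are immediate. $\mathcal C \cap \mathcal C'$ is Borel as an intersection of two Borel sets, which gives (ii). For (iii), unfolding the definition of the associated polyhedron yields $\Pol(\path[\mathcal E]{s,e_1 \dots e_n}) = \Pol(\pi) \cap \mathcal E$ for every constraint $\mathcal E$, whence
\[
\Pol(\pi_{\mathcal C \cap \mathcal C'}) \;=\; \Pol(\pi) \cap \mathcal C \cap \mathcal C' \;=\; \Pol(\pi_{\mathcal C}) \cap \Pol(\pi_{\mathcal C'}),
\]
and an intersection of two subsets of $\Pol(\pi)$ that are open in $\Pol(\pi)$ (which $\Pol(\pi_{\mathcal C})$ and $\Pol(\pi_{\mathcal C'})$ are, by (iii) for the two given basic opens) is again open in $\Pol(\pi)$.

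The \emph{main obstacle} is condition (i), thickness of $\pi_{\mathcal C \cap \mathcal C'}$. I would establish it via the auxiliary statement that \emph{among all constraints $\mathcal E$ for which $\Pol(\path[\mathcal E]{s,e_1 \dots e_n})$ is non-empty and open in $\Pol(\pi)$, whether $\path[\mathcal E]{s,e_1 \dots e_n}$ is thick depends only on the edge word $e_1,\dots,e_n$}; since $\Pol(\pi_{\mathcal C \cap \mathcal C'})$ is open in $\Pol(\pi)$ (by the computation above) and $\pi_{\mathcal C}$ is such an $\mathcal E$ and is thick, this gives (i). To prove the auxiliary statement one fixes an index $i$ and compares $\dim\Pol(\path[\mathcal E_i]{s,e_1 \dots e_i})$ with $\dim\bigl(\bigcup_e \Pol(\path[\mathcal E_{i-1}]{s,e_1 \dots e_{i-1}e})\bigr)$, where $\mathcal E_j$ is the projection of $\Pol(\path[\mathcal E]{s,e_1 \dots e_n})$ onto the first $j$ coordinates. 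For the left-hand dimension I would use the elementary convex-geometry fact that a non-empty relatively open subset of a polyhedron has full dimension in it, and that a coordinate projection sends such a subset onto a set of the same dimension as the projection of the whole polyhedron; this makes the left-hand dimension independent of the admissible $\mathcal E$. The right-hand dimension is the delicate part, because for $e \ne e_i$ the polyhedron $\Pol(\path{s,e_1 \dots e_{i-1}e})$ bears no convex relation to $\Pol(\pi)$ and the constraint $\mathcal E_{i-1}$ cuts through it uncontrollably a priori; here I would decompose according to the region $q$ reached after $e_1,\dots,e_{i-1}$, using that on $q$ the dimension of $I(q,e)$ is region-determined, and combine this with the full-dimensionality (again by the convex-projection fact) of $\mathcal E_{i-1}$ inside the projection of $\Pol(\pi)$, so that intersecting with $\mathcal E_{i-1}$ drops no dimension relative to the unconstrained case. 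Putting the two halves together, thickness at every $i$ is insensitive to the admissible choice of $\mathcal E$, which finishes the proof. (If one is content to invoke the later characterisation of basic opens, Lemma~\ref{lemma:basicopensets}, condition (i) is immediate: write $\pi_{\mathcal C} = \path[\mathcal O]{s,e_1 \dots e_n}$ and $\pi_{\mathcal C'} = \path[\mathcal O']{s,e_1 \dots e_n}$ with $\mathcal O,\mathcal O'$ open over thick edges $e_1,\dots,e_n$; then $\mathcal O \cap \mathcal O'$ is open over the same thick edges.)
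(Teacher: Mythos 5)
Your skeleton is the paper's: reduce to showing that $\path[\mathcal{C}\cap\mathcal{C}']{s,e_1\ldots e_n}$ is a basic open set, dispose of Borel-measurability and of openness of the polyhedron by intersecting, and isolate thickness as the only real issue. The gap is the auxiliary statement you rest thickness on, namely that among all constraints $\mathcal{E}$ with $\Pol(\path[\mathcal{E}]{s,e_1\ldots e_n})$ non-empty and open in $\Pol(\path{s,e_1\ldots e_n})$, thickness depends only on the edge word. That statement is \emph{false}, so the ``delicate part'' of your sketch cannot be repaired. Take two clocks, an edge $e_1$ from $\ell_0$ to $\ell_1$ with guard $x\le 2$ resetting $y$, and from $\ell_1$ an edge $e_2$ with guard $y=3$ and an edge $e_3$ with guard $x\le 1$. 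Then $\Pol(\path{s,e_1e_2})=\{(\tau_1,3)\mid \tau_1\in[0,2]\}$ is a segment while $\Pol(\path{s,e_1e_3})=\{(\tau_1,\tau_2)\mid \tau_1+\tau_2\le 1\}$ is two-dimensional. With $\mathcal{E}=(1,2)\times\IR$ the constrained path $\path[\mathcal{E}]{s,e_1e_2}$ is thick (the competing $e_3$-polyhedron becomes empty over $\mathcal{E}_1=(1,2)$, so the right-hand dimension at $i=2$ drops to $1$), whereas with $\mathcal{E}'=(0,1)\times\IR$ it is thin (the $e_3$-polyhedron restricted to $\mathcal{E}'_1=(0,1)$ is still two-dimensional). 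Both polyhedra are non-empty and open in the segment. So the quantity $\dim\big(\bigcup_e\Pol(\path[\mathcal{E}_{i-1}]{s,e_1\ldots e_{i-1}e})\big)$ genuinely depends on the admissible $\mathcal{E}$, and no decomposition by regions will make it constant; this is exactly the non-locality phenomenon of Example~\ref{ex:dimsubtil}.

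What is true, and is all the lemma needs, is the one-directional version (the paper's Corollary~\ref{coro2-essai3}): if $\pi_{\mathcal{D}}\subseteq\pi_{\mathcal{D}'}$, $\pi_{\mathcal{D}'}$ is a non-empty basic open set, and $\Pol(\pi_{\mathcal{D}})$ is open in $\Pol(\pi)$, then $\pi_{\mathcal{D}}$ is thick. The left-hand dimensions of both are full by openness (your convex-geometry observation, which is Lemma~\ref{lemma:croissance}), and the right-hand dimensions for $\mathcal{D}$ are bounded above by those for $\mathcal{D}'$ simply by monotonicity of dimension under set containment; thickness of $\pi_{\mathcal{D}'}$ then closes the chain of inequalities. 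Applying this with $\mathcal{D}=\mathcal{C}\cap\mathcal{C}'$ and $\mathcal{D}'=\mathcal{C}$ finishes your proof, and you already have every ingredient for it. Your parenthetical fallback through Lemma~\ref{lemma:basicopensets} is not legitimate at this point of the development: that characterisation is established only after the topology is known to exist, and its proof itself invokes the containment corollary above, whereas the present lemma is what justifies that Definition~\ref{def:topo} defines a topology in the first place.
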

  
For the next lemmas, let us fix $\pi_{\mathcal{C}}
= \path[\mathcal{C}]{s,e_1,\ldots,e_n}$ and $\pi_{\mathcal{C}'} =
\path[\mathcal{C}']{s,e_1,\ldots,e_n}$ be two constrained symbolic
paths of same length, where $\mathcal{C}$ and $\mathcal{C}'$ are
Borel-measurable. For all $i \le n$, write $\mathcal{C}_i$
(resp. $\mathcal{C}'_i$) for the projection of $\mathcal{C}$
(resp. $\mathcal{C}'$) on the $i$ first coordinates. Write also
$\pi_{\mathcal{C}_i} = \path[\mathcal{C}_i]{s,e_1 \ldots e_i}$,
$\pi_{\mathcal{C}'_i} = \path[\mathcal{C}'_i]{s,e_1 \ldots e_i}$, and
$\pi_i = \path{s,e_1 \ldots e_i}$.

\begin{lem}
  \label{lemma:croissance}
  Assume $\pi_{\mathcal{C}} \subseteq \pi_{\mathcal{C}'}$ and
  $\dim(\Pol(\pi_{\mathcal{C}})) =
  \dim(\Pol(\pi_{\mathcal{C}'}))$. Then for all $i \leq n$,
  $\dim(\Pol(\pi_{\mathcal{C}_i})) = \dim(\Pol(\pi_{\mathcal{C'}_i}))$
\end{lem}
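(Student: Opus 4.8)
The plan is to reduce the statement to an elementary fact about affine hulls and then propagate it through the projections defining the constraints $\mathcal{C}_i$. Write $\mathrm{pr}_i\colon(\IR_+)^n\to(\IR_+)^i$ for the projection onto the first $i$ coordinates, and for a (constrained) path recall that $\dim(\Pol(\cdot))$ is the dimension of the affine hull $\mathrm{aff}(\Pol(\cdot))$. The first step is to pin down that $\Pol(\pi_{\mathcal{C}_i})=\mathrm{pr}_i\big(\Pol(\pi_{\mathcal{C}})\big)$, and likewise for $\mathcal{C}'$. Indeed, $\mathcal{C}_i$ is by definition the constraint whose solution set is $\mathrm{pr}_i\big(\Pol(\pi_{\mathcal{C}})\big)$; since every tuple of $\Pol(\pi_{\mathcal{C}})$ is the delay vector of a genuine run along $e_1\dots e_n$, each of its length‑$i$ prefixes is the delay vector of a genuine run along $e_1\dots e_i$, so $\mathrm{pr}_i\big(\Pol(\pi_{\mathcal{C}})\big)\subseteq\Pol(\pi_i)$ where $\pi_i=\path{s,e_1\dots e_i}$. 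Hence $\Pol(\pi_{\mathcal{C}_i})=\Pol(\pi_i)\cap\mathrm{pr}_i\big(\Pol(\pi_{\mathcal{C}})\big)=\mathrm{pr}_i\big(\Pol(\pi_{\mathcal{C}})\big)$, and symmetrically for $\mathcal{C}'$.

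Next I would exploit the hypotheses. From $\pi_{\mathcal{C}}\subseteq\pi_{\mathcal{C}'}$ we get $\Pol(\pi_{\mathcal{C}})\subseteq\Pol(\pi_{\mathcal{C}'})$, hence $\mathrm{aff}\big(\Pol(\pi_{\mathcal{C}})\big)\subseteq\mathrm{aff}\big(\Pol(\pi_{\mathcal{C}'})\big)$; as these are affine subspaces of the same finite dimension (the dimension hypothesis), they coincide, say $\mathrm{aff}\big(\Pol(\pi_{\mathcal{C}})\big)=\mathrm{aff}\big(\Pol(\pi_{\mathcal{C}'})\big)=H$. Now I would use the standard fact that an affine (in particular linear) map commutes with affine hulls: $\mathrm{aff}(f(S))=f(\mathrm{aff}(S))$ for every set $S$ and affine $f$. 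Applying this with $f=\mathrm{pr}_i$ and the identity of the first step,
\[
\mathrm{aff}\big(\Pol(\pi_{\mathcal{C}_i})\big)=\mathrm{aff}\big(\mathrm{pr}_i(\Pol(\pi_{\mathcal{C}}))\big)=\mathrm{pr}_i\big(\mathrm{aff}(\Pol(\pi_{\mathcal{C}}))\big)=\mathrm{pr}_i(H)=\mathrm{aff}\big(\Pol(\pi_{\mathcal{C}'_i})\big),
\]
so taking dimensions yields $\dim\big(\Pol(\pi_{\mathcal{C}_i})\big)=\dim\big(\Pol(\pi_{\mathcal{C}'_i})\big)$ for every $i\le n$, which is exactly the claim. (The degenerate case $\Pol(\pi_{\mathcal{C}})=\emptyset$ forces $\Pol(\pi_{\mathcal{C}'})=\emptyset$ too, and then both projections are empty.)

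The only genuinely delicate point is the first step: making sure $\Pol(\pi_{\mathcal{C}_i})$ is literally the coordinate projection of $\Pol(\pi_{\mathcal{C}})$, and not the (a priori larger) set of all valid length‑$i$ prefixes satisfying the projected constraint — here one must use that a prefix of a valid run is again a valid run, so that intersecting with $\Pol(\pi_i)$ changes nothing. Everything after that is routine bookkeeping with affine hulls and the fact that affine maps preserve affine combinations; I do not expect any further obstacle.
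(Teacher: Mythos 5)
Your proof is correct, and it takes a genuinely more direct route than the paper's. The paper argues by contradiction and locally: if the dimensions of the prefix polyhedra differed at some index $i$, then, since they agree at index $n$, there would have to be a later coordinate $j$ in which $\Pol(\pi_{\mathcal{C}})$ gains a dimension while $\Pol(\pi_{\mathcal{C}'})$ does not, and this is then ruled out by the containment $\Pol(\pi_{\mathcal{C}})\subseteq\Pol(\pi_{\mathcal{C}'})$ (implicitly because the projection dropping the $j$-th coordinate is injective on $\mathrm{aff}(\Pol(\pi_{\mathcal{C}'_j}))$, hence on any affine subspace of it). You instead establish the global identity $\mathrm{aff}(\Pol(\pi_{\mathcal{C}}))=\mathrm{aff}(\Pol(\pi_{\mathcal{C}'}))$ once and for all from the two hypotheses, and push it through the coordinate projections via $\mathrm{aff}(f(S))=f(\mathrm{aff}(S))$; this buys you a one-shot computation that dispenses with the pigeonhole on dimension increments and with the local injectivity lemma, and it even yields the slightly stronger conclusion that the prefix polyhedra have equal affine hulls, not merely equal dimensions. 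Your explicit justification that $\Pol(\pi_{\mathcal{C}_i})=\mathrm{pr}_i(\Pol(\pi_{\mathcal{C}}))$ (a prefix of a valid run is a valid run, so intersecting with $\Pol(\pi_i)$ changes nothing) is indeed the only delicate point, and it is left implicit in the paper. The one caveat --- shared with the paper's proof, so not a gap on your side --- is that both arguments read $\dim(\cdot)$ as the dimension of the affine hull; since $\mathcal{C}$ is only assumed Borel, $\Pol(\pi_{\mathcal{C}})$ need not be convex, and the statement would fail for, say, topological dimension, but the affine reading is clearly the intended one throughout Section~3.2.
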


\begin{proof}
  Assume there exists an index $i \leq n$ such that
  $\dim(\Pol(\pi_{\mathcal{C}_i})) <
  \dim(\Pol(\pi_{\mathcal{C'}_i}))$. As $\dim(\Pol(\pi_{\mathcal{C}}))
  = \dim(\Pol(\pi_{\mathcal{C}'}))$ there must be an index $j$, such
  that $\Pol(\pi_{\mathcal{C}})$ gains some dimension in the $j$-th
  direction, whereas $\Pol(\pi_{\mathcal{C'}})$ does not. But this is
  not possible since $\pi_{\mathcal{C}} \subseteq \pi_{\mathcal{C}'}$
  and therefore $\Pol(\pi_{\mathcal{C}}) \subseteq
  \Pol(\pi_{\mathcal{C}'})$. 
\end{proof}

From this basic result, we get the following corollaries.

\begin{cor}
  \label{cor:full_dimension}
  If $\Pol(\pi_{\mathcal{C}})$ is open in $\Pol(\pi)$, then for all $i
  \leq n$, $\dim(\Pol(\pi_{\mathcal{C}_i})) = \dim(\Pol(\pi_i))$.
\end{cor}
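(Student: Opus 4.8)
The plan is to obtain Corollary~\ref{cor:full_dimension} as an immediate application of Lemma~\ref{lemma:croissance}, taking for $\pi_{\mathcal{C}'}$ the \emph{unconstrained} path $\pi = \path{s,e_1 \ldots e_n}$. With this choice the constraint $\mathcal{C}'$ is \emph{true}, hence $\mathcal{C}'_i$ is \emph{true} and $\pi_{\mathcal{C}'_i} = \pi_i$ for every $i \le n$, so the conclusion of Lemma~\ref{lemma:croissance} becomes exactly the statement we want. First I would record that we may assume $\Pol(\pi_{\mathcal{C}}) \ne \emptyset$: this is automatic whenever $\Cyl(\pi_{\mathcal{C}})$ is a non-empty basic open set in the sense of Definition~\ref{def:topo}. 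The inclusion $\pi_{\mathcal{C}} \subseteq \pi$ is trivial, so the only hypothesis of Lemma~\ref{lemma:croissance} that remains to be checked is the equality $\dim(\Pol(\pi_{\mathcal{C}})) = \dim(\Pol(\pi))$.

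The crux is therefore the following elementary geometric fact, which I would establish first: if $P \subseteq \IR^n$ is convex and $U \subseteq P$ is non-empty and open in $P$ for the subspace topology, then $\dim U = \dim P$, where $\dim$ denotes the dimension of the affine hull. To prove it, pick $x \in U$ and $\varepsilon > 0$ with $P \cap B(x,\varepsilon) \subseteq U$; for an arbitrary $y \in P$, convexity of $P$ gives $x + t(y-x) \in P$ for all $t \in [0,1]$, and for $t$ small enough this point lies in $B(x,\varepsilon)$, hence in $U$. Then $t(y-x) = \big(x + t(y-x)\big) - x$ is a difference of two points of $U$, so $y-x$ lies in the direction subspace of $\mathrm{aff}(U)$; as $y$ was arbitrary, $\mathrm{aff}(P) \subseteq \mathrm{aff}(U)$, while the reverse inclusion is obvious. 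Applying this with $P = \Pol(\pi)$ (a polyhedron, hence convex) and $U = \Pol(\pi_{\mathcal{C}})$ (non-empty and open in $\Pol(\pi)$ by assumption) yields $\dim(\Pol(\pi_{\mathcal{C}})) = \dim(\Pol(\pi))$, and Lemma~\ref{lemma:croissance} then gives $\dim(\Pol(\pi_{\mathcal{C}_i})) = \dim(\Pol(\pi_i))$ for all $i \le n$, as desired.

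The main point to be careful about --- and essentially the only obstacle --- is that $\Pol(\pi_{\mathcal{C}})$ need not itself be convex when $\mathcal{C}$ is an arbitrary Borel constraint. One must therefore apply the geometric fact with the ambient convex set being $\Pol(\pi)$ rather than $\Pol(\pi_{\mathcal{C}})$, and verify that the argument above uses only convexity of $P$, never of $U$ --- which is indeed the case, since it relies solely on $P$ being closed under taking segments emanating from $x$.
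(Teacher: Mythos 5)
Your proof is correct and follows essentially the same route as the paper: both reduce the statement to Lemma~\ref{lemma:croissance} applied with $\mathcal{C}'$ equal to \emph{true} (so that $\pi_{\mathcal{C}'_i}=\pi_i$), after first establishing $\dim(\Pol(\pi_{\mathcal{C}})) = \dim(\Pol(\pi))$ from the fact that a non-empty relatively open subset of a convex set has the same affine dimension as the set itself. The only difference is that the paper merely cites this convexity fact in a footnote without proof, whereas you supply the segment argument for it --- which is correct and, as you rightly emphasize, uses convexity only of the ambient set $\Pol(\pi)$ and not of $\Pol(\pi_{\mathcal{C}})$.
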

\begin{proof}
  As $\Pol(\pi_\mathcal{C})$ is open in $\Pol(\pi)$, there exists an
  open set $O$ of $\mathbb{R}^n$ such that $\Pol(\pi_\mathcal{C}) = O
  \cap \Pol(\pi)$. This implies that $\dim(\Pol(\pi_\mathcal{C})) =
  \dim(\Pol(\pi))$.\footnote{We use here the following general
    topology result: if $X$ is a convex set and $O$ an open set in
    $\mathbb{R}^n$ such that $X \cap O \neq \emptyset$, then $\dim(X)
    = \dim(X \cap O)$.}  Applying Lemma~\ref{lemma:croissance} to
  $\pi_{\mathcal{C}}$ and $\pi$ yields the expected result. 
\end{proof}

\begin{cor}
  \label{coro2-essai3}
  Assume $\pi_{\mathcal{C}'}$ is a non-empty open set of
  $(\Runs(\A,s),\mathcal{T}_{\A}^s)$ and $\pi_{\mathcal{C}} \subseteq
  \pi_{\mathcal{C}'}$ and $\Pol(\pi_{\mathcal{C}})$ is open in
  $\Pol(\pi)$, then $\pi_{\mathcal{C}}$ is thick (that is,
  $\pi_{\mathcal{C}}$ is a non-empty basic open set of
  $(\Runs(\A,s),\mathcal{T}^s_{\A})$).
\end{cor}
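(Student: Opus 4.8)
The plan is to verify that $\pi_{\mathcal{C}}$ meets all three conditions of Definition~\ref{def:topo}. Condition (ii), Borel-measurability of $\mathcal{C}$, is part of the standing assumptions fixed just before Lemma~\ref{lemma:croissance}, and condition (iii), openness of $\Pol(\pi_{\mathcal{C}})$ in $\Pol(\pi)$, is a hypothesis; so the only real work is to check condition (i), namely that $\pi_{\mathcal{C}}$ is thick. The strategy for that is a prefix-by-prefix dimension comparison: I will bound the two dimensions occurring in the definition of thinness for $\pi_{\mathcal{C}}$ at each index $i$ in terms of $\dim(\Pol(\pi_i))$, using $\pi$ and $\pi_{\mathcal{C}'}$ as references.

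First I would record two consequences of Corollary~\ref{cor:full_dimension}. Applied to $\pi_{\mathcal{C}}$, using that $\Pol(\pi_{\mathcal{C}})$ is open in $\Pol(\pi)$, it yields $\dim(\Pol(\pi_{\mathcal{C}_i})) = \dim(\Pol(\pi_i))$ for every $i \le n$. Since $\pi_{\mathcal{C}'}$ is a non-empty basic open set, $\Pol(\pi_{\mathcal{C}'})$ is also open in $\Pol(\pi)$, so the same corollary gives $\dim(\Pol(\pi_{\mathcal{C}'_i})) = \dim(\Pol(\pi_i))$ for every $i$; moreover $\pi_{\mathcal{C}'}$ is thick, which (the inequality $\le$ being automatic, as $\Pol(\pi_{\mathcal{C}'_i})$ is the $e=e_i$ term of the union) means $\dim\big(\bigcup_e \Pol(\path[\mathcal{C}'_{i-1}]{s,e_1 \ldots e_{i-1} e})\big) = \dim(\Pol(\pi_{\mathcal{C}'_i})) = \dim(\Pol(\pi_i))$ for all $i$.

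Now fix $1 \le i \le n$. From $\pi_{\mathcal{C}} \subseteq \pi_{\mathcal{C}'}$ we get $\Pol(\pi_{\mathcal{C}}) \subseteq \Pol(\pi_{\mathcal{C}'})$, hence, projecting onto the first $i-1$ coordinates, $\mathcal{C}_{i-1} \subseteq \mathcal{C}'_{i-1}$, and therefore $\bigcup_e \Pol(\path[\mathcal{C}_{i-1}]{s,e_1 \ldots e_{i-1} e}) \subseteq \bigcup_e \Pol(\path[\mathcal{C}'_{i-1}]{s,e_1 \ldots e_{i-1} e})$; by the previous paragraph the left-hand union then has dimension $\le \dim(\Pol(\pi_i))$. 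Conversely, $\Pol(\pi_{\mathcal{C}_i})$ is the $e = e_i$ term of that left-hand union, so $\dim(\Pol(\pi_{\mathcal{C}_i})) \le \dim\big(\bigcup_e \Pol(\path[\mathcal{C}_{i-1}]{s,e_1 \ldots e_{i-1} e})\big)$. Combining these with $\dim(\Pol(\pi_{\mathcal{C}_i})) = \dim(\Pol(\pi_i))$ forces all three quantities to coincide, so there is no strict drop at index $i$. As $i$ was arbitrary, $\pi_{\mathcal{C}}$ is thick, hence $\Cyl(\pi_{\mathcal{C}})$ is a basic open set of $(\Runs(\A,s),\mathcal{T}^s_{\A})$; it is non-empty because $\Pol(\pi_{\mathcal{C}})$ is a non-empty open subset of $\Pol(\pi)$ (non-emptiness being already implicit in the way openness-in-$\Pol(\pi)$ is used in Corollary~\ref{cor:full_dimension}).

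The proof is in essence a bookkeeping exercise in dimensions, so the main obstacle is purely organisational: keeping the constraints $\mathcal{C}_i$ and $\mathcal{C}'_i$ and the edge-indexed unions straight and orienting each inequality correctly. The single substantive point is that thickness of $\pi_{\mathcal{C}'}$ propagates \emph{downward} to $\pi_{\mathcal{C}}$: this works only because $\Pol(\pi_{\mathcal{C}})$, being open in $\Pol(\pi)$, attains the full dimension of $\Pol(\pi)$ at every prefix (Corollary~\ref{cor:full_dimension}), exactly as $\Pol(\pi_{\mathcal{C}'})$ does, so the ``room'' available for a dimension drop is the same for both.
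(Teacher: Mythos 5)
Your proof is correct and follows essentially the same route as the paper's: apply Corollary~\ref{cor:full_dimension} to both $\pi_{\mathcal{C}}$ and $\pi_{\mathcal{C}'}$, exploit thickness of $\pi_{\mathcal{C}'}$ to identify $\dim(\bigcup_e \Pol(\path[\mathcal{C}'_{i-1}]{s,e_1\ldots e_{i-1}e}))$ with $\dim(\Pol(\pi_i))$, and use the containment of the $\mathcal{C}_{i-1}$-union in the $\mathcal{C}'_{i-1}$-union to close the chain of inequalities. The only difference is presentational (you phrase it as a squeeze at each index rather than a single chain), and your remarks on Borel-measurability and non-emptiness match how the paper handles them.
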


\begin{proof}
  By Corollary~\ref{cor:full_dimension} applied to both
  $\pi_{\mathcal{C}}$ and $\pi$, and $\pi_{\mathcal{C}'}$ and $\pi$,
  we get  for every $1 \le i \le n$:
  \[
  \dim(\Pol(\pi_{\mathcal{C}_i})) = \dim(\Pol(\pi_{\mathcal{C}'_i})) =
  \dim(\Pol(\pi_i))\,.
  \] 
  As $\pi_{\mathcal{C}'}$ is a basic open set it also holds that for
  every $1 \leq i \leq n$:
  \[
  \dim(\Pol(\pi_{\mathcal{C}'_i})) = \dim(\bigcup_e
  \Pol(\path[\mathcal{C}'_{i-1}]{s,e_1\ldots e_{i-1} e}))\,.
  \]
  By containment of $\path[\mathcal{C}_{i-1}]{s,e_1\ldots e_{i-1} e}$
  into $\path[\mathcal{C}'_{i-1}]{s,e_1\ldots e_{i-1} e}$, we get that
  \[
  \dim(\bigcup_e \Pol(\path[\mathcal{C}'_{i-1}]{s,e_1\ldots e_{i-1}
    e})) \ge \dim(\bigcup_e \Pol(\path[\mathcal{C}_{i-1}]{s,e_1\ldots
    e_{i-1} e}))\,.
  \]
  This shows that $\pi_{\mathcal{C}}$ is thick: 
  \[
  \dim(\Pol(\pi_{\mathcal{C}_i})) \ge \dim(\bigcup_e
  \Pol(\path[\mathcal{C}_{i-1}]{s,e_1\ldots e_{i-1} e}))\,.
  \]
\end{proof}

We can now come to the proof of Lemma~\ref{lemma:intersection}.

\begin{proof}[of Lemma~\ref{lemma:intersection}]
  Let us denote in this proof $\mathcal{C''} = \mathcal{C} \cap
  \mathcal{C'}$, and $\pi$ the unconstrained symbolic path
  $\path{s,e_1,\ldots,e_n}$. Write $\pi_{\mathcal{C}''}$ for
  $\pi_{\mathcal{C}} \cap \pi_{\mathcal{C}'} =
  \path[\mathcal{C}'']{s,e_1,\ldots,e_n}$. If $\pi_{\mathcal{C}''}$ is
  empty, we are done since the empty set is an open set. We therefore
  assume that $\pi_{\mathcal{C}''}$ is non-empty.

  We first show that $\Pol(\pi_{\mathcal{C}} \cap \pi_{\mathcal{C}'})$
  is open in $\Pol(\pi)$, which is the second condition for
  $\pi_{\mathcal{C}} \cap \pi_{\mathcal{C}'}$ to be an open set. We
  have that $\Pol(\pi_{\mathcal{C}''}) = \Pol(\pi_{\mathcal{C}}) \cap
  \Pol(\pi_{\mathcal{C}'})$. By assumption both
  $\Pol(\pi_{\mathcal{C}})$ and $\Pol(\pi_{\mathcal{C}'})$ are open in
  $\Pol(\pi)$, hence their intersection too.

  The fact that $\pi_{\mathcal{C}''}$ is thick is a consequence of
  Corollary~\ref{coro2-essai3}.  We conclude that
  $\pi_{\mathcal{C''}}$ is an open set for our topology. 
  \end{proof}

\ligne

\noindent\fbox{\begin{minipage}{.98\linewidth}
\propBaire* \end{minipage}}
\label{app:Baire}

\begin{proof}
  To prove that $(\Runs(\A,s),{\mathcal T}_{\A}^s)$ is a Baire space, we
  prove that every non-empty basic open set in ${\mathcal T}_\A^s$ is not
  meagre. Let $\Cyl(\path[\mathcal{C}]{s,e_1 \ldots e_n})$ be a basic
  open set.
  Using Banach-Mazur games (see page~\pageref{BM:defi}
  or~\cite{oxtoby57}), we prove that $\Cyl(\path[\mathcal{C}]{s,e_1
    \ldots e_n})$ is not meagre by proving that Player~2 does not have
  a winning strategy for the Banach-Mazur game played with basic open
  sets and where the goal set is $C=\Cyl(\path[\mathcal{C}]{s,e_1
    \ldots e_n})$.
  
  Player~1 starts by choosing a set $B_1 = \Cyl(\path[\mathcal{C}]{s,e_1
    \ldots e_n})$. Then Player~2 picks some basic open set $B_2=
  \Cyl(\path[\mathcal{C}^2]{s,e_1 \ldots e_n \ldots e_{n_1}})$ such that $B_1
  \supseteq B_2$.

  Let us now explain how Player~1 can build her move in order to avoid
  to reach the empty set. Since $B_2$ is an open set, we have that
  (i) $\pi_{{\mathcal{C}^2}}$ is thick and (ii)
  $\Pol(\path[\mathcal{C}^2]{s,e_1 \ldots e_{n_1}})$ is open in
  $\Pol(\path{s,e_1 \ldots e_{n_1}}) \subseteq \IR_+^{n_1}$. The
  topology on $\Pol(\path{s,e_1 \ldots e_{n_1}})$ is induced from a
  distance, hence there exists a closed, bounded and convex set
  denoted $K_1$ such that $\mathring{K}_1 \ne \emptyset$ and $K_1
  \subseteq \Pol(\path[\mathcal{C}^2]{s,e_1 \ldots e_{n_1}})$. Let
  $\mathcal{D}^1$ be the constraint associated with $K_1$; clearly
  cylinder $\Cyl(\path[\mathcal{D}^1]{s,e_1 \ldots e_{n_1}})$ is
  included in $B_2$. Let $O$ be an open set included in $K_1$ and
  ${\mathcal C}^3$ be the constraint associated with $O$. It is
  Borel-measurable since it is open.
  Applying Corollary~\ref{coro2-essai3}, we know that
  $\path[\mathcal{C}^3]{s,e_1 \ldots e_{n_1}}$ is thick. Hence clearly
  enough, $\Cyl(\path[\mathcal{C}^3]{s,e_1 \ldots e_{n_1}})$ is an
  open set. Player~1's move will be to take $B_3 =
  \Cyl(\path[\mathcal{C}^3]{s,e_1 \ldots e_{n_1}})$. By iterating this
  process, we define a strategy for Player~1 which satisfies:
  \[
  B_1 \supseteq B_2 \supseteq \Cyl(\pi_{\mathcal{D}^1}) \supseteq B_3
  \supseteq B_4 \supseteq \Cyl(\pi_{\mathcal{D}^2}) \supseteq \ldots \supseteq
  B_{2i-1} \supseteq B_{2i} \supseteq \Cyl(\pi_{\mathcal{D}^i}) \supseteq
  \cdots
  \] 
  where for each $i$, $K_i = \Pol(\pi_{\mathcal{D}^i})$ is a closed and
  bounded subset of $\Pol(\pi(e_1,\ldots,e_{n_{i}})) \subseteq \IR_+^{n_{i}}$
  (where the $n_i$'s form a non-decreasing sequence of $\IN$). We then have
  that:
  \[ 
  \bigcap_{i=1}^\infty B_i \ = \ \bigcap_{i=1}^\infty
  \Cyl(\pi_{\mathcal{D}^i})\,.
  \]

  We would like to guarantee that the above intersection is
  non-empty. This is not completely straightforward since the
  polyhedra $K_i = \Pol(\pi_{\mathcal{D}^i})$ belong to different
  powers of $\IR_+$. We distinguish between two cases:
  \begin{itemize}
  \item either the sequence $(n_i)_{i \geq 1}$ diverges to $+\infty$. In that
    case, we will embed $\bigcap_{i=1}^\infty K_i$ into a compact set of
    $\IR_+^{\IN}$. We first define
    \[
    \widetilde{K}_j = \Proj_{\{n_{j-1}+1,\ldots,n_j\}}K_j \quad \text{ and }
    \quad \widetilde{K} = \prod_{j \ge 1} \widetilde{K}_j\,,
    \]  
    where $\Proj_I(K_J)$ for $I \subseteq \{1 \cdots n_j\}$ is the
    natural projection from $\IR_+^{n_j}$ to the coordinates specified
    by $I$.  Note that $\widetilde{K}_j$ is a compact set, since it is
    the projection of a compact set. Each $K_i$ can naturally be
    embedded in $\widetilde{K}$ by considering the sets $K'_i$ defined
    by
    \[
    K'_i = K_i \times \prod_{j > i} \widetilde{K}_j\,.
    \]
    The decomposition is illustrated on Figure~\ref{fig:decomposition}. The
    $K'_i$'s form a nested chain of closed sets of $\widetilde{K}$. By
    Tychonoff's theorem, $\widetilde{K}$ is compact. Hence we can
    ensure that $\bigcap_{i=1}^\infty K'_i$ is non-empty (Heine-Borel
    theorem). Take a sequence $(\tau_j)_{j \geq 1}$ in $\bigcap_{i=1}^\infty
    K'_i$. Each subsequence $(\tau_j)_{1 \leq j \leq n_i}$ straightforwardly
    belongs to $K_i$. Hence, the run $s \xrightarrow{\tau_1,e_1} s_1
    \xrightarrow{\tau_2,e_2} s_2 \ldots$ is in $\bigcap_{i=1}^\infty B_i$,
    which completes the proof in this case.
    \begin{figure}[h!]
      \begin{center}
        \begin{tikzpicture}[yscale=.8]
          \draw [-latex'] (0,0) -- (7,0);
          \draw [-latex'] (0,0) -- (0,4);
          
          \draw (1,-.1) -- (1,.1);
          \draw (5,-.1) -- (5,.1);
          
          \draw (-.1,.5) -- (.1,.5);
          \draw (-.1,3) -- (.1,3);
          
          \draw [fill=black!15!white] (1,.5) -- (1,3) -- (5,3) --
          (5,.5) -- (1,.5); 
          
          \draw [fill=black!40!white] (2,.5) -- (4.5,2) -- (3,3) --
          (2,.5);
          
          \draw [dotted] (2,.5) -- (0,.5);
          \draw [dotted] (3,3) -- (0,3);
          
          \path (7.2,.3) node[] {$\IR_+^{n_1}$};
          \path (6.5,3.7) node[] {$\IR_+^{n_2}$};
          \path (-.3,4.2) node[] {$\IR_+^{n_2-n_1}$};
          
          \path (3,-.5) node[] {$K_1$};
          \path (3.25,1.8) node[] {$K_2$};
          \path (-.4,2) node[] {$\widetilde{K}_2$};
             \end{tikzpicture}
      \end{center}
      \caption{The decomposition of the $K_i$'s.}
      \label{fig:decomposition}
    \end{figure}
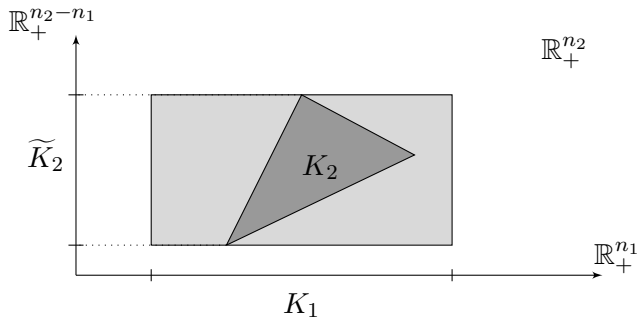
  \item either the sequence $(n_i)_{i \geq 1}$ is upper bounded. In
    that case, we embed $\bigcap_{i=1}^\infty K_i$ into a compact set
    of $\IR_+^{N}$ where $N = \lim_{i \rightarrow +\infty}n_i$. We let
    the details to the reader, as they are very similar to (and easier
    than) the previous case.   \end{itemize}
\end{proof}

\ligne

\noindent\fbox{\begin{minipage}{.98\linewidth}
\basicopensets* \end{minipage}}
\label{app:basicopensets} 

\begin{proof}
  First notice that if $\mathcal{C}$ is an open constraint of
  $\mathbb{R}^n$, then $\mathcal{C}$ is Borel-measurable, and
  $\Pol(\path[\mathcal{C}]{s,e_1 \dots e_n}) = \Pol(\path{s,e_1 \dots
    e_n}) \cap \mathcal{C}$ is open in $\Pol(\path{s,e_1 \dots
    e_n})$. Furthermore, applying Corollary~\ref{coro2-essai3},
  $\path[\mathcal{C}]{s,e_1 \dots e_n}$ is thick if $\path{s,e_1 \dots
    e_n}$ is thick as well.

  Now, assume that $\pi = \path[\mathcal{C}]{s,e_1 \dots e_n}$ is a
  basic open set. This means in particular that $\Pol(\pi)$ is open in
  $\Pol(\path{s,e_1 \dots e_n})$: there exists an open set $\delta$ of
  $\mathbb{R}^n$ such that $\Pol(\pi) = \Pol(\path{s,e_1 \dots e_n})
  \cap \delta$. As it is open, $\delta$ is Borel-measurable, and we
  get that $\Pol(\pi) = \Pol(\path[\delta]{s,e_1 \dots e_n})$, and
  therefore $\pi = \path[\delta]{s,e_1 \dots e_n}$, which is the
  expected result. 
\end{proof}

\ligne

\label{app:homeo}
\noindent \fbox{\begin{minipage}{.98\linewidth}
\homeo* \end{minipage}}

\begin{proof}
  \textit{We first prove that $\iota$ is continuous.}  Let
  $\pi_{\mathcal{C}} = \path[\mathcal{C}]{\iota(s),f_1 \ldots f_n}$ be
  a symbolic path in $\RA$ such that $\Cyl(\pi_{\mathcal{C}})$ is a
  basic open set of
  $(\Runs(\iota(s),\RA),\mathcal{T}^{\iota(s)}_{\RA})$. We need to
  prove that $\iota^{-1}(\Cyl(\pi_{\mathcal C}))$ is an open set of
  $\mathcal{T}_{\A}^s$.  One can easily be convinced that
  $\iota^{-1}(\Cyl(\pi_{\mathcal C})) = \Cyl(\iota^{-1}(\pi_{\mathcal
    C}))$. Thus proving the continuity of $\iota$ consists in proving
  that $\iota^{-1}(\pi_{\mathcal C})$ is a thick finite symbolic path
  whose polyhedron is open in its ambient space.

  First notice that there are unique edges $e_1,\ldots,e_n$ such that
  $\iota^{-1}(\pi_{\mathcal C}) \subseteq \path{s,e_1\ldots e_n}$, we
  can then set $\pi' \egdef \iota^{-1}(\pi_{\mathcal C})$. Then
  obviously, $\Pol(\pi_{\mathcal{C}}) = \Pol(\pi')$.

  Let $\gamma$ be the tightest constraint which defines
  $\Pol(\path{\iota(s),f_1 \ldots f_n})$. We have for every $i \le n$:
  \[
  \left\{\begin{array}{lcl}
      \iota^{-1}(\path[\mathcal{C}_i]{\iota(s),f_1 \ldots f_i}) & = &
      \path[\mathcal{C}_i \wedge \gamma_i]{s,e_1 \ldots e_i} \\
      \bigcup_f \iota^{-1}(\path[\mathcal{C}_i]{\iota(s),f_1 \ldots
        f_{i-1}f}) & = & \bigcup_e \path[\mathcal{C}_{i-1} \wedge
      \gamma_{i-1}]{s,e_1 \ldots e_{i-1}e}
    \end{array}\right.
  \]
  where $\mathcal{C}_i$ and $\gamma_i$ are the projection of
  $\mathcal{C}$ and $\gamma$ on the $i$ first coordinates (in
  particular, $\gamma_i$ is the tightest constraint defining
  $\path{\iota(s),f_1\ldots f_i}$ since this is in $\RA$). As
  $\pi_{\mathcal{C}}$ is thick, the two dimensions on the left are equal,
  and therefore so are the two dimensions on the right. We deduce that
  $\pi'$ is thick.

  Now, as $\path{\iota(s),f_1 \ldots f_n}$ is thick (since
  $\pi_{\mathcal{C}}$ is thick), we can prove by induction on its
  length that there is some open constraint $\delta$ such that
  \[  
  \Pol(\path{\iota(s),f_1 \ldots f_n}) = \Pol(\path[\delta]{s,e_1
    \ldots e_n}) = \Pol(\path{s,e_1 \ldots e_n}) \cap \delta
  \] Indeed:
  \begin{itemize}
  \item if $\delta_i$ is an open constraint such that
    $\Pol(\path[\delta_i]{s,e_1 \ldots e_i}) = \Pol(\path{\iota(s),f_1
      \ldots f_i})$, we have that $\dim(\Pol(\path[\delta_i]{s,e_1
      \ldots e_i e_{i+1}})) = \dim(\Pol(\path{\iota(s),f_1 \ldots f_i
      f_{i+1}}))$ (by thickness of $\path{\iota(s),f_1 \ldots f_n}$);
  \item furthermore this value is either equal to
    $\dim(\Pol(\path{\iota(s),f_1 \ldots f_i}))$ or to\linebreak
    $\dim(\Pol(\path{\iota(s),f_1 \ldots f_i}))+1$;
  \item in the first case, we set $\delta_{i+1}=\delta_i$, whereas in
    the second case $\delta_{i+1}$ is obtained by adding to $\delta_i$
    the open constraint derived from the last transition $f_{i+1}$
    (which is then open);
  \item we can easily check that this concludes the induction.
  \end{itemize}

  As $\pi_{\mathcal{C}}$ is thick, $\Pol(\pi_{\mathcal{C}})$ is open
  in $\Pol(\path{\iota(s),f_1 \ldots f_n})$: there exists an open set
  $O$ of $\mathbb{R}^n$ such that $\Pol(\pi_{\mathcal{C}}) =
  \Pol(\path{\iota(s),f_1 \ldots f_n}) \cap O$. 

  We infer that $\Pol(\pi') = \Pol(\pi_{\mathcal{C}}) =
  \Pol(\path{s,e_1 \ldots e_n}) \cap \delta \cap O$, and $\delta \cap
  O$ is open in $\mathbb{R}^n$: $\Pol(\pi')$ is open in
  $\Pol(\path{s,e_1 \ldots e_n})$.

  This concludes the proof: $\Cyl(\pi')$ is a basic open set in
  $(\Runs(\A,s),\mathcal{T}_{\A}^s)$.

  \bigskip \textit{We now prove that for every non-empty open set $O
    \in \mathcal{T}_{\A}^s$, $\widering{\iota(O)} \neq \emptyset$.}
  Again it is sufficient to prove that for each basic open set
  $\Cyl(\pi_{\mathcal C})$ of $\mathcal{T}^s_{\A}$,
  $\iota(\Cyl(\pi_{\mathcal C}))$ contains a basic open set $\Cyl(\pi')$
  of $\mathcal{T}^{\iota(s)}_{\RA}$, that is, there is a thick
  symbolic path $\pi'$ whose polyhedron is open in its ambient space
  and such that $\Cyl(\pi') \subseteq \iota(\Cyl(\pi_{\mathcal C}))$.

  Let $\pi_{\mathcal{C}} = \path[\mathcal{C}]{s,e_1 \ldots e_n}$ be a
  constrained symbolic path such that $\Cyl(\pi_{\mathcal{C}})$ is a
  basic open set of $(\Runs(s,\A),\mathcal{T}^s_{\A})$. We have that
  \[
  \iota(\pi_{\mathcal{C}}) = \bigcup_{f_1,\ldots, f_n}
  \path[\mathcal{C}]{\iota(s),f_1 \ldots f_n}
  \]
  where the (finite) union is taken over all sequences of edges
  $f_1,\ldots,f_n$ corresponding to $e_1,\ldots,e_n$. There exist thus
  edges $f_1,\ldots,f_n$ such that
  \[
  \dim(\Pol(\pi_{\mathcal{C}})) =
  \dim\Big(\Pol\big(\path[\mathcal{C}]{\iota(s),f_1 \ldots f_n}\big)\Big)
  \]
  and we write $\pi'_{\mathcal{C}} = \path[\mathcal{C}]{\iota(s),f_1
    \ldots f_n}$. We will prove that $\pi'_{\mathcal{C}}$ is an open
  set. Note that as $\Pol(\pi_{\mathcal{C}})$ is open in
  $\Pol(\path{s,e_1 \ldots e_n})$, we can assume w.l.o.g. that
  $\mathcal{C}$ defines an open set of $\IR^n$. Hence
  $\pi'_{\mathcal{C}}$ is open in $\path{\iota(s),f_1 \ldots
    f_n}$. Assume that it is thin. Then, there exists some $i$ such
  that
  \[
  \dim \Big( \Pol\big(\path[\mathcal{C}'_i]{\iota(s),f_1 \ldots
    f_i}\big) \Big) < \dim \Big( \bigcup_f
  \Pol\big(\path[\mathcal{C}'_{i-1}]{\iota(s),f_1 \ldots
    f_{i-1},f}\big) \Big)
  \]
  where $\mathcal{C}'_i$ corresponds to the projection on the $i$
  first coordinates of the tightest constraint defining
  $\pi'_{\mathcal{C}}$.  Moreover, as $\Pol(\pi'_{\mathcal{C}})
  \subseteq \Pol \big(\pi_{\mathcal{C}}\big)$ and
  $\dim(\Pol(\pi'_{\mathcal{C}})) = \dim(\Pol(\pi_{\mathcal{C}}))$,
  applying Lemma~\ref{lemma:croissance}, we get that for all $i$'s,
  $\dim (\Pol(\pi'_{\mathcal{C}'_i})) = \dim
  (\Pol(\pi_{\mathcal{C}_i}))$. Furthermore, $ \bigcup_f
  \Pol\big(\path[\mathcal{C}'_{i-1}]{\iota(s),f_1 \ldots f_{i-1}
    f}\big) \subseteq \bigcup_e
  \Pol\big(\path[\mathcal{C}_{i-1}]{s,e_1 \ldots e_{i-1} e}\big)$
  (this is a property of the region automaton). Finally, we get that
  \[
  \dim \Big( \Pol\big(\pi_{\mathcal{C}_i}\big) \Big) < \dim \Big( \bigcup_e
  \Pol\big(\path[\mathcal{C}_{i-1}]{s,e_1 \ldots e_{i-1} e}\big) \Big)
  \]
  which contradicts the hypothesis that $\pi$ is thick. We deduce that
  $\Cyl(\pi')$ is a basic open set of
  $(\Runs(\RA,\iota(s)),\mathcal{T}^{\iota(s)}_{\RA})$, hence the
  result. 
\end{proof}

\ligne

\noindent\fbox{\begin{minipage}{.98\linewidth}
    \toporegions* \end{minipage}}
\label{app:toporegions}

\begin{proof}
  We prove both implications using characterisation of meagre sets by
  Banach-Mazur games and Lemma~\ref{lemma:homeo}. To play this game,
  we choose as basis all open sets.

  Assume Player~$2$ has a winning strategy in $\A$ to avoid
  $\sem{P}_{\A,s}$.  We will show that Player~$2$ also has a winning
  strategy in $\RA$ to avoid $\sem{P}_{\RA,\iota(s)}$. Before starting
  the simulation, we recall that $\sem{P}_{\RA,\iota(s)} =
  \iota(\sem{P}_{\A,s})$.

  The first move of Player~$1$ in $\RA$ is some open set $B_1$ (in
  $\mathcal{T}_{\RA}^{\iota(s)}$). That move can be transported in
  $\A$: thanks to Lemma~\ref{lemma:homeo}, $B'_1 \egdef
  \iota^{-1}(B_1)$ is a legal move of the game in $\A$. Then,
  Player~$2$ plays according to her strategy in $\A$ with move
  $B'_2$. This move cannot directly be transported to $\RA$ (since
  $\iota(B'_2)$ may not be an open set), but thanks to
  Lemma~\ref{lemma:homeo}, there is a non-empty open set $B_2$ such
  that $B_2 \subseteq \iota(B'_2)$. We continue the simulation that
  way. Finally we get that $\bigcap_{i} B_i \subseteq \bigcap_i
  \iota(B'_i) = \iota(\bigcap_i B'_i)$. As Player~$2$ plays with a
  winning strategy in $\A$, we get that $\bigcap_i B'_i \cap
  \sem{P}_{\A,s} = \emptyset$, which implies that $\bigcap_i B_i \cap
  \sem{P}_{\RA,\iota(s)} = \emptyset$.

  On the contrary, assume that Player~$2$ has a winning strategy in
  $\RA$ to avoid $\sem{P}_{\RA,\iota(s)}$.  We will show that
  Player~$2$ also has a winning strategy in $\A$ to avoid
  $\sem{P}_{\A,s}$.  Assume that Player~$1$ plays $\path[\gamma]{s,e_1
    \ldots e_n}$, then applying Lemma~\ref{lemma:homeo}, Player~$2$
  can play as if it was $\path[\gamma]{\iota(s),f_1 \ldots f_n}$ in
  $\RA$ for some $f_1,\ldots,f_n$. The game then plays as in $\RA$,
  and all moves are legal thanks to Lemma~\ref{lemma:homeo}. We
  conclude that this strategy avoids $\sem{P}_{\A,s}$ as well, which
  concludes the proof. 
\end{proof}

\section{Details for Section~\ref{sec:thickgraph}}
\label{app:thickgraph}

\noindent \fbox{\begin{minipage}{.98\linewidth}
\thinlocal* \end{minipage}}

\proof
  Let $q$ be the target region of $\path{s,e_1 \ldots e_{n-1}}$. There
  are two possible cases:
  \begin{itemize}
  \item $\dim(I(q,e_n)) = 0$: for every $s' \in q$, there is a unique
    delay $\tau_n(s')$ such that $s'
    \xrightarrow{\tau_n(s'),e_n}$. Also, for all delays
    $\tau_1,\ldots,\tau_{n-1}$ such that $s \xrightarrow{\tau_1,e_1}
    \ldots \xrightarrow{\tau_{n-1},e_{n-1}}$ there is a unique $s' \in
    q$ such that $s \xrightarrow{\tau_1,e_1} \ldots
    \xrightarrow{\tau_{n-1},e_{n-1}} s'$. We can therefore define the
    function $g$ with $g(\tau_1,\ldots,\tau_{n-1})=\tau_n(s')$.  We
    then write:
    \begin{eqnarray*}
      \Pol(\path{s,e_1 \ldots e_n}) & = & \{(\tau_1,\ldots,\tau_n) \mid
      \tau_n=\tau_n(s')\ \text{where}\ s \xrightarrow{\tau_1,e_1} \ldots
      \xrightarrow{\tau_{n-1},e_{n-1}} s'\} \\
      & = & \{(\tau_1,\ldots,\tau_n) \mid (\tau_1,\ldots,\tau_{n-1}) \in \Pol(\path{s,e_1 \ldots e_{n-1}}) \\
      & & \hspace*{2cm} \text{and}\ \tau_n=g(\tau_1,\ldots,\tau_{n-1})\} \\
      & = & \{(\tau_1,\ldots,\tau_{n-1},g(\tau_1,\ldots,\tau_{n-1})) \mid (\tau_1,\ldots,\tau_{n-1}) \in \Pol(\path{s,e_1 \ldots e_{n-1}})\}
    \end{eqnarray*}
    The second equality holds because $\path{s,e_1 \ldots e_n}$ is a
    symbolic path in the region automaton $\RA$.  We deduce that
    $\dim(\Pol(\path{s,e_1 \ldots e_n})) = \dim(\Pol(\path{s,e_1
      \ldots e_{n-1}}))$, which is the expected value.
  \item $\dim(I(s',e_n)) = 1$ for every $s' \in q$: there is
    non-punctual open interval $(\mu_n(s'),\nu_n(s'))$ such that $s'
    \xrightarrow{\tau_n,e_n}$ iff $\tau_n \in (\mu_n(s'),\nu_n(s'))$.
    We can then rewrite $\Pol(\path{s,e_1,\ldots,e_n})$ as follows:
    \begin{eqnarray*}
      \Pol(\path{s,e_1 \ldots e_n}) & = & \{(\tau_1,\ldots,\tau_{n-1},\tau_n) \mid (\tau_1,\ldots,\tau_{n-1}) \in \Pol(\path{s,e_1 \ldots e_{n-1}}) \\
      & & \hspace*{2cm} \text{and}\ \tau_n \in g(\tau_1,\ldots,\tau_{n-1})\}
    \end{eqnarray*}
    where $g(\tau_1,\ldots,\tau_{n-1})$ defines an open interval.
    We then get \[
    \dim(\Pol(\path{s,e_1 \ldots e_n})) =
    \dim(\Pol(\path{s,e_1 \ldots e_{n-1}}))+1.\eqno{\qEd}\]
  \end{itemize}

\noindent\fbox{\begin{minipage}{.98\linewidth}
\thicknesslocal* \end{minipage}}
\label{app:thicknesslocal}

\begin{proof}
  The proof is done by induction on the length $n$ of $\pi$.

  The case $n=0$ is obvious since $\path{s}$ is thick and $\path{s}$
  surely contains no thin edge. Assume $n>0$ and the result holds for
  any $0 \le j \leq n-1$, and let $\pi = \path{s,e_1 \ldots e_n}$ be a
  symbolic path of length $n$. For every $1 \le j \le n$, write
  $\pi_{\le j}$ for $\path{s,e_1 \ldots e_j}$.
  \begin{itemize}
  \item Let us first assume that $\pi$ is thin in $\RA$. In case there
    exists $j<n$ such that $\pi_{\le j}$ is thin, then we are done by
    applying the induction hypothesis to $\pi_{\le j}$.  We therefore
    assume that $\pi_{\le j}$ is thick for every $j<n$. Let $k=
    \dim(\Pol(\pi_{\le n-1}))$.  Applying Lemma~\ref{lem:thin-local},
    $\dim(\Pol(\pi)) \in \{k,k+1\}$, and if $\dim(\Pol(\pi))=k+1$,
    then $\pi$ would be thick ($k+1$ is the maximal dimension of any
    possible $\Pol(\path{s,e_1 \ldots e_{n-1}e})$). We thus get that
    $\dim(\Pol(\pi))=k$, and there exists $e$ such that
    $\dim(\Pol(\path{s,e_1 \ldots e_{n-1} e}))=k+1$ (this witnesses
    the fact that $\pi$ is thin). By Lemma~\ref{lem:thin-local}, we
    therefore infer that $\dim(I(q,e_n))=0$ whereas $\dim(I(q,e))=1$
    where $q$ is the target region of $\path{s,e_1 \ldots e_{n-1}}$:
    edge $e_n$ is thin, which concludes the left-to-right implication.
  \item Assume now that $\pi$ is thick. By definition of thickness,
    for every $j \le n$, $\pi_{\le j}$ is thick.  By induction
    hypothesis, all edges $e_1,\ldots,e_{n-1}$ are then thick.  Assume
    towards a contradiction that $e_n$ is thin.  Let $q$ be the
    source-region of $e_n$, there exists an edge $e$ such that for
    every $s' \in q$, $\dim(I(s',e_n)) < \dim(I(s',e))$. Now we know
    that for every $s'$ with $s \xrightarrow{\tau_1,e_1} \ldots
    \xrightarrow{\tau_{n-1},e_{n-1}} s'$ for some
    $\tau_1,\ldots,\tau_{n-1}$, $s' \in q$ (this is a property of
    region automata).  Applying Lemma~\ref{lem:thin-local}, we get
    that \[\dim(\Pol(\path{s,e_1\ldots e_n})) =
    \dim(\Pol(\path{s,e_1\ldots e_{n-1}})) +\dim(I(q,e_n))\] whereas
    \[\dim(\Pol(\path{s,e_1\ldots e_{n-1}e})) =
    \dim(\Pol(\path{s,e_1\ldots e_{n-1}})) +\dim(I(q,e)).\] We deduce
    that $\pi$ is thin, contradicting the assumption. This concludes
    the proof of the right-to-left implication.\qedhere
  \end{itemize}
\end{proof}

\section{Details for Section~\ref{sec:match}}

\subsection{Safety properties}
\label{app:safety}
~\\

\noindent \fbox{\begin{minipage}{.98\linewidth}
\probthick* \end{minipage}}

\begin{proof}
  We assume that the probability distributions in $\RA$ are those used
  in Lemma~\ref{lemma:proba}, and we write $\mu$ for the distributions
  over delays.

  We first prove that $\Prob_{\RA}(\Cyl(\pi)) >0$ implies that $\pi$
  is thick. Towards a contradiction, assume that $\pi$ is
  thin. Following Proposition~\ref{prop:thickness-local}, there exists
  $1 \le i \le n$ such that $e_i$ is thin. Let $q$ be the target set
  of $\path{s,e_1 \ldots e_{i-1}}$: $\dim(I(q,e_i)) < \dim(I(q))$. By
  hypothesis on the measures $\mu$ (condition $(\star)$, \textit{cf}
  page~\pageref{subsec:proba}), for every $s' \in q$,
  $\mu_{s'}(I(s',e_i))=0$. Hence, for every $s' \in q$,
  $\Prob_{\RA}(\Cyl(\path{s',e_i \ldots e_n})) = 0$. This implies that
  $\Prob_{\RA}(\Cyl(\pi)) = 0$.

  Assume now that $\pi$ is thick.  For every $1 \le i \le n$, we write
  $q_i$ for the target region of $\path{s,e_1 \ldots e_i}$. Following
  Proposition~\ref{prop:thickness-local}, for every $1\le i \le n$,
  $e_i$ is thick, which means that $\dim(I(q_i,e_{i+1})) =
  \dim(I(q_i))$. As in the first implication, by assumption $(\star)$
  on the measure $\mu$, for every $s_i \in q_i$,
  $\mu_{s_i}(I(s_i,e_{i+1}))>0$.  We then use the definition of the
  probability inductively on suffixes of $\pi$ starting in some $s_i$
  to obtain a sequence of integral computation over non negligible set
  of a positive function, hence $\Prob_{\RA}(\Cyl(\pi)) >0$.  
\end{proof}

\ligne

\noindent\fbox{\begin{minipage}{.98\linewidth}
\theosafety* \end{minipage}}

\label{app:safety2}
\begin{proof}
  Equivalence between $(c)$ and $(d)$ is by construction of
  $\thickgraph(\A)$.

  Thanks to Lemmas~\ref{lemma:proba} and \ref{prop-topo-A-RA}, to
  prove the equivalence between $(a)$ and $(b)$, it is equivalent (and
  hence sufficient) to prove that $\RA,\iota(s) \robust_{{\mathcal T}} P$
  iff $\RA,\iota(s) \robust_{\Prob} P$. Since $P$ is a property over
  \AP, for every infinite symbolic path $\pi = \path{s,e_1 \ldots e_i
    \ldots}$, either all realisations $\rho \in \pi$ satisfy $P$, or
  none of them satisfies $P$. Now, using the fact that $P$ is a safety
  property, we can write $\sem{\neg P}_{\RA,\iota(s)}$ as a
  denumerable union of cylinders:
  \[
  \sem{\neg P}_{\RA,\iota(s)} = \bigcup_{i \in I} \Cyl(\pi_i)
  \]
  where $\pi_i$ is a finite (unconstrained) symbolic path from
  $\iota(s)$, and $I$ is a denumerable set.  Thus, $\RA,\iota(s)
  \robust_{\Prob} P$ is equivalent to $\Prob_{\RA}(\bigcup_{i \in I}
  \Cyl(\pi_i))=0$. Since $I$ is denumerable, we obtain $\RA,\iota(s)
  \robust_{\Prob} P$ iff for all $i \in I$,
  $\Prob_{\RA}(\Cyl(\pi_i))=0$. By Proposition~\ref{prop:prob-thick},
  we have that $\RA,\iota(s) \robust_{{\Prob}} P$ iff for all $i \in
  I$, $\pi_i$ is thin (this by-the-way proves equivalence between
  $(a)$ and $(c)$).  Thus proving the theorem amounts to proving that
  $\sem{\neg P}_{\RA,\iota(s)}$ is meagre iff for all $i \in I$,
  $\pi_i$ is thin.

  Let us assume first that for all $i \in I$, $\pi_i$ is thin. To
  prove that , $\sem{\neg P}_{\RA,\iota(s)}$ is meagre we use a
  Banach-Mazur game and Theorem~\ref{BM:thm}, playing with the set
  ${\mathcal B}$ of basic open sets of $\mathcal{T}_{\RA}^{\iota(s)}$. The
  objective of the game is set to be $\sem{\neg P}_{\RA,\iota(s)}$.
  By hypothesis, $\sem{\neg P}_{\RA,\iota(s)} = \bigcup_{i \in I}
  \Cyl(\pi_i)$ and all $\pi_i$ are thin. As every basic open set is of
  the form $\Cyl(\pi)$ with $\pi$ thick, it holds that for every $B
  \in {\mathcal B}$ such that $B \neq \Runs(\RA,\iota(s))$, we have $B
  \cap \sem{\neg P}_{\RA,\iota(s)} = \emptyset$.  Thus, if the first
  move of Player~1 is $\Runs(\RA,\iota(s))$, Player~2 picks some
  cylinder of a finite thick path. If the first move of Player~1 is a
  cylinder $\Cyl(\pi)$, then Player~2 just chooses the same set. Then,
  Player~2 wins the game by mimicking at each round the choices of
  Player~1, \textit{i.e.}, whatever set $B_{2j-1}$ Player~1 chooses in
  the $j$-th round, Player~2 answers with the same choice $B_{2j} =
  B_{2j-1}$.  For such a play we clearly have $\bigcap_{i=1}^\infty
  B_i \cap \sem{\neg P}_{\RA,\iota(s)} = \emptyset$.  Thus Player~2
  has a winning strategy and Theorem~\ref{BM:thm} implies that
  $\sem{\neg P}_{\RA,\iota(s)}$ is meagre.

  Let us now prove the other implication. For a contradiction we
  assume that $\pi_i$ is thick for some $i \in I$. In particular
  $\sem{\neg P}_{\RA,\iota(s)}$ would contain the open set
  $\Cyl(\pi_i)$, which is not meagre because our topological space is
  Baire (see Proposition~\ref{prop:Baire}).  Since the notion of being
  meagre is closed under subset, the set $\sem{\neg P}_{\RA,\iota(s)}$
  would not be meagre, which is a contradiction. 
\end{proof}

\subsection{Details for the counter-example
  \texorpdfstring{$\A_{\mathsf{unfair}}$}{A(unfair)} of
  Figure~\ref{fig:pacman}}

\label{app:pacman}

\begin{prop}
  $\Prob_{\A_{\mathsf{unfair}}} \big(\path{s_0,(e_3 e_4 e_5)^\omega}
  \big) > 0$, where $s_0 = (\ell_0,(0,0))$.
\end{prop}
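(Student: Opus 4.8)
The plan is to compute the probability of the cylinder over each finite prefix $(e_3e_4e_5)^n$ and show that these values stay bounded below by a positive constant. Since $\path{s_0,(e_3e_4e_5)^\omega}=\bigcap_{n}\Cyl(\path{s_0,(e_3e_4e_5)^n})$ is a decreasing intersection of measurable sets, continuity from above of the probability measure $\Prob_{\A_{\mathsf{unfair}}}$ (Proposition~\ref{prop:proba_measure}) gives $\Prob_{\A_{\mathsf{unfair}}}(\path{s_0,(e_3e_4e_5)^\omega})=\lim_{n}\Prob_{\A_{\mathsf{unfair}}}(\path{s_0,(e_3e_4e_5)^n})$, so it suffices to bound the latter away from $0$.

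First I would track the clock valuations along one turn of the cycle. Starting from a state $(\ell_0,(0,y))$ with $y\in[0,1)$: firing $e_3$ requires a delay $a$ with $1<y+a<2$, \textit{i.e.}\ $a\in(1-y,2-y)$, landing in $(\ell_3,(a,y+a))$; firing $e_4$ then forces the delay $b=2-y-a$ (to reach $y=2$) and resets $y$, landing in $(\ell_4,(2-y,0))$; firing $e_5$ requires a delay $c$ with $x>2$ while respecting the invariant $y<1$ of $\ell_4$, \textit{i.e.}\ $c\in(y,1)$, and resets $x$, landing in $(\ell_0,(0,c))$ with $c\in(y,1)$ again in $[0,1)$. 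At each of the three intermediate states exactly one edge is enabled, so every discrete choice has probability $1$; moreover $I((\ell_0,(0,y)))=[0,1-y)\cup(1-y,2-y)$ has Lebesgue measure $2-y$, $I((\ell_4,(2-y,0)))=(y,1)$ has Lebesgue measure $1-y$, and $I((\ell_3,(a,y+a)))=\{2-y-a\}$ is a single point.

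Next I would set $F_n(y)=\Prob_{\A_{\mathsf{unfair}}}(\path{(\ell_0,(0,y)),(e_3e_4e_5)^n})$ and, unfolding the definition of $\Prob$ three times and using that the delay distributions are uniform, derive $F_0\equiv 1$ together with
\[
F_{n+1}(y)=\frac{1}{(2-y)(1-y)}\int_{y}^{1}F_n(c)\,\ud c .
\]
Here the factor $\tfrac1{2-y}$ is $\mu_{(\ell_0,(0,y))}\!\big((1-y,2-y)\big)$ (a uniform density $\tfrac1{2-y}$ integrated over a window of length $1$), the factor $\tfrac1{1-y}$ is the uniform density on $I((\ell_4,(2-y,0)))=(y,1)$, and the $e_4$-step contributes the factor $1$ because $(\star)$ forces $\mu$ to put mass $1$ on the single point $\{2-y-a\}$.

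The key step is then a uniform positive lower bound on the $F_n$. Solving the fixed-point equation $F(y)=\tfrac1{(2-y)(1-y)}\int_y^1 F(c)\,\ud c$ (equivalently $(\ln H)'=-\tfrac1{(2-y)(1-y)}$ for $H(y)=\int_y^1 F$) points to the candidate $y\mapsto(2-y)^{-2}$, and I would prove by induction on $n$ that $F_n(y)\ge(2-y)^{-2}$ for every $y\in[0,1)$: the base case uses $(2-y)^2\ge 1$, and the induction step follows from $\int_y^1(2-c)^{-2}\,\ud c=\tfrac{1-y}{2-y}$, which makes the two factors cancel. Taking $y=0$ gives $\Prob_{\A_{\mathsf{unfair}}}(\path{s_0,(e_3e_4e_5)^\omega})=\lim_n F_n(0)\ge\tfrac14>0$, as desired. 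I expect the only delicate points to be bookkeeping ones: getting each $I(\cdot)$ and each intermediate state exactly right through the two resets — in particular the dependence of the $e_5$-window $(y,1)$ on the incoming value $y$, which is precisely what makes the $y$-values strictly increase toward $1$ and would naively suggest the probability could vanish — and guessing the lower-bounding function; both are handled above.
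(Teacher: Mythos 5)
Your proof is correct, and it follows the same overall strategy as the paper (write the target set as the decreasing intersection of the cylinders over the prefixes $(e_3e_4e_5)^n$, derive the one-round recursion, and lower-bound the resulting iterates), but the key quantitative step is done differently and, I think, more cleanly. Your clock bookkeeping and the recursion $F_{n+1}(y)=\frac{1}{(2-y)(1-y)}\int_y^1 F_n(c)\,\ud c$ with $F_0\equiv 1$ agree exactly with the paper's computation. Where the paper differs is in the lower bound: it fully unfolds the $N$-fold integral, proves by a descending induction (via the inequality $\log(1+x)\ge x-\tfrac{x^2}{2}$) that the inner integrals $\gamma_i^N$ are at least $t_{i-1}$, and thus obtains $\Prob_{\A_{\mathsf{unfair}}}(\path{s_{t_0},(e_3e_4e_5)^N})\ge \tfrac{t_0}{2-t_0}$ --- a bound that vanishes at $t_0=0$, forcing one extra unfolding step and a separate computation ($\ge\log 2-\tfrac12$) to handle the actual initial state $s_0=(\ell_0,(0,0))$. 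Your invariant $F_n(y)\ge(2-y)^{-2}$ is an exact sub-fixed-point of the recursion, its inductive step is the one-line antiderivative identity $\int_y^1(2-c)^{-2}\,\ud c=\tfrac{1-y}{2-y}$, it needs no auxiliary logarithm lemma, and being uniformly positive on $[0,1)$ it covers $y=0$ directly, yielding the (slightly better) bound $\tfrac14$ in one pass. Both arguments are sound; yours buys a shorter and more self-contained verification at the cost of having to guess the right invariant, which you motivate convincingly via the fixed-point ODE.
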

\begin{proof}
  In this proof, we write $\Prob$ for $\Prob_{\A_{\mathsf{unfair}}}$,
  and we first show that given $0 < t_0 < 1$, $\Prob
  \big(\path{s_{t_0},(e_3 e_4 e_5)^\omega} \big) > 0$ where $s_{t_0} =
  (\ell_0,(0,t_0))$.  Obviously, we have that 
  \[\Prob \big(\path{s_{t_0},(e_3 e_4
    e_5)^\omega}\big) = \lim_{N\rightarrow +\infty} \Prob
  \big(\path{s_{t_0},(e_3 e_4 e_5)^N} \big)\] where we write $\Prob
  \big(\path{s_{t_0},(e_3 e_4 e_5)^N} \big)$ for
  $\Prob\big(\Cyl(\path{s_{t_0},(e_3 e_4 e_5)^N} )\big)$.

  We now would like to express $\Prob(\path{s_{t_0},(e_3 e_4 e_5)^N})$
  as a multiple integral. In order to take the leftmost loop, we need
  to choose a first delay ensuring that the valuation of the clock $y$
  satisfies the guard $1<y<2$. The location $\ell_4$ is then reached
  with the clock valuation $(2-t_0,0)$. From there a second positive
  time delay has to be chosen in order to reach location $\ell_0$. We
  thus have that:
  \begin{align*}
    \Prob\bigl(\path{s_{t_0},(e_3 e_4 e_5)^N}\bigr) & = \frac{1}{2-t_0}
    \int_{\tau=1-t_0}^{2-t_0} \frac{1}{1-t_0} \int_{t_1=t_0}^1
    \Prob\big(\path{s_1,(e_3 e_4 e_5)^{N-1}}\big)  \ud t_1  \ud \tau \\
    & = \frac{1}{2-t_0} \cdot \frac{1}{1-t_0} \int_{t_1=t_0}^1
    \Prob\big(\path{s_1,(e_3 e_4 e_5)^{N-1}}\big) \ud t_1
  \end{align*}
  where $s_1 = (\ell_0,(0,t_1))$. By iterating this process, we obtain that:
  \begin{multline*}
  \Prob\big(\path{s_{t_0},(e_3 e_4 e_5)^N}\big) =\\ \frac{1}{2-t_0}.\frac{1}{1-t_0}
  \int_{t_1=t_0}^{1} \frac{1}{2-t_1} \cdot \frac{1}{1-t_1} \int_{t_2=t_1}^{1}
  \ldots \frac{1}{2-t_{N-1}} \cdot \frac{1}{1-t_{N-1}} \int_{t_N=t_{N-1}}^1
  \ud t_N \ldots \ud t_1.
  \end{multline*}
  We write
  \[
  \gamma_i^N = \frac{1}{1-t_{i-1}} \int_{t_i = t_{i-1}}^1 \frac{1}{2-t_i} \cdot
  \frac{1}{1-t_i} \int_{t_{i+1}=t_i}^1 \ldots \frac{1}{2-t_{N-1}} \cdot
  \frac{1}{1-t_{N-1}} \int_{t_N=t_{N-1}}^1 \ud t_N \ldots \ud t_i
  \]
  and we can prove by a descending induction on $i$ (see
  Lemma~\ref{lm:rec-annexe} below) that
 \[
  \gamma_i^N\, \geq\, t_{i-1}.
  \]
  Thus, we deduce that
  \begin{equation*}
    \Prob\big(\path{s_{t_0},(e_3 e_4 e_5)^N}\big) = \frac{1}{2-t_0} \cdot \gamma_1^N \geq \frac{t_0}{2-t_0} > 0.
  \end{equation*}
  It remains to show that $\Prob\big(\path{s_0,(e_3 e_4 e_5)^\omega}
  \big)>0$, \emph{i.e.} that the above result extends to the case $t_0
  =0 $. Roughly speaking, after one loop, we will have $t_1 > 0$,
  hence we can use the above inequality from the second loop on by
  writing:
  \begin{eqnarray*}
    \Prob(\path{s_0,(e_3 e_4 e_5)^\omega}) & = & \frac{1}{2-0} \cdot \frac{1}{1-0}. \int_{0}^1
    \Prob(\path{s_{t_1},(e_3 e_4 e_5)^\omega})\, \ud t_1 \\
    & \geq & \frac{1}{2}. \int_0^1 \frac{t_1}{2-t_1} \, \ud t_1 \\
    & \geq & \frac{1}{2}. \int_0^1 \left(-1 + \frac{2}{2-t_1}\right) \, \ud
    t_1 \\
    & \geq & \frac{1}{2}. \left[-t_1 -2 \log(2-t_1) \right]_{t_1=0}^1 \\
    & \geq & \log(2) - \frac{1}{2}>0.
  \end{eqnarray*}
 This concludes the proof.
\end{proof}

\begin{lem}
\label{lm:rec-annexe}
If $\gamma_i^N = \frac{1}{1-t_{i-1}} \int_{t_i = t_{i-1}}^1 \frac{1}{2-t_i} \cdot
  \frac{1}{1-t_i} \int_{t_{i+1}=t_i}^1 \ldots \frac{1}{2-t_{N-1}} \cdot
  \frac{1}{1-t_{N-1}} \int_{t_N=t_{N-1}}^1 \ud t_N \ldots \ud t_i$
  with $0<t_{i-1}<1$, then:
\[
  \gamma_i^N\, \geq\, t_{i-1}.
  \]
\end{lem}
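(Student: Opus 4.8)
The plan is to prove the statement by descending induction on $i$, from $i=N$ down to $i=1$, establishing for each $i$ that $\gamma_i^N \ge t_{i-1}$ for every value $t_{i-1}\in(0,1)$ of its unique free variable (all the variables $t_i,\dots,t_N$ having been integrated out). The base case $i=N$ is immediate: the nested integral collapses to
\[
\gamma_N^N \;=\; \frac{1}{1-t_{N-1}}\int_{t_{N-1}}^1 \ud t_N \;=\; 1 \;\ge\; t_{N-1},
\]
since $t_{N-1}<1$.

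For the inductive step, the first thing I would observe is that the definition telescopes through $\gamma_{i+1}^N$: the block $\frac{1}{1-t_i}\int_{t_{i+1}=t_i}^1\frac{1}{2-t_{i+1}}\cdots\,\ud t_{i+1}$ appearing in $\gamma_i^N$ is exactly $\gamma_{i+1}^N$ with free variable $t_i$, so that
\[
\gamma_i^N \;=\; \frac{1}{1-t_{i-1}}\int_{t_{i-1}}^1 \frac{\gamma_{i+1}^N}{2-t_i}\,\ud t_i .
\]
Since $t_i$ ranges over $(t_{i-1},1)\subseteq(0,1)$, the induction hypothesis applies and gives $\gamma_{i+1}^N\ge t_i$ pointwise, whence
\[
\gamma_i^N \;\ge\; \frac{1}{1-t_{i-1}}\int_{t_{i-1}}^1 \frac{t_i}{2-t_i}\,\ud t_i .
\]
It then remains to prove the elementary inequality $\int_a^1 \frac{t}{2-t}\,\ud t \ge a(1-a)$ for all $a\in(0,1)$, which yields $\gamma_i^N\ge a = t_{i-1}$ after dividing by $1-a>0$.

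To handle this last inequality I would use $\frac{t}{2-t}=-1+\frac{2}{2-t}$ to get the closed form $\int_a^1\frac{t}{2-t}\,\ud t = a-1+2\ln(2-a)$, so the desired bound is equivalent to $2\ln(2-a)\ge 1-a^2$. Setting $f(a)=2\ln(2-a)-1+a^2$, one has $f(1)=0$ and
\[
f'(a) \;=\; -\frac{2}{2-a}+2a \;=\; \frac{-2(a-1)^2}{2-a} \;<\; 0 \quad\text{on }(0,1),
\]
so $f$ is strictly decreasing and hence $f(a)\ge f(1)=0$ for $a\in(0,1)$; this is exactly what is needed, and the induction closes.

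The routine part is the bookkeeping showing that the nested integral telescopes into the recurrence $\gamma_i^N=\frac{1}{1-t_{i-1}}\int_{t_{i-1}}^1\frac{\gamma_{i+1}^N}{2-t_i}\,\ud t_i$; the genuine content — and the step I expect to require the most care — is the calculus inequality $2\ln(2-a)\ge 1-a^2$ on $[0,1]$, which I would settle via the monotonicity of $f$ above, the key observation being that the numerator of $f'$ is $-2(a-1)^2$.
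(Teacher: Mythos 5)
Your proof is correct and follows essentially the same route as the paper's: the same descending induction, the same recurrence $\gamma_i^N=\frac{1}{1-t_{i-1}}\int_{t_{i-1}}^1\frac{\gamma_{i+1}^N}{2-t_i}\,\ud t_i$, and the same closed-form evaluation of $\int_a^1\frac{t}{2-t}\,\ud t$. The paper finishes by invoking $\log(1+x)\ge x-\frac{x^2}{2}$ on $[0,1]$ (its Lemma on logarithms, proved by comparing derivatives), which under $x=1-a$ is exactly your inequality $2\ln(2-a)\ge 1-a^2$, so your monotonicity argument for $f(a)=2\ln(2-a)-1+a^2$ is the same calculus step in different variables.
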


\begin{proof}
 The base case is when $i=N$. In that case, 
  \[
  \gamma_N^N = \frac{1}{1-t_{N-1}} \int_{t_N = t_{N-1}}^1 \ud t_N = 1
  \]
  which proves the desired property.

  We assume we have proved the property for $i+1$, and want to prove it for
  $i$.
  \begin{eqnarray*}
    \gamma_i^N & = & \frac{1}{1-t_{i-1}} \int_{t_i = t_{i-1}}^1
    \frac{1}{2-t_i} \cdot \gamma_{i+1}^N \, \ud t_i \\
    & \geq & \frac{1}{1-t_{i-1}} \int_{t_i = t_{i-1}}^1
    \frac{t_i}{2-t_i} \ud t_i \qquad \text{(by i.h.)} \\
    &\geq &  \frac{1}{1-t_{i-1}} \int_{t_i = t_{i-1}}^1
    \left(-1+\frac{2}{2-t_i}\right) \ud t_i\\
    & \geq & \frac{1}{1-t_{i-1}} \left[ - t_i-2\log(2-t_i)\right]_{t_i=t_{i-1}}^1 \\
    & \geq & \frac{1}{1-t_{i-1}} \left( -1+t_{i-1}+2\log(2-t_{i-1})\right)
  \end{eqnarray*}
  Now, when $0 \leq x \leq1$, we know that $\log(1+x) \geq x - \frac{x^2}{2}$ (see
  Lemma~\ref{lemma:log}). Applying this inequality to $x = 1-t_{i-1}$,
  we get the following inequality:
  \begin{eqnarray*}
    \gamma_i^N & \geq & \frac{1}{1-t_{i-1}} \left( -1+t_{i-1}+2(1-t_{i-1})-(1-t_{i-1})^2\right)\\
    & \geq & 1-(1-t_{i-1})=t_{i-1}.
  \end{eqnarray*}
  This concludes the inductive case.
\end{proof}

\begin{lem}
  \label{lemma:log}
  Let $0 \leq x \leq 1$. Then $\log(1+x) \geq x - \frac{x^2}{2}$.
\end{lem}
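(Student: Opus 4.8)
The plan is to reduce the inequality to a monotonicity statement via the standard "difference of functions" trick. First I would introduce the auxiliary function $f \colon [0,1] \to \IR$ defined by $f(x) = \log(1+x) - x + \frac{x^2}{2}$, and observe that $f(0) = 0$. It then suffices to show that $f$ is non-decreasing on $[0,1]$: this yields $f(x) \geq f(0) = 0$ for every $x \in [0,1]$, which is exactly the claimed inequality after rearranging.

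To establish monotonicity I would compute the derivative. Since $f$ is differentiable on $(-1,\infty)$, and in particular on $[0,1]$, we have
\[
f'(x) = \frac{1}{1+x} - 1 + x = \frac{1 - (1+x) + x(1+x)}{1+x} = \frac{x^2}{1+x}.
\]
For $x \in [0,1]$ (indeed for any $x > -1$) the numerator $x^2$ is non-negative and the denominator $1+x$ is strictly positive, so $f'(x) \geq 0$. Hence $f$ is non-decreasing on $[0,1]$, which concludes the proof.

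There is essentially no obstacle here: the only point requiring minimal care is the algebraic simplification of $f'(x)$ and the remark that $1 + x > 0$ on the relevant interval, which is immediate. One could alternatively invoke the Taylor expansion of $\log(1+x)$ with Lagrange remainder, but the direct monotonicity argument is the shortest route and sidesteps any discussion of convergence.
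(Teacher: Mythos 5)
Your proof is correct and is essentially the same as the paper's: the paper compares the derivatives $\frac{1}{1+x}$ and $1-x$ directly and notes both functions vanish at $0$, which is just your difference-function argument unpacked. The algebraic simplification $f'(x) = \frac{x^2}{1+x} \geq 0$ checks out.
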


\begin{proof}
Let $f(x)=\log(1+x)$ and $g(x)=x-\frac{x^2}{2}$. Since $f'(x)=\frac{1}{1+x}$ and $g'(x)=1-x$, we remark that for any $0\le x\le 1$, we have $f'(x)\ge g'(x)$. Indeed, for any $x\ge 0$, $\frac{1}{1+x}\ge1-x$ if and only if $1\ge 1-x^2$. Since $f(0)=0=g(0)$, the result follows.
\end{proof}

\subsection{Extension to specification timed automata}
\label{app:specauto}

In this part, we aim at proving Theorem~\ref{theo:specauto} below.
\medskip

\noindent \fbox{\begin{minipage}{.98\linewidth}
\specauto* \end{minipage}}

\medskip The proof of this theorem will require several lemmas that we
present below.  Before that, we define $\iota_{\mathcal{B}}$ the
application (bijection) which assigns to every run $\rho$ in $\A$ its
unique image $\rho^{\mathcal{B}}$ in $\A \ltimes \B$.

\begin{lem}
  \label{lemma:ltl2muller-proba}
    Let $s$ be a state of $\A$, and $\B$ be a specification B\"uchi or
  Muller timed automaton. Assume measures and weights in $\A \ltimes
  \B$ are properly set.
    Then:
  \[
  \Prob_\A(s \models \B) = \Prob_{\A \ltimes \B}(\mathsf{init}_{\A
    \ltimes \B}(s) \models P_{\A \ltimes \B})\,.
  \]
\end{lem}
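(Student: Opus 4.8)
The plan is to show that the canonical bijection $\iota_{\mathcal B}$ is measure-preserving and then read off the statement. First I would recall that, by the very definitions of $P_{\mathcal B}$ and $P_{\A\ltimes\B}$, a run $\rho\in\Runs(\A,s)$ satisfies $\B$ if and only if its image $\rho^{\mathcal B}=\iota_{\mathcal B}(\rho)$ satisfies $P_{\A\ltimes\B}$; hence $\sem{\B}_{\A,s}=\iota_{\mathcal B}^{-1}\big(\sem{P_{\A\ltimes\B}}_{\A\ltimes\B,\,\mathsf{init}_{\A\ltimes\B}(s)}\big)$. Consequently it suffices to prove that $\Prob_{\A}(S)=\Prob_{\A\ltimes\B}(\iota_{\mathcal B}(S))$ for every $S\in\Omega_{\A}^{s}$. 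As in the proof of Lemma~\ref{lemma:proba}, by Carath\'eodory's extension theorem (as used in the proof of Proposition~\ref{prop:proba_measure}) it is enough to establish this equality when $S=\Cyl(\pi_{\mathcal C})$ with $\pi_{\mathcal C}=\path[\mathcal C]{s,e_1\ldots e_n}$ a finite constrained symbolic path of $\A$; one first checks, exactly as for the measurability argument on page~\pageref{app:mesurabilite}, that $\iota_{\mathcal B}(\pi_{\mathcal C})$ is a finite union of cylinders over constrained symbolic paths of $\A\ltimes\B$ with Borel-measurable constraints (the $\B$-edge selected at each step partitions the delay parameters according to the guards of $\B$, just as the crossed regions do in Lemma~\ref{lemma:proba}), hence measurable.

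I would then prove $\Prob_{\A}(\pi_{\mathcal C})=\Prob_{\A\ltimes\B}(\iota_{\mathcal B}(\pi_{\mathcal C}))$ by induction on $n$, mirroring the computation in the proof of Lemma~\ref{lemma:proba}. The base case $n=0$ is immediate since $\iota_{\mathcal B}(\path{s})=\{\path{\mathsf{init}_{\A\ltimes\B}(s)}\}$ and both sides equal $1$. For the inductive step, fix a delay $t\in I(s,e_1)$ and the unique $s\xrightarrow{t}s+t\xrightarrow{e_1}s_t$. The key local observation is that, because $\B$ is deterministic and complete, at the state $\mathsf{init}_{\A\ltimes\B}(s)+t$ the enabled product edges are exactly the $(e)_{\mathsf e}$ for $e$ enabled in $s+t$, where $\mathsf e=\mathsf e(e)$ is the single $\B$-edge out of the current $\B$-location carrying label $\mathcal L(\ell)$ whose guard holds at the current valuation of $\B$'s clocks; moreover $\overline w_{(e)_{\mathsf e}}=w_{e}$ and $\overline{\mathcal I}$ is inherited from $\mathcal I$, so $I(s,e_1)$ is the disjoint union over $\mathsf e$ of the corresponding $I(\cdot,(e_1)_{\mathsf e})$ (the Remark on page~\pageref{def:produit-TA-spec}), and consequently the normalising sums coincide and $\overline p_{\mathsf{init}_{\A\ltimes\B}(s)+t}\big((e_1)_{\mathsf e(e_1)}\big)=p_{s+t}(e_1)$. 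Plugging these identities into the one-dimensional integral defining $\Prob_{\A\ltimes\B}$, using $\overline\mu_{(\ell,\mathsf l)}=\mu_\ell$, splitting the image $\iota_{\mathcal B}(\path[\mathcal C_t]{s_t,e_2\ldots e_n})$ according to the $\B$-state reached after $(e_1)_{\mathsf e(e_1)}$, and applying the induction hypothesis to the suffix, the integral reorganises into exactly $\Prob_{\A}(\path[\mathcal C]{s,e_1\ldots e_n})$.

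Finally, combining measure-preservation of $\iota_{\mathcal B}$ with $\sem{\B}_{\A,s}=\iota_{\mathcal B}^{-1}\big(\sem{P_{\A\ltimes\B}}_{\A\ltimes\B,\,\mathsf{init}_{\A\ltimes\B}(s)}\big)$ yields $\Prob_{\A}(s\models\B)=\Prob_{\A\ltimes\B}(\mathsf{init}_{\A\ltimes\B}(s)\models P_{\A\ltimes\B})$, as claimed. The main obstacle is the inductive step, and within it the careful verification that the discrete-choice probabilities and the delay measures transfer verbatim through the product construction — i.e. that at every state exactly one $\B$-edge is ``active'' per $(\A\text{-edge},\text{delay})$ pair, that its product weight equals the original weight, and hence that the normalising sums $\sum_{e'\text{ enabled}}w(e')$ agree on both sides; the remaining manipulations are the same bookkeeping as in Lemma~\ref{lemma:proba}.
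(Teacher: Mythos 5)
Your proposal is correct and follows essentially the same route as the paper: the paper's own proof reduces the statement to (a) the fact that $\iota_{\B}$ preserves measures, proven ``as in Lemma~\ref{lemma:proba}'' by checking equality on finite constrained cylinders, and (b) the observation that $\varrho\models\B$ iff $\iota_\B(\varrho)\models P_{\A\ltimes\B}$, which is immediate from the definition of $P_{\A\ltimes\B}$. You have merely spelled out the inductive cylinder computation (determinism/completeness of $\B$ making exactly one product edge active per delay, matching weights and delay measures) that the paper leaves implicit.
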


\begin{proof}
  This lemma is a consequence of the two following properties:
  \begin{enumerate}
  \item for every measurable set $E$ of $\Runs(\A,s)$,
    \[
    \Prob_{\A}(E) = \Prob_{\A \ltimes \B}\{\iota_\B(\varrho) \mid
    \varrho \in E\}
    \]
  \item for every run $\varrho$ in $\A$, $\varrho \models \B$ iff
    $\iota_\B(\varrho) \models P_{\A \ltimes \B}$.
  \end{enumerate}
  The second item is a direct consequence of the definition of $P_{\A
    \ltimes \B}$. The proof for the first item is similar to that for
  region automata, that is, to the proof of
  Lemma~\ref{lemma:proba}. We therefore skip it. 
\end{proof}

A similar result holds for the topological semantics:

\begin{lem}
  \label{lemma:ltl2muller-topo}
  Let $\A$ be a timed automaton, let $s$ be a state of $\A$, and let
  $\B$ be a specification B\"uchi or Muller timed automaton. Then, for
  every set $S \subseteq \Runs(\A,s)$:
  \[
  S\ \text{is large in}\ \mathcal{T}_{\A}^s\ \Leftrightarrow\
  \{\rho^{\B} \mid \rho \in S\}\ \text{is large in} \
  \mathcal{T}_{\A \ltimes \B}^{\textsf{init}_{\A \ltimes \B}(s)}
  \]
\end{lem}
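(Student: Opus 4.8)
The plan is to prove Lemma~\ref{lemma:ltl2muller-topo} by exhibiting a concrete correspondence between the topologies $\mathcal{T}_{\A}^s$ and $\mathcal{T}_{\A \ltimes \B}^{\textsf{init}_{\A \ltimes \B}(s)}$, exactly mirroring the role that Lemma~\ref{lemma:homeo} plays for the passage from $\A$ to $\RA$. Write $j \egdef \iota_{\B}$ for the bijection sending a run $\rho$ in $\A$ (from $s$) to its unique image $\rho^{\B}$ in $\A \ltimes \B$ (from $\mathsf{init}_{\A \ltimes \B}(s)$), which exists and is a bijection because $\B$ is complete and deterministic. The key structural observation is that $\A \ltimes \B$ does not change the timing structure of $\A$: by the remark following the construction of $\A \ltimes \B$ (page~\pageref{def:produit-TA-spec}), for every state $s=(\ell,v)$ of $\A$ and every edge $e \in E$, $I(s,e)$ is the disjoint union over the matching $\mathsf{e}$'s of the intervals $I(((\ell,\mathsf{l}),(v\mathsf{v})),e_{\mathsf{e}})$; moreover for a fixed label $\lambda(\ell)$ the choice of $\mathsf{e}$ is \emph{forced} by determinism and completeness, so in fact each $I(((\ell,\mathsf{l}),(v\mathsf{v})),e_{\mathsf{e}})$ that is non-empty equals the corresponding $I(s,e)$. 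Consequently, if $\rho = s \xrightarrow{\tau_1,e_1} s_1 \xrightarrow{\tau_2,e_2} \cdots$ and $\rho^{\B}$ is its image, then a delay $\tau_i$ is admissible in $\A$ along $e_i$ from $s_{i-1}$ iff the corresponding delay is admissible in $\A \ltimes \B$ along the forced edge from the product state. This gives that the polyhedron $\Pol(\path{s,e_1\dots e_n})$ in $\A$ and the polyhedron of its unique image symbolic path in $\A \ltimes \B$ are \emph{equal} (as subsets of $(\IR_+)^n$), and likewise $\bigcup_e \Pol(\path{s,e_1\dots e_{n-1}e})$ corresponds exactly to $\bigcup_{e'} \Pol$ of the extensions in the product. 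Hence an edge/symbolic path is thick in $\A$ iff its image is thick in $\A \ltimes \B$, and openness of a constraint in its ambient space transfers verbatim.

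From this I would derive that $j$ induces a bijection between the basic open sets of $\mathcal{T}_{\A}^s$ and those of $\mathcal{T}_{\A \ltimes \B}^{\textsf{init}_{\A \ltimes \B}(s)}$: concretely, for every finite symbolic path $\pi_{\mathcal{C}} = \path[\mathcal{C}]{s,e_1\dots e_n}$ of $\A$ there is a unique matching path $\pi'_{\mathcal{C}}$ of $\A \ltimes \B$ with $j(\Cyl(\pi_{\mathcal{C}})) = \Cyl(\pi'_{\mathcal{C}})$ (using $P_{\B}$'s forced edges), and by the paragraph above, $\pi_{\mathcal{C}}$ satisfies the three conditions of Definition~\ref{def:topo} iff $\pi'_{\mathcal{C}}$ does. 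Therefore $j$ is a homeomorphism: $j$ and $j^{-1}$ both send basic open sets to basic open sets, and since arbitrary opens are unions of basic opens and $j$ is a bijection, $j$ and $j^{-1}$ are continuous. (Note this is genuinely cleaner than the $\A$-to-$\RA$ case, where $\iota$ was not a homeomorphism; here the product does not split a thick path into thick and thin pieces, because no guards are refined.)

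The proof then concludes immediately: a homeomorphism maps meagre sets to meagre sets and large sets to large sets in both directions, because nowhere-density is preserved under homeomorphism (the interior of the closure is preserved, since $j$ and $j^{-1}$ are continuous and bijective), and meagreness is a countable union of nowhere-dense sets, and largeness is complementation of meagre. So for $S \subseteq \Runs(\A,s)$, the complement $\Runs(\A,s)\setminus S$ is meagre in $\mathcal{T}_{\A}^s$ iff $j(\Runs(\A,s)\setminus S) = \Runs(\A\ltimes\B,\mathsf{init}_{\A\ltimes\B}(s)) \setminus \{\rho^{\B}\mid\rho\in S\}$ is meagre in $\mathcal{T}_{\A\ltimes\B}^{\mathsf{init}_{\A\ltimes\B}(s)}$; that is, $S$ is large iff $\{\rho^{\B}\mid\rho\in S\}$ is large, which is the claim.

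The main obstacle I anticipate is the bookkeeping in establishing that $\Pol(\pi_{\mathcal C})$ in $\A$ literally equals the polyhedron of the image path in $\A \ltimes \B$, and in particular that the union-over-outgoing-edges quantity (which governs thickness) is preserved: one must argue carefully that determinism and completeness of $\B$ make the edge in $\A \ltimes \B$ a \emph{function} of the underlying edge of $\A$ together with the current label, so that summing/unioning over $\overline{E}$-edges out of a product state $((\ell,\mathsf{l}),(v\mathsf{v}))$ reproduces exactly the union over $E$-edges out of $(\ell,v)$ — with the same delay-intervals. Once that correspondence is pinned down, the rest (thickness transfer, openness transfer, homeomorphism, preservation of largeness) is routine and essentially identical in spirit to the arguments already used for $\RA$ in Appendix~\ref{app:homeo} and Appendix~\ref{app:toporegions}.
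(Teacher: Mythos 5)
There is a genuine gap, and it sits exactly where you flagged the ``main obstacle''. Your central structural claim --- that determinism and completeness of $\B$ force each non-empty $I(((\ell,\mathsf{l}),(v\mathsf{v})),e_{\mathsf{e}})$ to coincide with $I((\ell,v),e)$, so that $\iota_{\B}$ maps each cylinder of $\A$ to a \emph{single} cylinder of $\A \ltimes \B$ and is a homeomorphism --- is false as soon as $\B$ has clocks. Determinism only says that for each \emph{fixed} delay $\tau$ exactly one $\B$-edge $\mathsf{e}$ is enabled; \emph{which} one depends on $\tau$. Hence $I((\ell,v),e)$ is in general partitioned into several pieces $I(\cdot,e_{\mathsf{e}})$, because the product edge carries the refined guard $g \wedge \mathsf{g}$, and some pieces may be punctual and hence thin. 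Concretely, let $\A$ be reactive with a single self-loop $e$ (trivial guard) and let $\B$ have one clock $\mathsf{x}$ with outgoing edges guarded by $\mathsf{x}<1$, $\mathsf{x}=1$ and $\mathsf{x}>1$ (such an equality guard is perfectly compatible with, indeed typical of, completeness plus determinism). Then $\path{s,e}$ is thick in $\A$, but $\iota_{\B}(\Cyl(\path{s,e}))$ is a union of three cylinders, one of which is generated by a thin product edge; it is neither a basic open set nor open. So your parenthetical remark that ``the product does not split a thick path into thick and thin pieces, because no guards are refined'' is precisely the point of failure, and the paper explicitly warns that $\iota_{\B}$ ``might not be continuous''. (Your argument does go through for \emph{untimed} specification automata, where $\mathsf{X}=\emptyset$ and the $\B$-edge really is a function of the current $\B$-location and the label; but the lemma covers timed $\B$.)

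The repair is to abandon the homeomorphism and use the weaker two-sided property that the paper isolates as Lemma~\ref{lemma:iotaB}: for every non-empty open $O$ of the product, $\iota_{\B}^{-1}(O)$ has non-empty interior, and for every non-empty open $O$ of $\Runs(\A,s)$, $\iota_{\B}(O)$ has non-empty interior. These two facts (the second proved by picking, among the finitely many product paths over a given $\A$-path, one whose polyhedron has full dimension at every prefix) are enough to transfer winning strategies of Player~2 in the Banach--Mazur game in both directions, exactly as in the proof of Proposition~\ref{prop-topo-A-RA}; meagreness of the complement of $S$ then transfers, which gives the statement. Your final step (homeomorphisms preserve largeness) is correct abstract topology, but the premise it rests on does not hold here.
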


\begin{proof}
  The proof of this lemma is similar to that of
  Proposition~\ref{prop-topo-A-RA}, even though projection
  $\iota_{\B}$ might not be continuous. The next lemma
  (Lemma~\ref{lemma:iotaB}) is sufficient to make the simulation
  between the two Banach-Mazur games (as in the proof of
  Proposition~\ref{prop-topo-A-RA}), and to prove the expected
  result. 
\end{proof}

\begin{lem}
  \label{lemma:iotaB}\hfill
  \begin{enumerate}
  \item If $O$ is a non-empty open set of $\Runs(\A \ltimes
    \B,\textsf{init}_{\A \ltimes \B}(s))$, then $\iota_{\B}^{-1}(O)$
    has a non-empty interior.
  \item If $O$ is a non-empty open set of $\Runs(\A,s)$, then
    $\iota_{\B}(O)$ has a non-empty interior.
  \end{enumerate}
\end{lem}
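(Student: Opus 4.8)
Both items concern the canonical bijection $\iota_\B\colon \Runs(\A,s) \to \Runs(\A \ltimes \B, \mathsf{init}_{\A \ltimes \B}(s))$ which lifts a run of $\A$ to its unique $(\A\ltimes\B)$-image (determinism and completeness of $\B$ make this well defined and bijective); its inverse just forgets the $\B$-component. As for the region projection in Lemma~\ref{lemma:homeo}, $\iota_\B$ need not be a homeomorphism, so I would prove the weaker ``interior'' statements by hand on basic open sets. The starting point is the explicit shape of $\A \ltimes \B$ (page~\pageref{def:produit-TA-spec}): it shares all delays of $\A$, its edges are the compatible pairs $(e,\mathsf{e})$ with $e\in E$, $\mathsf{e}\in\mathsf{E}$, and its invariants are those of $\A$. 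Hence, for a symbolic path $\pi=\path{s,e_1\ldots e_n}$ of $\A$, the polyhedron $\Pol(\pi)\subseteq(\IR_+)^n$ decomposes as a finite \emph{disjoint} union $\Pol(\pi)=\bigsqcup_{\vec{\mathsf{e}}}\Pol(\bar\pi_{\vec{\mathsf{e}}})$ over the $\B$-edge-sequences $\vec{\mathsf{e}}=(\mathsf{e}_1,\dots,\mathsf{e}_n)$ compatible with $e_1,\dots,e_n$ (disjointness from determinism, exhaustiveness from completeness), where $\bar\pi_{\vec{\mathsf{e}}}=\path{\bar s,(e_1,\mathsf{e}_1)\ldots(e_n,\mathsf{e}_n)}$ and $\bar s=\mathsf{init}_{\A \ltimes \B}(s)$; each piece is carved out of $\Pol(\pi)$ by the affine constraints saying that the $\B$-clocks (affine in the $\tau_j$'s, up to resets) satisfy $\mathsf{g}_1,\dots,\mathsf{g}_n$. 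The key preliminary fact, to be isolated as a lemma, is that completeness forces the $\A$- and $(\A\ltimes\B)$-ambient spaces to coincide at every step: since any delay legal for \emph{some} edge of $\A$ is legal for \emph{some} edge of $\A\ltimes\B$, for every Borel constraint $\mathcal{D}$ on the first $i-1$ coordinates one has $\bigcup_{\bar e}\Pol(\path[\mathcal{D}]{\bar s,\bar e_1\ldots\bar e_{i-1}\bar e})=\bigcup_{e}\Pol(\path[\mathcal{D}]{s,e_1\ldots e_{i-1} e})$ inside $(\IR_+)^i$. This plays, for the product, the role of the region-automaton containment property used throughout the proof of Lemma~\ref{lemma:homeo}.

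\textbf{Item (2): $\iota_\B(O)$ has non-empty interior.} It suffices to treat a basic open set $\Cyl(\pi_\mathcal{C})\subseteq O$ with $\mathcal{C}$ open and $\Pol(\pi_\mathcal{C})$ open, non-empty in $\Pol(\pi)$. Intersecting with the partition above, some piece $\Pol(\bar\pi_{\vec{\mathsf{e}}})$ meets $\Pol(\pi_\mathcal{C})$ in a set of dimension $\dim\Pol(\pi_\mathcal{C})=\dim\Pol(\pi)$; I would fix such a $\vec{\mathsf{e}}$, check (from the preliminary lemma and the thickness of the $\B$-edges occurring in a full-dimensional piece) that the unconstrained path $\bar\pi_{\vec{\mathsf{e}}}$ is thick in $\A\ltimes\B$, then choose an open box $\mathcal{D}$ inside the relative interior of $\Pol(\bar\pi_{\vec{\mathsf{e}}})\cap\Pol(\pi_\mathcal{C})$. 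The path $\bar\pi_\mathcal{D}=\path[\mathcal{D}]{\bar s,(e_1,\mathsf{e}_1)\ldots(e_n,\mathsf{e}_n)}$ then has $\Pol(\bar\pi_\mathcal{D})$ open in $\Pol(\bar\pi_{\vec{\mathsf{e}}})$ and is contained in the thick path $\bar\pi_{\vec{\mathsf{e}}}$, so Corollary~\ref{coro2-essai3} makes $\Cyl(\bar\pi_\mathcal{D})$ a basic open set of $(\Runs(\A\ltimes\B,\bar s),\mathcal{T}^{\bar s}_{\A\ltimes\B})$; and $\Cyl(\bar\pi_\mathcal{D})\subseteq\iota_\B(\Cyl(\pi_\mathcal{C}))\subseteq\iota_\B(O)$ because $\mathcal{D}\subseteq\Pol(\pi_\mathcal{C})$.

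\textbf{Item (1): $\iota_\B^{-1}(O)$ has non-empty interior.} Treat a basic open set $\Cyl(\bar\pi_\mathcal{C})\subseteq O$ with $\bar\pi=\path{\bar s,(e_1,\mathsf{e}_1)\ldots(e_n,\mathsf{e}_n)}$, $\bar\pi_\mathcal{C}$ thick, $\Pol(\bar\pi_\mathcal{C})$ open in $\Pol(\bar\pi)$. Since $\iota_\B$ is a bijection on runs and the $\B$-edges along a run are determined by its delays, $\iota_\B^{-1}(\Cyl(\bar\pi_\mathcal{C}))=\Cyl(\path[\mathcal{C}']{s,e_1\ldots e_n})$, where $\mathcal{C}'$ adds to $\mathcal{C}$ the (polyhedral, hence Borel) constraint ``the delays make $\B$ take exactly $\mathsf{e}_1,\dots,\mathsf{e}_n$''; moreover $\Pol(\path[\mathcal{C}']{s,e_1\ldots e_n})=\Pol(\bar\pi_\mathcal{C})$ in $(\IR_+)^n$. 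I would then pick an open box $\mathcal{D}$ in the relative interior of $\Pol(\path[\mathcal{C}']{s,e_1\ldots e_n})$ (non-empty since this set is open in $\Pol(\bar\pi)$, which by the preliminary lemma has the same step-wise ambient dimensions as the $\A$-path $\pi$), and use the dimension equalities transferred from the thickness of $\bar\pi_\mathcal{C}$ together with Corollary~\ref{coro2-essai3} to conclude that $\Cyl(\path[\mathcal{D}]{s,e_1\ldots e_n})$ is a basic open set of $(\Runs(\A,s),\mathcal{T}^s_\A)$ contained in $\iota_\B^{-1}(\Cyl(\bar\pi_\mathcal{C}))\subseteq\iota_\B^{-1}(O)$.

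\textbf{Main obstacle.} The crux, in both directions, is the ambient-dimension bookkeeping through the product: one must guarantee that passing to a $\B$-refinement never drops the relevant ambient dimensions, so that thickness is preserved and reflected. This is precisely where completeness of $\B$ and the fact that a thick product-path only uses ``thick'' $\B$-edges come in; it is the analogue of — but more delicate than — the region-automaton containment property of Lemma~\ref{lemma:homeo}, since here the refinement is carved out by affine $\B$-guard constraints rather than by regions.
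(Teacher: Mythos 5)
Your proposal is correct and follows essentially the same route as the paper: for item (1) the preimage of a basic open set is the constrained path $\path[\mathcal{C}\wedge\gamma]{s,e_1\ldots e_n}$ where $\gamma$ is the (equality-free, by completeness of $\B$) constraint induced by the $\B$-guards, which contains a basic open set; for item (2) the image decomposes as a finite union over compatible $\B$-edge sequences, one of which preserves the dimension of every prefix (your reduction from ``final dimension matches'' to ``all prefix dimensions match'' via Lemma~\ref{lemma:croissance} is exactly what the containment argument in the paper delivers) and hence yields a thick, open basic set via Corollary~\ref{coro2-essai3}. The extra open-box step and the explicitly stated ``ambient spaces coincide'' lemma are harmless refinements of what the paper uses implicitly.
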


\begin{proof}
  We show the first property. Let $\pi$ be a basic open set of
  $\Runs(\A \ltimes \B,\textsf{init}_{\A \ltimes \B}(s))$. Following
  the notations of Section~\ref{sec:richer} and applying
  Lemma~\ref{lemma:basicopensets}, we can write the basic open set
  $\pi$ as $\path[\mathcal{C}]{\textsf{init}_{\A \ltimes \B}(s),
    e^1_{\mathsf{e}^1} \dots e^n_{\mathsf{e}^n}}$, where $\mathcal{C}$
  is an open constraint of $\mathbb{R}^n$. Note that as $\B$ is
  complete, none of the transitions $\mathsf{e}^i$ have equality
  constraints (otherwise $\pi$ would not be an open set). Also notice
  that all $e^i$'s are also thick (for the same reason). Let $\gamma$
  be the constraint of $\mathbb{R}^n$ generated by the transitions
  $\mathsf{e}^1,\dots,\mathsf{e}^n$. Due to the previous remark, the
  interior of $\gamma$ is non-empty. Also, as $\mathcal{C} \wedge
  \gamma$ is non-empty, this implies that $\mathcal{C} \wedge \gamma$
  has a non-empty interior.  Now, we should just notice that
  $\iota_{\B}^{-1}(\pi) = \path[\mathcal{C} \wedge \gamma]{s,e^1 \dots
    e^n}$, which contains an open set, since $\mathcal{C} \wedge
  \gamma$ has a non-empty interior, and all the $e^i$'s are thick.

  Fix now a non-empty open set $O$ in $\Runs(\A,s)$, and pick
  $\pi=\path[\mathcal{C}]{s,e^1 \dots e^n}$ a basic open set included
  in $O$. The image of $\pi$ by $\iota_{\B}$ can be decomposed as the
  following finite union over all sequences of edges in $\B$ of length
  $n$:
  \[
  \iota_{\B}(\pi) = \bigcup_{(\mathsf{e}^1,\dots,\mathsf{e}^n)}
  \path[\mathcal{C}]{\textsf{init}_{\A \ltimes \B}(s),
    e^1_{\mathsf{e}^1} \dots e^n_{\mathsf{e}^n}}
  \]
  Also note that we don't lose any behaviour, in the sense that
  \begin{eqnarray*}
    \Pol(\pi) &= &\Pol\left(\bigcup_{(\mathsf{e}^1,\dots,\mathsf{e}^n)}
      \path[\mathcal{C}]{\textsf{init}_{\A \ltimes \B}(s),
        e^1_{\mathsf{e}^1} \dots e^n_{\mathsf{e}^n}}\right) \\ &=&
    \bigcup_{(\mathsf{e}^1,\dots,\mathsf{e}^n)}
    \Pol\left(\path[\mathcal{C}]{\textsf{init}_{\A \ltimes \B}(s),
        e^1_{\mathsf{e}^1} \dots e^n_{\mathsf{e}^n}}\right)
  \end{eqnarray*}
  In particular, there is some $(\mathsf{e}^1,\dots,\mathsf{e}^n)$
  such that $\dim(\Pol\left(\path[\mathcal{C}_i]{\textsf{init}_{\A
        \ltimes \B}(s), e^1_{\mathsf{e}^1} \dots
      e^i_{\mathsf{e}^i}}\right)) = \dim(\Pol(\pi_{|\le i}))$ for
  every $i$, where we consider here the projections over the $i$ first
  components. In particular, we can easily prove now that
  $\path[\mathcal{C}]{\textsf{init}_{\A \ltimes \B}(s),
    e^1_{\mathsf{e}^1} \dots e^n_{\mathsf{e}^n}}$ is a basic open set
  in $\Runs(\A \ltimes \B,\textsf{init}_{\A \ltimes \B}(s))$. 
\end{proof}

\noindent\emph{Proof of Theorem~\ref{theo:specauto}.}
  We can now prove Theorem~\ref{theo:specauto}. The property $P_{\A
    \ltimes \B}$ is prefix-independent. We can therefore apply
  Corollary~\ref{coro:1} to $\A \ltimes \B$. We can now combine with
  the previous technical lemmas, and we get the expected equivalence.\qed

\section{Details for single-clock timed automata}
\label{app:single}

\noindent \fbox{\begin{minipage}{.98\linewidth}
\technica* \end{minipage}}

\begin{proof}
  We write $\Prob_{\A}(s \xrightarrow{e} (q',J'))$ for the probability
  of the set of runs starting from $s$ with a move $s
  \xrightarrow{\tau,e} s'$ with $s' \in (q',J')$ and for some $\tau
  \in \IR_+$.\footnote{Note that this set is $\Prob_{\A}$-measurable
    because it can be seen as $\Cyl(\path[\mathcal{C}_{J'}]{s,e})$ for
    some constraint $\mathcal{C}_{J'}$ enforcing the first move to
    lead to $J'$.}

  Let $\lambda \egdef \inf_{s \in (q,J)} \Prob_{\A}(s \xrightarrow{e}
  (q',J'))$. Since $\overline{J} \subseteq q$ is compact and $\forall
  s \in q,\ \Prob_{\A}(s \xrightarrow{e} (q',J')) > 0$ (because $e$ is
  thick and $e(s) \cap J'$ is non-empty and open), $\lambda > 0$.
  Indeed we have supposed that for all $\ell \in L$, for all $[a,b]
  \subseteq \IR_+$, the function $v \mapsto \mu_{(\ell,v)}([a,b])$ is
  continuous, see hypothesis $(H3)$ in~$(\dag)$
  (page~\pageref{dagger}), hence $s \mapsto \Prob_{\A}(s
  \xrightarrow{e} (q',J'))$ is continuous.
  
  Denote $E_k$ the set of paths in $\A$ that visit $(q,J)$ infinitely often,
  but from the $k$-th passage in $(q,J)$ on never fire $(q,J) \xrightarrow{e}
  (q',J')$ anymore. Note that the set $E_k$ is $\Prob_{\A}$-measurable, and
  that $\Prob_\A(E_k) \leq \prod_k^\infty (1 - \lambda) = 0$. Then note that
  the set $\bigcup_{k \geq 1} E_k$ can be equivalently defined by $B \wedge
  \neg A$ where $B$ is `$\G \F (q,J)$' and $A$ is `$\G \F
  (q,J) \xrightarrow{e} (q',J')$'. Hence, we get that $\Prob_{\A}(s,B \wedge
  \neg A) \leq \lim_{k \rightarrow +\infty} \Prob_{\A}(E_k) = 0$, and thus
  \[\begin{array}{lcl}
    \Prob_{\A}(s,A \mid B) & = & \frac{\Prob_{\A}(s,A \wedge
      B)}{\Prob_{\A}(s,B)}  \quad \text{(by definition)}\\[.1cm]
    & = & \frac{\Prob_{\A}(s,A \wedge B)}{\Prob_{\A}(s,A \wedge B) +
      \Prob_{\A}(s, \neg A \wedge B)} \\
    & & \hspace*{1cm} \text{(by Bayes formulas)} \\[.1cm]
    & = & 1 \quad \text{(because}\ \Prob_{\A}(s,B \wedge \neg A) = 0\text{)}
  \end{array}\]
  which is exactly $\Prob_{\A}( s, \G \F (q,J) \xrightarrow{e} (q',J') \mid
  \G \F (q,J)) = 1$.  
\end{proof}

\ligne

\noindent \fbox{\begin{minipage}{.98\linewidth}
\oneclock* \end{minipage}}

\begin{proof}
  Let $s$ be a state in $\A$. We want to prove that $\Prob_{\A}(s
  \models \mathsf{fair})=1$. We will equivalently prove
  $\Prob_{\RA}(\iota(s) \models \mathsf{fair})=1$.  To that purpose,
  we decompose the set of infinite runs in $\RA$ from $\iota(s)$ into:
  \begin{enumerate}[label=$(F_{\arabic*})$]
  \item[$(F_1)$] the set of runs with infinitely many resets,
  \item[$(F_2)$] the set of runs with finitely many resets, and which
    are ultimately in the unbounded region $(c_k,+\infty)$,
  \item[$(F_3)$] the set of runs with finitely many resets, and which
    ultimately stay forever in a bounded region, either $\{c_i\}$ with
    $0 \leq i \leq k$, or $(c_i,c_{i+1})$ with $0 \leq i < k$. We
    write $(F_3^{(c_i,c_{i+1})})$ (resp. $(F_3^{c_i})$) for condition
    $F_3$ restricted to $(c_i,c_{i+1})$ (resp.  $\{c_i\}$).
  \end{enumerate}
  
  \noindent We write $\Prob_{\RA}(s,F_j)$ for the probability of the runs
  starting in $s$ and satisfying condition~$F_j$. The three sets of
  runs above are measurable and partition the set of all runs. Hence
  $\sum_{j=1,2,3} \Prob_{\RA}(s,F_j) = 1$, and applying Bayes formula:
  \begin{equation}
    \label{app-bayes}\tag{$\bullet$}
    \Prob_{\RA}(s \models \mathsf{fair}) = \sum_{j=1,2,3} \Prob_{\RA}(s
    \models \mathsf{fair} \mid F_j) \cdot \Prob_{\RA}(s,F_j)\,.
  \end{equation}
  We now distinguish between the three cases to prove that
  $\Prob_{\RA}(s \models \mathsf{fair} \mid F_j)=1$ (in case
  $\Prob_{\RA}(s,F_j)=0$ we remove the corresponding term
  from~\eqref{app-bayes}).

  \begin{description}
  \item[Case $F_1$] 
    We consider the set of runs with infinitely many resets.  Let $\rho
    = s_0 \xrightarrow{\tau_1,e_1} s_1 \xrightarrow{\tau_2,e_2}
    \ldots$ be such a run. There exists $q$ such that for infinitely
    many $i$ with $i \in \IN$, $s_i = (q,0)$ (since $\A$ is
    single-clock). Now, fix a state $(q,0)$ and assume that
    $\Prob_{\RA}(s,\G \F (q,0)) > 0$ (otherwise the set of runs
    visiting infinitely often $(q,0)$ will be negligible). For every
    sequence $\sigma$ of edges and compact sets (as in the statement
    of Lemma~\ref{lemma:technic2}), we get that
    \[
    \Prob_{\RA}( s, \G \F \sigma \mid \G \F (q,0) ) = 1\,.
    \]
    Hence, for sequences of edges $(e_i)_{1 \leq i \leq p}$ such that
    such a $\sigma$ exists, we get that
    \begin{equation}
      \Prob_{\RA}( s, \G \F (q,0) \xrightarrow{e_1} q_1 \ldots
      \xrightarrow{e_p} q_p \mid \G \F (q,0) ) = 1\,.
      \label{star}\tag{$\star$}
    \end{equation}
    Now notice that such a $\sigma$ always exists whenever these edges
    are thick, hence~\eqref{star} holds for every sequence of
    consecutive thick edges.

    Now, fix a thick edge $e$, and assume that the set of paths
    passing through $(q,0)$ infinitely often and enabling $e$
    infinitely often, has a positive probability. We will then prove
    that
    \[
    \Prob_{\RA}(s, (\G \F e\ \text{enabled}) \Rightarrow (\G \F
    \xrightarrow{e} ) \mid \G \F (q,0)) = 1\,,
    \]
    which will imply that $\Prob_{\RA}(s \models \mathsf{fair} \mid
    F_1) = 1$.
    \begin{itemize}
    \item Assume that $e$ is reachable from $(q,0)$ following thick
      edges, say $(e_i)_{1 \leq i \leq p}$ with $e_p=e$. Then,
      applying~\eqref{star}, we get that $\Prob_{\RA}( s, \G \F
      (q,0) \xrightarrow{e_1} q_1 \ldots \xrightarrow{e_p} q_p \mid
      \G \F (q,0) ) = 1$, hence that $\Prob_{\RA}(s, \G
      \F \xrightarrow{e} \mid \G \F (q,0)) = 1$.
    \item Assume on the contrary that $e$ is not reachable from
      $(q,0)$ following thick edges. If $e$ is not reachable from
      $(q,0)$, then $\Prob_{\RA}(s,\G \F e\ \text{enabled} \mid
      \G \F (q,0)) = 0$. Let $W$ be the set of finite
      sequences of edges $(e_i)_{1 \leq i \leq p}$ leading from
      $(q,0)$ to a state where $e$ is enabled. Then:
      $$\begin{array}{l}
        \Prob_{\RA}(\G \F e\ \text{enabled} \mid \G \F (q,0))
        \\[.1cm]
        \hspace*{1cm} = \Prob_{\RA}(\G \F \bigcup_{w \in W} w \mid \G
        \F (q,0)) \\[.1cm]
        \hspace*{1cm} \leq \Prob_{\RA}(\F \bigcup_{w \in W} w \mid \G
        \F (q,0)) \\[.1cm]
        \hspace*{1cm} = 0 \qquad \text{because one of the edges in}\ w\ 
         \text{is thin}.
      \end{array}$$
    \end{itemize}
    In both cases, we get the expected property.

    \smallskip
  \item[Case $F_2$] 
    We consider the set of runs with finitely many resets and which
    end up in the unbounded region $(c_k,+\infty)$. Let $\rho = s
    \xrightarrow{\tau_1,e_1} s_1 \xrightarrow{\tau_2,e_2} \ldots$ be
    such a run, and assume that from $s_n$ on, all states are in the
    unbounded region. From that state on, all edges which are enabled
    are thick and have guard $x > c_k$. Let $e$ be such an edge, and
    $q$ be the region-state source of $e$: the probability
    $\Prob_{\RA}(\Cyl(\path{s,e}))$ for every $s \in q$ is independent
    of the choice of $s$ (as there is no restriction on delays, it is
    equal to $w_e/(\sum_{e'\ \text{enabled at}\ q} w_{e'})$). Hence,
    ultimately, after having reached the unbounded region (and never
    leave it anymore), it will behave like a finite Markov chain.

    Assume now that a resetting edge $e$ is enabled infinitely often along
    $\rho$. Then, by a similar argument to the one in the proof of
    Lemma~\ref{lemma:technic} with the $E_k$, as the probability distribution
    of taking an edge is lower-bounded (because we are now in a finite Markov
    chain), then any edge will be almost surely taken infinitely often. Hence,
    $$\Prob_{\RA} (s, \G\F \, \text{resetting edge enabled} \mid 
    F_2 ) = 0\, ,$$ and thus
    $$\Prob_{\RA} ( s, \neg (\G\F \, \text{resetting edge enabled}) \mid 
    F_2 ) = 1\,.$$ Once more, due to the distribution over edges (which is a
    finite Markov chain), when there is no more resetting edges, we get
     $$\Prob_{\RA} ( s \models \mathsf{fair} \mid F_2 ) = 1\,.$$
   \item[Case $F_3$] 
     We consider the set of runs with finitely many resets and which
     end up in a bounded region. We assume the region $r \egdef
     (c_i,c_{i+1})$. Let $\rho = s \xrightarrow{e_1} s_1
     \xrightarrow{e_2} \ldots$ be a witness run, and we assume that
     from $s_n$ on, we are in region $r$. If $s_{j_1}$ and $s_{j_2}$
     with $n \leq j_1 < j_2$ correspond to the same location, then the
     clock value of $s_{j_1}$ is less than (or equal to) that of
     $s_{j_2}$. Hence, if a thick edge $e$ and whose guard is included
     in $[c_{i+1},+\infty)$ is enabled in $s_{j_1}$ (and thus also in
     $s_{j_2}$), the probability of taking $e$ from $s_{j_2}$ is
     greater than (or equal to) the probability of taking $e$ from
     $s_{j_1}$ (due to $(H4)$ in~$(\dag)$ (page~\pageref{dagger}) on
     $\mu$'s and to the fact that the discrete probability over edges
     is constant by regions). Hence, there is a positive lower bound
     for the probability of taking $e$, and if $e$ is enabled
     infinitely often, it will be taken infinitely often. Such an
     enabled edge is thus only possible with probability $0$ under the
     assumption made in this case. Hence, with probability $1$, only
     edges with guard $x \in r$ are enabled. For these edges, as
     previously, the system behaves like a finite Markov chain. We
     thus get that
     \[
     \Prob_{\RA} (s \models \mathsf{fair} \mid F_3^{(c_i,c_{i+1})} ) =
     1\,.
     \]

     If we now assume the region $r = \{c_i\}$, the reasoning is very
     similar to the previous one. Given a location $\ell$ along the
     suffix of the path where $x=c_i$ always holds, the edges enabled in
     $(\ell,x=c)$ are equipped with a distribution defining a finite
     Markov chain. Hence any edge enabled infinitely often will be
     taken infinitely often almost surely, which implies that
     \[
     \Prob_{\RA} (s \models \mathsf{fair} \mid F_3^{c_i} ) = 1\,.
     \]
  \end{description}
  Gathering all cases, we get the desired property, \textit{i.e.},
  $\Prob_{\RA} (s \models \mathsf{fair}) = 1$. 
\end{proof}

\subsection{Details for Zenoness in single-clock timed automata}
~\\

\noindent \fbox{\begin{minipage}{.98\linewidth}
\lemmazeno* \end{minipage}}

\label{app:zeno-lemma}

\begin{proof}
  From Lemma~\ref{lemma:proba} (resp. Lemma~\ref{prop-topo-A-RA}), we
  know that $(a)$ (resp. $(b)$) is equivalent to $\RA,\iota(s)
  \robust_{\Prob} \neg \Zeno$ (resp. $\RA,\iota(s) \robust_{{\mathcal T}}
  \neg \Zeno$).

  We first remove syntactically all resets from edges of $\RA$
  labelled by $x=0$ since they are useless. We borrow the notations
  used in the proof of Theorem~\ref{theorem:fair-oneclock}, and
  following that proof, we decompose the set of infinite runs from $s$
  into:
  \begin{enumerate}[label=$(F_{\arabic*})$]
  \item[$(F_1)$] the set of runs with infinitely many resets,
  \item[$(F_2)$] the set of runs with finitely many resets, and which
    are ultimately in the unbounded region $(c_k,+\infty)$,
  \item[$(F_3)$] the set of runs with finitely many resets, and which
    ultimately stay forever in a bounded region, either $\{c_i\}$ with
    $0 \leq i \leq k$, or $(c_i,c_{i+1})$ with $0 \leq i < k$.
  \end{enumerate}
  We then also have:
  \begin{equation}
    \label{sum}
    \Prob_{\RA}(\iota(s) \models \Zeno) = \sum_{i=1,2,3} 
    \Prob_{\RA}(\iota(s) \models \Zeno \mid
    F_i) \cdot \Prob_{\RA}(\iota(s),F_i)
  \end{equation}
  when these conditional probabilities are well-defined (otherwise it is
  correct to remove the term from the sum).

  The proof of Lemma~\ref{lemma:zeno1} is then decomposed into two parts, first we
  prove that the two first terms of the above sum are always equal to
  $0$, and then that we can decide whether the last term is equal to
  $0$.

  \begin{lem}
    $\Prob_{\RA}(\iota(s) \models \Zeno \mid F_1) = 0$ and
    $\Prob_{\RA}(\iota(s) \models \Zeno \mid F_2) = 0$.
  \end{lem}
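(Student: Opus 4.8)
The plan is to show that a run in either of the two cases $F_1$ or $F_2$ is almost surely non-Zeno by exhibiting, infinitely often along the run, an unavoidable time delay bounded below by a positive constant (here $1/2$), which forces the total elapsed time to diverge.

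First I would handle case $F_2$. A run satisfying $F_2$ eventually stays forever in the unbounded region $(c_k,+\infty)$. From the moment this happens, every edge taken is enabled with a guard of the form $x > c_k$, so the set $I(s,e)$ for the relevant edges is an unbounded interval, and in particular $I(s)$ is unbounded. By hypothesis $(H5)$ in $(\dag)$, we have $\mu_s([0,1/2]) \le \lambda_0 < 1$ for each such state $s$; equivalently, the probability of delaying \emph{at least} $1/2$ time units before the next transition is at least $1-\lambda_0 > 0$. Since the run takes infinitely many transitions from the unbounded region on, and at each such step there is a probability at least $1-\lambda_0$ of a delay of at least $1/2$, a Borel--Cantelli type argument shows that almost surely infinitely many of these delays are at least $1/2$. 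Consequently the sum of delays diverges almost surely, so the run is non-Zeno with probability $1$, i.e. $\Prob_{\RA}(\iota(s) \models \Zeno \mid F_2) = 0$. A small care point is that these events are not independent, so I would formalise this by the standard estimate $\Prob(\text{no delay} \ge 1/2 \text{ in steps } n,\dots,n+m) \le \lambda_0^{m+1} \to 0$, conditioned appropriately, which suffices.

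Next I would treat case $F_1$, the runs with infinitely many resets. Since $\A$ is single-clock, after every reset the clock value is $0$. Recall that in the region automaton we have syntactically removed the useless resets labelled $x = 0$; so infinitely many genuine resets occur, each from a state where the clock value is positive. More to the point, just after each reset, the clock value is $0$, and along the way to the next reset the run must at some point satisfy the guard of the resetting edge. The key observation is that infinitely often a guard of the form $0 < x < 1$ (more precisely a bounded guard containing an interval of length at least $1$ below the reset point, or simply: the run must traverse from clock value near $0$ up to a value where the next reset guard holds) forces a delay: between two consecutive resets, if the relevant guard is bounded away from $0$, the run must delay a positive amount. The cleanest route is to argue that infinitely often the run is in a state $s$ where $I(s)$ contains an interval of the form $[1/2,\cdot]$ or similar, so that with probability at least some fixed $\lambda$ a delay of at least $1/2$ is taken; then the same Borel--Cantelli argument as above applies. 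Here I would invoke the technical machinery already developed (Lemma~\ref{lemma:technic2}) to get a uniform positive lower bound on the probability of such a large delay at the recurrent states, and conclude $\Prob_{\RA}(\iota(s) \models \Zeno \mid F_1) = 0$.

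The main obstacle I expect is case $F_1$: the single-clock structure is essential but somewhat delicate to exploit, because one must identify precisely \emph{which} recurrent states force a bounded-below delay, and the guards involved depend on the automaton. The argument for $F_2$ is comparatively routine once $(H5)$ is in hand. I would structure the write-up so that the Borel--Cantelli estimate is isolated as a short sub-claim used twice, and then the two cases differ only in \emph{how} one produces, at infinitely many steps, a state $s$ with $\mu_s(\text{delay} \ge 1/2) \ge \lambda$ for a fixed $\lambda > 0$ --- for $F_2$ directly from $(H5)$, for $F_1$ from the single-clock geometry together with the lower-bound lemma.
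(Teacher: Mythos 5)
Your treatment of $F_2$ is correct and is essentially the paper's argument: hypothesis $(H5)$ gives a uniform bound $\mu_s([0,1/2])\le\lambda_0<1$ at every state of the unbounded region, and the product estimate over consecutive steps (the paper's sets $E_k$, with $\Prob(E_k)\le\prod_{i>k}\lambda_0=0$) finishes it. No Borel--Cantelli independence issue arises because one bounds directly the probability that \emph{all} delays from some step on are small.

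For $F_1$, however, there is a genuine gap. Your whole case rests on the claim that infinitely often the run is in a state $s$ with $\mu_s(\text{delay}\ge 1/2)\ge\lambda$ for a fixed $\lambda>0$, but this is exactly the hard part and you do not establish it. The single-clock structure gives you that some exact state $(q,0)$ recurs infinitely often (this is where one gets a \emph{fixed} state, hence a fixed constant $\lambda_{(q,0)}$ --- not Lemma~\ref{lemma:technic2}, which concerns firing thick edges, not delaying). But the paper must then split on $I((q,0))$, and two of the subcases defeat your claim as stated. First, if $I((q,0))=\{0\}$, no positive delay is possible at the recurrent state at all; one has to follow the chain of forced zero-delay transitions out of $(q,0)$, argue that this chain is finite (an infinite such chain would contradict having infinitely many resets, since the $x=0$ edges have had their resets removed), and show that a state with $I\neq\{0\}$ is therefore itself visited infinitely often almost surely before reapplying the argument there. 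Second, if $I((q,0))$ is a finite union of points (e.g.\ $\{0\}\cup\{2\}$), it contains no interval whatsoever; the positive probability of a delay $\ge 1$ comes from condition $(\star)$ making $\mu_{(q,0)}$ equivalent to the uniform distribution over those points, not from interval containment or $(H5)$. Your fallback intuition --- that between consecutive resets the clock must climb from $0$ into the guard of the resetting edge, hence a positive delay --- is also insufficient: if that guard is $0<x<1$, the inter-reset times could be $2^{-n}$ and sum to a finite value, so positivity of each delay does not preclude Zenoness; only the recurrence of the fixed state $(q,0)$ with a fixed positive probability of a large delay does.
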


  \begin{proof}
    We distinguish two cases.
    \begin{description}
    \item[Case $F_1$] We consider the set of runs with infinitely many
      resets.  This set can be decomposed according to the states
      $(q,0)$ (where $q\in Q$ is a region) that are visited infinitely
      often. We show that $\Prob_{\A}(\iota(s) \models \Zeno \mid \G\F
      (q,0)) =0$. In order to prove this, we distinguish the four
      following subcases depending on the set $I((q,0))$: either (i)
      $I((q,0)) \cap [0,1) = \emptyset$, or (ii) $(0,1) \subseteq
      I((q,0))$, or (iii) $\{0\} \subsetneq I((q,0))$, or (iv)
      $\{0\} = I((q,0))$.

      Let us first treat the easy case (i). If $I((q,0)) \cap
      [0,1)=\emptyset$, since the timed automaton is non-blocking,
      this means that each time the automaton arrives in state $(q,0)$
      at least $1$ time unit elapses before the next transition. Hence
      a run visiting infinitely often such state $(q,0)$ is
      necessarily non-Zeno.

      Let us now consider case (ii), \textit{i.e.}, we assume that
      $(0,1) \subseteq I((q,0))$. Since the probability distribution
      over the delays is then equivalent to the Lebesgue measure (see
      hypothesis $(\star)$), the probability of waiting a time delay
      $\tau \le \frac{1}{2}$ in $(q,0)$ is positive and strictly
      smaller than $1$ (we write $\lambda_{(q,0)}$ for this value: $0
      < \lambda_{(q,0)}< 1$). Let $E_k$ be the set of runs starting
      from $\iota(s)$, visiting $(q,0)$ infinitely often, and such
      that from the $k$-th passage on, the time elapsed from state
      $(q,0)$ (before taking an action) is less than $\frac{1}{2}$. We
      have $\Prob_{\RA}(E_k) \le \prod_k^\infty\lambda_{(q,0)}=0$, and
      as a consequence
      \[
      \Prob_{\RA}\big( \iota(s) \models \Zeno \mid \G\F (q,0) \wedge
      (ii)\big) \le \sum_{k=0}^\infty \Prob(E_k)=0\,.
      \]

      In case (iii), we assume that $\{0\} \subsetneq I((q,0))$. If
      $(0,1) \subseteq I((q,0))$, we are done by case (ii). We can
      thus suppose that if $0 \ne \tau \in I((q,0))$, we have that
      $\tau \ge 1$. If $I((q,0))$ reduces to a finite union of points,
      the probability $\lambda_0$ of waiting a delay greater than or
      equal to $1$ is positive and strictly smaller than $1$ (because
      the measure is then equivalent to the uniform measure over those
      points, see hypothesis~$(\star)$). When going infinitely often
      through $(q,0)$, we will thus wait infinitely often a time
      greater than or equal to $1$. If $I((q,0))$ contains an open
      interval, the probability of waiting a delay greater or equal
      than $1$ from $(q,0)$ is $1$ (by hypothesis $(\star)$). From
      this we can easily derive that:
      \[
      \Prob_{\RA}\big( \iota(s) \models \Zeno \mid \G \F(q,0) \wedge
      (iii)\big) =0\,.
      \]

      Let us conclude with case (iv) where $I((q,0)) = \{0\}$. Since no
      positive delay can elapse from $(q,0)$, the probability of taking any
      edge enabled in $(q,0)$ is positive (the distribution over edges indeed
      becomes uniform). Hence, any state $(q_e,0)$ reachable from $(q,0)$
      taking edge $e$, is almost surely infinitely often visited (as soon as
      $(q,0)$ is). From $(q_e,0)$, again two situations are possible: either
      $I((q_e,0)) = \{0\}$ or not. In the first case, note that it is
      necessarily the case that such a chain $(q,0) \xrightarrow{0,e_1}
      (q_1,0) \xrightarrow{0,e_2} (q_2,0) \cdots$ is finite, otherwise the run
      would contain only finitely many resets\footnote{Recall that edges
        labelled with $x=0$ are not labelled with a reset.}. Thus we surely
      reach infinitely often a state $(q',0)$ such that $I((q',0)) \ne \{0\}$
      allowing us to rely on the previous cases to obtain the desired results.

      Gathering the four cases, we conclude that $\Prob_{\A}(\iota(s)
      \models \Zeno \mid \G\F (q,0)) = 0$. Hence
      \[
      \Prob_{\A}(\iota(s) \models \Zeno \mid F_1)=0\,.
      \]  
    \item[Case $F_2$] We consider the set of runs with finitely many
      resets and which end up in the unbounded region. From any state
      in the unbounded region, the set of potential delays is
      necessarily of the form $[0,+\infty)$\footnote{Otherwise the
        clock would be compared to a constant greater than the maximal
        one}. From hypothesis~$(H5)$ in $(\dag)$ on the distributions
      over delays, the probability of waiting a time delay $\tau \le
      \frac{1}{2}$ from $s$, denoted $\lambda_s$, can be bounded by a
      constant: $0 < \lambda_s \le \lambda_0 < 1$. Let $E_k$ denote
      the set of executions which, at the $k$-th step, are in the
      unbounded region without leaving it afterwards, and such that
      all delays afterwards are less than $\frac{1}{2}$. The
      probability of being Zeno when in $E_k$ satisfies: $\Prob(E_k )
      \le \prod_{i > k} \lambda_0 = 0$, from which we derive:
      \[
      \Prob(\iota(s) \models \Zeno \mid F_2 ) \le \sum_{k=0}^\infty
      \Prob(E_k) =0\,.
      \]
    \end{description}
    This concludes the proof of the Lemma~\ref{lemma:zeno1}. 
  \end{proof}

  The case of condition $F_3$ is not similar to the two previous
  cases.  Indeed, it is worth noticing that every execution satisfying
  the condition $F_3$ is Zeno. Hence, if $\Prob_{\RA}(\iota(s) \models
  F_3) \neq 0$ (otherwise the term $\Prob_{\RA}(\iota(s) \models \Zeno
  \mid F_3) \cdot \Prob_{\RA}(\iota(s) \models F_3)$ does not appear
  in the sum~\ref{sum}), then $\Prob_{\RA}(\iota(s) \models \Zeno \mid
  F_3) = 1$. It remains to compute or characterise the value
  $\Prob_{\RA}(\iota(s) \models F_3)$.

  A BSCC $B$ in $\bluegraph(\A)$ is called a \emph{Zeno BSCC} if it is bounded
  and contains no resetting edges. Note that in a Zeno BSCC the value of the
  clock lies in a unique interval $(c,c+1)$ (with $0 \leq c < M$) or $\{c\}$
  (with $0 \leq c \leq M$).

  \begin{lem}
    $\Prob_\RA(\iota(s) \models F_3) = \displaystyle\sum_{B\
      \text{Zeno BSCC of}\ \thickgraph(\A)}\Prob_{\RA}(\iota(s)
    \models \F B)$.
  \end{lem}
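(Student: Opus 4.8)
The plan is to prove the identity
\[
\Prob_\RA(\iota(s) \models F_3) = \sum_{B\ \text{Zeno BSCC of}\ \thickgraph(\A)}\Prob_{\RA}(\iota(s) \models \F B)
\]
by showing two inclusions (up to a null set) between the event $F_3$ and the event ``eventually reach a Zeno BSCC of $\thickgraph(\A)$'', and then checking that the latter events are pairwise almost-disjoint so that the probabilities add. First I would recall that, by Theorem~\ref{theorem:fair-oneclock}, the automaton is almost-surely fair, and by Theorem~\ref{th:safety} (applied to the property $\mathsf{thick}$, as done in the proof of Theorem~\ref{th:fairhelps}) almost every run from $\iota(s)$ is thick. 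Hence, up to a set of probability zero, every run from $\iota(s)$ is thick and fair; by fairness such a run eventually enters a BSCC of $\thickgraph(\A)$ and takes every edge of that BSCC infinitely often.

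Next I would argue the left-to-right containment: suppose $\rho$ is a thick fair run satisfying $F_3$, i.e.\ it has finitely many resets and ultimately stays forever in a fixed bounded region $r$ (either $\{c_i\}$ or $(c_i,c_{i+1})$). Let $B_\rho$ be the BSCC of $\thickgraph(\A)$ that $\rho$ eventually reaches; since $\rho$ takes every edge of $B_\rho$ infinitely often and $\rho$ never leaves the bounded region $r$ from some point on, every edge of $B_\rho$ has its source region equal to (the location component together with) $r$, so $B_\rho$ is bounded, and since $\rho$ performs no reset from some point on and takes every edge of $B_\rho$ infinitely often, no edge of $B_\rho$ is a resetting edge. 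Therefore $B_\rho$ is a Zeno BSCC, and $\rho \models \F B_\rho$. Conversely, if $\rho$ is a thick fair run with $\rho \models \F B$ for some Zeno BSCC $B$, then after reaching $B$ fairness forces $\rho$ to stay within $B$ forever (it cannot leave a BSCC); since $B$ is bounded the clock value stays in the unique interval $(c,c+1)$ or $\{c\}$ associated with $B$, and since $B$ has no resetting edge $\rho$ performs no further resets, so $\rho$ satisfies $F_3$. This gives equality of the two events modulo the null set of non-thick or non-fair runs, hence equality of probabilities: $\Prob_\RA(\iota(s)\models F_3) = \Prob_\RA(\iota(s)\models \bigvee_{B\ \text{Zeno BSCC}}\F B)$.

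Finally I would observe that for two distinct Zeno BSCCs $B$ and $B'$ the events $\F B$ and $\F B'$ are, restricted to fair runs, disjoint: a fair run cannot visit two distinct BSCCs infinitely often (once it is trapped in one BSCC it stays there), and reaching a Zeno BSCC $B$ forces the run to remain in $B$ forever by the argument above; so $\Prob_\RA(\iota(s)\models \F B \wedge \F B') = 0$. Therefore $\Prob_\RA(\iota(s)\models \bigvee_B \F B) = \sum_B \Prob_\RA(\iota(s)\models \F B)$, which together with the previous paragraph yields the claimed identity. Measurability of all the sets involved is standard ($F_3$ was already noted measurable in the proof of Theorem~\ref{theorem:fair-oneclock}, and $\F B$ is a reachability event, hence a countable union of cylinders). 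The main obstacle I anticipate is the careful bookkeeping in the left-to-right direction: one must be sure that the BSCC $B_\rho$ that a fair thick $F_3$-run settles into is genuinely bounded and reset-free as a subgraph of $\thickgraph(\A)$, which relies on the fact that in the region automaton built from the single-clock regions the source region of each edge of $B_\rho$ is forced to coincide with the region $r$ in which the run is ultimately confined, and on the syntactic simplification (mentioned in the surrounding proof) that edges labelled $x=0$ carry no reset.
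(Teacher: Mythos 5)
Your proof is correct and follows essentially the same route as the paper's: restrict to thick and fair runs (a probability-one event by Theorems~\ref{th:safety} and~\ref{theorem:fair-oneclock}), show that such a run satisfies $F_3$ if and only if it settles into a Zeno BSCC of $\thickgraph(\A)$, and sum over the pairwise almost-disjoint events $\F B$. One small slip worth noting: what traps a run inside a BSCC $B$ once it is reached is thickness (a thick run takes only thick edges, and no thick edge leaves a bottom SCC of $\thickgraph(\A)$), not fairness — but since you have already restricted to thick runs this does not affect the argument.
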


  \begin{proof}
    Runs in $\RA$ are almost surely fair (thanks to
    Theorem~\ref{theorem:fair-oneclock}), hence $\Prob_{\RA}(\iota(s)
    \models F_3) = \Prob_{\RA}(\iota(s) \models F_3 \wedge
    \mathsf{fair})$. Now, a fair run in $\RA$ actually ends up in a
    BSCC of $\thickgraph(\A)$.
    It is now sufficient to remark that fair runs in $F_3$ end up in a
    BSCC that is bounded and does not reset the clock.  Indeed, if one
    of these condition does not hold, the run would not be in $F_3$
    (either it would end up in an unbounded region, or have infinitely
    many resets). Conversely, any run ending up in a Zeno BSCC
    is in~$F_3$. Hence, the mentioned equality holds. 
  \end{proof}
  This concludes the proof of the lemma. 
\end{proof}

\ligne

\noindent \fbox{\begin{minipage}{.98\linewidth}
\zeno* \end{minipage}}

\label{app:zeno2}

\begin{proof}
  The equivalence of $(a)$ and $(c)$ is a consequence of
  Lemma~\ref{lemma:zeno1} and of Theorem~\ref{th:safety}:
  $\Prob_{\A}(s \models \Zeno) = \sum_{B\ \text{Zeno BSCC of}\
    \thickgraph(\A)} \Prob_{\RA}(\iota(s) \models \F B)$. Therefore,
  $\A ,s \robust_{\Prob} \neg\Zeno$ iff $\Prob_{\A}(s \models \Zeno) =
  0$, which is then equivalent to ``for every Zeno BSCC $B$ of
  $\thickgraph(\A)$, $\Prob_{\RA}(\iota(s) \models \F B)=0$'', which
  is itself equivalent to ``for every Zeno BSCC $B$ of
  $\thickgraph(\A)$, $\Prob_{\RA}(\iota(s) \models \G \neg B)=1$''; it
  remains to realise that $\G \neg B$ is a simple safety property, and
  to apply Theorem~\ref{th:safety}.

  We now show the equivalence with $(b)$.  We remove syntactically all
  resets from edges of $\RA$ labelled by $x=0$ since they are
  useless. We also borrow the notations used in the proof of
  Theorem~\ref{theorem:fair-oneclock}.
  Assume first that $\Prob_{\A}(s \models \Zeno) = 0$. Then no BSCC of
  $\thickgraph(\A)$ is Zeno. We once more play a Banach-Mazur game
  using the basic open sets. Player~$1$ plays some move $B_1$, and
  player 2 then plays a move $B_2$ leading to a BSCC $B$ of
  $\thickgraph(\A)$. By hypothesis, $B$ is not a Zeno BSCC, hence
  either it is not bounded, or it contains resetting edges.
  \begin{itemize}
  \item We first consider the case where $B$ contains no resetting
    edges. In that case, it means that the clock value when in $B$ is
    always above the maximal constant. Hence, the game can keep going
    on, and each time Player~$2$ chooses a move, she first chooses a
    move which constrains the cylinder saying that the delay has to be
    larger than $1$. This is always possible, due to the form of the
    constraints, which all include $(c_k,+\infty)$. In that case, it
    is not difficult to check that the resulting runs are all
    non-Zeno.
  \item We now consider the case where $B$ has resetting edges. Note
    that the clock can then become larger than $0$. In that case,
    Player~$2$ can always choose a move so that it terminates with a
    resetting edge, but has visited a positive region, and has
    enforced that the value of the clock in that precise region was
    larger than $1/2$. In that case also, all runs resulting from that
    play are non-Zeno.
  \end{itemize}
  Hence, we get that Player~$2$ has a strategy to avoid the set of
  Zeno runs, hence this set is meagre.

  \medskip Conversely assume that the set of Zeno runs is meagre, but
  assume also that $\Prob_{\A}(s \models \Zeno) > 0$. Once more, let's
  play the Banach-Mazur game. Player 2 has a strategy to avoid Zeno
  behaviours.  However, as $\Prob_{\A}(s \models \Zeno) > 0$, Player 1
  can play a first move leading to a Zeno BSCC $B$ of
  $\thickgraph(\A)$. Then $B$ has no resetting edges and lies within
  an interval $(c_i;c_{i+1})$ or $\{c_i\}$. Then whatever move Player
  2 chooses, the resulting runs will all be Zeno, hence contradicting
  the assumption that the set of Zeno runs is meagre. The claim
  follows. 
\end{proof}

\section{Details for Subsection~\ref{subsec:reactive} (reactive timed
  automata)}
\label{app:reactive}

Let $s$ be a state of $\RA$, that we will take as initial.
If $e$ is a thick edge in $T$, and $q\in Q$, we write
$\mathfrak{R}^e(s)$ for the set of runs in $\RA$ that start in $s$ and
take $e$ infinitely often, and $\mathfrak{R}^{q}(s)$ for the set of
runs of $\RA$ that start in $s$ and visit $q$ infinitely often. In
particular, we write $\mathfrak{R}^{\source(e)}(s)$ for the set of
runs that start in $s$ and visit $\source(e)$ infinitely often (hence
along which $e$ is enabled infinitely often).

We fix a thick edge $e$ in $T$, and we let $\mathcal{Q}$ be the set of
pairs $q = (\ell,r)$ where $r$ is memoryless and $\mathcal{Q}'$ the
set of elements $q=(\ell,r)\in \mathcal{Q}$ such
that \[\Prob(\mathfrak{R}^{q}(s))>0 \quad\text{and}\quad
\Prob(\mathfrak{R}_0^{q,e}(s_q))>0\] where $\mathfrak{R}_0^{q,e}(s_q)$
is the set of runs that start from $s_q$ and take $e$ before any other
visit to $q$.
\medskip

\noindent \fbox{\begin{minipage}{.98\linewidth}
\equaun* \end{minipage}}

\proof
  We let $\mathfrak{D}^{>M}_n(s)$ be the set of runs from $s$ that
  delay more than $M$ time units before taking the $n$-th transition
  (i.e.  $\mathfrak{D}^{>M}_n(s) = \{\varrho \in \Runs(\A,s) \mid
  \varrho = s \xrightarrow{\tau_1,e_1} s_1 \xrightarrow{\tau_2,e_2}
  \dots \mid \tau_n > M\}$), and $\mathfrak{K}_n^\ell(s)$
  (resp. $\mathfrak{K}_n^q(s)$ if $q = (\ell,r) \in \mathcal{Q}$) the
  set of runs from $s$ such that the $n$-th configuration is of the
  form $(\ell,v)$ (resp. $(\ell,v)$ with $v \in r$).
    
  If we denote $\mathfrak{D}^{\le M}_{n,N}(s):=\bigcap_{n\le k\le
    N}(\mathfrak{D}^{>M}_k(s))^c$ (where $\mathfrak{A}^c$ is the
  complement of any set $\mathfrak{A}$)
and $S_N$ the set of locations $l\in
  L$ such that
  \[
  \Prob\left(\mathfrak{D}^{\le M}_{n,N}(s)\cap
    \mathfrak{K}_{N}^\ell(s) \right)>0
  \]
  then for any $\ell\in S_{N-1}$, we have
  \begin{align*}
    \Prob\left(\mathfrak{D}^{\le M}_{n,N}(s)\mid\mathfrak{D}^{\le
        M}_{n,N-1}(s)\cap \mathfrak{K}_{N-1}^\ell(s)\right)
    &=\Prob\left(\mathfrak{D}^{\le M}_{N,N}(s)\mid\mathfrak{D}^{\le M}_{n,N-1}(s)\cap \mathfrak{K}_{N-1}^\ell(s)\right)\\
    &= \Prob\left(\mathfrak{D}^{\le M}_{N,N}(s)\mid \mathfrak{K}_{N-1}^\ell(s)\right)\\
    &=\mu_{\ell}([0,M])
  \end{align*}
  and thus
  \begin{align*}
    \Prob\left(\mathfrak{D}^{\le M}_{n,N}(s)\right)
    & =\sum_{\ell\in S_{N-1}} \Prob\left(\mathfrak{D}^{\le M}_{n,N}(s)\mid\mathfrak{D}^{\le M}_{n,N-1}(s)\cap \mathfrak{K}_{N-1}^\ell(s)\right)\\
    &\quad\quad\cdot \Prob\left(\mathfrak{D}^{\le M}_{n,N-1}(s)\cap \mathfrak{K}_{N-1}^\ell(s)\right)\displaybreak[0]\\
    & = \sum_{\ell\in S_{N-1}} \mu_{\ell}([0,M])
    \Prob\left(\mathfrak{D}^{\le M}_{n,N-1}(s)\cap \mathfrak{K}_{N-1}^\ell(s)\right)\displaybreak[0]\\
    & \le \max_{\ell\in L} \mu_{\ell}([0,M])\cdot
    \sum_{\ell\in S_{N-1}}  \Prob\left(\mathfrak{D}^{\le M}_{n,N-1}(s)\cap \mathfrak{K}_{N-1}^\ell(s)\right)\\
    & = \max_{\ell\in L} \mu_{\ell}([0,M])\cdot
    \Prob\left(\mathfrak{D}^{\le M}_{n,N-1}(s)\right)\,.
  \end{align*}
  As for any $\ell\in L$, we have assumed $\mu_{\ell}([0,M])<1$, we
  conclude that
  \[
  \Prob\left(\mathfrak{D}^{\le M}_{n,N}(s)\right)\le (\max_{\ell\in L}
  \mu_{\ell}([0,M]))^{N+1-n}\underset{N\rightarrow
    \infty}{\longrightarrow} 0
  \]
  and thus that
  \[
  \Prob\left(\bigcup_{n\in \mathbb{N}}\bigcap_{k\ge
      n}(\mathfrak{D}^{>M}_k(s))^c\right)=0\,.
  \]

  We get that 
  \[
  \Prob\left(\bigcap_{n \in \IN} \bigcup_{k \geq n}
    \mathfrak{D}^{>M}_k(s)\right) = 1.
  \]
  Now it is just a matter of noticing that if $\varrho = s
  \xrightarrow{t_1,a_1} s_1 \cdots \xrightarrow{t_n,a_n} s_n \cdots$
  is in $\mathfrak{D}^{>M}_n(s)$, then $s_n$ is of the form
  $(\ell_n,v_n)$ with $(\ell_n,[v_n]_{\A}) \in \mathcal{Q}$ (a clock
  is either reset on the $n$-transition (hence its value is $0$), or
  it is above $M$) and thus $\varrho \in
  \mathfrak{K}_n^{(\ell_n,[v_n]_{\A})}(s)$. Hence, we have
  $\mathfrak{D}^{>M}_k(s)\subset \bigcup_{q \in
    \mathcal{Q}}\mathfrak{K}^{q}_k(s)$ and
  \[
  \Prob\left(\bigcap_{n \in \IN} \bigcup_{k \geq n} \left(\bigcup_{q
        \in \mathcal{Q}}\mathfrak{K}^{q}_k(s)\right)\right) = 1\,.
  \]
  As $\mathcal{Q}$ is a finite set, we deduce
  \[
  \Prob\left(\bigcup_{q \in \mathcal{Q}} \left( \bigcap_{n \in \IN}
      \bigcup_{k \geq n} \mathfrak{K}^{q}_k(s)\right)\right) = 1
  \]
  and as $\bigcap_{n \in \IN} \bigcup_{k \geq n}
  \mathfrak{K}^{q}_k(s)= \mathfrak{R}^{q}(s)$, we get
 \[
 \Prob\left(\bigcup_{q \in \mathcal{Q}} \mathfrak{R}^{q}(s)
 \right) = 1\,.\eqno{\qEd}
 \]

\ligne 

Now we give the proof of Lemma~\ref{lemma:eq2}. It will require
quite long developments that we give in details.
\medskip

\noindent \fbox{\begin{minipage}{.98\linewidth}
\equadeux* \end{minipage}}

\begin{proof}
Let $q \in \mathcal{Q}'$. We want prove that
\begin{equation*}
  \Prob\left(\mathfrak{R}^e(s) \mid \mathfrak{R}^{q}(s)\right) = 1
\end{equation*}
or equivalently that
\begin{equation*}
  \Prob\left(\mathfrak{R}^e(s)\cap\mathfrak{R}^{q}(s) \mid \mathfrak{R}^{q}(s)\right) = 1.
\end{equation*}

We notice that the event $\mathfrak{R}^e(s) \cap \mathfrak{R}^{q}(s)$ coincides with
\[
\bigcap_{n \in \IN} \bigcup_{k \geq n} \mathfrak{R}_k^{q,e}(s)
\]
where $\mathfrak{R}_k^{q,e}(s)$ is the set of runs starting in $s$
along which an occurrence of edge $e$ is preceded by precisely $k$
visits to $q$, i.e.  $\mathfrak{R}_k^{q,e}(s) = \{\varrho \in
\Runs(\A,s) \mid \varrho = s \xrightarrow{\tau_1,e_1} s_1 \dots
\xrightarrow{\tau_m,e_m} s_m \dots \text{and there exists}\ j\
\text{s.t.}\ e_j = e\ \text{and}\ \#\{1\leq i < j \mid
\mathsf{loc}(s_i)=q\}=k\}$, where $\mathsf{loc}(s_i)$ is the location
of state $s_i$.  We recall the following lemma, which is well-known in
probability theory (see for example~\cite{Bi95}):
\begin{lem}[Borel-Cantelli]
  Assume $(\mathcal{E},\Prob)$ is a probabilistic space, and that the
  measurable events $(E_k)_{k \in \IN}$ are independent. If
  $\displaystyle \sum_{k \in \IN} \Prob(E_k) = +\infty$, then
 \[
 \Prob\left(\bigcap_{n \in \IN} \bigcup_{k \geq n} E_k\right) = 1\,.
 \]
\end{lem}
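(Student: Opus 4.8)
The plan is to prove the equivalent assertion $\Prob(A) = 1$, where
\[
A \;=\; \bigcap_{n \in \IN}\ \bigcup_{k \ge n} E_k
\]
is the set of outcomes lying in infinitely many of the $E_k$'s. Passing to complements, $A^c = \bigcup_{n \in \IN} \bigcap_{k \ge n} E_k^c$, so by countable subadditivity of $\Prob$ it suffices to show that $\Prob\big(\bigcap_{k \ge n} E_k^c\big) = 0$ for each fixed $n \in \IN$; summing these zeros over $n$ then yields $\Prob(A^c) = 0$.

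So I would fix $n$ and argue as follows. Since the family $(E_k)_{k \in \IN}$ is independent, so is the family $(E_k^c)_{k \in \IN}$, whence for every $N \ge n$
\[
\Prob\Big(\bigcap_{k=n}^{N} E_k^c\Big) \;=\; \prod_{k=n}^{N} \big(1 - \Prob(E_k)\big)\,.
\]
Using the elementary inequality $1 - x \le e^{-x}$ (which holds for all real $x$, in particular for $x = \Prob(E_k) \in [0,1]$), the right-hand side is bounded above by $\exp\!\big(-\sum_{k=n}^{N} \Prob(E_k)\big)$. The hypothesis $\sum_{k \in \IN} \Prob(E_k) = +\infty$ means that the partial sums $\sum_{k=n}^{N} \Prob(E_k)$ diverge to $+\infty$ as $N \to \infty$, so this upper bound tends to $0$.

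Finally, since the sets $\bigcap_{k=n}^{N} E_k^c$ decrease (in $N$) to $\bigcap_{k \ge n} E_k^c$, continuity of $\Prob$ from above gives
\[
\Prob\Big(\bigcap_{k \ge n} E_k^c\Big) \;=\; \lim_{N \to \infty} \Prob\Big(\bigcap_{k=n}^{N} E_k^c\Big) \;=\; 0\,,
\]
which completes the argument. There is no genuine obstacle in this proof; the only points deserving a line of justification are the inequality $1 - x \le e^{-x}$ and the use of continuity of the measure from above, both entirely standard. The independence hypothesis enters exactly once, in the factorisation of $\Prob\big(\bigcap_{k=n}^{N} E_k^c\big)$, and it is what makes the statement true — the converse Borel--Cantelli bound fails without it.
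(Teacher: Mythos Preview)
Your proof is correct and is the standard argument for the second Borel--Cantelli lemma. The paper does not actually prove this statement: it merely recalls it as ``well-known in probability theory'' and cites Billingsley~\cite{Bi95}, so there is nothing to compare against.
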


With the aim to apply this lemma, we will prove that the events
$\mathfrak{R}_k^{q,e}(s)$ are independent in the
$\mathfrak{R}^{q}(s)$-conditional $\sigma$-algebra, and that $\sum_{k
  \in \IN} \Prob(\mathfrak{R}_k^{q,e}(s)\mid\mathfrak{R}^{q}(s)) =
+\infty$, which will imply Lemma~\ref{lemma:eq2}. This is non-trivial
and will require several technical lemmas that we present now. The
following arguments rely on result that will be given as
Corollary~\ref{cor:decomposition} (which is technical, and therefore
postponed).

\paragraph{Independence of events.}

\begin{lem}\label{indpt}
  The events $\mathfrak{R}_k^{q,e}(s)$ are conditionally independent
  given $\mathfrak{R}^{q}(s)$.
\end{lem}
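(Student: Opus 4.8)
The plan is to exploit the memorylessness of $q$: the whole point of a memoryless region is that, once a run reaches it, the conditional law of the continuation depends only on $q$ (through the canonical configuration $s_q$), and neither on the finite prefix nor on the precise clock valuation. This is exactly what Corollary~\ref{cor:decomposition} will provide, in the form of a decomposition of the conditional measure $\Prob(\cdot \mid \mathfrak{R}^{q}(s))$ along the successive visits to $q$.

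\emph{Reduction to excursions.} For a run $\varrho = s \xrightarrow{\tau_1,e_1} s_1 \xrightarrow{\tau_2,e_2} \cdots$ in $\mathfrak{R}^{q}(s)$, set $n_0 = 0$ and let $n_1 < n_2 < \cdots$ enumerate the indices $i \ge 1$ with $\mathsf{loc}(s_i) = q$ (there are infinitely many of them, precisely because $\varrho \in \mathfrak{R}^{q}(s)$). Call $\xi_k(\varrho)$ the finite fragment $s_{n_k} \xrightarrow{\tau_{n_k+1},e_{n_k+1}} \cdots \xrightarrow{\tau_{n_{k+1}},e_{n_{k+1}}} s_{n_{k+1}}$, so that its transitions are exactly the $e_j$ with $n_k < j \le n_{k+1}$. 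An elementary verification shows that an index $j$ satisfies $\#\{1 \le i < j \mid \mathsf{loc}(s_i) = q\} = k$ if and only if $n_k < j \le n_{k+1}$. Hence, in restriction to $\mathfrak{R}^{q}(s)$, the event $\mathfrak{R}_k^{q,e}(s)$ coincides with ``the $k$-th excursion $\xi_k$ contains an occurrence of the edge $e$''; in particular $\mathfrak{R}_k^{q,e}(s)$ is, conditionally on $\mathfrak{R}^{q}(s)$, measurable with respect to $\xi_k$ alone.

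\emph{Independence.} By Corollary~\ref{cor:decomposition}, under the conditional measure $\Prob(\cdot \mid \mathfrak{R}^{q}(s))$ the excursions $\xi_0, \xi_1, \xi_2, \ldots$ are mutually independent (moreover $\xi_1, \xi_2, \ldots$ are identically distributed, each governed by the first-return-to-$q$ law started from $s_q$; this identical distribution is not needed here but will be used afterwards to obtain $\sum_k \Prob(\mathfrak{R}_k^{q,e}(s) \mid \mathfrak{R}^{q}(s)) = +\infty$). Since for $k \ne k'$ the events $\mathfrak{R}_k^{q,e}(s)$ and $\mathfrak{R}_{k'}^{q,e}(s)$ live in the $\sigma$-algebras generated by the disjoint fragments $\xi_k$ and $\xi_{k'}$, mutual independence of the family $\{\xi_k\}_k$ transfers to mutual independence of $\{\mathfrak{R}_k^{q,e}(s)\}_k$ in the $\mathfrak{R}^{q}(s)$-conditional $\sigma$-algebra, which is the assertion of the lemma.

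\emph{Where the work is.} The only real content is Corollary~\ref{cor:decomposition}, i.e. turning the informal strong Markov property at memoryless regions into an honest product decomposition of the conditional measure. This requires: writing the relevant events as countable unions of cylinders over constrained symbolic paths; using the recursive integral formula defining $\Prob_{\A}$ to factor the measure at each passage through $q$; checking that, thanks to memorylessness of $q$ — so that $\mu_s$ and $p_s$ agree with $\mu_{s_q}$ and $p_{s_q}$ on what is relevant for the future — each such factor is exactly the one associated with the canonical configuration $s_q$; and verifying that the renormalisation by $\Prob(\mathfrak{R}^{q}(s))$ is compatible with this factorisation, so that the excursions genuinely become independent random fragments. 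The bookkeeping with the ``return to $q$'' sub-probability kernels (which are only sub-probabilities before one conditions on returning infinitely often) is the delicate point; everything else is routine once the decomposition is in place.
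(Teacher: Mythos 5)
Your proof is correct and, at heart, it is the same argument as the paper's: everything is delegated to the product decomposition of the measure at successive visits to the memoryless region $q$ (the paper's Proposition~\ref{prop:decomposition} and Corollary~\ref{cor:decomposition}, proved exactly by the mechanism you describe: writing events as unions of cylinders over constrained symbolic paths, factoring the iterated integral at each return to $q$, and invoking Lemma~\ref{lemma:vm} to replace the actual return state by the canonical configuration $s_q$). The packaging differs. The paper does not introduce excursions: it computes $\Prob(\mathfrak{R}_k^{q,e}(s)\mid\mathfrak{R}^{q}(s))$ and $\Prob(\mathfrak{R}_k^{q,e}(s)\cap\mathfrak{R}_{k'}^{q,e}(s)\mid\mathfrak{R}^{q}(s))$ as ratios of limits of the explicit product formulas ($\Prob(\mathfrak{E}_q(s))\cdot\Prob(\mathfrak{R}_0^{q,e}(s_q))^m$ over $\Prob(\mathfrak{E}_q(s))$), getting $\Prob(\mathfrak{R}_0^{q,e}(s_q))$ and its square, and concludes pairwise independence, with a ``similarly'' for larger intersections. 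Your route factors the conditional measure once and for all into independent excursion fragments $\xi_0,\xi_1,\dots$ and observes that the events live in the $\sigma$-algebras of disjoint fragments; this buys you mutual independence of the whole family in one stroke (which is what Borel--Cantelli actually needs, and which the paper only gestures at), and it makes transparent why the common value is the first-return probability $\Prob(\mathfrak{R}_0^{q,e}(s_q))$. The one caveat: the corollary as the paper states it gives only the three specific product identities, not full independence of the excursion process, so your ``independence transfers through the $\sigma$-algebras'' step rests on a strictly stronger form of the decomposition. That stronger form is provable by exactly the same bookkeeping (the factorisation applies to arbitrary cylinder events on each excursion, not just the ones the paper needs), and you correctly locate the delicate point there — the sub-probability return kernels and the renormalisation by $\Prob(\mathfrak{R}^q(s))$, i.e.\ Lemma~\ref{limitok} — so I regard this as a presentational difference rather than a gap.
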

\begin{proof}
  Defining $\mathfrak{R}_{\geq n}^{q}(s)$ as the set of runs starting
  in $s$ and visiting $q$ at least $n$ times, we compute:
  \begin{align*}
    \Prob\left(\mathfrak{R}_k^{q,e}(s) \mid \mathfrak{R}^{q}(s)\right)
    &=\Prob\left(\mathfrak{R}_k^{q,e}(s)
      \mid \bigcap_{n>k} \mathfrak{R}_{\geq n}^{q}(s)\right) \displaybreak[0]\\
    & = \frac{\displaystyle \Prob\left(\mathfrak{R}_k^{q,e}(s) \cap
        \bigcap_{n>k} \mathfrak{R}_{\geq
          n}^{q}(s)\right)}{\displaystyle
      \Prob\left(\bigcap_{n>k} \mathfrak{R}_{\geq n}^{q}(s)\right)} \\
    & = \frac{\displaystyle \Prob\left(\bigcap_{n>k}
        \left(\mathfrak{R}_k^{q,e}(s) \cap \mathfrak{R}_{\geq
            n}^{q}(s)\right)\right)}{\displaystyle
      \Prob\left(\bigcap_{n>k} \mathfrak{R}_{\geq n}^{q}(s)\right)}\,.
  \end{align*}
  The two sequences $\left(\mathfrak{R}_k^{q,e}(s) \cap
    \mathfrak{R}_{\geq n}^{q}(s)\right)_{n>k}$ and
  $\left(\mathfrak{R}_{\geq n}^{q}(s)\right)_{n>k}$ are
  non-increasing, hence:
  \[
  \left\{\begin{array}{l}
      \Prob\left(\bigcap_{n>k} \left(\mathfrak{R}_k^{q,e}(s)
          \cap \mathfrak{R}_{\geq n}^{q}(s)\right)\right)
           =\displaystyle \lim_{n \rightarrow \infty}
      \Prob\left(\mathfrak{R}_k^{q,e}(s) \cap
        \mathfrak{R}_{\geq n}^{q}(s)\right)\\
        \text{}\\
      \displaystyle \Prob\left(\bigcap_{n>k} \mathfrak{R}_{\geq
          n}^{q}(s)\right)=\displaystyle \lim_{n \rightarrow
        \infty} \Prob\left(\mathfrak{R}_{\geq n}^{q}(s)\right).
    \end{array}\right.
  \]
  Thus, by Corollary~\ref{cor:decomposition},
  \begin{align*}
    \Prob\left(\mathfrak{R}_k^{q,e}(s) \mid
      \mathfrak{R}^{q}(s)\right)
    & =
    \frac{\displaystyle \lim_{n \rightarrow \infty} 
      \Prob\left(\mathfrak{R}_k^{q,e}(s) 
        \cap \mathfrak{R}_{\geq
          n}^{q}(s)\right)}{\displaystyle \lim_{m \rightarrow \infty} 
      \Prob\left(\mathfrak{R}_{\geq m}^{q}(s)\right)} \\
    & = \frac{\displaystyle \lim_{n \rightarrow \infty} 
      \Prob\left(\mathfrak{E}_{q}(s)\right) \cdot
      \Prob\left(\mathfrak{R}_0^{q,e}(s_q)\right)}
    {\displaystyle \lim_{m \rightarrow \infty} 
      \Prob\left(\mathfrak{E}_{q}(s)\right)} \\
    & = \Prob\left(\mathfrak{R}_0^{q,e}(s_q)\right)
  \end{align*}
  where $\mathfrak{E}_{q}(s)$ is the set of runs starting in $s$ and
  visiting $q$ at least once, and $s_q$ is the canonical configuration
  $(\ell,v_r)$ for $q=(\ell,r)$.

  Similarly we prove, using Corollary~\ref{cor:decomposition}, that
  for $k\ne k'$:
  \begin{align*}
    \Prob\left(\mathfrak{R}_k^{q,e}(s) \cap \mathfrak{R}_{k'}^{q,e}(s) \mid
      \mathfrak{R}^{q}(s)\right) &= 
    \Prob\left(\mathfrak{R}_0^{q,e}(s_q)\right)^2\\
    & = \Prob\left(\mathfrak{R}_k^{q,e}(s) \mid
      \mathfrak{R}^{q}(s)\right) \cdot 
    \Prob\left(\mathfrak{R}_{k'}^{q,e}(s) \mid
      \mathfrak{R}^{q}(s)\right).
  \end{align*}
  We conclude that the two events $\mathfrak{R}_k^{q,e}(s)$ and
  $\mathfrak{R}_{k'}^{q,e}(s)$ are conditionally independent given
  $\mathfrak{R}^{q}(s)$. 
\end{proof}

\paragraph{Divergence of the series.}

\begin{lem}
  $\displaystyle \sum_{k \in \IN} \Prob(\mathfrak{R}_k^{q,e}(s)\mid
  \mathfrak{R}^{q}(s)) = +\infty$.
\end{lem}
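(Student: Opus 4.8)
The plan is to compute (or lower-bound) $\Prob(\mathfrak{R}_k^{q,e}(s)\mid \mathfrak{R}^{q}(s))$ for each $k$ and show this conditional probability is a fixed positive constant, so that the series is a sum of infinitely many copies of a positive number and hence diverges. From the computation carried out inside the proof of Lemma~\ref{indpt}, we already have the identity
\[
\Prob\left(\mathfrak{R}_k^{q,e}(s)\mid \mathfrak{R}^{q}(s)\right)=\Prob\left(\mathfrak{R}_0^{q,e}(s_q)\right)\,,
\]
where $s_q=(\ell,v_r)$ is the canonical configuration of $q=(\ell,r)$. This value does not depend on $k$, so
\[
\sum_{k\in\IN}\Prob\left(\mathfrak{R}_k^{q,e}(s)\mid \mathfrak{R}^{q}(s)\right)=\sum_{k\in\IN}\Prob\left(\mathfrak{R}_0^{q,e}(s_q)\right)\,.
\]
Thus it suffices to prove that $\Prob\left(\mathfrak{R}_0^{q,e}(s_q)\right)>0$.

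The positivity of $\Prob\left(\mathfrak{R}_0^{q,e}(s_q)\right)$ is exactly the second defining condition of the set $\mathcal{Q}'$, and the lemma is stated only for $q\in\mathcal{Q}'$; so this is immediate from the hypothesis. Concretely: by definition of $\mathcal{Q}'$ we have $\Prob(\mathfrak{R}^{q}(s))>0$ (which is what makes the conditional probability $\Prob(\,\cdot\mid\mathfrak{R}^q(s))$ well defined) and $\Prob(\mathfrak{R}_0^{q,e}(s_q))>0$. Since a sum of infinitely many copies of a strictly positive constant diverges to $+\infty$, we conclude
\[
\sum_{k\in\IN}\Prob\left(\mathfrak{R}_k^{q,e}(s)\mid \mathfrak{R}^{q}(s)\right)=+\infty\,,
\]
as desired.

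I do not expect any genuine obstacle here, since the heavy lifting has been done in the earlier lemmas: the key work was establishing the memoryless-region decomposition (Corollary~\ref{cor:decomposition}), which allowed one to replace $\Prob(\mathfrak{R}_k^{q,e}(s)\cap\mathfrak{R}^q_{\geq n}(s))$ by a product with the factor $\Prob(\mathfrak{R}_0^{q,e}(s_q))$ and hence collapse the $k$-dependence. The only point that needs a word of care is making sure we are invoking the statement for $q\in\mathcal{Q}'$ (so that both $\Prob(\mathfrak{R}^q(s))>0$ and $\Prob(\mathfrak{R}_0^{q,e}(s_q))>0$ are available); without the latter the constant could be zero and the argument would fail. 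With that in place the proof is just the observation that a constant positive series diverges.
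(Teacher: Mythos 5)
Your proof is correct and follows exactly the paper's own argument: the identity $\Prob(\mathfrak{R}_k^{q,e}(s)\mid\mathfrak{R}^{q}(s))=\Prob(\mathfrak{R}_0^{q,e}(s_q))$ obtained from Corollary~\ref{cor:decomposition} (as in the independence lemma), combined with the positivity of $\Prob(\mathfrak{R}_0^{q,e}(s_q))$ guaranteed by $q\in\mathcal{Q}'$, yields divergence of the constant series. No difference in approach.
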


\begin{proof}
  As in the previous lemma, we can deduce from equalities of
  Corollary~\ref{cor:decomposition}
  that \[\Prob\left(\mathfrak{R}_k^{q,e}(s)
    \mid\mathfrak{R}^{q}(s)\right)=
  \Prob\left(\mathfrak{R}_0^{q,e}(s_q)\right).\] However, as $q\in
  \mathcal{Q}'$, we know by definition that we have
  \[\Prob(\mathfrak{R}_0^{q,e}(s_q))>0.\]
  The result follows.
\end{proof}

\paragraph{Decomposition using basic sets.}
This section aims to prove Corollary~\ref{cor:decomposition} and so to
complete the previous proofs.

\begin{lem}
  \label{lemma:vm}
  Let $r$ be a memoryless region, $v\in r$, and $s'=(\ell,v)$ a
  configuration of $\A$. Writing $q$ for region-state $(\ell,r)$, we
  have for every sequence $(e_1,\dots,e_n) \in E^n$,
  \[
  \Prob(\path{s',e_1\dots e_n}) = \Prob(\path{s_q,e_1\dots e_n}).
  \]
\end{lem}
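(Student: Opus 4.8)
The statement to prove is Lemma~\ref{lemma:vm}: for a memoryless region $r$, a valuation $v\in r$, $s'=(\ell,v)$, and $q=(\ell,r)$, we have $\Prob(\path{s',e_1\dots e_n}) = \Prob(\path{s_q,e_1\dots e_n})$ for every sequence $(e_1,\dots,e_n)\in E^n$. The natural approach is induction on the length $n$ of the sequence of edges. The key structural fact to exploit is that $r$ is memoryless: for every clock $x\in X$, either $v(x)=0$ for all $v\in r$, or $v(x)>M$ for all $v\in r$ (where $M$ is the maximal constant in $\RA$). In particular $v$ and the canonical valuation $v_r$ are region-equivalent ($v\cong_\A v_r$), so by the assumption on weights in a reactive stochastic timed automaton, $p_{s'}(e)=p_{s_q}(e)$ for every edge $e$, and by the reactiveness assumption $\mu_{s'}=\mu_{(\ell,v)}=\mu_\ell=\mu_{(\ell,v_r)}=\mu_{s_q}$.

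First I would handle the base case $n=0$: $\Prob(\path{s'})=1=\Prob(\path{s_q})$ by definition, so this is immediate. For the inductive step, I would write out the defining integral for $\Prob(\path{s',e_1\dots e_n})$, namely
\[
\Prob(\path{s',e_1\dots e_n}) = \int_{t\in I(s',e_1)} p_{s'+t}(e_1)\,\Prob(\path{s'_t,e_2\dots e_n})\,\ud\mu_{s'}(t),
\]
where $s'\xrightarrow{t}(s'+t)\xrightarrow{e_1}s'_t$. The crucial observation is that firing $e_1$ from a memoryless region resets or leaves unbounded every clock involved: more precisely, since $r$ is memoryless, the set of delays $I(s',e_1)=I(q,e_1)$ is the same for $s'$ and $s_q$ (delays from a memoryless region are insensitive to the exact valuation, because each clock is either $0$ or already above $M$), and for each such delay $t$, the target state $s'_t$ lies in a region $r'$ that is again memoryless — indeed, any clock reset by $e_1$ becomes $0$, and any clock not reset had value either $0$ (so now $\ge t$, and after crossing guards its region is determined) or $>M$ (so stays $>M$); one checks that the target region $r'$ of $e_1$ from $q$ has the memoryless property, and $s'_t \cong_\A s_q{}_t$ where $s_q\xrightarrow{t}(s_q+t)\xrightarrow{e_1}s_q{}_t$. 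Then $p_{s'+t}(e_1)=p_{s_q+t}(e_1)$ by region-invariance of the discrete distribution, $\mu_{s'}=\mu_{s_q}=\mu_\ell$ by reactiveness, and $\Prob(\path{s'_t,e_2\dots e_n})=\Prob(\path{s_q{}_t,e_2\dots e_n})$ by the induction hypothesis applied to the memoryless region $r'$. Substituting these three equalities into the integral yields exactly $\Prob(\path{s_q,e_1\dots e_n})$.

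The main obstacle I anticipate is the careful bookkeeping in the inductive step to verify that the target region $r'$ reached after firing $e_1$ from $q$ is again memoryless, and that $s'_t\cong_\A s_q{}_t$ uniformly — this requires using that the maximal constant $M$ in $\RA$ bounds all guards and invariants, so a clock whose value exceeds $M$ never needs to be distinguished more finely, and a clock with value $0$ either gets reset again (staying $0$) or, if not reset, must satisfy the same guards whether its post-delay value is $t$ regardless of which memoryless representative we started from. A minor subtlety is that $I(s',e_1)$ might be empty for some edges, in which case both sides are $0$ and there is nothing to prove. Once the memoryless-propagation claim is established, the computation is a routine substitution. I would also note that this lemma, together with the preceding observations, is exactly what is needed to justify Corollary~\ref{cor:decomposition}, which decomposes the probability of revisiting $q$ in terms of $\Prob(\mathfrak{E}_q(s))$ and the "fresh start" probability $\Prob(\cdot\,;s_q)$ — the point being that once a memoryless region is reached, the future is independent of the finite prefix and of the precise valuation.
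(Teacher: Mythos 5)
Your overall strategy (induction on $n$, unfolding the defining integral, matching $I(\cdot,e_1)$, $p_{\cdot}(e_1)$ and $\mu_{\cdot}$ between the two states) is the right one, and the base case and the identifications $I(s',e_1)=I(s_q,e_1)$, $p_{s'+t}(e_1)=p_{s_q+t}(e_1)$, $\mu_{s'}=\mu_{s_q}=\mu_\ell$ are all correct. But the inductive step has a genuine gap: your induction hypothesis is too weak to be applied to the successor states. You claim that the target region $r'$ reached after firing $e_1$ from the memoryless region $q$ is again memoryless. This is false in general: a clock $x$ with $v(x)=0$ that is \emph{not} reset by $e_1$ has value $t$ after the transition, and for $0<t\le M$ this is neither $0$ nor $>M$, so $r'$ is not memoryless. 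Moreover, even where $r'$ happened to be memoryless, neither $s'_t$ nor $s_{q,t}$ is the canonical state $s_{q'}$ of that region, so the statement as you formulated it (comparing an arbitrary state of a memoryless region to its canonical state) cannot be invoked on the pair $(s'_t, s_{q,t})$. Region equivalence $s'_t\cong_\A s_{q,t}$ alone is also not enough — probabilities are emphatically not region-invariant in this model (that is the whole point of the thick-graph machinery).

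The paper's proof fixes exactly this by strengthening the statement before doing the induction: it proves that for \emph{any} pair $s'=(\ell,v)$, $s''=(\ell,v')$ such that for every clock $x$, either $v(x)=v'(x)$ or $\min(v(x),v'(x))>M$, the probabilities of $\path{s',e_1\dots e_n}$ and $\path{s'',e_1\dots e_n}$ coincide. This relation \emph{is} preserved by a delay $t$ followed by firing $e_1$: reset clocks are $0$ in both states, clocks with equal values stay equal (both become $v(x)+t$), and clocks above $M$ stay above $M$. The lemma then follows as the special case $s''=s_q$, since $v$ and the canonical valuation $v_r$ of a memoryless region satisfy the pairing condition. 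You should reformulate your induction around this pairwise invariant; with that change the rest of your computation goes through essentially verbatim.
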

\begin{proof}
  We will prove a stronger result.  We show that for every pair
  $s'=(\ell,v)$, $s''=(\ell,v')$ satisfying for every $x\in X$,
  $v(x)=v'(x)$ or $\min(v(x),v'(x))>M$, we have for every sequence
  $(e_1,\dots,e_n) \in E^n,$
  \[ \Prob(\path{s',e_1\dots e_n}) = \Prob(\path{s'',e_1\dots
    e_n})\,.\] We prove this result by induction on the length $n$ of
  the sequence of edges. The result trivially holds for $n=0$. Assume
  that $n>0$, and that the lemma holds for sequences of edges of
  length $n-1$.  The constraint labelling the edge $e_1$ is equivalent
  to some $\displaystyle \bigwedge_{x \in X} (x \in I_x)$ where $I_x$
  is an interval of the form $[c;c]$ for some integer $0 \leq c \leq
  M$, or $(c;c+1)$ for some integer $0 \leq c < M$, or
  $(M;+\infty)$. We have that:
  \[
   \Prob(\path{s',e_1 \dots e_n})
   = \int_{t \in I(s',e_1)} p_{s'+t}(e_1)\,
    \Prob(\path{{s'}_t^{e_1},e_2 \dots e_n})\, \ud\mu_{\ell}(t)
  \]
  where $s' \xrightarrow{t,e_1} {s'}_t^{e_1}$. Now, it is not
  difficult to check that $p_{s'+t}(e_1)=p_{s''+t}(e_1)$ and $I(s',e_1)
  = I(s'',e_1)$ by hypothesis on $s'$ and $s''$. Also, writing $s''
  \xrightarrow{t,e_1} {s''}_t^{e_1}$, we easily get that if
  ${v'}^{e_1}_t$ and ${v''}^{e_1}_t$ are the valuations of
  ${s'}^{e_1}_t$ and ${s''}^{e_1}_t$ then for every $x\in X$, we have
  ${v'}^{e_1}_t(x)={v''}^{e_1}_t(x)$ or
  $\min({v'}^{e_1}_t(x),{v''}^{e_1}_t(x))>M$. We deduce by induction
  hypothesis:
  \[\Prob(\path{{s'}_t^{e_1},e_2 \dots e_n}) =
  \Prob(\path{{s''}_t^{e_1},e_2 \dots e_n}).\] Hence, we have
  \begin{align*}
    \Prob(\path{s',e_1 \dots e_n})
    &= \int_{t \in I(s'',e_1)} p_{s''+t}(e_1)\,
     \Prob(\path{{s''}_t^{e_1},e_2 \dots e_n})\, \ud\mu_{\ell}(t) \\
    &= \Prob(\path{s'',e_1 \dots e_n}).
  \end{align*}
  This concludes the proof. 
\end{proof}

We define $\mathfrak{E}_{q}(s)$ the set of runs starting in $s$ and
visiting $q$ at least once.

\begin{prop}
  \label{prop:decomposition}
  Let $q\in \mathcal{Q}'$. The following equalities hold true:
  \begin{enumerate}
  \item For every $n \geq 1$,
    \[
    \Prob\left(\mathfrak{R}_{\geq n}^{q}(s)\right) =
    \Prob\left(\mathfrak{E}_{q}(s)\right) \cdot
    \Prob\left(\mathfrak{R}^{q}_{\geq 1}(s_q)\right)^{n-1}.
    \]
  \item For every $1 \leq k < n$,
    \[
      \Prob\left(\mathfrak{R}_k^{q,e}(s) \cap
        \mathfrak{R}_{\geq n}^{q}(s)\right) =
      \Prob\left(\mathfrak{E}_{q}(s)\right)
      \cdot \Prob\left(\mathfrak{R}_{\geq 1}^{q}(s_q)\right)^{n-2}
      \cdot \Prob\left(\mathfrak{R}^{q}_{\geq 1}(s_q) \cap
        \mathfrak{R}_0^{q,e}(s_q)\right).
    \]
  \item For every $1 \leq k < k' < n$,
    \begin{align*}
      \Prob\left(\mathfrak{R}_k^{q,e}(s) \cap
        \mathfrak{R}_{k'}^{q,e}(s) \cap \mathfrak{R}_{\geq
          n}^{q}(s)\right) &=
      \Prob\left(\mathfrak{E}_{q}(s)\right)\\
      &\quad \cdot \Prob\left(\mathfrak{R}_{\geq
          1}^{q}(s_q)\right)^{n-3} \cdot
      \Prob\left(\mathfrak{R}^{q}_{\geq 1}(s_q) \cap
        \mathfrak{R}_0^{q,e}(s_q)\right)^2.
    \end{align*}
  \end{enumerate}
\end{prop}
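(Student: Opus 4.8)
\textbf{Proof plan for Proposition~\ref{prop:decomposition}.}
The plan is to prove all three identities by the same mechanism: decompose a run that visits $q$ several times at its successive passages through $q$, and use the strong Markov property together with Lemma~\ref{lemma:vm} to reset the ``state'' at each such passage to the canonical configuration $s_q$. Concretely, for a run $\varrho$ that visits $q$ at least once, let $\varrho_{\mathrm{pre}}$ be the (finite) prefix up to and including the first visit to $q$, and let $\varrho_{\mathrm{post}}$ be the remaining infinite suffix. The prefix lives in $\Cyl$'s over constrained symbolic paths from $s$ ending in a state of region-class $q$; the suffix is a run starting from some $s'=(\ell,v)$ with $v\in r$, where $r$ is the (memoryless) region of $q$. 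The key point is that by Lemma~\ref{lemma:vm}, for \emph{any} such $s'$ the probability of any symbolic path $\path{s',e_1\dots e_n}$ equals that of $\path{s_q,e_1\dots e_n}$, hence the law of the suffix restricted to events that are ``region/location-measurable'' (such as $\mathfrak{R}_{\geq 1}^{q}(\cdot)$, $\mathfrak{R}_0^{q,e}(\cdot)$, etc.) does not depend on the precise $v$, only on $q$, and equals the corresponding probability computed from $s_q$.

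For item (1), I would argue by induction on $n$. Write $\mathfrak{R}_{\geq n}^{q}(s)$ as: visit $q$ once (prefix contributing $\Prob(\mathfrak{E}_q(s))$ after summing over all prefixes reaching $q$ for the first time), then from that moment on visit $q$ at least $n-1$ more times. Splitting the suffix again at its first return to $q$ and applying Lemma~\ref{lemma:vm} (so that the relevant suffix-probability is computed from $s_q$ regardless of the actual valuation at the first visit), this ``at least $n-1$ more returns'' event has probability $\Prob(\mathfrak{R}_{\geq 1}^{q}(s_q))\cdot\Prob(\mathfrak{R}_{\geq n-1}^q(s_q))$ when $n\ge 2$, and one iterates. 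The base case $n=1$ is exactly $\Prob(\mathfrak{R}_{\geq 1}^q(s))=\Prob(\mathfrak{E}_q(s))$ by definition. For item (2): on the event $\mathfrak{R}_k^{q,e}(s)\cap\mathfrak{R}_{\geq n}^q(s)$ with $1\le k<n$, decompose the run at each of its first $n$ visits to $q$. The first visit contributes $\Prob(\mathfrak{E}_q(s))$; the segment between the $k$-th and $(k{+}1)$-th visit is the one that must contain the edge $e$ (and contributes $\Prob(\mathfrak{R}_{\geq 1}^{q}(s_q)\cap\mathfrak{R}_0^{q,e}(s_q))$, meaning: from $s_q$, fire $e$ before returning to $q$, and do eventually return); each of the other $n-2$ inter-visit segments contributes merely $\Prob(\mathfrak{R}_{\geq 1}^q(s_q))$ (it must return to $q$). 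Independence of these segments is precisely what Lemma~\ref{lemma:vm} provides, since each segment starts afresh from a valuation in $r$ whose law (on the events considered) agrees with the one from $s_q$. Item (3) is the same computation with two ``marked'' inter-visit segments instead of one, each contributing $\Prob(\mathfrak{R}_{\geq 1}^q(s_q)\cap\mathfrak{R}_0^{q,e}(s_q))$, and $n-3$ plain ones; here one uses $k<k'$ so the two marked segments are genuinely distinct and the product structure applies.

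The main obstacle I anticipate is making the ``cut at the $i$-th visit to $q$ and restart'' argument fully rigorous at the level of the measure $\Prob_{\A}$ (or $\Prob_{\RA}$) on infinite runs, rather than hand-waving a Markov property. The clean way is to write each event as a countable disjoint union of cylinders $\Cyl(\path[\mathcal C]{s,e_1\dots e_m})$ indexed by the finite history up to the relevant visit, use the definition of $\Prob_{\A}$ on such cylinders (the inductive integral formula), and observe that the ``tail factor'' appearing inside the integral, $\Prob_{\A}(\path{s_m,\dots})$ with $s_m$ a state whose region is $q$, depends on $s_m$ only through $q$ by Lemma~\ref{lemma:vm} — so it factors out of the integral as the constant $\Prob_{\A}(\path{s_q,\dots})$, and the remaining integral sums to the prefix probability. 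Iterating this bookkeeping $n$ (resp.\ $n-1$, $n-2$) times gives the stated products. The role of the hypothesis $q\in\mathcal Q'$ is only to guarantee the quantities are positive so that the subsequent conditional-probability manipulations in Lemma~\ref{indpt} are well defined; the identities themselves hold for any memoryless $q$ that is visited with positive probability, and I would state them in that slightly more general form if it streamlines the induction.
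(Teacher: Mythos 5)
Your proposal is correct and follows essentially the same route as the paper: decompose the event into a countable union of cylinders cut at the successive visits to $q$, use Lemma~\ref{lemma:vm} to replace the tail probability from the actual arrival state by the one from the canonical configuration $s_q$ so that it factors out of the iterated integral as a constant, and iterate (the paper phrases the iteration as an induction giving $\Prob(\mathfrak{R}_{\geq n}^{q}(s_q))=\Prob(\mathfrak{R}_{\geq 1}^{q}(s_q))^{n}$, and treats items (2) and (3) exactly as your ``marked segment'' bookkeeping). Your remark that $q\in\mathcal{Q}'$ is only needed for the later conditional-probability manipulations, not for the identities themselves, is also accurate.
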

\begin{proof}
  We will only prove the first equality. The other equalities can be
  handled, using decompositions, similarly to the first equality.

  We define the set $\varepsilon^q(s)$ as:
  \begin{multline*}
    \bigcup_{h \in \IN} \left\{(e_1,\dots,e_h) \in E^h \mid
      \mathsf{source}(e_1)= [s]_{\A},\right.\\
    \left.\mathsf{target}(e_h)= q,\ \text{and}\ \mathsf{target}(e_i)
      \ne q\ \text{for any}\ i<h\right\}\,.
  \end{multline*}
  Note that a run $\varrho$ is in $\mathfrak{E}_{q}(s)$ iff there is
  some $(e_1,\dots,e_h) \in \varepsilon^q(s)$ with $\varrho \in
  \Cyl\left(\path{s,e_1 \dots e_h}\right)$. The set $\varepsilon^q(s)$
  is a \emph{basic set} for $\mathfrak{E}_{q}(s')$.  For every state
  $s'$, we define the set $\gamma_n^q(s')$ as:
  \begin{multline*}
    \bigcup_{h \in \IN} \big\{(e_1,\dots,e_h) \in E^h \mid
    \mathsf{source}(e_1) = [s']_{\A},\\
    \mathsf{target}(e_h) = q,\ \text{and}\ \#\{1 \leq i \leq h \mid
    \mathsf{target}(e_i) = q\} = n \big\}\,.
  \end{multline*}
  Note that a run $\varrho$ is in $\mathfrak{R}_{\geq n}^{q}(s')$ iff
  there is some $(e_1,\dots,e_h) \in \gamma_n^q(s')$ such that
  $\varrho \in \Cyl\left(\path{s',e_1 \dots e_h}\right)$. The set
  $\gamma_n^q(s')$ is a \emph{basic set} for $\mathfrak{R}_{\geq
    n}^{q}(s')$.  Note also that if $s'$ and $s''$ are
  region-equivalent, then the two sets $\gamma_n^q(s')$ and
  $\gamma_n^q(s'')$ coincide. We will thus write $\gamma_n^q(q)$,
  where $q$ is the region-state of $s'$ and $s''$.

  We will decompose runs in $\mathfrak{R}_{\geq n}^{q}(s)$ using the
  basic sets $\varepsilon^q(s)$ and $\gamma_n^q(s_q)$. Indeed, if
  $\varrho \in \mathfrak{R}_{\geq n}^{q}(s)$, then we can decompose
  $\varrho$ into $\varrho' \cdot \varrho''$ such that there exist
  $(e_1,\dots,e_h) \in \varepsilon^q(s)$ with $\varrho' \in
  \path{s,e_1 \dots e_h}$, and $(f_1,\dots,f_k) \in
  \gamma_{n-1}^q(\mathsf{last}(\varrho'))$ with $\varrho'' \in
  \Cyl\left(\path{\mathsf{last}(\varrho'),f_1,\dots,f_k}\right)$. By
  definition of $\varepsilon^q(s)$, $\mathsf{last}(\varrho') \in q$,
  and by applying Lemma~\ref{lemma:vm}, we get
  \[
  \Prob\left(\path{\mathsf{last}(\varrho'),f_1 \dots f_k}\right)
  = \Prob\left(\path{s_q,f_1,\dots,f_k}\right).
  \]
  If we denote $\sum_{\varepsilon^q(s)}$ instead of $\sum_{(e
    _1,\dots,e_h) \in \varepsilon^q(s)}$ and
  $\sum_{\gamma_{n-1}^q(q)}$ instead of $ \sum_{(f_1,\dots,f_k) \in
    \gamma_{n-1}^q(q)}$, we have the following simplification:
  \begin{align*}
    \Prob\left(\mathfrak{R}_{\geq n}^{q}(s)\right)
    & = \sum_{\varepsilon^q(s)} \ \sum_{\gamma_{n-1}^q(q)}  \Prob \left\{\varrho = \varrho' \cdot \varrho'' \mid \varrho' \in \path{s,e_1\dots e_h},\right.\\
    &\qquad\qquad\qquad\qquad\qquad\left. \varrho'' \in \Cyl\left(\path{\last(\varrho'),f_1 \dots f_k}\right)\right\} \\
    &= \sum_{\varepsilon^q(s)} \ \sum_{\gamma_{n-1}^q(q)} \left( \int_{t_1} \int_{t_2} \dots \int_{t_h} \Big(\prod_{i=0}^{h-1} p_{s_{i}+t_{i+1}}(e_{i+1})\Big)\right.\displaybreak[0]\\
    &\left.\vphantom{\prod^h}\qquad\qquad\cdot\Prob\left(\path{s',f_1 \dots f_k}\right)\, \ud\mu_{s_{h-1}}(t_h)\, \dots \ud\mu_{s}(t_1)\right)\\
    & \quad \text{(where $s'=\last(s \xrightarrow{t_1,e_1}s_1 \dots s_{h-1}\xrightarrow{t_h,e_h}) \in q$)}\displaybreak[0]\\
    &= \sum_{\varepsilon^q(s)} \ \sum_{\gamma_{n-1}^q(q)} \left( \int_{t_1} \int_{t_2} \dots \int_{t_h} \Big(\prod_{i=0}^{h-1} p_{s_{i}+t_{i+1}}(e_{i+1})\Big)\right.\\
    &\left.\vphantom{\prod^h}\qquad\qquad\cdot\Prob\left(\path{s_q,f_1\dots f_k}\right)\, \ud\mu_{s_{h-1}}(t_h)\, \dots \ud\mu_{s}(t_1)\right)\displaybreak[0]\\
    &= \left(\sum_{\gamma_{n-1}^q(q)} \Prob\left(\path{s_q,f_1\dots f_k}\right) \right)\\
    &\quad\quad\cdot \left(\sum_{\varepsilon^q(s)} \int_{t_1} \int_{t_2} \dots \int_{t_h} \Big(\prod_{i=0}^{h-1} p_{s_{i}+t_{i+1}}(e_{i+1})\Big)\ud\mu_{s_{h-1}}(t_h)\, \dots \ud\mu_{s_0}(t_1) \right)\\
    \displaybreak[0]\\
    &\quad= \Prob\left(\mathfrak{R}_{\geq n-1}^{q}(s_q)\right) \cdot
    \Prob\left(\mathfrak{E}_{q}(s)\right).
  \end{align*}

  By induction on $n$, using a similar decomposition, we can prove that:
  \begin{equation*}\label{eqn}
  \Prob\left(\mathfrak{R}_{\geq n}^{q}(s_q)\right) =
  \Prob\left(\mathfrak{R}_{\geq 1}^{q}(s_q)\right)^n
  \end{equation*}  
  which concludes the proof for the first equality.
\end{proof}

We can simplify equalities of the previous proposition thanks to
following lemma:

\begin{lem}
  \label{limitok}
  Let $q\in\mathcal{Q}'$. We have $\Prob(\mathfrak{R}^{q}_{\ge
    1}(s_q)) = 1$.
\end{lem}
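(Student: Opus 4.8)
\textbf{Proof proposal for Lemma~\ref{limitok}.}
The plan is to reduce everything to the decomposition already established as Proposition~\ref{prop:decomposition}(1) together with a zero--one dichotomy coming from the memorylessness of $q$. Recall that, since $q\in\mathcal{Q}'$, Proposition~\ref{prop:decomposition}(1) gives, for every $n\ge 1$,
\[
\Prob\bigl(\mathfrak{R}_{\geq n}^{q}(s)\bigr) \;=\; \Prob\bigl(\mathfrak{E}_{q}(s)\bigr)\cdot\Prob\bigl(\mathfrak{R}^{q}_{\geq 1}(s_q)\bigr)^{n-1}\,.
\]
First I would observe that the sequence $\bigl(\mathfrak{R}_{\geq n}^{q}(s)\bigr)_{n\ge 1}$ is non-increasing and that $\bigcap_{n\ge1}\mathfrak{R}_{\geq n}^{q}(s)=\mathfrak{R}^{q}(s)$ (visiting $q$ at least $n$ times for all $n$ is the same as visiting $q$ infinitely often). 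By continuity of $\Prob$ from above, letting $n\to\infty$ in the displayed equality yields
\[
\Prob\bigl(\mathfrak{R}^{q}(s)\bigr) \;=\; \Prob\bigl(\mathfrak{E}_{q}(s)\bigr)\cdot \lim_{n\to\infty}\Prob\bigl(\mathfrak{R}^{q}_{\geq 1}(s_q)\bigr)^{n-1}\,.
\]

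Now set $p \egdef \Prob\bigl(\mathfrak{R}^{q}_{\geq 1}(s_q)\bigr)\in[0,1]$. The definition of $\mathcal{Q}'$ requires $\Prob\bigl(\mathfrak{R}^{q}(s)\bigr)>0$, and since $\Prob\bigl(\mathfrak{E}_{q}(s)\bigr)\le 1$ the previous display forces $\lim_{n\to\infty}p^{\,n-1}\ge \Prob\bigl(\mathfrak{R}^{q}(s)\bigr)>0$. But for $p\in[0,1)$ one has $\lim_{n\to\infty}p^{\,n-1}=0$, a contradiction; hence $p=1$, that is, $\Prob\bigl(\mathfrak{R}^{q}_{\geq 1}(s_q)\bigr)=1$, which is exactly the claim.

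There is no genuine obstacle here: the only points deserving care are that Proposition~\ref{prop:decomposition}(1) indeed applies to every $q\in\mathcal{Q}'$ (it does, by its statement), and that the limit passage is legitimate, which is just continuity from above for the decreasing family $\mathfrak{R}_{\geq n}^{q}(s)$. The conceptual content — namely that, once $q$ is a memoryless region-state that is hit infinitely often with positive probability from $s$, the return probability to $q$ from its canonical configuration $s_q$ must be exactly $1$ — is entirely carried by Lemma~\ref{lemma:vm} and its consequences already used in the proof of Proposition~\ref{prop:decomposition}, so nothing new needs to be proven.
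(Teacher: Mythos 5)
Your proof is correct and follows essentially the same route as the paper's: both apply Proposition~\ref{prop:decomposition}(1), pass to the limit $n\to\infty$ using continuity from above on the decreasing family $\mathfrak{R}_{\geq n}^{q}(s)$, and conclude from $\Prob(\mathfrak{R}^{q}(s))>0$ (part of the definition of $\mathcal{Q}'$) that the return probability cannot be strictly below $1$. The only cosmetic difference is that the paper phrases it as a proof by contradiction while you argue directly; the content is identical.
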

\begin{proof}
  For a contradiction, we assume that $\Prob(\mathfrak{R}^{q}_{\ge
    1}(s_q)) = \alpha_0 < 1 $. By
  Proposition~\ref{prop:decomposition}, we thus have that:
\begin{align*}
  \Prob(\mathfrak{R}^q(s)) \
  &= \Prob\left(\bigcap_n \mathfrak{R}^{q}_{\ge n}(s)\right)
  = \lim_{n \to \infty} \Prob\left(\mathfrak{R}_{\geq
      n}^{q}(s)\right) \\
  &= \lim_{n \to \infty} \Prob\left(\mathfrak{E}_{q}(s)\right)
  \cdot \Prob\left(\mathfrak{R}^{q}_{\geq 1}(s_q)\right)^{n-1} \\
  &= \Prob\left(\mathfrak{E}_{q}(s)\right) \lim_{n \to \infty}
  (\alpha_0)^{n-1} = 0
\end{align*}
which contradicts the fact that $q\in\mathcal{Q}'$.
\end{proof}

As an immediate corollary we get the following result.

\begin{cor}
  \label{cor:decomposition}
  Let $q\in\mathcal{Q}'$. The following equalities hold true:
  \begin{enumerate}
  \item For every $n \geq 1$,
    \[
    \Prob\left(\mathfrak{R}_{\geq n}^{q}(s)\right) =
    \Prob\left(\mathfrak{E}_{q}(s)\right).
    \]
  \item For every $1 \leq k < n$,
    \[
      \Prob\left(\mathfrak{R}_k^{q,e}(s) \cap
        \mathfrak{R}_{\geq n}^{q}(s)\right)
        = \Prob\left(\mathfrak{E}_{q}(s)\right) \cdot
      \Prob\left(\mathfrak{R}_0^{q,e}(s_q)\right).
    \]
  \item For every $1 \leq k < k' < n$,
    \[
      \Prob\left(\mathfrak{R}_k^{q,e}(s) \cap
        \mathfrak{R}_{k'}^{q,e}(s) \cap
        \mathfrak{R}_{\geq n}^{q}(s)\right)
        = 
      \Prob\left(\mathfrak{E}_{q}(s)\right) \cdot
      \Prob\left(\mathfrak{R}_0^{q,e}(s_q)\right)^2.
    \]
  \end{enumerate}
\end{cor}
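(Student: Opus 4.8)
The plan is to obtain Corollary~\ref{cor:decomposition} by a direct substitution into the three equalities of Proposition~\ref{prop:decomposition}, using Lemma~\ref{limitok}. Since $q \in \mathcal{Q}'$ by hypothesis, Lemma~\ref{limitok} applies and gives
\[
\Prob\bigl(\mathfrak{R}^{q}_{\geq 1}(s_q)\bigr) = 1\,.
\]
First I would record the elementary observation that if an event $F$ has probability $1$, then its complement is null, so for every measurable event $A$ one has $\Prob(F \cap A) = \Prob(A)$. Applying this with $F = \mathfrak{R}^{q}_{\geq 1}(s_q)$ and $A = \mathfrak{R}_0^{q,e}(s_q)$ yields $\Prob\bigl(\mathfrak{R}^{q}_{\geq 1}(s_q) \cap \mathfrak{R}_0^{q,e}(s_q)\bigr) = \Prob\bigl(\mathfrak{R}_0^{q,e}(s_q)\bigr)$, and similarly for the squared factor occurring in the third equality.

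Next I would simply plug these identities into Proposition~\ref{prop:decomposition}. In the first equality, the exponent $n-1$ is nonnegative (recall $n \geq 1$), hence $\Prob\bigl(\mathfrak{R}^{q}_{\geq 1}(s_q)\bigr)^{n-1} = 1$, and we are left with $\Prob\bigl(\mathfrak{R}_{\geq n}^{q}(s)\bigr) = \Prob\bigl(\mathfrak{E}_{q}(s)\bigr)$. In the second equality the constraint $1 \leq k < n$ forces $n \geq 2$, so $n-2 \geq 0$ and again the power of $1$ drops out; combined with the reduction of the intersection factor above, this gives $\Prob\bigl(\mathfrak{R}_k^{q,e}(s) \cap \mathfrak{R}_{\geq n}^{q}(s)\bigr) = \Prob\bigl(\mathfrak{E}_{q}(s)\bigr) \cdot \Prob\bigl(\mathfrak{R}_0^{q,e}(s_q)\bigr)$. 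In the third equality $1 \leq k < k' < n$ forces $n \geq 3$, so $n-3 \geq 0$ and the same reasoning (now using the squared intersection factor) yields $\Prob\bigl(\mathfrak{R}_k^{q,e}(s) \cap \mathfrak{R}_{k'}^{q,e}(s) \cap \mathfrak{R}_{\geq n}^{q}(s)\bigr) = \Prob\bigl(\mathfrak{E}_{q}(s)\bigr) \cdot \Prob\bigl(\mathfrak{R}_0^{q,e}(s_q)\bigr)^2$.

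There is no genuine obstacle here: the statement is an immediate consequence of the two cited results, and the only point worth spelling out is the full-measure intersection fact together with the check that the exponents $n-1$, $n-2$, $n-3$ are all nonnegative under the respective index constraints. I would present the whole argument in a couple of lines.
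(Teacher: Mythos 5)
Your proposal is correct and matches the paper's intended derivation exactly: the paper states this as an ``immediate corollary'' of Proposition~\ref{prop:decomposition} and Lemma~\ref{limitok}, and your substitution of $\Prob\bigl(\mathfrak{R}^{q}_{\geq 1}(s_q)\bigr)=1$ into the three equalities, together with the full-measure intersection fact and the check that the exponents $n-1$, $n-2$, $n-3$ are nonnegative under the respective index constraints, is precisely the argument left implicit there.
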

This concludes the proof of Lemma~\ref{lemma:eq2}.
\end{proof}

\ligne

We now extend the previous study to weak reactive stochastic timed
automata.  \medskip

\noindent \fbox{\begin{minipage}{.98\linewidth}
\weakreactive* \end{minipage}}
\medskip

\begin{proof}
  We seek to show that, with probability $1$, we delay infinitely many
  times more than $M$ time units before taking a transition. Let
  $\mathfrak{D}^{>M}_n(s)$ be the set of runs from state $s$ that
  delays more than $M$ time units before taking the $n$-th transition
  (i.e.  $\mathfrak{D}^{>M}_n(s) = \{\varrho \in \Runs(\A,s) \mid
  \varrho = s \xrightarrow{\tau_1,e_1} s_1 \xrightarrow{\tau_2,e_2}
  \dots \mid \tau_n > M\}$). We denote $\mathfrak{D}^{\le
    M}_{k,n}(s):=\bigcap_{k\le j\le n}(\mathfrak{D}^{>M}_j(s))^c$.

  Assume $s=(\ell,v)$ with $\ell \in L_u$.  Since $\A$ is a weak
  reactive stochastic timed automaton, there exists $0<\lambda_0\le 1$
  such that 
  for every $\ell \in L_u$, for every $v \models \mathcal{I}(\ell)$,
  we have
  \[
  \mu_{(\ell,v)}\left(\left[M,+\infty\right[\,\right) \ge \lambda_0,
  \]
  and there exist $0<\lambda_1\le 1$ and $N\ge 1$ such that for any
  $\ell\in L_b$, any $v \models \mathcal{I}(\ell)$, we
  have 
  \[
  \Prob_{\A}\Big(\bigcup_{(e_1,\dots,e_N)\in
    E_u}\path{(\ell,v),e_1,\dots,e_N}\Big)\ge \lambda_1
  \]
  where $E_u=\{(e_1,\dots,e_N) \mid \target(e_i)\in L_u\text{\ for
    some\ } 1\le i< N\}$.

  Let $s=(\ell,v)$ be a state of $\A$. If $\ell\in L_u$, we have
  \[
  \Prob\Big(\bigcup_{1\le j\le N}\mathfrak{D}^{>M}_j(s)\Big)\ge
  \Prob(\mathfrak{D}^{>M}_1(s))\ge\lambda_0\ge \lambda_0\lambda_1\,.
  \]
  If $\ell \in L_b$, we first remark that for every $v \models
  \mathcal{I}(\ell)$, we have
  \[
  \Prob_{\A}\Big(\bigcup_{(e_1,\dots,e_N)\in
    E_u}\path{s,e_1,\dots,e_N}\Big)= \sum_{(e_1,\dots,e_k)\in
    F_u}\Prob_{\A}(\path{s,e_1,\dots,e_k})
  \]
  where \[F_u=\{(e_1,\dots,e_k) \mid 1\le k< N, \target(e_k)\in L_u
  \text{\ and for any\ } {1\le i< k,}\ {\target(e_i)\notin
    L_u}\}.\]

  Therefore, if $\ell \in L_b$, we have
  \begin{align*}
    &\Prob(\bigcup_{1\le j\le N}\mathfrak{D}^{>M}_j(s)))\\
    &\quad\ge \sum_{(e_1,\dots,e_k)\in F_u}\sum_{e\in E}\Prob(\path[t_{k+1}>M]{s,e_1,\dots,e_k,e}\\
    &\quad\ge \sum_{(e_1,\dots,e_k)\in F_u}\sum_{e\in E}\int_{t_1\in I(s,e_1)} p_{s+t_1}(e_1)\cdots\\
    &\phantom{\sum_{(e_1,\dots,e_k)\in F_u}\sum_{e\in E}\int_{t_1\in
        I(s,e_1)}} \cdots
    \int_{t_{k+1}\in I(s_{k},e)\cap ]M,+\infty[}p_{s_k+t_{k+1}}(e)\ud\mu_{s_{k}}(t_{k+1})\dots\ud\mu_{s}(t_1)\\
    &\quad\ge \sum_{(e_1,\dots,e_k)\in F_u}\int_{t_1\in I(s,e_1)} p_{s+t_1}(e_1)\cdots\\
    &\phantom{\sum_{(e_1,\dots,e_k)\in F_u}\sum_{e\in E}\int_{t_1\in
        I(s,e_1)}} \cdots
    \int_{t_{k+1}\in ]M,+\infty[}\sum_{e\in E} p_{s_k+t_{k+1}}(e)\ud\mu_{s_{k}}(t_{k+1})\dots\ud\mu_{s}(t_1)\\
    &\quad\ge \sum_{(e_1,\dots,e_k)\in F_u}\int_{t_1\in I(s,e_1)} p_{s+t_1}(e_1)\cdots
    \int_{t_{k+1}\in ]M,+\infty[}\ud\mu_{s_{k}}(t_{k+1})\dots\ud\mu_{s}(t_1)\\
    &\quad\ge \sum_{(e_1,\dots,e_k)\in F_u}\int_{t_1\in I(s,e_1)} p_{s+t_1}(e_1)\cdots\\
    &\phantom{\sum_{(e_1,\dots,e_k)\in F_u}\sum_{e\in E}} \cdots
    \int_{t_{k}\in I(s_{k-1},e_k)}p_{s_{k-1}+t_{k}}(e_k) \mu_{s_k}(]M,+\infty[)\ud\mu_{s_{k-1}}(t_{k})\dots\ud\mu_{s}(t_1)\\
    &\quad\ge \lambda_0 \sum_{(e_1,\dots,e_k)\in F_u} \Prob_{\A}(\path{s,e_1,\dots,e_k})\\
    &\quad \ge \lambda_0\lambda_1\,.
  \end{align*}
  We deduce that for every state $s$, $\Prob(\mathfrak{D}^{\le
    M}_{1,N}(s))\le 1-\lambda_0\lambda_1$.  Therefore for any state
  $s$, for any $k\ge 2$, we get
  \begin{align*}
    &\Prob(\mathfrak{D}^{\le M}_{1,kN}(s))\\
    &\quad\le \sum_{(e_1,\dots,e_{(k-1)N)}}\int_{t_1\in I(s,e_1)\cap
      [0,M]} p_{s+t_1}(e_1) \cdots
    \int_{t_{(k-1)N}\in I(s_{(k-1)N-1},e_{(k-1)N})\cap [0,M]}\\
    &\quad\quad
    \Big(p_{s_{(k-1)N-1}+t_{(k-1)N}}(e_{(k-1)N})\Prob(\mathfrak{D}^{\le
      M}_{1,N}(s_{(k-1)N}))\Big)
    \ud\mu_{s_{(k-1)N-1}}(t_{(k-1)N})\dots\ud\mu_{(\ell,v)}(t_1)\\
    &\quad\le (1-\lambda_0\lambda_1) \Prob(\mathfrak{D}^{\le M}_{1,(k-1)N}(s))\\
&\quad\le (1-\lambda_0\lambda_1)^k.
\end{align*}

\noindent In the same way, we deduce that for any $j,k\ge 1$, 
\[
\Prob(\mathfrak{D}^{\le M}_{j,j+ kN}(s))\le
(1-\lambda_0\lambda_1)^k\,.
\]
We conclude that for any $k\ge 1$, $\Prob(\mathfrak{D}^{\le
  M}_{k,n}(s))$ converges to $0$ when $n$ tends to infinity and we
finish the proof with the same arguments as in the proof of
Lemma~\ref{lemma:eq1}.  
\end{proof}

\noindent \fbox{\begin{minipage}{.98\linewidth}
\reactivenonzeno* \end{minipage}}

\label{app:reactivenonzeno}

\begin{proof}
  This is a consequence of the proofs of Lemma~\ref{lemma:eq1} (for
  reactive automata) and of Lemma~\ref{lemma:weakreactive} (for weak
  reactive automata).

  We should just notice that, writing $\mathsf{Zeno}(s)$ for the set
  of Zeno runs from $s$,
  \[
  \mathsf{Zeno}(s)^c \supseteq \bigcap_{n \in \IN} \bigcup_{k \geq n}
  \mathfrak{D}^{>M}_k(s)\,.
  \]
  And in the two proofs that are mentioned, it is proven that
  $\Prob(\mathfrak{D}^{\le M}_{k,n}(s))$ tends to $0$ when $n$ tends
  to $\infty$, which implies that $\Prob(\mathsf{Zeno}(s))=0$. 
\end{proof}

\section{Details for Section~\ref{sec:algo}}
\label{app:lower-bounds}

\noindent \fbox{\begin{minipage}{.98\linewidth}
\main* \end{minipage}}\medskip

The upper bounds have already been explained in the core of the paper
(they are obvious consequences of the previous developments). We will
now explain several lower bounds.

\subsection*{Hardness in reactive timed automata}

We prove that the almost-sure model-checking problem in reactive timed
automata against simple safety and simple reachability properties is
\PSPACE-hard. To that aim we simulate a linearly-bounded Turing
machine $\mathcal{M}$ on an input word $w_0$. The general reduction is
rather standard~\cite{AL02} but it is required to work out the details
so that there is no equality constraints in the constructed timed
automaton, so that $\RA$ and $\thickgraph(\A)$ coincide.

Let $N$ be the bound on the tape of $\mathcal{M}$ when simulating on
input word $w_0$. We assume the alphabet is $\{a,b\}$ and we encode
the content of $j$-th cell $C_j$ using a clock $x_j$ with the
following convention: when we enter a module, cell $C_j$ contains an
$a$ whenever $x_j<1$ and it contains a $b$ whenever $x_j>2$.  To
simulate a transition $q'=\delta(q,\alpha,\beta,\mathsf{dir})$ where
$\alpha,\beta \in \{a,b\}$, we construct a module as in
Fig.~\ref{fig:simul} for every index $i$ such that $1 \le i \le N$ and
$1 \le \mathsf{dir}(i) \le N$, where $\mathsf{dir}(i)$ is either $i+1$
(if the head goes to the left), or $i-1$ (if the head goes to the
right).

\begin{figure}[h!]
\begin{center}
  \begin{tikzpicture}[scale=.7]
    \everymath{\scriptstyle}
    \draw (0,0) node [draw,rounded corners=1.5mm] (A) {$q,i$};
    \draw (3,0) node [draw,circle] (B) {};
    \draw (6,0) node [draw,circle] (C) {};
    \draw (9,0) node [draw,circle] (D) {};
    \draw (10,0) node {$\dots$};
    \draw (11,0) node [draw,circle] (E) {};
    \draw (13.5,0) node [draw,circle] (F) {};
    \draw (14.5,0) node {$\dots$};
    \draw (15.5,0) node [draw,circle] (G) {};
    \draw (18.5,0) node [draw,rounded corners=1.5mm] (H) {$q',\mathsf{dir}(i)$};

    \draw [latex'-] (A) -- +(-1.5,0) node [midway,above] {$u:=0$};
    \draw [-latex'] (A) -- (B) node [midway,above] {$2<u<3$};

    \draw [-latex'] (B) .. controls +(60:1cm) and +(120:1cm) .. (C) node [midway,above] {$x_1<4,u<3$} node [midway,below] {$x_1:=0$};
    \draw [-latex'] (B) .. controls +(-60:1cm) and +(-120:1cm) .. (C) node [midway,below] {$x_1>4,u<3$};

    \draw [-latex'] (C) .. controls +(60:1cm) and +(120:1cm) .. (D) node [midway,above] {$x_2<4,u<3$} node [midway,below] {$x_2:=0$};
    \draw [-latex'] (C) .. controls +(-60:1cm) and +(-120:1cm) .. (D) node [midway,below] {$x_2>4,u<3$};

    \draw [-latex'] (E) -- (F) node [midway,above] {$g_{\alpha,i},Y_{\beta,i}$};

    \draw [-latex'] (G) .. controls +(60:1cm) and +(120:1cm) .. (H) node [midway,above] {$x_N<4,u<3$} node [midway,below] {$x_N:=0$};
    \draw [-latex'] (G) .. controls +(-60:1cm) and +(-120:1cm) .. (H) node [midway,below] {$x_N>4,u<3$};

  \end{tikzpicture}
\end{center}
\caption{Simulation of $\mathcal{M}$-transition
  $q'=\delta(q,\alpha,\beta,\mathsf{dir})$ where $\alpha,\beta \in
  \{a,b\}$. Index $i$ is such that $1 \le i \le N$ and $1 \le
  \mathsf{dir}(i) \le N$. Guard $g_{a,i}$ is $x_i<4,u<3$ and guard
  $g_{b,i}$ is $x_i>4,u<3$. Set $Y_{a,i}$ is $\{x_i\}$ and set
  $Y_{b,i}$ is $\emptyset$.}
\label{fig:simul}
\end{figure}
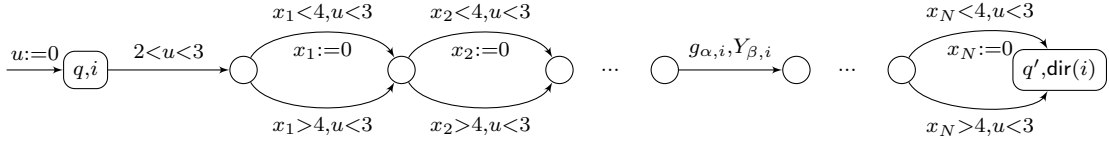

We complete the construction with an initialisation module (with input
$w_0$ on the tape), and we complete the automaton so that it is
reactive (by adding transitions to $\mathsf{sink}$ location). We note
$\A$ for this timed automaton and write $\mathsf{halt}$ for the
halting location (which can be made a sink). Let $P_{\mathsf{safety}}$
be the safety property that $\mathsf{halt}$ is not visited, and
$P_{\mathsf{reach}}$ be the property of reaching $\mathsf{sink}$. We
attach the exponential distribution with parameter, say $1$, to every
state, and assume weight $1$ for every edge. Let $s_0$ be the initial
state of $\A$ where all clocks are set to $0$.

\begin{lem}
  The following equivalences hold:
  \begin{eqnarray*}
    \mathcal{M}\ \text{does not halt on input}\ w_0 & 
    ~\Leftrightarrow~ & \Prob_{\A}(s_0 \models P_{\mathsf{safety}})=1 
    \\
    & 
    ~\Leftrightarrow~ & \Prob_{\A}(s_0 \models P_{\mathsf{reach}})=1
  \end{eqnarray*}
\end{lem}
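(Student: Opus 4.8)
The plan is to show that this timed-automaton construction faithfully simulates the Turing machine $\mathcal{M}$ on $w_0$, and that the probabilistic semantics does not introduce spurious behaviours because every guard used is a non-punctual (open or half-open but strictly inequality) constraint, so all edges of the resulting region automaton are thick.

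First I would establish the \emph{simulation invariant}: I claim that for each simulation module, starting from any state whose location is $(q,i)$ and whose clock valuation encodes a tape configuration via the convention ``$C_j=a$ iff $x_j<1$, $C_j=b$ iff $x_j>2$'', the only way to traverse the module without falling into $\mathsf{sink}$ is to delay between $2$ and $3$ time units on the first edge (forcing $2<u<3$, hence every clock gets incremented past the $1$-to-$2$ ``forbidden'' band and so after this delay $C_j=a$ iff $x_j<4$ and $C_j=b$ iff $x_j>4$), then to take the correct branch in each of the $N$ ``copy'' gadgets (resetting $x_j$ exactly when the cell holds an $a$, leaving it untouched when it holds a $b$, thereby re-establishing the original encoding for the unchanged cells), and to pass the middle edge labelled $g_{\alpha,i},Y_{\beta,i}$ only if cell $C_i$ currently holds $\alpha$ (that is what $g_{\alpha,i}$ tests) while rewriting it to $\beta$ (that is what $Y_{\beta,i}$ does). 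Crucially the constraint $u<3$ on all these edges, together with $2<u$ on the first edge, gives a strictly positive, non-punctual window of delays at every step, so the corresponding region-automaton path is thick; conversely any attempt to delay $u\ge 3$ or to take a branch inconsistent with the encoding leads to $\mathsf{sink}$. I would then check the initialisation module writes $w_0$ with the right encoding on the $N$ clocks. Consequently the unique maximal run of $\mathcal{M}$ on $w_0$ is in bijection with the set of ``faithful'' infinite paths of $\A$, i.e. those that stay out of $\mathsf{sink}$.

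Next I would invoke the results of the paper to pass from paths to probabilities. Since every guard in $\A$ is non-punctual, every edge of $\R{\A}$ is thick, so $\thickgraph(\A)=\R{\A}$; moreover $\A$ is reactive with exponential distributions, hence by Proposition~\ref{prop:fair-reactive} it is almost-surely fair, and by Theorem~\ref{th:safety} (for the safety property $P_{\mathsf{safety}}$, noting it is a simple safety property) we have $\A,s_0\robust_{\Prob}P_{\mathsf{safety}}$ iff every infinite path of $\thickgraph(\A)$ from $\iota(s_0)$ satisfies $P_{\mathsf{safety}}$, i.e. iff $\mathsf{halt}$ is unreachable along thick paths. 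By the simulation invariant, $\mathsf{halt}$ is reachable along a thick path iff $\mathcal{M}$ halts on $w_0$, giving the first equivalence. For the second equivalence, $P_{\mathsf{reach}}$ asks to reach $\mathsf{sink}$: because $\A$ is reactive, from every non-sink, non-$\mathsf{halt}$ state there is a thick edge leading towards $\mathsf{sink}$ (the ``error'' transitions used to complete the automaton), and with exponential delays this edge is fired with positive probability at each visit; applying Theorem~\ref{th:safety} to the complementary simple safety property ``$\mathsf{sink}$ is never visited'', $\Prob_{\A}(s_0\models P_{\mathsf{reach}})=1$ iff no infinite thick path avoids $\mathsf{sink}$ forever, which by the simulation invariant holds iff there is no faithful (hence $\mathsf{sink}$-avoiding) infinite path, i.e. iff $\mathcal{M}$ does not halt on $w_0$ — since if $\mathcal{M}$ halts we reach $\mathsf{halt}$ (a sink, distinct from $\mathsf{sink}$) along a faithful path that never visits $\mathsf{sink}$, and if $\mathcal{M}$ loops the faithful path is an infinite $\mathsf{sink}$-avoiding path. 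One subtlety to record carefully is that $\mathsf{halt}$ is made a sink distinct from $\mathsf{sink}$, so ``reaching $\mathsf{sink}$'' and ``reaching $\mathsf{halt}$'' are mutually exclusive events along faithful paths; I would spell out that a faithful run either reaches $\mathsf{halt}$ (halting case) or is infinite within the simulation modules (looping case), and in neither case visits $\mathsf{sink}$, while every unfaithful run reaches $\mathsf{sink}$.

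The main obstacle I anticipate is the bookkeeping in the simulation invariant: one must verify that after the mandatory $2<u<3$ delay the $a$/$b$ test thresholds shift consistently from $\{<1,>2\}$ to $\{<4,>4\}$ for \emph{all} clocks simultaneously, that the copy gadgets restore the encoding for untouched cells regardless of which branch is taken, and that the head-movement bound conditions ($1\le\mathsf{dir}(i)\le N$) correctly prevent the head from leaving the linearly bounded tape (so that a halting computation and a run that ``falls off the tape'' are distinguished — the latter should go to $\mathsf{sink}$, not $\mathsf{halt}$). I would also double-check that the $u<3$ guards are compatible with completing to a reactive automaton without reintroducing punctual guards, which is exactly the point of the ``work out the details'' remark preceding the construction; the error transitions must use only strict inequalities. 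Once these invariants are nailed down, both equivalences follow immediately from Theorem~\ref{th:safety} together with the identification $\thickgraph(\A)=\R{\A}$.
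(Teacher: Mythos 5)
Your overall plan --- establish a faithful-simulation invariant, note that all guards are non-punctual so that $\thickgraph(\A)$ coincides with the region automaton, and then transfer to graph-theoretic characterisations --- is the route the paper intends, and your treatment of the first equivalence via Theorem~\ref{th:safety} is sound: $\Prob_{\A}(s_0 \models P_{\mathsf{safety}})=1$ iff no thick path reaches $\mathsf{halt}$ iff $\mathcal{M}$ does not halt.

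The second equivalence, however, contains a genuine error. Theorem~\ref{th:safety} characterises when a safety property has probability \emph{one} (iff every infinite thick path satisfies it); it says nothing about when a safety property has probability \emph{zero}, which is what $\Prob_{\A}(s_0\models P_{\mathsf{reach}})=1$ amounts to. Your claimed criterion ``$\Prob_{\A}(s_0\models P_{\mathsf{reach}})=1$ iff no infinite thick path avoids $\mathsf{sink}$ forever'' is therefore not a consequence of that theorem, and it is in fact false for this very construction: when $\mathcal{M}$ does not halt, the faithful infinite run is a thick path that avoids $\mathsf{sink}$ forever (as you yourself observe), yet the lemma asserts $\Prob_{\A}(s_0\models P_{\mathsf{reach}})=1$ in exactly that case --- so your criterion returns the wrong answer. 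What is needed is a genuinely probabilistic argument. Either argue, as the paper does, that the set of runs staying forever in the simulation modules is negligible (at each module traversal the exponential delay exceeds the $u<3$ window with probability at least $e^{-3}$, sending the run to $\mathsf{sink}$, so the faithful behaviour survives $n$ modules with probability at most $(1-e^{-3})^n \to 0$); hence when $\mathsf{halt}$ is unreachable, $\mathsf{sink}$ is reached almost surely. Or, equivalently, use almost-sure fairness (Proposition~\ref{prop:fair-reactive}) and condition $(\S)$ of Corollary~\ref{coro:1} applied to the prefix-independent property $\G\F \mathsf{sink}$ (which coincides with $\F \mathsf{sink}$ since $\mathsf{sink}$ is absorbing): the BSCCs reachable in $\thickgraph(\A)$ are $\{\mathsf{sink}\}$ and, precisely when $\mathcal{M}$ halts, $\{\mathsf{halt}\}$. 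The halting direction is then the paper's one-line argument: the finite faithful run to $\mathsf{halt}$ generates a cylinder of positive probability (all guards being strict), and from $\mathsf{halt}$ the location $\mathsf{sink}$ is never visited, so both probabilities drop below $1$.
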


\begin{proof}
  Assume $\mathcal{M}$ halts on input $w_0$. Then there is a finite
  run $\rho$ in $\A$ leading to $\mathsf{halt}$. Due to the special
  form of $\A$, the probability of the cylinder generated by
  $\pi_\rho$ is positive ($\A$ has only strict guards). This implies
  that $\Prob_{\A}(s_0 \models P_{\mathsf{reach}})<1$ (since state
  $\mathsf{sink}$ is not reached by that cylinder), and
  $\Prob_{\A}(s_0 \models P_{\mathsf{safety}})<1$.

  If $\mathcal{M}$ does not halt on input $w_0$, then location
  $\mathsf{halt}$ is never visited, and therefore $\mathsf{sink}$ is
  visited with probability $1$. This implies that $\Prob_{\A}(s_0
  \models P_{\mathsf{reach}})=1$, and that $\Prob_{\A}(s_0 \models
  P_{\mathsf{safety}})=1$. 
\end{proof}

This shows hardness results for (i) and (iv).

\subsection*{Hardness in single-clock timed automata}

The \NLOGSPACE-hardness result of (ii) already holds for finite
Markov chains (since checking reachability properties in finite graphs
is \NLOGSPACE-hard).

Similarly, the result of (iii) concerning \PSPACE-hardness already holds for
finite Markov chains~\cite{vardi85}.
\end{document}